\def\@copyrightspace{\relax}
\newif\ifproofs
\title{A Short Counterexample Property for Safety and Liveness
Verification of Fault-Tolerant Distributed Algorithms\thanks{
This is an extended version of the paper that will appear at POPL'17,
which can be accessed at: \url{http://dx.doi.org/10.1145/3009837.3009860}
}}
\newcommand{\arxiv}[1]{
}
\setlist{nolistsep}
\newcommand{\url}[1]{\texttt{#1}}
\lstdefinelanguage{promela}
  {morekeywords={do,od,init,proctype,for,if,fi,else,goto,byte,int,bool,bit,chan,mtype,atomic,d_step,nempty,empty,break,skip,active,ltl,symbolic,assume,some,all,card},
  morecomment=[s]{/*}{*/},mathescape=true,escapechar={@},
  basicstyle=\ttfamily\small,   commentstyle=\itshape\rmfamily\small\color{black!50},
  keywordstyle=\bfseries\small}
\lstdefinelanguage{distal}
  {morekeywords={case,class,extends,val,var,UPON,RECEIVING,START,WITH,
      TIMES,DO,IF,THEN,SEND,TO,ALL,False,True},
  morecomment=[s]{/*}{*/}, morecomment=[l]{//}, 
  mathescape=true,escapechar={@},
  basicstyle=\sffamily\small,
  commentstyle=\itshape\rmfamily\small,   numbers=left, numberstyle=\tiny,
  xleftmargin=2em, framexleftmargin=1.5em
}
\lstdefinelanguage{pseudo}
  {morekeywords={init,with,or,if,then,else,fi,and,not,while,do,od,distinct,
    case, goto,local,algorithm, function, for, each, times, from, to,
    variables, procedure, recursive, return},
  morecomment=[l]{//}, morecomment=[s]{/*}{*/},
  mathescape=true,escapechar={@},
  basicstyle=\sffamily\small,
  commentstyle=\itshape\rmfamily\small,
  keywordstyle=\sffamily\bfseries\small
}
\date{}
\newcounter{rowno}
\newtheorem{observation}{Observation}
\newtheorem{proposition}{Proposition}[section]
\newtheorem{definition}[proposition]{Definition}
\newtheorem{theorem}[proposition]{Theorem}
\newtheorem{lemma}[proposition]{Lemma}
\theoremstyle{definition}\newtheorem{example}[proposition]{Example}
\newcommand{\highlight}[1]{\colorbox{gray!40}{#1}}
\newcommand{\recallthm}[2]{  {\medskip\noindent\bfseries Theorem~\ref{#1}.~}{\itshape #2}
}
\newcommand{\recallproposition}[2]{  {\medskip\noindent\bfseries Proposition~\ref{#1}.~}{\itshape #2}
}
\newcommand{\trrule}{{\mathit{rule}}}
\newcommand{\trfactor}{{\mathit{factor}}}
\newcommand{\concat}{\cdot}
\newcommand{\relc}{\prec_{\scriptscriptstyle{\mathit{C}}}}
\newcommand{\linrelc}{\prec^{\mathit{lin}}_{\scriptscriptstyle{\mathit{C}}}}
\newcommand{\Flowletter}{P}
\newcommand{\relf}{\prec_{\scriptscriptstyle{\mathit{\Flowletter}}}}
\newcommand{\erelf}{\sim_{\scriptscriptstyle{\mathit{\Flowletter}}}}
\newcommand{\nerelf}{\not\sim_{\scriptscriptstyle{\mathit{\Flowletter}}}}
\newcommand{\relftrans}{\prec^+_{\scriptscriptstyle{\mathit{\Flowletter}}}}
\newcommand{\ident}[1]{\textit{#1\/}\rule{0cm}{1ex}}
\gdef\dash---{\thinspace---\hskip.16667em\relax}
\gdef\ndash---{\thinspace--\hskip.16667em\relax}
\newcommand{\Natural}{{\mathbb N}}
\newcommand{\NatZero}{{\mathbb N}_0}
\newcommand{\Int}{{\mathbb Z}}
\newcommand{\paraset}{\Pi}
\newcommand{\globset}{\Gamma}
\newcommand{\ruleset}{{\mathcal R}}
\newcommand{\ruleclass}{\ruleset/{\sim}}            
\newcommand{\classof}[1]{[{#1}]}      \newcommand{\local}{{\mathcal L}}
\newcommand{\initlocal}{{\mathcal I}}
\newcommand{\numlocal}{{|\local|}}
\newcommand{\AdmP}{\mathbf{P}_{RC}}
\newcommand{\param}{\mathbf{p}}
\newcommand{\configs}{\Sigma}
\newcommand{\iconfigs}{I}
\newcommand{\transrel}{R}
\newcommand{\finpath}[2]{\textsf{path}(#1, #2)} \newcommand{\infpath}[2]{\textsf{path}(#1, #2)} 
\newcommand{\pc}{\ident{sv}}
\newcommand{\rcvd}{\ident{rcvd}}
\newcommand{\sent}{\ident{nsnt}}
\newcommand{\numparam}{{|\paraset|}}
\newcommand{\numglob}{{|\globset|}}
\newcommand{\rules}{{\mathcal R}}
\newcommand{\precondLE}{\varphi^{\scriptscriptstyle{\le}}}
\newcommand{\precondG}{\varphi^{\scriptscriptstyle{>}}}
\newcommand{\PrecondU}{\Phi^{\mathrm{rise}}}
\newcommand{\PrecondL}{\Phi^{\mathrm{fall}}}
\newcommand{\Ctx}{\Omega}
\newcommand{\CtxU}{\Omega^\mathrm{rise}}
\newcommand{\CtxL}{\Omega^{\mathrm{fall}}}
\newcommand{\statectx}{\omega}
\newcommand{\rulesliceclass}{\raise0ex\hbox{$(\slice{\ruleset}{\Ctx})$}\big/\lower.5ex\hbox{\hskip-.25em{{\scriptsize $\sim$}}}}
\newcommand{\update}{\vec{u}}
\newcommand{\syssize}{N}
\newcommand{\true}{\mathit{true}}
\newcommand{\fromstate}{{\mathit{from}}}
\newcommand{\tostate}{{\mathit{to}}}
\newcommand{\ResCond}{{\ident{RC}}}
\newcommand{\Sk}{\textsf{TA}}
\newcommand{\TA}{\Sk}
\newcommand{\Sys}{\textsf{Sys}}
\newcommand{\newreftheorem}[2]{\newenvironment{#1}[1]{\par\vspace{3mm}\noindent\textbf{#2~\ref{##1}.}
\em}{\rm}
}
\newcommand{\cpp}[1]{#1\scriptsize{\texttt{++}}}
\newcommand{\counters}{{\vec{\boldsymbol\kappa}}}
\newcommand{\vars}{\vec{g}}
\newcommand\gst{\sigma}
\newcommand{\IT}{\mathrm{V0}}
\newcommand{\RI}{\mathrm{V1}}
\newcommand{\movers}{\cite{CohenL98,Doeppner77,Lamport89pretendingatomicity,Elmas09,Flanagan:2005,KVW16:IandC}}
\newcommand\tlE{\textsf{\textbf{E}}\,}
\newcommand\ltlF{\textsf{\textbf{F}}\,}
\newcommand\ltlG{\textsf{\textbf{G}}\,}
\newcommand\ltlE{\,\textsf{\textbf{E}}\,}
\newcommand\LTL[0]{$\mbox{\textsf{LTL}}$}
\newcommand\ELTL[0]{$\mbox{\textsf{ELTL}}$}
\newcommand\ELTLTB{\textsf{ELTL}_\textsf{FT}}
\renewcommand\vec[1]{\mathbf{#1}}
\newcommand\tbh[1]{\textsf{\textbf{\scriptsize{#1}}}}
\newcommand\jrep[1]{\mathsf{rep}[#1]}
\newcommand\xrep[2]{\mathsf{srep}[#1,#2]}
\newcommand\slice[2]{#1{\raise-.5ex\hbox{\ensuremath|}}_{#2}}
\newcommand\proj[2]{{#1}|_{#2}}
\newcommand{\multipl}{\mu}
\newcommand{\reprlive}{\mathsf{repr}[\psi , \gst, \tau]}
\newcommand{\sro}[3]{\mathsf{repr}_{\vee}[\psi , #2,#3]}
\newcommand{\srogen}{\sro \Ctx\gst\tau}
\newcommand{\sr}[3]{\mathsf{srep}[#2,#3]}
\newcommand{\srgen}{\sr \Ctx\gst\tau}
\newcommand{\bk}{\boldsymbol\kappa}
\newcommand{\critical}{\mathit{Locs}}
\newcommand{\gcritical}{n}
\newcommand{\gcri}{m}
\newcommand{\gsro}[4]{\mathsf{repr}_{\wedge\vee}[\psi ,\multist{#2}{#4},\multisch{#3}{#4}]}
\newcommand{\gsrogen}{\gsro \Ctx\gst\tau{\multipl}}
\newcommand{\typeall}{A}
\newcommand{\typebeg}{B}
\newcommand{\typeend}{C}
\newcommand{\typemid}{D}
\newcommand{\typebegend}{E}
\newcommand{\typenot}{F}
\newcommand{\nmodels}{\not\models}
\newcommand{\setconf}[2]{\textsf{Cfgs}(#1,#2)}   
\newcommand{\naming}{\eta}
\newcommand{\move}[2]{#1_{#2 { \scriptscriptstyle \leftarrow}}}
\newcommand{\movemove}[3]{#1_{#2 {\scriptscriptstyle \leftarrowtail} #3}}
\newcommand{\threads}{\Theta}
\newcommand{\newrest}[3]{\proj{#1}{#2,\Natural\setminus\{#3\}}}
\newcommand{\firststate}[1]{\mathrm{first}(#1)}
\newcommand{\laststate}[1]{\mathrm{last}(#1)}
\newcommand{\middlestate}[1]{\mathrm{middle}(#1)}
\newcommand{\multist}[2]{{#2}{#1}}
\newcommand{\multisch}[2]{{#2} {#1}}
\newcommand{\propfair}{\psi_\text{fair}}
\newcommand{\prop}{\mathit{prop}}
\newcommand{\loops}{\mathsf{loop}_{\mathsf{start}}}
\newcommand{\loope}{\mathsf{loop}_{\mathsf{end}}}
\newcommand{\canform}{\mathit{can}}
\newcommand{\syntree}{{\cal T}}
\newcommand{\pgraph}{{\cal G}}
\newcommand{\pgvertices}{{\cal V}_{\pgraph}}
\newcommand{\pgedges}{{\cal E}_{\pgraph}}
\newcommand{\tgraph}{{\cal H}}
\newcommand{\tgvertices}{{\cal V}_{\tgraph}}
\newcommand{\tgedges}{{\cal E}_{\tgraph}}
\newcommand{\imap}{\zeta}
\newcommand{\pnode}{\operatorname{\mathit{p-node}}}
\newcommand{\pcutpoint}{\operatorname{\mathit{p-cutpoint}}}
\newcommand{\EA}{\operatorname{\mathit{EA}}}
\newcommand{\proofintext}[1]{}
\begin{document}

\maketitle

\begin{abstract}
Distributed algorithms have many mission-critical applications ranging
     from  embedded systems and replicated databases to cloud
     computing.
Due to asynchronous communication, process faults, or network
     failures, these algorithms are difficult to design and verify.
Many algorithms achieve fault tolerance by using threshold guards
     that, for instance, ensure that a process waits until it has
     received an acknowledgment from a majority of its peers.
Consequently, domain-specific languages for fault-tolerant distributed
     systems offer language support for threshold guards.

We introduce an automated method for  model checking of safety and
     liveness of threshold-guarded distributed algorithms in systems
     where the number of processes and the fraction of faulty
     processes are parameters.
Our method is based on a \emph{short counterexample property}: if a
     distributed algorithm violates a temporal specification (in a
     fragment of \LTL), then
     there is a counterexample whose length is bounded and independent
     of the parameters.
We prove this property by (i) characterizing executions depending on
     the structure of the temporal formula, and (ii) using
     commutativity of transitions to accelerate and shorten
     executions.
We extended the ByMC toolset (Byzantine Model Checker) with our
     technique, and verified liveness and safety of 10 prominent
     fault-tolerant distributed algorithms, most of which were out of
     reach for existing techniques.
\end{abstract}

\category{F.3.1}{Logic and Meanings of Programs}{Specifying and Verifying and Reasoning about Programs}
\category{D.4.5}{Software}{Operating systems: Fault-tolerance, Verification}

\keywords
Parameterized model checking, Byzantine faults,
 fault-tolerant distributed algorithms,
reliable broadcast

    \makeatletter{\renewcommand*{\@makefnmark}{}
    \footnotetext{Supported by: the Austrian Science~Fund~(FWF) through the
    National Research Network RiSE (S11403 and S11405), project PRAVDA
    (P27722), and Doctoral College LogiCS (W1255-N23); and by the Vienna Science and Technology
    Fund (WWTF) through project APALACHE (ICT15-103).}\makeatother}

\makeatletter{}\section{Introduction}\label{sec:intro}

\newcommand{\awesomepapers}{\cite{GleissenthallBR16,KillianABJV07,BielyD0S13,DragoiHZ16,LesaniBC16,Pel16,KVW15:CAV}}

Distributed algorithms have many applications in avionic and
     automotive embedded systems, computer networks, and the internet
     of things.
The central idea is to achieve dependability by  replication, and
     to ensure that all correct replicas behave as one, even if some of
     the replicas fail.
In this way, the correct operation of the system is more reliable than
     the correct operation of its parts.
Fault-tolerant algorithms typically have been used in applications
     where highest reliability is required because human life is at
     risk (e.g., automotive or avionic industries), and even unlikely
     failures of the system are not acceptable.
In contrast, in more mainstream applications like replicated
     databases, human intervention to restart the system from a
     checkpoint was often considered to be acceptable, so that
     expensive fault tolerance mechanisms were not used in
     conventional applications.
However, new application domains such as cloud computing provide a new
     motivation to study fault-tolerant algorithms: with the huge
     number of computers involved, faults are the norm~\cite{Netflix5}
     rather than an exception, so that fault tolerance becomes an
     economic necessity; and so does the correctness of fault
     tolerance mechanisms.
Hence, design, implementation, and verification of distributed systems
     constitutes an active research area~\awesomepapers{}.
Although distributed algorithms show complex behavior, and are
     difficult to understand for human engineers, there is only very
     limited tool support to catch logical errors in fault-tolerant
     distributed algorithms at design time.

The state of the art in the design of fault-tolerant 
     systems is exemplified by the recent work on Paxos-like
     distributed algorithms like Raft~\cite{Ongaro2014} or
     M$^2$PAXOS~\cite{Pel16}.
The  designers encode these 
     algorithms in TLA+~\cite{TLA}, and use the TLC model checker to
     automatically find bugs in small instances, i.e., in
     distributed systems containing, e.g., three  processes.
Large distributed systems (e.g., clouds) need guarantees for
     \emph{all} numbers of processes.
These guarantees are  typically given using hand-written
     mathematical proofs.
In principle, these proofs could be encoded and machine-checked using
     the TLAPS proof system~\cite{ChaudhuriDLM10}, PVS~\cite{LR93}, 
Isabelle~\cite{Charron-BostM09}, Coq~\cite{LesaniBC16},
Nuprl~\cite{RahliGBC15}, or similar
     systems; but this requires human expertise in the proof checkers
     and in the application domain, and a lot of effort.

Ensuring  correctness of the implementation is an open challenge:
As the implementations are done by hand~\cite{Ongaro2014,Pel16}, the
     connection between the specification and the implementation is
     informal, such that there is no formal argument about the
     correctness of the implementation.
To address the discrepancy between design, implementation, and
     verification,  Dr\u{a}goi et al.~\cite{DragoiHZ16} introduced a
     domain-specific language PSync which is used for two purposes:
     (i) it compiles into running code, and (ii) it is used for
     verification.
Their verification approach~\cite{DHVWZ14}, requires a developer to
     provide invariants, and similar verification conditions.
While this approach requires less human intervention than writing
     machine-checkable proofs, coming up with invariants of
     distributed systems requires considerable human ingenuity.
The Mace~\cite{KillianABJV07} framework is based on a similar idea,
     and is an extension to C++.
While being fully automatic, their approach to correctness is
     light-weight in that it uses a tool that explores random walks to
     find (not necessarily all) bugs, rather than actually verifying
     systems.

\begin{figure}\label{fig:st}
\lstinputlisting[language=distal]{distal.tex}
\caption{Code example of a distributed algorithm in DISTAL~\cite{BielyD0S13}.
  A distributed system consists of $n$ processes,
    at most $t < n/3$ of which are Byzantine faulty. The correct ones
    execute the code, and no assumptions is made about the faulty processes.
}
\end{figure}

In this paper we focus on automatic verification methods for
     programming constructs that are typical for fault-tolerant
     distributed algorithms.
Figure~\ref{fig:st} is an example of
a distributed algorithm in the domain-specific language
     DISTAL~\cite{BielyD0S13}.
It encodes the core of the reliable broadcast protocol
     from~\cite{ST87:abc}, which is used as building block of many
     fault-tolerant distributed systems.
Line~\ref{line:tp1} and Line~\ref{line:nmt}  use so-called ``threshold
     guards'' that check whether a given number of messages from
     distinct senders arrived at the receiver.
As threshold guards are the central algorithmic idea for fault
     tolerance, domain-specific languages such as DISTAL or PSync have
     constructs for them (see~\cite{DragoiHZ16} for an 
     overview of domain-specific languages and formalization
     frameworks for distributed systems).
For instance, the code in Figure~\ref{fig:st} works for systems with
     $n$ processes among which~$t$ can fail, with $t < n/3$ as
     required for Byzantine fault tolerance~\cite{LSP80}.
In such systems, waiting for messages from $n-t$ processes ensures
     that if all correct processes send messages, then faulty
     processes cannot prevent progress.
Similarly, waiting for $t+1$ messages ensures that at least one
     message was sent by a correct process.
Konnov et al.~\cite{KVW15:CAV} introduced an automatic method to
     verify safety of algorithms with threshold guards.
Their method is parameterized in that it verifies distributed
     algorithms for all values of parameters ($n$ and $t$) that
     satisfy a resilience condition ($t < n/3$).
This work bares similarities to the classic work on reduction for
     parallel programs by Lipton~\cite{Lipton75}.
Lipton proves statements like ``all $P$ operations on a semaphore are
     left movers with respect to operations on other processes.'' 
He proves that given a run that ends in a given state,
     the same state is reached by the run in which the $P$ operation
     has been moved.
Konnov et al.~\cite{KVW15:CAV} do a similar analysis for
     threshold-guarded operations, in which they analyze the relation
     between statements from Figure~\ref{fig:st} like ``\texttt{send
     EchoMsg}'' and  ``\texttt{UPON RECEIVING EchoMsg TIMES t + 1}''
     in order to determine which statements are movable.
From this, they develop an offline partial order reduction that
     together with acceleration~\cite{BardinFLP08,KVW16:IandC} reduced
     reachability checking to complete bounded
     model checking using SMT.
In this way, they automatically check safety of fault-tolerant
     algorithms.

However, for  fault-tolerant distributed algorithms liveness is as
     important as safety: This comes from the celebrated impossibility result by Fischer, Lynch,
     and Paterson~\cite{FLP85} that states that a fault-tolerant
     consensus algorithm cannot ensure both safety and liveness in
     asynchronous systems.
It is folklore that designing a safe fault-tolerant
     distributed algorithm is trivial: \emph{just do nothing}; e.g., by never committing transactions, one cannot commit
     them in inconsistent order.
Hence, a technique that verifies only safety  may establish the
     ``correctness'' of a distributed algorithm that never does
     anything useful.
To achieve trust in correctness of a distributed algorithm,
    we need tools that verify both safety and liveness.

As exemplified by~\cite{FarzanKP16}, liveness verification of
     parameterized distributed and concurrent systems is still a
     research challenge.
Classic work on parameterized model checking by German and
     Sistla~\cite{GS1992} has several restrictions on the
     specifications ($\forall i.\, \phi(i)$) and the computational
     model (rendezvous), which are incompatible with fault-tolerant
     distributed algorithms.
In fact, none of the approaches
     (e.g.,~\cite{CTV2008,EN95,EmersonK03LICS,PXZ02}) surveyed
     in~\cite{2015Bloem} apply to the algorithms we consider.
More generally, in the parameterized case, going from safety to
     liveness is not straightforward.
There are systems where safety is decidable and liveness is
     not~\cite{Esparza99}.

\newcommand{\casest}{\cite{CT96,ST87:abc,BrachaT85,MostefaouiMPR03,Raynal97,Gue02,DobreS06,BrasileiroGMR01,SongR08}}

\paragraph{Contributions.} We generalize the approach by Konnov
    et al.~\cite{KVW16:IandC,KVW15:CAV} to liveness by 
presenting a framework and a model checking tool that takes as input a
    description of a distributed algorithm (in our variant~\cite{GKSVW14:SFM} of Promela~\cite{H2003}) and specifications
    in a
    fragment of linear temporal logic. 
It then shows correctness for all parameter values (e.g., $n$ and $t$) that
    satisfy the required resilience condition (e.g., $t < n/3$), or reports
    a counterexample:
    \begin{enumerate}

\item As in the classic result by Vardi and Wolper~\cite{VW86}, we
     observe that it is sufficient to search for counterexamples that
     have the form of a lasso, i.e., after a finite prefix an
     infinite loop is entered.
Based on this, we analyze specifications automatically, in order to
     enumerate possible shapes of lassos depending on temporal
     operators~$\ltlF$ and~$\ltlG$ and evaluations of threshold guards.

\item We automatically do offline partial order reduction
    using the algorithm's description.
For this, we introduce a more refined mover analysis for threshold
     guards and temporal properties.
We extend Lipton's reduction method~\cite{Lipton75}
     (re-used and extended by many others~\movers), so that we maintain
     invariants, which allows us to go beyond reachability and
     verify specifications with the temporal operators $\ltlF$ and
     $\ltlG$.

\item By combining acceleration~\cite{BardinFLP08,KVW16:IandC} with
     Points~1 and~2, we obtain a short counterexample property, that is,
     that infinite executions (which may potentially be
     counterexamples) have ''equivalent'' representatives of bounded
     length.
The bound depends on the process code and is independent of the
     parameters.
The equivalence is understood in terms of temporal logic
     specifications that are satisfied by the original executions and
     the representatives, respectively.
We show that the length of the representatives increases mildly
     compared to reachability checking in~\cite{KVW15:CAV}.
This implies a so-called completeness threshold~\cite{KroeningOSWW11}
     for threshold-based algorithms and our fragment of~\LTL{}.

\item Consequently, we only have to check a reasonable number of SMT queries
        that encode parameterized and bounded-length representatives of
        executions.
We show that if the parameterized system violates a temporal property, then SMT
reports a counterexample for one of the queries. 
We prove that otherwise the specification holds for all system sizes.

\item Our theoretical results and our implementation push
        the boundary of liveness verification for fault-tolerant
        distributed algorithms. 
While prior results~\cite{JohnKSVW13:fmcad} scale just to two out of ten
    benchmarks from~\cite{KVW15:CAV}, we verified safety and liveness
of all ten. 
These benchmarks originate from distributed algorithms~\casest{} that
    constitute the core of important services such as replicated state
    machines.
\end{enumerate}

From a theoretical viewpoint, we introduce new concepts and conduct
extensive proofs (the proofs can be found in~\cite{KLVW16:arxiv}) 
for Points~1 and~2.
From a practical viewpoint, we have built a complete framework for
     model checking of fault-tolerant distributed algorithms that use
     threshold guards, which constitute the central programming
     paradigm for dependable distributed systems.

\makeatletter{}
\section{Representation of Distributed Algorithms}\label{sec:Ab}

\subsection{Threshold Automata}
\label{sec:TA}

\begin{figure}[t]
    \begin{center}
    {\makeatletter{}
\tikzstyle{node}=[circle,draw=black,thick,minimum size=4.3mm,inner sep=0.75mm,font=\normalsize]
\tikzstyle{init}=[circle,draw=black!90,fill=green!10,thick,minimum size=4.3mm,inner sep=0.75mm,font=\normalsize]
\tikzstyle{final}=[circle,draw=black!90,fill=red!10,thick,minimum size=4.3mm,inner sep=0.75mm,font=\normalsize]
\tikzstyle{rule}=[->,thick]
\tikzstyle{post}=[->,thick,rounded corners,font=\normalsize]
\tikzstyle{pre}=[<-,thick]
\tikzstyle{cond}=[rounded
  corners,rectangle,minimum
  width=1cm,draw=black,fill=white,font=\normalsize]
\tikzstyle{asign}=[rectangle,minimum
  width=1cm,draw=black,fill=gray!5,font=\normalsize]

\tikzset{every loop/.style={min distance=5mm,in=140,out=113,looseness=2}}
\begin{tikzpicture}[>=latex, thick,scale=0.9, every node/.style={scale=01}]

\node[] at (0, 1.15) [init,label=left:\textcolor{blue}{$\ell_0$}]         (0) {};
 \node[] at (0, -1.15) [init,node,label=left:\textcolor{blue}{$\ell_1$}]   (1) {};

 \node[] at (3.5, 0) [node,label=below:\textcolor{blue}{$\ell_2$}]        (2) {};
 \node[] at (6, 0) [final,label=below right:\textcolor{blue}{$\ell_3$}]    (3) {};

\draw[post] (0) to[]
    node[rotate=-17, align=center,anchor=east, midway,yshift=-.23cm,xshift=.75cm]
    {$r_2 \colon \gamma_1 \mapsto \cpp{x}$} (2);
\draw[post] (1) to[] node[rotate=17, anchor=north,yshift=.48cm,xshift=-.25cm]
    {$r_1 \colon \mathit{true} \mapsto \cpp{x}$~ ~}(2);
\draw[post] (0) -| node[anchor=south,yshift=-.45cm,xshift=1cm, pos=.25] (xpp)
    {$r_3 \colon \gamma_2 \mapsto \cpp{x}$} (3);
\draw[post] (2)to[]
    node[align=center,anchor=north, midway]
    {$r_4 \colon \gamma_2$} (3);
\draw[post] (1) -| node[anchor=north, pos=.25,yshift=.35cm,xshift=1cm] (xget)
    {$r_5 \colon \gamma_2 \mapsto \cpp{x}$} (3);
\draw[rule] (0) to[out=225,in=270,looseness=8]
	    node[align=center,anchor=east,midway]{$r_6$} (0);
\draw[rule] (2) to[out=15,in=60,looseness=8]
	    node[align=center,anchor=north,midway,xshift=.2cm,yshift=.3cm]{$r_7$} (2);
\draw[rule] (3) to[out=15,in=60,looseness=8]
	    node[align=center,anchor=north,midway,yshift=.4cm]{$r_8$} (3);

\end{tikzpicture}

 }
    \end{center}
\caption{The threshold automaton corresponding to Figure~\ref{fig:st} 
with $\gamma_1 \colon x \ge (t+1) - f$ and $\gamma_2 \colon x \ge
(n-t) - f$
 over parameters
     $n$, $t$, and $f$, representing the number of processes, the upper
     bound on the faulty processes (used in the code), and the actual
     number of faulty processes. The negative number $-f$ in the threshold is
     used to  model the environment, and captures that at most $f$ of
     the received messages may have been sent by faulty processes.
}
\label{fig:stunningexample}
\end{figure}

As internal representation in our tool, and in the theoretical work of
     this paper, we use \emph{threshold automata} (TA) defined
     in~\cite{KVW16:IandC}.
The TA that corresponds to the DISTAL code from  Figure~\ref{fig:st}
     is given in Figure~\ref{fig:stunningexample}.
The threshold automaton represents the local control flow of a single
     process, where arrows represent local transitions that are
     labeled with $\varphi \mapsto \mathsf{act}$: Expression~$\varphi$
     is a  threshold guard and the action $\mathsf{act}$ may increment
     a shared variable.

\begin{example}\label{ex:machina} The TA from
Figure~\ref{fig:stunningexample} is quite similar to the code in
     Figure~\ref{fig:st}: if \texttt{START} is called with $v=1$ this
     corresponds to the initial local state $\ell_1$, while otherwise
     a process starts in~$\ell_0$.
Initially a process has not sent any messages.
The local state $\ell_2$ in Figure~\ref{fig:stunningexample} 
	captures that the process has sent
     \texttt{EchoMsg} and  \texttt{accept} evaluates to false, 
     while $\ell_3$ captures that the
     process has sent \texttt{EchoMsg} and \texttt{accept} evaluates
     to true.
The syntax of Figure~\ref{fig:st}, although checking how
     many messages of some type are received, hides bookkeeping
     details and the environment, e.g., message buffers.
For our verification technique, we need to make such issues explicit:
     The shared variable $x$ stores the number of correct processes
     that have sent \texttt{EchoMsg}.
Incrementing~$x$ models that \texttt{EchoMsg} is sent when the
     transition is taken.
Then, execution of Line~\ref{line:init} corresponds to the transition
     $r_1$.
Executing Line~\ref{line:tp1} is captured by $r_2$: the check whether
     $t+1$ messages are received is captured by the fact that $r_2$
     has the guard $\gamma_1$, that is, $x \ge (t+1) - f$.
Intuitively, this guard checks whether sufficiently many processes
     have sent \texttt{EchoMsg} (i.e., increased~$x$), and  takes into
     account that at most~$f$ messages may have been sent by  
     faulty processes.
Namely, if we observe the guard in the equivalent form  $x + f\ge
     t+1$, then we notice that it evaluates to true when the total
     number of received \texttt{EchoMsg}  messages from correct
     processes ($x$) and potentially received messages from faulty processes (at most $f$),  is at least
     $t+1$, which corresponds to the guard of Line~\ref{line:tp1}.
Transition~$r_4$ corresponds to Line~\ref{line:nmt},  $r_3$
     captures that Line~\ref{line:init} and Line~\ref{line:nmt} are
     performed in one protocol step, and $r_5$ captures
     Line~\ref{line:tp1} and Line~\ref{line:nmt}.
\end{example}

While the example shows that the code in a domain-specific
     language and a TA are quite close, it should be noted that in
     reality, things are slightly more involved.
For instance, the DISTAL runtime takes care of the bookkeeping of sent
     and received messages (waiting queues at different network
     layers, buffers, etc.), and just triggers the high-level protocol
     when a threshold guard evaluates to true.
This typically requires counting the number of
     received messages.
While these local counters are present in the implementation, they
     are abstracted in the~TA.
For the purpose of this paper we do not need to get into the details.
Discussions on data abstraction and automated generation of TAs from
     code similar to DISTAL can be found in~\cite{KVW16:psi}.

We recall the necessary definitions introduced in~\cite{KVW16:IandC}.
A threshold automaton is a tuple~$\Sk = (\local, \initlocal, \globset,
     \paraset, \ruleset,\ResCond)$ whose components are defined as follows:
The  \emph{local states} and the  \emph{initial states} are in the
     finite sets $\local$ and $\initlocal\subseteq\local$,
     respectively. 
For simplicity, we identify local states with natural numbers, i.e., $\local= \{1, \dots, |\local| \}$.
\emph{Shared variables} and \emph{parameter variables} range
     over~$\NatZero$ and are in the finte sets $\globset$
     and~$\paraset$, respectively.
The  \emph{resilience condition}~$\ResCond$ is a formula over
     parameter variables in linear integer arithmetic, and the
     \emph{admissible parameters} are
     $\AdmP = \{ \param\in \NatZero^\numparam \colon \param \models
     \ResCond \}$.
After an example for resilience conditions, we will conclude the
definition of a threshold automaton by defining~$\ruleset$ as the finite set of rules.

\begin{example}\label{ex:rc}
The admissible parameters and  resilience conditions are motivated
     by fault-tolerant distributed algorithms: Let $n$
     be the number of processes, $t$ be the assumed number of faulty
     processes, and in a run, $f$ be the actual number of faults.
For these parameters, the famous result by Pease, Shostak and
     Lamport~\cite{LSP80} states that agreement can be solved iff the
     resilience condition $n > 3t \wedge t \ge f \ge 0$ is satisfied.
Given such constraints, the set $\AdmP$ is infinite, and in
     Section~\ref{sec:countsys} we will see that this results in an
     infinite state system.
\end{example}

A  \emph{rule} is a tuple $(\fromstate, \tostate, \precondLE,
     \precondG, \update)$, where $\fromstate$ and $\tostate$ are from
     $\local$, and capture from which local state to which a process
     moves via that rule.
A rule can only be executed if $\precondLE$ and $\precondG$ are true; both
are conjunction of guards.
Each guard consists of a shared variable $x \in \globset$,
     coefficients $a_0, \dots, a_\numparam \in \Int$, and parameter
     variables $p_1, \dots, p_\numparam \in \paraset$ so that $x \ge a_0 +
     \sum\nolimits_{i=1}^{\numparam} a_i \cdot p_i$ is
a \emph{lower guard}
 and
     $x < a_0 + \sum\nolimits_{i=1}^{\numparam} a_i \cdot p_i $ 
     is an \emph{upper guard}. Then,
$\PrecondU$ and $\PrecondL$ are the sets of lower and upper
guards.\footnote{Compared to~\cite{KVW15:CAV}, we use the more
  intuitive notation of $\PrecondU$ and $\PrecondL$: lower guards
  can only change from false to true (rising), while upper guards can only
  change from true to false (falling); cf.\ Proposition~\ref{prop:mono}.}
Rules may increase shared variables using an update vector $\update
     \in \NatZero^\numglob$ that is added to the vector of shared
     variables.
Finally,~$\ruleset$ is the finite set of rules.

\begin{example} A rule corresponds to an edge in
     Figure~\ref{fig:stunningexample}.
The pair $(\fromstate,\tostate)$ encodes the edge while
     $(\precondLE,\precondG,\update)$ encodes the edge label.
For example, rule~$r_2$ would be $(\ell_0,\ell_2,\gamma_1,\top,1)$.
Thus, a rule corresponds to a (guarded) statement from
     Figure~\ref{fig:st} (or combined statements as discussed in
     Example~\ref{ex:machina}).
\end{example}

The above definition of TAs is quite general.
It allows loops, increase of shared variables in loops, etc.
As has been observed in~\cite{KVW16:IandC}, if one does not restrict
     increases on shared variables, the resulting systems may produce
     runs that visit infinitely many states, and there is little hope for
     a complete verification method.
Hence, Konnov et al.~\cite{KVW15:CAV} analyzed the TAs of the
     benchmarks~\casest{}: They observed that some states have self-loops (corresponding to
     busy-waiting for messages to arrive) and in the case of failure
     detector based algorithms~\cite{Raynal97} there are loops that
     consist of  at most two rules.
None of the rules in loops increase shared variables. 
In our theory, we allow more general TAs than actually found in the
     benchmarks. In more detail, we make the following assumption: 

\paragraph{Threshold automata for fault-tolerant distributed algorithms.}

As in~\cite{KVW16:IandC}, we  assume that if a rule~$r$ is in a loop,
     then $r.\update = \vec{0}$.
In addition, we use the restriction that all the cycles of a TA are
     simple, i.e., between any two locations in a cycle there exists
     exactly one node-disjoint directed path (nodes in cycles may have
     self-loops).
We conjecture that this restriction can be relaxed as
     in~\cite{KVW15:CAV}, but this  is orthogonal to our~work.

\begin{example} In the TA from
     Figure~\ref{fig:stunningexample} we use the shared variable~$x$
     as the number of correct processes that have sent a message.
One easily observes that the rules that update~$x$ do not belong to
     loops.
Indeed, all the benchmarks~\casest{} share this structure.
This is because at the algorithmic level, all these algorithms are
based on the 
     reliable communication assumption (no message loss and no
     spurious message generation/duplication), and not much is gained by
     resending the same message.
In these  algorithms a process checks whether
     sufficiently many processes (e.g., a majority) have sent a
     message to signal that they are in some specific local state.
Consequently, a receiver would ignore duplicate messages from the same
     sender.
In our analysis we exploit this characteristic of distributed
     algorithms with threshold guards, and make the corresponding
     assumption that processes do not send (i.e., increase~$x$) from
     within a loop.
Similarly, as a process cannot make the sending of a message undone,
     we assume that shared variables are never decreased.
So, while we need these assumptions to derive our results, they are
     justified by our application domain.
\end{example}

\subsection{Counter Systems}
\label{sec:countsys}

A threshold automaton models a single process.
Now the question arises how we define the composition of multiple
     processes that will result in a distributed system.
Classically, this is done by parallel composition and interleaving
     semantics: A state of a distributed system that consists of $n$
     processes is modeled as $n$-dimensional vector of local states.
The transition to a successor state is then defined by
     non-deterministically picking a process, say $i$, and changing the
     $i$th component of the $n$-dimensional vector according to the
     local transition relation of the process.
However, for our domain of threshold-guarded algorithms, we do not
     care about the precise $n$-dimensional vector so that we use a
     more efficient encoding:  It is well-known that the system state
     of specific distributed or concurrent systems can be represented
     as a counter  system~\cite{Lub84,PXZ02,AlbertiGP16,KVW16:IandC}: 
instead of recording for some local state~$\ell$, which
     processes are in~$\ell$, we are only interested in \emph{how many
     processes are in~$\ell$}.
In this way, we can efficiently encode transition systems in SMT with
     linear integer arithmetics.
Therefore, we formalize the semantics of the threshold automata by
     counter systems.

Fix a threshold automaton $\Sk$, a function (expressible as linear combination
    of parameters) $\syssize\colon \AdmP \rightarrow \NatZero$ that determines
    the number of modeled processes, and admissible parameter values $\param
    \in \AdmP$.
A counter system $\Sys(\TA)$ is defined as a transition system
    $(\configs,\iconfigs,\transrel)$, with configurations $\configs$ and
    $\iconfigs$ and transition relation~$\transrel$ defined below.

\begin{definition}\label{def:config}
A configuration $\sigma=(\counters,\vars,\param)$ consists of a vector
     of \emph{counter values} $\sigma.\counters \in
     \NatZero^\numlocal$, a vector of
     \emph{shared variable values} $\sigma.\vars \in
     \NatZero^\numglob$, and a vector of \emph{parameter values}
     $\sigma.\param = \param$.
The set $\configs$ contains all configurations.
The initial configurations are in set~$\iconfigs$, and
     each initial configuration $\sigma$ satisfies
 $\sigma.\vars = \vec{0}$, $\sum_{i
     \in \initlocal}  \sigma.\counters[i] = \syssize(\param)$, and
     $\sum_{i \not\in \initlocal} \sigma.\counters[i] = 0$.
\end{definition}

\begin{example}\label{ex:config}
The safety property from Example~\ref{ex:rc},
     refers to an initial configuration that satisfies 
     resilience condition  $n > 3t \wedge t \ge f \ge 0$, e.g., $4 >
     3\cdot 1 \wedge 1 \ge 0 \ge 0$ such that $\sigma.\param =
     (4,1,0)$.
In our encodings we typically have
$\syssize$ is the function $(n,t,f) \mapsto n-f$.
Further, $\sigma.\counters[\ell_0] = \syssize(\param)= n-f = 4$ and
     $\sigma.\counters[\ell_i] = 0$, for $\ell_i
     \in\local\setminus\{\ell_0\}$, and the shared variable
     $\sigma.\vars=0$.
\end{example}

A \emph{transition} is a pair $t=(\trrule,\trfactor)$ of a rule  and a
     non-negative integer called the \emph{acceleration factor}.
For $t=(\trrule,\trfactor)$ we write $t.\update$  for
     $\trrule.\update$, etc. A transition $t$ is \emph{unlocked}
     in~$\sigma$ if $ \forall k \in \{0, \dots, t.\trfactor - 1 \}.\;
     (\sigma.\counters,\sigma.\vars + k\cdot t.\update, \sigma.\param)
     \models t.\precondLE \wedge t.\precondG.\nonumber $ A transition
     $t$ is \emph{applicable (or enabled)} in~$\sigma$, if it is
     unlocked, and $\sigma.\counters[t.\fromstate] \ge t.\trfactor$,
     or $t.\trfactor=0$.

\begin{example}\label{ex:accel}
This notion of applicability contains acceleration and is central for
     our approach.
Intuitively, the value of the factor corresponds to how
many times the rule is executed by different processes. In this way, we
can subsume steps by an arbitrary number of
     processes into one transition.
Consider Figure~\ref{fig:stunningexample}.
If for some $k$, $k$ processes are in location $\ell_1$, then in
     classic modeling it takes $k$ transitions to move these processes
     one-by-one to~$\ell_2$.
With acceleration, however, these $k$ processes can
     be moved to $\ell_2$ in one step, independently of $k$.
In this way, the bounds we compute will be independent of the
     parameter values.
However, assuming $x$ to be a shared variable and $f$ being a
     parameter that captures the number of faults, our
     (crash-tolerant) benchmarks include rules like ``$x<f \mapsto
     \cpp{x}$'' for local transition to a special ``crashed'' state.
The above definition ensures that at most $f-x$
     of these transitions are accelerated into one transition (whose
     factor thus is at most $f-x$).
This precise treatment of threshold guards
     is crucial for fault-tolerant distributed
     algorithms.
The central contribution of this paper is to show how acceleration can
     be used to shorten schedules while maintaining specific temporal
     logic properties.
\end{example}

\begin{definition}\label{def:TofSigma}
The configuration $\sigma'$ is the result of applying the enabled
     transition $t$ to $\sigma$, if 
\begin{enumerate}
\item $\sigma'.\vars = \sigma.\vars + t.\trfactor \cdot t.\update$
\item $\sigma'.\param = \sigma.\param$
\item if $t.\fromstate \ne t.\tostate$ then
$\sigma'.\counters[t.\fromstate]=\sigma.\counters[t.\fromstate]
  - t.\trfactor$,
 $\sigma'.\counters[t.\tostate]=\sigma.\counters[t.\tostate] +
  t.\trfactor$, and\\
 $\forall \ell \in \local \setminus \{t.\fromstate, t.\tostate \}.\;
\sigma'.\counters[\ell]=\sigma.\counters[\ell]$.

\item if $t.\fromstate =  t.\tostate$ then $\sigma'.\counters =
  \sigma.\counters.$
\end{enumerate}
In this case we use the notation $\sigma' =
     t(\sigma)$.
\end{definition}

\begin{example} \label{ex:guards}  Let us again consider
Figure~\ref{fig:stunningexample} with $n=4$, $t=1$, and $f=1$.
We consider the initial configuration where $\sigma.\counters[\ell_1]
     = n-f = 3$ and $\sigma.\counters[\ell_i] = 0$, for $\ell_i
     \in\local\setminus\{\ell_0\}$.
The guard of rule $r_5$, $\gamma_2 \colon x \ge (n-t) - f = 2$,
     initially evaluates to false because $x=0$.
The guard of rule $r_1$ is true, so that any transition $(r_1,
     \trfactor)$ is unlocked.
As $\sigma.\counters[\ell_1] =  3$, all transitions $(r_1,
     \trfactor)$, for $0\le\trfactor\le 3$ are applicable.
If the transition $(r_1,2)$ is applied to the initial configuration,
     we obtain that $x=2$ so that, after the application, $\gamma_2$
     evaluates to true.
Then $r_5$ is unlocked and the transitions $(r_5,1)$ and $(r_5,
     0)$ are applicable as $\sigma.\counters[\ell_1]  =  1$.
Since  $\gamma_2$ checks for greater or equal,
     once it becomes true it remains true.
Such monotonic behavior is given for all guards, as has already been
     observed in~\cite[Proposition 7]{KVW16:IandC}, and is a crucial property.
\end{example}

The transition relation $\transrel$
     is defined as follows: Transition $(\sigma, \sigma')$ belongs to
     $\transrel$ iff there is a rule $r\in\rules$ and a factor $k\in\Natural_0$
     such that $\sigma' = t(\sigma)$ for $t=(r,k)$.
A \emph{schedule} is a sequence of
     transitions.
For a schedule~$\tau$ and an index~$i: 1 \le i \le |\tau|$, by $\tau[i]$
     we denote the $i$th transition of $\tau$, and by $\tau^i$ we
     denote the prefix $\tau[1], \dots, \tau[i]$ of~$\tau$.
A schedule $\tau = t_1, \dots, t_m$ is \emph{applicable} to
     configuration $\sigma_0$, if there is a sequence of
     configurations $\sigma_1,\dots, \sigma_m$ with $\sigma_i =
     t_{i} (\sigma_{i-1})$ for $1 \le i \le m$.
A schedule $t_1, \dots, t_m$ where  $t_i.\trfactor=1$ for $0< i\le m$
     is called \emph{conventional}.
If there is a $t_i.\trfactor>1$, then a schedule is \emph{accelerated}.
By $\tau \concat \tau'$ we denote the concatenation
     of two schedules $\tau$ and~$\tau'$.

We will reason about schedules in Section~\ref{sec:repr} for our mover
     analysis, which is naturally expressed by swapping neighboring
     transitions in a schedule.
To reason about temporal logic properties, we need to reason about the
     configurations that are ``visited'' by a schedule.
For that we now introduce paths.

A finite or infinite sequence $\gst_0, t_1, \gst_1, \dots, t_{k-1}, \gst_{k-1},
     t_k, \dots$ of alternating configurations and transitions is called a
     \emph{path}, if for every transition~$t_i$, $i\in\Natural$, in 
     the sequence, holds that~$t_i$ is enabled in~$\gst_{i-1}$, and
     $\gst_i = t_i(\gst_{i-1})$.
For a configuration~$\gst_0$ and a finite schedule~$\tau$ applicable
     to~$\gst_0$, by $\finpath{\gst_0}{\tau}$ we denote $\gst_0, t_1, \gst_1,
     \dots, t_{|\tau|}, \gst_{|\tau|}$ with 
     $\gst_i = t_i(\gst_{i-1})$, for $1 \le i \le |\tau|$.
Similarly, if $\tau$ is an infinite schedule applicable to~$\gst_0$,
      then $\infpath{\gst_0}{\tau}$ represents an infinite sequence
      $\gst_0, t_1, \gst_1, \dots, t_{k-1}, \gst_{k-1}, t_k, \dots$
      where $\gst_i = t_i(\gst_{i-1})$, 
        for all $i > 0$.

The evaluation of the threshold guards solely
     defines whether certain rules are unlocked.
As was discussed in
     Example~\ref{ex:guards}, along a path, the evaluations of guards
     are monotonic.
The set of upper guards that evaluate to false and lower guards that 
	evaluate to true\dash---called
     the context\dash---changes  only finitely many times.
A schedule can thus be understood as an alternating  sequence of
     schedules without context change, and context-changing transitions.
We will recall the definitions of context etc.\ from~\cite{KVW15:CAV} 
in Section~\ref{sec:structguards}. We
     say that a schedule $\tau$ is \emph{steady}  for a configuration
     $\gst$, if every configuration of $\finpath{\gst}{\tau}$ has
     the same context.

Due to the resilience conditions and admissible parameters, our counter
     systems are in general infinite state.
The following proposition establishes an important property for
     verification.

\begin{proposition}\label{prop:finite}
Every (finite or infinite) path visits finitely many configurations.
\end{proposition}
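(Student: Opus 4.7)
The plan is to bound each of the three components of a configuration separately and conclude that any reachable configuration lies in a finite set. Parameters are invariant along any path by Definition~\ref{def:TofSigma}, so $\sigma.\param = \param$ is fixed once the initial configuration is chosen. For the counter component, every transition $t=(r,k)$ conserves the total number of processes: when $r.\fromstate \ne r.\tostate$, the factor $k$ is subtracted from $\sigma.\counters[r.\fromstate]$ and added to $\sigma.\counters[r.\tostate]$; when $r.\fromstate = r.\tostate$ the counters are unchanged. Hence $\sum_{\ell \in \local} \sigma.\counters[\ell] = \syssize(\param)$ is an invariant along any path, which forces $\sigma.\counters[\ell] \in \{0, \dots, \syssize(\param)\}$ for every $\ell \in \local$.

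The main obstacle is bounding the shared variables $\sigma.\vars$. Since updates lie in $\NatZero^\numglob$, shared variables are non-decreasing along any path, so what is needed is a uniform upper bound on how much they can grow. Here the standing assumption on TAs is crucial: any rule $r$ lying on a cycle of the rule graph satisfies $r.\update = \vec{0}$, or contrapositively, every rule with $r.\update \ne \vec{0}$ is acyclic. I plan to exploit this via a per-process accounting argument based on the strongly connected components of the graph on $\local$ induced by $\ruleset$. If $r.\update \ne \vec{0}$, then $r.\fromstate$ and $r.\tostate$ must lie in different SCCs (else $r$ would be on a cycle). The condensation of this graph is a DAG, so whenever a single process executes a non-zero-update rule it moves from one SCC to a strictly later one and can never return. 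Consequently each of the $\syssize(\param)$ processes executes at most $|\local|$ non-zero-update rules during any path.

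Because every transition $(r,k)$ with $r.\update \ne \vec{0}$ accounts for exactly $k$ process movements across an inter-SCC edge, the sum of acceleration factors of such transitions along any path is bounded by $\syssize(\param) \cdot |\local|$. Letting $M$ denote the largest entry occurring in any update vector of $\ruleset$, this yields the uniform upper bound $M \cdot \syssize(\param) \cdot |\local|$ on each component of $\sigma.\vars$. Combining this with the counter bound and the fixed parameter vector, every configuration visited by any (finite or infinite) path belongs to a finite subset of $\NatZero^{\numlocal} \times \NatZero^{\numglob} \times \{\param\}$, which proves the proposition. The only delicate step is the SCC argument for the shared variables; once the no-update-in-loops restriction is unpacked, the rest is straightforward bookkeeping, and the simple-cycle restriction on the TA is not needed here (it will only matter for the later representative-schedule constructions).
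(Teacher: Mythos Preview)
Your proposal is correct and follows the same three-part structure as the paper's proof: parameters are invariant, the counter sum is conserved, and the no-update-in-loops assumption bounds the growth of the shared variables. The only substantive difference is in the third step, where the paper argues per rule and rather tersely (asserting that $\counters[r.\fromstate]$ is increased ``at most $\syssize(\param)$ times'' without spelling out the justification), whereas you make the SCC structure of the location graph explicit and obtain a concrete global bound $M \cdot \syssize(\param) \cdot |\local|$; your version is the more detailed of the two, and your observation that the simple-cycle restriction plays no role here is also correct.
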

\begin{proof}
By Definition~\ref{def:TofSigma}(3), if a transition $t$ is applied to a
configuration $\sigma$, then the sum of the counters remains unchanged,
that is,
$\sum_{\ell\in\local} \sigma.\counters[\ell] = \sum_{\ell\in\local} 
t(\sigma).\counters[\ell]$. By
repeating this argument, the sum of the counters remains stable in a
path. By Definition~\ref{def:TofSigma}(2) the parameter values also
remain stable in a path. 

By Definition~\ref{def:TofSigma}(1), it remains to show that in each
     path eventually the shared variable $\vars$ stop increasing.
Let us fix a rule $r=(\fromstate, \tostate, \precondLE, \precondG,
     \update)$ that increases $\vars$. By the definition of a
     transition, applying some transition $(r,\trfactor)$ decreases
     $\counters[r.\fromstate]$ by $\trfactor$.
As by assumption on TAs,
     $r$ is not in a cycle, $\counters[r.\fromstate]$ is increased only
     finitely often, namely, at most $\syssize(\param)$
     times.
As there are only finitely many rules in a TA, the proposition
     follows.
\end{proof}

\makeatletter{}

\section{Verification Problems:
    Parameterized Reachability vs.  Safety \& Liveness.}
\label{sec:reach-and-live}

In this section we will discuss the verification problems for
fault-tolerant distributed algorithms. A central challenge is to handle
resilience conditions precisely.

\begin{example} \label{ex:reach}
The safety property (unforgeability)
     of~\cite{ST87:abc} expressed in terms of
     Figure~\ref{fig:stunningexample} means that no process should
     ever enter~$\ell_3$ if initially all processes are in~$\ell_0$,
    given that $n > 3t \wedge t \ge f \ge 0$.
We can
     express this in the counter system: under the
     resilience condition $n > 3t \wedge t \ge f \ge 0$,  given an
     initial configuration~$\sigma$,  with $\sigma.\counters[\ell_0] =
     n-f$, to verify safety, we have to establish the absence of a
     schedule $\tau$ that satisfies $\sigma'=\tau(\sigma)$ and
     $\sigma'.\counters[\ell_3] > 0$.

In order to be able to answer this question, we have to deal with these
     resilience conditions  precisely: Observe that $\ell_3$ is
     unreachable, as all outgoing transitions from $\ell_0$ contain
     guards that evaluate to false initially, and since all processes
     are in $\ell_0$ no process ever increases~$x$.
A slight modification of $t \ge f$ to $t+1 \ge f$ in the resilience
     condition changes the result, i.e., one fault too many breaks the
     system.
For example, if $n=4$, $t=1$, and $f=2$, then the new resilience
     condition holds, but as the guard~$\gamma_1: x \ge (t+1) -f$ is
     now initially true, then one correct process can fire the
     rule~$r_2$ and increase~$x$.
Now when $x=1$, the guard $\gamma_2: x\ge (n-t) -f$ becomes true, so
     that the process can fire the rule~$r_4$ and reach the
     state~$\ell_3$.
This tells us that unforgeability is not satisfied in the system where
     the resilience  condition is $n > 3t \wedge t+1 \ge f \ge 0$.
\end{example}

This is the verification question studied in~\cite{KVW15:CAV},
which can be formalized as follows:

\begin{definition}[Parameterized reachability]
Given a threshold automaton~$\TA$ and a Boolean formula~$B$ over
    $\{\counters[i] = 0 \mid i \in \local\}$,
    check whether there are parameter values~$\param \in \AdmP$,
    an initial configuration $\gst_0 \in \iconfigs$ with~$\gst_0.\param =
    \param$ and a finite schedule~$\tau$ applicable to~$\gst_0$ such
    that~$\tau(\gst_0) \models B$.
\end{definition}

As shown in~\cite{KVW15:CAV}, if such a schedule exists, then
     there is also a schedule of bounded length.
In this paper, we do not limit ourselves to reachability, but consider
 specifications of
\emph{counterexamples to safety and liveness} of FTDAs from the
    literature. 
We observe that such specifications use a simple subset of linear temporal
    logic that contains only the temporal operators~$\ltlF$ and~$\ltlG$.

\begin{example}\label{ex:fair}
Consider a liveness property from the distributed algorithms
literature
    called correctness~\cite{ST87:abc}:
\begin{equation}\label{eq:correctness}
     \ltlG\ltlF \propfair \rightarrow (\counters[\ell_0] = 0 \rightarrow
     \ltlF \counters[\ell_3] \ne 0).
\end{equation}
Formula~$\propfair$ expresses the reliable communication
     assumption of distributed algorithms~\cite{FLP85}.
In this example, $\propfair \equiv \counters[\ell_1] = 0 \wedge  (x\ge
     t+1 \rightarrow \counters[\ell_0] = 0 \wedge \counters[\ell_1] =
     0) \wedge (x\ge n-t \rightarrow \counters[\ell_0] = 0 \wedge
     \counters[\ell_2] = 0)$.
Intuitively, $\ltlG\ltlF \propfair$ means that all processes in
     $\ell_1$ should  eventually leave this state, and if sufficiently
     many messages of type~$x$ are sent ($\gamma_1$ or $\gamma_2$
     holds true), then all processes eventually receive them.
If they do so, they have to eventually fire rules~$r_1$,~$r_2$,~$r_3$,
     or~$r_4$ and thus leave locations~$\ell_0$,~$\ell_1$,
     and~$\ell_2$.
Our approach is based on possible shapes of  \emph{counterexamples}.
Therefore, we consider the negation of the
specification~(\ref{eq:correctness}), that is, $
  \ltlG\ltlF \propfair \wedge  \counters[\ell_0] = 0
    \wedge \ltlG \counters[\ell_3] = 0$. In the following we define
    the logic that can express such counterexamples.
\end{example}

The fragment of LTL limited to~$\ltlF$ and~$\ltlG$ was studied
    in~\cite{EtessamiVW02,KroeningOSWW11}. 
We further restrict it to the logic that we call \emph{Fault-Tolerant
    Temporal Logic} ($\ELTLTB$), whose 
syntax is shown in Table~\ref{table:eltlft-syntax}. 
The formulas derived from \emph{cform}\dash---called counter
    formulas\dash---restrict counters, while the formulas
    derived from \emph{gform}\dash---called guard formulas\dash---restrict
    shared variables. 
The formulas derived from \emph{pform} are propositional formulas. 
The temporal operators $\ltlF$ and $\ltlG$ follow the standard
    semantics~\cite{CGP1999,BK08}, that is, for a configuration~$\gst$ and an
    infinite
    schedule~$\tau$, it holds that ~$\infpath{\gst}{\tau} \models \varphi$, if:

\begin{enumerate}
    \item $\gst \models \varphi$, when $\varphi$ is a propositional formula,

    \item $\exists \tau',\tau'':
        \tau = \tau' \concat \tau''.\ \infpath{\tau'(\gst)}{\tau''} \models \psi$, when $\varphi=\ltlF \psi$,

    \item $\forall \tau',\tau'':
        \tau = \tau' \concat \tau''.\ \infpath{\tau'(\gst)}{\tau''} \models \psi$, when $\varphi=\ltlG \psi$.
\end{enumerate}

\begin{table}
  \begin{align*}
    \psi &::=
    \mathit{pform} \ |\ \ltlG \psi\ |\ \ltlF \psi \ |\  \psi \wedge \psi\\
    \mathit{pform} &::= \mathit{cform} \ |\ \mathit{gform} \vee \mathit{cform}\\
    \mathit{cform} &::= \bigvee_{\ell \in \mathit{Locs}} \counters[\ell] \ne
    0 \ |\ \bigwedge_{\ell \in \mathit{Locs}} \counters[\ell] = 0 \ |\ \mathit{cform} \wedge
    \mathit{cform} \\
    \mathit{gform} &::= \mathit{guard}
    \ |\ \neg \mathit{gform} \ |\ \mathit{gform} \wedge \mathit{gform}
  \end{align*}

\caption{The syntax of~$\ELTLTB$-formulas: $\mathit{pform}$ defines
        propositional formulas, and $\psi$ defines temporal formulas. 
We assume that $\mathit{Locs} \subseteq \local$ and $\mathit{guard} \in
    \PrecondU \cup \PrecondL$.} \label{table:eltlft-syntax}
\end{table}

To stress that the formula should be satisfied by \emph{at least
    one path}, we prepend $\ELTLTB$-formulas with the existential path
    quantifier~$\ltlE$.
We use the shorthand notation~$\mathit{true}$ for a valid
    propositional formula, e.g.,~$\bigwedge_{i \in \emptyset} \counters[i] =
    0$.
We also denote with~$\ELTLTB$ the set of all formulas that can be
    written using the logic~$\ELTLTB$.

We will reason about invariants of the finite
subschedules, and
 consider a propositional formula~$\psi$.
Given a configuration~$\gst$, a finite schedule~$\tau$ applicable
     to~$\sigma$, and~$\psi$, by 
     $\setconf{\gst}{\tau}\models \psi$
     we denote
     that~$\psi$ holds in every configuration $\sigma'$ visited by the
     path~$\finpath{\gst}{\tau}$.
In other words, for every prefix~$\tau'$ of~$\tau$, we have that 
     $\tau'(\gst)\models \psi$.

\begin{definition}[Parameterized unsafety \& non-liveness]\label{def:pmcp}
Given a threshold automaton~$\TA$ and an $\ELTLTB{}$ formula $\psi$, check
    whether there are parameter values $\param \in \AdmP$, an initial
    configuration $\gst_0 \in \iconfigs$ with $\gst_0.\param = \param$, and an
    infinite schedule~$\tau$ of~$\Sys(\TA)$ applicable to~$\gst_0$ such that
    $\infpath{\gst_0}{\tau} \models \psi$.
\end{definition}

\paragraph{Complete bounded model checking.}

We solve this problem  by showing how to
     reduce it to bounded model checking while guaranteeing
     completeness.
To this end, we have to construct a bounded-length encoding of
     infinite schedules. In more detail:

\begin{itemize}
\item We observe that if  $\infpath{\gst_0}{\tau} \models \psi$,
     then  there is an initial state~$\gst$ and two finite schedules
     $\vartheta$ and $\rho$ (of unknown length) that can be used to
     construct an infinite (lasso-shaped) schedule $\vartheta \cdot
     \rho^\omega$, such that $\infpath{\gst}{\vartheta \cdot \rho^\omega}
     \models \psi$ (Section~\ref{sec:lasso}).

\item Now given $\vartheta$ and $\rho$, we prove that we can use a
        $\psi$-specific reduction, to cut $\vartheta$ and $\rho$ into
        subschedules $\vartheta_1,\dots,\vartheta_m$ and $\rho_1,\dots,\rho_n$,
        respectively so that the subschedules satisfy subformulas of $\psi$
        (Sections~\ref{sec:shapelasso}, \ref{sec:enumerating-lassos}
        and~\ref{sec:structguards}).

\item We use an offline partial order reduction, specific to the
  subformulas of $\psi$,
 and acceleration to construct
     representative schedules $\jrep{\vartheta_i}$ and
     $\jrep{\rho_j}$  that satisfy the required $\ELTLTB$
     formulas that are satisfied $\vartheta_i$ and~$\rho_j$, respectively for $1 \le i
     \le m$ and $1 \le j \le n$.
Moreover,  $\jrep{\vartheta_i}$ and $\jrep{\rho_j}$ are 
     fixed sequences of rules, where bounds on the lengths of the
     sequences are known (Section~\ref{sec:repr}).

\item These fixed sequence of rules can be used to encode a query to
  the SMT solver (Section~\ref{sec:smtencodings}). We ask whether there is an applicable schedule in
  the counter system that
  satisfies the sequence of rules and~$\psi$ (Section~\ref{sec:subexp}). If the SMT solver
  reports a contradiction, there exists no counterexample.
\end{itemize}

Based on these theoretical results, our tool implements the 
     high-level verification algorithm from Figure~\ref{fig:pseudo} (in the comments we give the
     sections that are concerned with the respective step): 

\begin{figure}
\begin{lstlisting}[language=pseudo,numbers=none,columns=fullflexible]
algorithm parameterized_model_checking($\TA$, $\varphi$): // see Def. @\ref{def:pmcp}@
 $\pgraph$ := cut_graph ($\varphi$)         /* Sect. 4 */
 $\tgraph$ := threshold_graph($\TA$) /* Sect. 5 */
 for each $\prec$ in topological_orderings($\pgraph \cup \tgraph$) do // e.g., using @\cite{Canfield1995}@
  check_one_order($\TA$, $\varphi$, $\pgraph$, $\tgraph$, $\prec$) /* Sect. 6-7 */
  if SMT_sat() then report the SMT model as a counterexample
\end{lstlisting}
\caption{A high-level description of the verification algorithm. For details
of \texttt{check\_one\_order}, see Section~\ref{sec:one-order} and Figure~\ref{fig:pseudosmt}.}
\label{fig:pseudo}
\end{figure}

\makeatletter{}\section{Shapes of Schedules that Satisfy~$\ELTLTB$}\label{sec:counterexamples}

We characterize all possible shapes of
     lasso schedules that satisfy an
     $\ELTLTB{}$-formula~$\varphi$.
These shapes are characterized by so-called  \emph{cut points}:  We
     show that every lasso satisfying~$\varphi$ has a fixed number of
     cut points, one cut point per a subformula of~$\varphi$ that
     starts with~$\ltlF$.
The configuration in the cut point of a subformula~$\ltlF \psi$ must
     satisfy~$\psi$, and all configurations between two cut points
     must satisfy certain propositional formulas, which are extracted
     from the subformulas of~$\varphi$ that start with~$\ltlG$.
Our notion of a cut point is motivated by extreme appearances of
     temporal operators~\cite{EtessamiVW02}.

\begin{figure}
    \begin{center}
        \makeatletter{}\begin{tikzpicture}[x=1cm,y=1cm,font=\scriptsize,>=latex];
    \tikzstyle{node}=[circle,fill=black,minimum size=0.1cm,inner sep=0cm];
    \tikzstyle{cut}=[cross out,thick,draw=red!90!black,
        minimum size=0.15cm,inner sep=0mm,outer sep=.1mm];
    \tikzstyle{path}=[-];
    \tikzstyle{Gfin}=[-, very thick, blue];

  \draw[rounded corners] (-.2,-.8) rectangle (8.2, 1);
    
  \begin{scope}[xshift=0cm, yshift=0cm]
    \node at (0.1, .7) { \normalsize\textbf{(a)} };

    \node[node] (0) at (0, 0) {};

            \foreach \x/\n in {.6/A, 1.2/B, 1.8/C, 2.4/D, 3.0/E, 3.6/F}
        \node[cut,label={above:$\n$}] (\n) at (\x,0) {};

    \draw[path] (0) -- (A);
    \draw[path] (A) -- (B);
    \draw[path] (B) -- (C);
    \draw[path] (C) -- (D);
    \draw[Gfin] (D) -- (E);
    \draw[Gfin] (E) -- (F);
    \draw[->] (F) edge[bend right=55,looseness=1.7] (D);

    \draw[|<->|] ($(A)+(0,-.5)$)
        --node[midway, fill=white, text=black]
        {$b$} ($(F)+(0,-.5)$);
    \node at ($(A)+(0,-.25)$) {$a$};
    \node at ($(B)+(0,-.25)$) {$d$};
    \node at ($(C)+(0,-.25)$) {$e$};
    \node at ($(E)+(0,-.25)$) {$c$};
  \end{scope}
    
  \begin{scope}[xshift=4.3cm, yshift=0cm]
    \node at (0.1, .7) { \normalsize\textbf{(b)} };

    \node[node] (0) at (0, 0) {};

            \foreach \x/\n in {.6/A, 1.8/B, 1.2/C, 2.4/D, 3.0/E, 3.6/F}
        \node[cut,label={above:$\n$}] (\n) at (\x,0) {};

    \draw[path] (0) -- (A);
    \draw[path] (A) -- (C);
    \draw[path] (C) -- (B);
    \draw[path] (B) -- (D);
    \draw[Gfin] (D) -- (E);
    \draw[Gfin] (E) -- (F);
    \draw[->] (F) edge[bend right=55,looseness=1.7] (D);

    \draw[|<->|] ($(A)+(0,-.5)$)
        --node[midway, fill=white, text=black]
        {$b$} ($(F)+(0,-.5)$);
    \node at ($(A)+(0,-.25)$) {$a$};
    \node at ($(B)+(0,-.25)$) {$d$};
    \node at ($(C)+(0,-.25)$) {$e$};
    \node at ($(E)+(0,-.25)$) {$c$};
  \end{scope}
    
  \begin{scope}[xshift=0cm, yshift=-2cm]
    \node at (0.1, .7) { \normalsize\textbf{(c)} };

    \node[node] (0) at (0, 0) {};

            \foreach \x/\n in {1.2/A, 1.8/B, 2.4/C, .6/D, 3.0/E, 3.6/F}
        \node[cut,label={above:$\n$}] (\n) at (\x,0) {};

    \draw[path] (0) -- (D);
    \draw[Gfin] (D) -- (A);
    \draw[Gfin] (A) -- (B);
    \draw[Gfin] (B) -- (C);
    \draw[Gfin] (C) -- (E);
    \draw[Gfin] (E) -- (F);
    \draw[->] (F) edge[bend right=55] (D);

    \draw[|<->|] ($(D)+(0,-.5)$)
        --node[midway, fill=white, text=black]
        {$b$} ($(F)+(0,-.5)$);
    \node at ($(A)+(0,-.25)$) {$a$};
    \node at ($(B)+(0,-.25)$) {$d$};
    \node at ($(C)+(0,-.25)$) {$e$};
    \node at ($(E)+(0,-.25)$) {$c$};
  \end{scope}
    
  \begin{scope}[xshift=4.3cm, yshift=-2cm]
    \node at (0.1, .7) { \normalsize\textbf{(d)} };

    \node[node] (0) at (0, 0) {};

            \foreach \x/\n in {3.0/A, 2.4/B, 1.8/C, .6/D, 1.2/E, 3.6/F}
        \node[cut,label={above:$\n$}] (\n) at (\x,0) {};

    \draw[path] (0) -- (D);
    \draw[Gfin] (D) -- (E);
    \draw[Gfin] (E) -- (C);
    \draw[Gfin] (C) -- (B);
    \draw[Gfin] (B) -- (A);
    \draw[Gfin] (A) -- (F);
    \draw[->] (F) edge[bend right=55] (D);

    \draw[|<->|] ($(D)+(0,-.5)$)
        --node[midway, fill=white, text=black]
        {$b$} ($(F)+(0,-.5)$);
    \node at ($(A)+(0,-.25)$) {$a$};
    \node at ($(B)+(0,-.25)$) {$d$};
    \node at ($(C)+(0,-.25)$) {$e$};
    \node at ($(E)+(0,-.25)$) {$c$};
  \end{scope}

  \begin{scope}[xshift=0cm, yshift=-4cm]
    \node at (0.1, .7) { \normalsize\textbf{(e)} };

    \node[node] (0) at (0, 0) {};

            \foreach \x/\n in {.6/A, 2.4/B, 1.8/C, 1.2/D, 3.0/E, 3.6/F}
        \node[cut,label={above:$\n$}] (\n) at (\x,0) {};

    \draw[path] (0) -- (A);
    \draw[path] (A) -- (D);
    \draw[Gfin] (D) -- (C);
    \draw[Gfin] (C) -- (B);
    \draw[Gfin] (B) -- (E);
    \draw[Gfin] (E) -- (F);
    \draw[->] (F) edge[bend right=55] (D);

    \draw[|<->|] ($(A)+(0,-.5)$)
        --node[midway, fill=white, text=black]
        {$b$} ($(F)+(0,-.5)$);
    \node at ($(A)+(0,-.25)$) {$a$};
    \node at ($(B)+(0,-.25)$) {$d$};
    \node at ($(C)+(0,-.25)$) {$e$};
    \node at ($(E)+(0,-.25)$) {$c$};
  \end{scope}
 
  \begin{scope}[xshift=4.3cm, yshift=-4cm]
    \node at (2, 0) { \normalsize{(and 15 more...)} };
  \end{scope}

\end{tikzpicture}

    \end{center}

    \caption{The shapes of lassos that satisfy
            the formula $\ltlE \ltlF (a \wedge \ltlF d \wedge \ltlF e
        \wedge \ltlG b \wedge \ltlG \ltlF c)$. The crosses show cut points for:
    (A)~formula $\ltlF (a \wedge \ltlF d \wedge \ltlF e
        \wedge \ltlG b \wedge \ltlG \ltlF c)$,
            (B)~formula $\ltlF d$, (C)~formula $\ltlF e$,
            (D)~loop start, (E)~formula~$\ltlF c$, and
            (F)~loop end.
    }
    \label{fig:lasso-shapes}
\end{figure}

\begin{example}\label{ex:many-shapes-of-lassos}
Consider the $\ELTLTB$ formula
    $\varphi \equiv \ltlE \ltlF (a \wedge \ltlF d \wedge \ltlF e
            \wedge \ltlG b \wedge \ltlG \ltlF c)$, where $a, \dots, e$
    are propositional
    formulas, whose structure is not of interest in this section.
Formula~$\varphi$ is satisfiable by certain paths that have lasso shapes,
    i.e., a path consists of a finite prefix and a loop, which is repeated
    infinitely.
These lassos may differ in the actual occurrences of the propositions
and the start of the loop: For instance, at some point,~$a$ holds,
    and since then~$b$ always holds, then~$d$ holds at some point,
    then~$e$ holds at some point,
    then the loop is entered, and $c$ holds infinitely often inside the loop.
This is the case~(a) shown in Figure~\ref{fig:lasso-shapes}, where the
    configurations in the cut points~$A$, $B$, $C$, and~$D$ must satisfy
    the propositional formulas~$a$, $d$, $e$, and~$c$ respectively,
    and the configurations between~$A$ and $F$ must satisfy the propositional
    formula~$b$. 
This example does not restrict the propositions between the initial state and
    the cut point~A, so
    that this lasso shape, for instance, also captures the path where $b$ holds
    from the beginning. 
There are 20 different lasso shapes for~$\varphi$, five of them are
    shown in the figure.
We construct lasso shapes that are sufficient
    for finding a path satisfying an $\ELTLTB$ formula.
In this example, it is sufficient to consider lasso shapes~(a) and~(b),
    since the other shapes can be constructed from~(a) and~(b) by unrolling
    the loop several times.
\end{example}

\subsection{Restricting Schedules to Lassos}
\label{sec:lasso}

In the seminal paper~\cite{VW86}, Vardi and Wolper showed that if a
    finite-state transition system~$M$ \emph{violates} an~\LTL{}
    formula\dash---which requires \emph{all paths} to satisfy the
    formula\dash---then there is a path in~$M$ that (i)~violates the formula
    and (ii)~has lasso shape. 
As our logic~$\ELTLTB{}$ specifies counterexamples to the properties of
    fault-tolerant distributed algorithms, we are interested in this result in
    the following form: if the transition system \emph{satisfies} an~$\ELTL{}$
    formula\dash---which requires \emph{one path} to satisfy the
    formula\dash---then~$M$ has a path that (i)~\emph{satisfies} the formula
    and (ii)~has lasso shape.

As observed above, counter systems are infinite state.
Consequently, one cannot apply the results of~\cite{VW86} directly.
However,  using Proposition~\ref{prop:finite}, we show that a similar
     result holds for counter systems of threshold automata
     and~$\ELTLTB$:

\newcommand{\proplassoscheda}{Given a threshold automaton~$\TA$ and an $\ELTLTB{}$ formula~$\varphi$, if
    $\Sys(\TA) \models \ltlE \varphi$, then there are an initial
    configuration~$\gst_1 \in \iconfigs$ and a schedule $\tau \concat
    \rho^\omega$ with the following properties:
        \begin{enumerate}
        \item the path satisfies the formula:
                $\infpath{\gst_1}{\tau \concat \rho^\omega} \models \varphi$,
        \item application of~$\rho$ forms a cycle:
                $\rho^k(\tau(\gst_1)) = \tau(\gst_1)$ for~$k \ge 0$.
    \end{enumerate}}

\begin{proposition}\label{prop:lasso-sched}
\proplassoscheda
\end{proposition}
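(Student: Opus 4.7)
The plan is to adapt the classical Vardi--Wolper lasso construction~\cite{VW86} to the infinite-state counter system $\Sys(\TA)$, using Proposition~\ref{prop:finite} as the finiteness ingredient that replaces finite-state-ness. Since $\Sys(\TA) \models \ltlE \varphi$, there exist an initial configuration $\gst_1 \in \iconfigs$ and an infinite schedule $\sigma$ applicable to $\gst_1$ with $\infpath{\gst_1}{\sigma} \models \varphi$. Write the resulting path as $\gst_1, t_1, \gst_2, t_2, \dots$. By Proposition~\ref{prop:finite}, the set $C = \{\gst_i : i \ge 1\}$ is finite, so the subset $S_\infty \subseteq C$ of configurations visited infinitely often is nonempty and finite.

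For the construction, I would pick an index $N$ large enough that $\gst_i \in S_\infty$ for every $i \ge N$; such $N$ exists because each configuration in $C \setminus S_\infty$ occurs only finitely often. Fix any $\gst^* \in S_\infty$. Since every element of $S_\infty$ is visited infinitely often past $N$, one can choose $N \le k < l$ with $\gst_k = \gst_l = \gst^*$ such that the segment $\gst_k, t_{k+1}, \gst_{k+1}, \dots, t_l, \gst_l$ traverses every element of $S_\infty$. Setting $\tau := t_1, \dots, t_k$ and $\rho := t_{k+1}, \dots, t_l$ then yields $\tau(\gst_1) = \gst^*$ and $\rho^n(\gst^*) = \gst^*$ for all $n \ge 0$, which establishes item~(2).

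For item~(1), I would proceed by induction on the structure of subformulas $\psi$ of $\varphi$, showing that the lasso satisfies $\psi$ at any position where the original path does. The propositional base case is immediate because the lasso visits only configurations from $C$. For $\ltlG\psi$, the original requires $\psi$ at every visited configuration and the lasso visits a subset of those, so $\psi$ is preserved pointwise. For $\ltlF\psi$, a witness at an original position $\le k$ appears in the lasso prefix unchanged, while a witness at position $> k$ lies in $S_\infty$ by the choice of $N$ and is therefore visited in every iteration of $\rho$.

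The main obstacle will be the inductive treatment of nested modalities such as $\ltlG\ltlF\psi$ and $\ltlF\ltlG\psi$: these require that witnessing configurations recur in \emph{every} iteration of $\rho$, not only at some position of the original path. The construction of $\rho$ covering all of $S_\infty$ is tailored exactly to this, and it is crucially supported by the restriction of $\ELTLTB$ to $\ltlF$ and $\ltlG$ (with no $\ltlX$ or $\ltlU$): the truth of such formulas depends only on the sets of configurations reached eventually and infinitely often, so reordering and repeating configurations within $S_\infty$ leaves satisfaction intact.
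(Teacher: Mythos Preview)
Your construction of the lasso is sound, and the overall strategy can be made to work, but it differs from the paper's proof and the justification you give for the key step contains an incorrect claim.

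The paper takes the standard B\"uchi-automaton route: translate~$\varphi$ into a B\"uchi automaton~$B$, form the product $\Sys(\TA)\otimes B$, and observe that although this product is infinite-state, every run visits only finitely many product states (Proposition~\ref{prop:finite} plus finiteness of~$B$). An accepting product state then recurs infinitely often, and the segment between two occurrences yields the loop. Preservation of~$\varphi$ comes for free from the automaton, so no induction on formula structure is needed and the argument is uniform over all of~LTL.

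Your direct approach avoids automata but then has to argue preservation by hand. The gap is in that argument. First, your $\ltlG\psi$ case (``the lasso visits a subset of those, so~$\psi$ is preserved pointwise'') only works when~$\psi$ is propositional; for temporal~$\psi$ the suffix differs, so the inductive hypothesis as you phrase it does not apply. Second, and more importantly, the claim that ``truth of such formulas depends only on the sets of configurations reached eventually and infinitely often'' is false even for~$\ELTLTB$: the paths $s_0,a,b,c^\omega$ and $s_0,b,a,c^\omega$ (with $p$ true only at~$a$, $q$ only at~$b$) have identical occurrence and infinity sets, yet only the first satisfies $\ltlF(p\wedge\ltlF q)$.

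What does hold, and what your construction actually exploits, is this: if two infinite paths start at the \emph{same} configuration and each visits exactly the same finite set~$S$ with every element of~$S$ occurring infinitely often and nothing outside~$S$ occurring at all, then they agree on every $\ELTLTB$ formula (an easy structural induction, picking in the $\ltlF$/$\ltlG$ cases an occurrence of the matching state in the other path and applying the IH to the two suffixes, which again satisfy the hypothesis). Your lasso coincides with the original on positions $0,\dots,l$, and from any position~$\ge k$ both tails satisfy this hypothesis with $S=S_\infty$; a second short induction over positions~$\le l$ then finishes the argument. So the route is genuinely different and, once the lemma is stated correctly, more elementary; what you give up relative to the paper's proof is uniformity over full~LTL.
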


Although in \cite{KLVW16:arxiv} we use B\"uchi automata to prove
     Proposition~\ref{prop:lasso-sched}, we do not use B\"uchi
     automata in this paper.
Since $\ELTLTB$ uses only the temporal operators $\ltlF$ and $\ltlG$,
     we found it much easier to reason about the structure of
     $\ELTLTB{}$ formulas directly (in the spirit
     of~\cite{EtessamiVW02}) and then apply path reductions, rather
     than constructing the synchronous product of a B\"uchi automaton
     and of a counter system and then finding proper path reductions.

Although Proposition~\ref{prop:lasso-sched} guarantees
     counterexamples of lasso shape, it is not sufficient
     for model checking: (i) counter systems are infinite state, so
     that state enumeration may not terminate, and (ii)
     Proposition~\ref{prop:lasso-sched} does not provide us with
     bounds on the length of the lassos needed for bounded
     model checking.
In the next section, we show how to split a lasso schedule in finite
     segments and to find constraints on lasso schedules that satisfy
     an $\ELTLTB{}$ formula.
In Section~\ref{sec:repr} we then construct shorter (bounded length)
     segments.

\subsection{Characterizing Shapes of Lasso Schedules}
\label{sec:shapelasso}

    \begin{figure}
        \centering
        \makeatletter{}        \begin{tikzpicture}[font=\small,
                level distance=7mm,
                level 1/.style={sibling distance=30mm},
                level 2/.style={sibling distance=17mm, level distance=11mm},
                level 3/.style={sibling distance=12mm}]
            \tikzstyle{form}=[rectangle,draw,rounded corners,
                minimum height=.5cm];

            \node[form,label={[xshift=0mm]east:[0]}]
                {$\canform(\varphi)$ }
            child[sibling distance=15mm,
                edge from parent path={(\tikzparentnode\tikzparentanchor)
                    edge [bend right]
                (\tikzchildnode\tikzchildanchor)}] {
                node[form,label={[xshift=-5mm]south east:[0.0]}] {$a$}
                            }
            child[sibling distance=20mm] {
                    node[form,label={[xshift=-7mm]south east:[0.1]}]
                    {$\ltlF (d \wedge \ltlG \true)$}
                child[sibling distance=20mm] {
                    node[form,label={south:[0.1.0]}] {$d$}
                }
                child[sibling distance=2mm] {
                    node[form,label={[xshift=0mm]south:[0.1.1]}] {$\ltlG \true$}
                }
            }
        child[sibling distance=17mm] {
            node[form,label={[xshift=-6mm]south east:[0.2]}]
                {$\ltlF (e \wedge \dots)$ }
            child[sibling distance=12mm] {
                node[form,label={[xshift=-5mm]south east:[0.2.0]}] {$e$}
            }
            child[sibling distance=6mm] {
                node[form,label={[xshift=0mm]south:[0.2.1]}] {$\ltlG \true$}
            }
        }
            child[sibling distance=25mm] {
                node[form,label={[xshift=-6mm]north east:[0.3]}]
                    {$\ltlG (b \wedge \ltlF (c \wedge \ltlG \true)
                        \wedge \ltlG \true)$}
                child[sibling distance=15mm] {
                    node[form,label={south:[0.3.0]}] {$b$}
                }
                child[sibling distance=14mm] {
                    node[form,label={[xshift=8mm]south:[0.3.1]}]
                        {$\ltlF (c \wedge \ltlG \true)$}
                    child { node[form] {$c$} }                     child {
                        node[form] {$\ltlG \true$}                     }
                }
                child[sibling distance=16mm] {
                    node[form,label={south:[0.3.2]}] {$\ltlG \true$}
                }
            };
                                            \end{tikzpicture}

        \caption{A canonical syntax tree of the $\ELTLTB$ formula
        $\varphi \equiv \ltlF (a \wedge \ltlF d \wedge \ltlF e
        \wedge \ltlG b \wedge \ltlG \ltlF c)$
        considered in Example~\ref{ex:many-shapes-of-lassos}.
        The labels $[w]$ denote identifiers of the tree nodes.
                }

        \label{Fig:syntax-tree}
    \end{figure}

We now construct a cut graph of an $\ELTLTB$ formula:
Cut graphs constrain the orders in which subformulas that start with the
     operator~$\ltlF$ are witnessed by configurations.
The nodes of a cut graph correspond to cut points, while the edges
     constrain the order between the cut points.
Using cut points, we give necessary and sufficient conditions for a
     lasso to satisfy an~$\ELTLTB$ formula in 
     Theorems~\ref{thm:witness-soundness}
     and~\ref{thm:witness-completeness}.
Before defining cut graphs, we give the technical definitions of
    canonical formulas and canonical syntax trees.

\begin{definition}
    We inductively define canonical~$\ELTLTB$ formulas:

    \begin{itemize}
\item if $p$ is a propositional formula, then the formula~$p \wedge \ltlG
        \true$ is a canonical formula of rank 0,

\item if $p$ is a propositional formula and formulas $\psi_1, \dots, \psi_{k}$
    are canonical formulas (of any rank) for some $k \ge 1$, then the formula
        $p \wedge \ltlF \psi_1 \wedge \dots \wedge \ltlF \psi_k \wedge \ltlG
        \true$ is a canonical formula of rank 1,

\item if $p$ is a propositional formula and formulas $\psi_1, \dots,
    \psi_k$ are canonical formulas (of any rank) for
        some $k \ge 0$, and $\psi_{k+1}$ is a canonical formula of rank 0 or 1, then
        the formula $p \wedge \ltlF \psi_1 \wedge \dots \wedge \ltlF \psi_k
        \wedge \ltlG \psi_{k+1}$ is a canonical formula of rank 2.
    \end{itemize}
\end{definition}

\begin{example}
Let $p$ and $q$ be propositional formulas. 
The formulas $p \wedge \ltlG \true$ and $\true \wedge \ltlF (q \wedge \ltlG
    \true) \wedge \ltlG (p \wedge \ltlG \true)$ are canonical, while the
    formulas $p$, $\ltlF q$, and $\ltlG p$ are not canonical.
Continuing Example~\ref{ex:many-shapes-of-lassos}, the canonical version of the
    formula $\ltlF (a \wedge \ltlF d \wedge \ltlF e
            \wedge \ltlG b \wedge \ltlG \ltlF c)$ is the formula
 $\ltlF (a \wedge \ltlF (d \wedge \ltlG \true) \wedge \ltlF (e \wedge \ltlG \true)
    \wedge \ltlG (b \wedge \ltlF (c \wedge \ltlG \true) \wedge \ltlG \true))$.
\end{example}

We will use formulas in the following canonical form in order to
simplify presentation.

\begin{observation}
The properties of canonical $\ELTLTB$ formulas:
\begin{enumerate}
\item Every canonical formula consists of canonical subformulas
        of the form $p \wedge \ltlF
        \psi_1 \wedge \dots \wedge \ltlF \psi_k \wedge \ltlG \psi_{k+1}$ for
        some $k \ge 0$, for a propositional formula~$p$, canonical formulas
        $\psi_1, \dots, \psi_k$, and a formula~$\psi_{k+1}$ that is either
        canonical, or equals to~$\true$.

    \item If a canonical formula contains a subformula
        $\ltlG(\dots \wedge \ltlG \psi)$, then $\psi$ equals $\true$.
\end{enumerate}
\end{observation}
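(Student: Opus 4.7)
The plan is to prove both parts of the observation by structural induction on the definition of canonical formulas, essentially reading off the shape of each of the three clauses in the definition. Since the observation is a purely syntactic fact, no semantic reasoning about paths or configurations is needed.

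For property~(1), I would proceed by induction on the rank of the canonical formula~$\varphi$. In the base case (rank~0), $\varphi = p \wedge \ltlG \true$, which has the required form with $k = 0$ and $\psi_{k+1} = \true$. For rank~1, $\varphi = p \wedge \ltlF \psi_1 \wedge \dots \wedge \ltlF \psi_k \wedge \ltlG \true$ with $\psi_1, \dots, \psi_k$ canonical; this directly matches the claimed shape with $\psi_{k+1} = \true$. For rank~2, $\varphi = p \wedge \ltlF \psi_1 \wedge \dots \wedge \ltlF \psi_k \wedge \ltlG \psi_{k+1}$ where each $\psi_i$ is canonical; again the shape matches. Thus the top-level formula has the required form in every case, and since each subformula $\psi_i$ is itself canonical and of smaller rank (or appears by definition as a canonical argument), the induction hypothesis gives the same form for every subformula, proving~(1).

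For property~(2), I would examine how the outer $\ltlG$ can arise in a canonical formula. Appealing to~(1), every canonical subformula is of the form $p \wedge \ltlF \psi_1 \wedge \dots \wedge \ltlF \psi_k \wedge \ltlG \psi_{k+1}$, so every occurrence of $\ltlG$ in a canonical formula comes from such a conjunct $\ltlG \psi_{k+1}$, where $\psi_{k+1}$ is either $\true$ or canonical of rank at most~1. If the subformula has rank~0 or~1, then by definition $\psi_{k+1} = \true$, and $\ltlG \true$ contains no further occurrence of $\ltlG$, so the hypothesis $\ltlG(\dots \wedge \ltlG \psi)$ is vacuous. If it has rank~2, then $\psi_{k+1}$ is canonical of rank~0 or~1, and by the previous observation its unique $\ltlG$ conjunct is $\ltlG \true$; hence the only instance of the pattern $\ltlG(\dots \wedge \ltlG \psi)$ forces $\psi = \true$.

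I do not anticipate any real obstacle: the statement is essentially a restatement of the inductive definition, and the only subtle point is to apply part~(1) inside the proof of part~(2) to argue that \emph{every} $\ltlG$ in a canonical formula arises in the constrained position described by the canonical form. Care must be taken to treat the case $\psi_{k+1} = \true$ separately, since $\true$ is not itself a canonical formula but is permitted by the statement of part~(1).
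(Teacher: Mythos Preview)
Your proposal is essentially correct, and in fact the paper provides no proof at all for this observation: it is stated as a direct reading of the inductive definition of canonical formulas, so there is nothing to compare against.

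One small technical point: in part~(1) you frame the argument as ``induction on the rank,'' but rank is not a well-founded measure for this purpose. In the rank-1 and rank-2 clauses of the definition, the subformulas $\psi_1,\dots,\psi_k$ under~$\ltlF$ may be canonical formulas \emph{of any rank}, so they need not have smaller rank than the enclosing formula. Your parenthetical ``(or appears by definition as a canonical argument)'' gestures at this, but the clean fix is simply to induct on the syntactic size (or depth) of the formula rather than on rank. With that adjustment, your argument for~(1) goes through, and your argument for~(2) is already correct as written.
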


\begin{proposition}\label{thm:canonical-form}
    There is a function~$\canform: \ELTLTB \rightarrow \ELTLTB$
    that produces for each formula $\varphi \in \ELTLTB$ an equivalent
    canonical formula~$\canform(\varphi)$.
\end{proposition}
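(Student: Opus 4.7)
The plan is to define the rewriting function $\canform$ by structural induction on~$\varphi$ and to show simultaneously that $\canform(\varphi)$ is equivalent to~$\varphi$ and that $\canform(\varphi)$ is canonical. The base case for a propositional formula~$p$ sets $\canform(p) = p \wedge \ltlG \true$, which is canonical of rank~$0$ and equivalent to~$p$ since $\ltlG \true$ is universally valid. The inductive cases all rest on three standard LTL equivalences: distributivity $\ltlG(\alpha \wedge \beta) \equiv \ltlG \alpha \wedge \ltlG \beta$, idempotence $\ltlG \ltlG \alpha \equiv \ltlG \alpha$, and the trivial padding $\alpha \equiv \alpha \wedge \ltlG \true$.

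For $\varphi = \ltlF \psi$, I recursively canonicalize~$\psi$ and return $\true \wedge \ltlF \canform(\psi) \wedge \ltlG \true$, which is canonical of rank~$1$. For $\varphi = \psi_1 \wedge \psi_2$, I define an auxiliary merge operation on canonical formulas $p_i \wedge \bigwedge_j \ltlF \chi^{(i)}_j \wedge \ltlG \alpha_i$ (for $i = 1,2$) that returns $(p_1 \wedge p_2) \wedge \bigwedge_{i,j} \ltlF \chi^{(i)}_j \wedge \ltlG \beta$, where $\beta$ is obtained by analogously merging $\alpha_1$ and~$\alpha_2$, both of rank~$0$ or~$1$. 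Because any rank-$0$ or rank-$1$ formula carries the trivial tail $\ltlG \true$, this secondary merge collapses in one step: the two tails combine to $\ltlG \true$ and the remaining conjuncts merge at the top level.

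For $\varphi = \ltlG \psi$, I write $\canform(\psi) = p \wedge \bigwedge_{j=1}^{k} \ltlF \chi_j \wedge \ltlG \alpha$ by the inductive hypothesis, apply distributivity and idempotence to pull the outer $\ltlG$ inside, obtaining $\ltlG\bigl(p \wedge \bigwedge_j \ltlF \chi_j \wedge \alpha\bigr)$, and then fold $p$ together with~$\alpha$ (which is canonical of rank~$0$ or~$1$) by the rank-$0$/$1$ merge above, producing a canonical inner formula~$\beta$ of rank~$0$ or~$1$. The returned formula is $\true \wedge \ltlG \beta$, which is canonical of rank~$2$.

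The main obstacle is the bookkeeping of ranks across the merges. One must verify that the inner formula placed under a~$\ltlG$ never itself carries a non-trivial $\ltlG$-tail, namely the property flagged in the second clause of the observation preceding this proposition. This reduces to a finite case analysis over the four rank combinations that can occur in the rank-$0$/$1$ merge, using that trivial tails collapse under conjunction and that conjoining further $\ltlF$-conjuncts cannot raise the rank beyond~$1$. All equivalence checks are one-line applications of the three LTL identities above together with commutativity and associativity of~$\wedge$.
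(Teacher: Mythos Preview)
The paper does not actually supply a proof of this proposition; it is stated as a routine fact and the text moves on immediately to discussing non-uniqueness of canonical forms. Your structural-induction construction is the natural argument and is correct: the three LTL identities you invoke (distributivity of $\ltlG$ over $\wedge$, idempotence $\ltlG\ltlG\alpha \equiv \ltlG\alpha$, and padding by $\ltlG\true$) are exactly what is needed, and your rank bookkeeping in the $\ltlG$- and $\wedge$-cases is sound.

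Two small points worth tightening in a final write-up. First, in the $\ltlG\psi$ case, when $\canform(\psi)$ has rank~$0$ or~$1$ the tail is literally $\true$, not a canonical formula; your ``rank-$0$/$1$ merge'' still works, but you should state explicitly that $\true$ is treated as the rank-$0$ formula $\true\wedge\ltlG\true$ for the purpose of the merge. Second, the grammar of $\mathit{pform}$ in Table~\ref{table:eltlft-syntax} is not literally closed under conjunction (only $\mathit{cform}$ is), so strictly speaking $p_1\wedge p_2$ may fall outside $\mathit{pform}$; the paper is informal on this point and clearly intends ``propositional formula'' to mean any temporal-operator-free formula, so this is a presentational issue rather than a gap in your argument.
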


For an $\ELTLTB$ formula, there may be several equivalent canonical formulas,
    e.g., $p \wedge \ltlF (q \wedge \ltlG \true) \wedge \ltlF (p \wedge \ltlG
    \true) \wedge \ltlG \true$ and $p \wedge \ltlF (p \wedge \ltlG \true)
    \wedge \ltlF (q \wedge \ltlG \true) \wedge \ltlG \true$ differ in the
    order of $\ltlF$-subformulas. 
With the function~$\canform$ we fix one such a formula.

\paragraph{Canonical syntax trees.} 

The canonical syntax tree of the formula introduced in
     Example~\ref{ex:many-shapes-of-lassos} is shown in
     Figure~\ref{Fig:syntax-tree}.
With $\NatZero^*$ we denote the set of all finite words over natural
     numbers\dash---these words are used as node identifiers.

\begin{definition}
The \emph{canonical syntax tree} of a formula $\varphi \in \ELTLTB$ is
the set $\syntree(\varphi) \subseteq \ELTLTB \times \NatZero^*$
constructed inductively as follows:

\begin{enumerate}
\item The tree contains the root node
    labeled with the canonical formula~$\canform(\varphi)$ and id~$0$, that
        is, $\left<\canform(\varphi), 0\right> \in \syntree(\varphi)$.

\item Consider a tree node $\left<\psi, w\right> \in \syntree(\varphi)$ such that
        for some canonical formula $\psi' \in \ELTLTB$ one of the following
        holds: (a)~$\psi = \psi' = \canform(\varphi)$, or (b)~$\psi = \ltlF
\psi'$, or (c)~$\psi = \ltlG \psi'$. 

If $\psi'$ is $p \wedge \ltlF \psi_1 \wedge \dots \wedge \ltlF \psi_k \wedge
    \ltlG \psi_{k+1}$ for some $k \ge 0$, then the tree $\syntree(\varphi)$
    contains a child node for each of the conjuncts of~$\psi'$, that is,
    $\left<p, w.0\right> \in \syntree(\varphi)$, as well as $\left<\ltlF
    \psi_i, w.i\right> \in \syntree(\varphi)$ and $\left<\ltlG \psi_j,
    w.j\right> \in \syntree(\varphi)$ for $1 \le i \le k$ and $j=k+1$.

\end{enumerate}
\end{definition}

\begin{observation}
The canonical syntax tree $\syntree(\varphi)$ of an $\ELTLTB$ formula~$\varphi$
    has the following properties:

\begin{itemize}
\item Every node $\left<\psi, w\right>$ has the unique identifier~$w$,
        which encodes the path to the node from the root.

\item Every intermediate node is labeled with a temporal operator $\ltlF$ or
        $\ltlG$ over the conjunction of the formulas in the children nodes.
        
\item The root node is labeled with the formula~$\varphi$ itself, and $\varphi$
        is equivalent to the conjunction of the root's children formulas,
        possibly preceded with a temporal operator $\ltlF$ or $\ltlG$.

\end{itemize}
\end{observation}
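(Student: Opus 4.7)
The plan is to prove all three bullets by structural induction on the inductive construction of $\syntree(\varphi)$: the base case is the insertion of the root $\langle \canform(\varphi), 0\rangle$, and the inductive step is the second rule, which adds children to an existing node in one of the three cases (a), (b), (c).

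For the first bullet, I would argue by induction on tree depth that all identifiers occurring in $\syntree(\varphi)$ are pairwise distinct. The root has identifier $0$. When children are created for a node with identifier $w$, they receive the identifiers $w.0, w.1, \dots, w.(k+1)$. These are distinct from each other because their final integers differ, and distinct from identifiers elsewhere in the tree because by the inductive hypothesis no other node has identifier beginning with $w$ followed by a fresh integer. Consequently, the identifier $w = i_1.i_2.\dots.i_d$ of a node simply records the sequence of child-indices selected at each level on the way down from the root.

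For the second bullet, every intermediate non-root, non-leaf node is created as a child by an application of case (b) or case (c), so its label has the form $\ltlF \psi'$ or $\ltlG \psi'$ for some canonical $\psi'$. By the definition of canonical formulas, $\psi' = p \wedge \ltlF \psi_1 \wedge \dots \wedge \ltlG \psi_{k+1}$ for some $k \ge 0$, and the construction then adds one child per conjunct, labelled $p, \ltlF \psi_1, \dots, \ltlG \psi_{k+1}$, respectively. Therefore the label of the parent is exactly $\ltlF$ (or $\ltlG$) applied to the conjunction of its children's labels. A leaf is labelled with the propositional conjunct $p$, which matches none of the cases of the inductive step and hence is not further expanded. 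For the third bullet, Proposition~\ref{thm:canonical-form} yields $\canform(\varphi) \equiv \varphi$, so the label of the root is semantically $\varphi$; the root is processed by case (a) of the construction (with $\psi = \psi' = \canform(\varphi)$), so its children are the conjuncts of the canonical form, and hence $\varphi$ is equivalent to the conjunction of the root's children's labels, possibly with an outermost $\ltlF$ or $\ltlG$ preserved when $\canform(\varphi)$ itself begins with such an operator.

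The main obstacle is not technical difficulty but careful bookkeeping of the construction rule: ensuring that case (a) for the root and cases (b)/(c) for non-root nodes correctly match the canonical shape of $\psi'$ so that exactly one child is produced per conjunct and no conjunct is missed. Given the precise form of canonical formulas guaranteed by Proposition~\ref{thm:canonical-form}, this reduces to a direct reading of the construction.
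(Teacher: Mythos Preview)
Your proposal is correct. The paper states this as an \emph{Observation} without proof, treating all three bullets as immediate consequences of the inductive definition of $\syntree(\varphi)$; your write-up simply makes explicit the structural-induction bookkeeping (uniqueness of identifiers via the path encoding, the $\ltlF/\ltlG$-over-conjunction shape of non-root non-leaf nodes from cases~(b)/(c), and the root's equivalence to $\varphi$ via Proposition~\ref{thm:canonical-form} together with case~(a) or~(b)/(c)) that the paper leaves to the reader.
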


The temporal formulas that appear under the operator~$\ltlG$
    have to be dealt with by the loop part of a lasso. 
To formalize this, we say that a node with id $w \in \NatZero^*$ is
    \emph{covered} by a $\ltlG$-node, if $w$ can be split into two words $u_1,
    u_2 \in \NatZero^*$ with $w = u_1.u_2$, and there is a formula $\psi \in
    \ELTLTB$ such that $\left<\ltlG \psi, u_1\right> \in {\syntree(\varphi)}$.

    \begin{figure}
        \centering
        \makeatletter{}\begin{tikzpicture}[font=\small,>=latex];
    \tikzstyle{node}=[circle,fill=black,minimum size=1mm,inner sep=0cm];
    
    \node[node,label={below:$[0]$}] (init) at (0, 0) {};
    \node[node,label={[xshift=0mm]above:$[0.1]$}] (01) at (1.5,.75) {};
    \node[node,label={[xshift=0mm]above:$[0.2]$}] (02) at (1.5,-.75) {};
    \node[node,label={below right:$\loops$}] (ls) at (3,0) {};
    \node[node,label={[xshift=0mm]above:$[0.3.1]$}] (031) at (4.5,.75) {};
    \node[node,label={below right:$\loope$}] (le) at (6, 0) {};

    \draw[->] (ls) edge (le);
    \draw[->] (init) edge (01);
    \draw[->] (init) edge (02);

    \draw[->] (ls) -- (031);
    \draw[->] (031) -- (le);

    \draw[->] (init) -- (ls);
    \draw[->] (01) -- (ls);
    \draw[->] (02) -- (ls);

\end{tikzpicture}

        \caption{The cut graph of the canonical syntax tree
            in Figure~\ref{Fig:syntax-tree}}
            \label{Fig:prec-graph}
    \end{figure}

\paragraph{Cut graphs.} Using the canonical syntax
tree~$\syntree(\varphi)$ of a formula~$\varphi$, we capture in a
so-called \emph{cut graph}
     the possible orders in which formulas $\ltlF \psi$ should be
     witnessed by configurations of a lasso-shaped path.
We will then use the occurrences of the formula $\psi$
     to cut the lasso into bounded finite schedules.

\begin{example}\label{ex:cut}
Figure~\ref{Fig:prec-graph} shows the cut graph of the canonical
     syntax tree  in Figure~\ref{Fig:syntax-tree}.
It consists of tree node ids for subformulas starting with $\ltlF$,
     and two special nodes for the start and the end of the
     loop.
In the cut graph, the node with id 0 precedes the node with id 0.1,
     since at least one configuration satisfying $(a \wedge \ltlF
     (d \wedge \dots) \wedge \dots)$ should occur on a path before (or
     at the same moment as) a state satisfying $(d \wedge
     \dots)$.
Similarly, the node with id 0 precedes the node with id
     0.2.
The nodes with ids 0.1 and 0.2 do not have to precede each
     other, as the formulas~$d$ and~$e$ can be satisfied in either
     order.
Since the nodes with the ids $0$, $0.1$, and $0.2$ are not covered by
     a $\ltlG$-node, they both precede the loop start.
The loop start precedes the node with id $0.3.1$, as this node is
     covered by a $\ltlG$-node.
\end{example}

\begin{definition}\label{def:cutgraph}
    The \emph{cut graph} $\pgraph(\varphi)$ of an $\ELTLTB$ formula
    is a directed acyclic graph $(\pgvertices, \pgedges)$
    with the following properties:
\begin{enumerate}
    
\item The set of nodes $\pgvertices=\{ \loops, \loope \} \cup \{w \in
    \NatZero^* \mid \exists \psi.\ \left<\ltlF \psi, w\right> \in \syntree(\varphi)\}$
        contains the tree ids that label $\ltlF$-formulas
        and two special nodes $\loops$ and $\loope$, which denote the start and
        the end of the loop respectively.

\item The set of edges~$\pgedges$ satisfies the following constraints:

    \begin{enumerate}

    \item Each tree node $\left<\ltlF \psi, w\right> \in \syntree(\varphi)$
        that is \emph{not} covered by a $\ltlG$-node precedes the loop start, i.e.,
        $(w, \loops) \in \pgedges$.

    \item For each tree node $\left<\ltlF \psi, w\right> \in \syntree(\varphi)$
        covered by a $\ltlG$-node:
\begin{itemize}
\item the loop start precedes $w$, i.e., $(\loops, w) \in \pgedges$, and
\item  $w$ precedes the loop end, i.e., $(w, \loope) \in \pgedges$.
\end{itemize} 

\item For each pair of tree nodes $\left<\ltlF \psi_1, w\right>, \left<\ltlF
    \psi_2, w.i\right> \in \syntree(\varphi)$ not covered by a $\ltlG$-node, we
        require $(w, w.i) \in \pgedges$.

\item For each pair of tree nodes $\left<\ltlF \psi_1, w_1\right>, \left<\ltlF
        \psi_2, w_2\right> \in \syntree(\varphi)$ that are both covered by a
        $\ltlG$-node, we require either $(w_1, w_2) \in \pgedges$, or $(w_2, w_1)
        \in \pgedges$ (but not both).

    \end{enumerate}

\end{enumerate}
\end{definition}

\begin{definition}\label{def:cutfun}
Given a lasso~$\tau \cdot \rho^\omega$ and a cut graph
    $\pgraph(\varphi)=(\pgvertices, \pgedges)$, we call a function $\imap:
    \pgvertices \to \{0, \dots, |\tau| + |\rho| - 1\}$ a 
    \emph{cut function}, if the following holds:
       
    \begin{itemize}

    \item $\imap(\loops) = |\tau|$ and
            $\imap(\loope) = |\tau| + |\rho| - 1$,

    \item if $(v, v') \in \pgedges$, then $\imap(v) \le \imap(v')$.

\end{itemize}

\end{definition}

We call the indices $\{\imap(v) \mid v \in \pgvertices \}$ the \emph{cut
    points}. 
Given a schedule~$\tau$ and an index~$k: 0 \le k < |\tau| + |\rho|$, we say
    that the index~$k$ \emph{cuts} $\tau$ into $\pi'$ and $\pi''$, if $\tau =
    \pi' \cdot \pi''$ and $|\pi'| = k$.

Informally, for a tree node $\left<\ltlF \psi, w\right> \in
     \syntree(\varphi)$, a cut point $\imap(w)$ witnesses satisfaction
     of~$\ltlF \psi$, that is, the formula~$\psi$ holds at the
     configuration located at the cut point.
It might seem that Definitions~\ref{def:cutgraph} and~\ref{def:cutfun}
     are too restrictive.
For instance, assume that the node $\left<\ltlF \psi, w\right>$ is not
     covered by a $\ltlG$-node, and there is a lasso schedule~$\tau
     \cdot \rho^\omega$ that satisfies the formula~$\varphi$ at a
     configuration~$\gst$.
It is possible that the formula~$\psi$ is witnessed only by a cut
     point inside the loop.
At the same time, Definition~\ref{def:cutfun} forces~$\imap(w) \le
     \imap(\loops)$.
We show that this problem is resolved by unwinding the loop~$K$ times
     for some $K \ge 0$, so that there is a cut function for the lasso
     with the prefix $\tau \cdot \rho^K$ and the loop~$\rho$:

\newcommand{\propunwinding}{Let $\varphi$ be an $\ELTLTB$ formula, $\gst$ be a configuration and $\tau
    \cdot \rho^\omega$ be a lasso schedule applicable to~$\gst$ such that
    $\infpath{\gst}{\tau \cdot \rho^\omega} \models \varphi$ holds. 
There is a constant~$K \ge 0$ and a cut function~$\imap$
    such that for every $\left<\ltlF \psi, w\right> \in
    \pgraph(\syntree(\varphi))$ if~$\imap(w)$ cuts $(\tau \cdot \rho^K) \cdot
    \rho$ into $\pi'$ and~$\pi''$, then~$\psi$ is satisfied at the cut point,
        that is,
        $\infpath{\pi'(\gst)}{\pi'' \cdot \rho^\omega} \models \psi$.}

\begin{proposition}\label{prop:unwinding}
\propunwinding
\end{proposition}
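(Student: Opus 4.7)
The plan is to extract a witness index $j_w$ for every node $\left<\ltlF \psi, w\right>$ of $\syntree(\canform(\varphi))$ from the \LTL{}-semantics along $\infpath{\gst}{\tau \cdot \rho^\omega}$, and then to unwind $\rho$ sufficiently many times so that these indices can be realized as a cut function on $(\tau \cdot \rho^K) \cdot \rho$ respecting $\pgraph(\varphi)$. I will traverse $\syntree(\canform(\varphi))$ from the root downwards, inductively ensuring that if a node at identifier $w$ is equipped with a witness $j_w$ (meaning the suffix of the path starting at position $j_w$ satisfies the formula attached to $w$), then each child $\ltlF$-subformula inherits a witness at some index $\geq j_w$ from the definition of $\models \ltlF$, and that whenever I descend below a $\ltlG$-node the admissible witnesses become available at \emph{every} position past the $\ltlG$-node's own witness.

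The choice of $K$ and $\imap$ then proceeds by a split on whether an $\ltlF$-node is covered by a $\ltlG$-node. For each $\ltlF$-node $w$ that is not covered, the inductively-obtained $j_w$ is a finite index; I pick $K$ large enough that $|\tau| + K|\rho| \ge \max_w j_w$, and set $\imap(w) := j_w$, $\imap(\loops) := |\tau| + K|\rho|$, and $\imap(\loope) := |\tau| + (K+1)|\rho| - 1$. Edges of types~(2a) and~(2c) of Definition~\ref{def:cutgraph} are immediate from this construction. For each $\ltlF$-node $w$ covered by a $\ltlG$-node, the formula $\ltlF \psi$ must hold at every configuration from the corresponding $\ltlG$-witness onwards; since Proposition~\ref{prop:lasso-sched} gives $\rho^k(\tau(\gst)) = \tau(\gst)$, the configurations visited past position $|\tau|$ repeat with period $|\rho|$, so a witness for $\psi$ must appear inside every single $\rho$-copy. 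I place $\imap(w)$ into the final $\rho$-copy, thereby discharging edges of type~(2b).

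The main obstacle is the ordering imposed by edges of type~(2d) between siblings that are both covered by a common $\ltlG$-node. The periodicity of $\rho$ guarantees that a witness exists in each $\rho$-copy, but not necessarily in the order required by $\pgedges$. This is resolved by choosing the witness positions inside the last $\rho$-copy so as to match the topological ordering on~(2d), using that Definition~\ref{def:cutfun} permits $\imap(v) \le \imap(v')$ rather than strict inequality and so tolerates ties at the loop boundary; if for the fixed orientation the witnesses do not line up, $K$ can be increased to accommodate a further reshuffling inside the extra $\rho$-copies. A secondary subtlety is propagating the inductive invariant through conjunctions and through $\ltlG$-subformulas whose bodies contain nested $\ltlF/\ltlG$; here the canonical shape provided by Proposition~\ref{thm:canonical-form} keeps the case analysis finite and enables a clean induction on the tree.
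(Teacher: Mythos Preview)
Your overall strategy coincides with the paper's: extract a witness index for each $\ltlF$-subformula along the lasso, unroll $\rho$ enough times to push the uncovered witnesses into the prefix, and set $\imap$ accordingly. The paper uses \emph{extreme appearances} (the latest index witnessing each subformula) rather than arbitrary witnesses, and fixes $K=\ell-1$ where $\ell$ is the position of $\loops$ in a topological order of the cut graph; your choice of arbitrary witnesses and of $K \ge \max_w j_w$ is an equally valid variant for the uncovered nodes and for edges (2a)--(2c).

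The gap is in your treatment of the (2d) edges. You propose that ``if for the fixed orientation the witnesses do not line up, $K$ can be increased to accommodate a further reshuffling inside the extra $\rho$-copies.'' This cannot work: by the (2b) edges every covered node $w$ must satisfy $\imap(\loops)\le\imap(w)\le\imap(\loope)$, i.e., $\imap(w)$ must lie in the \emph{single} interval $[\,|\tau|+K|\rho|,\ |\tau|+(K+1)|\rho|-1\,]$ corresponding to the last $\rho$-copy. Increasing $K$ shifts this interval but never enlarges it, so you gain no additional room to reorder covered witnesses. If two covered nodes have their only loop witnesses at positions $p_1>p_2$ while $(w_1,w_2)\in\pgedges$, no choice of $K$ produces a valid $\imap$. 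Also note that (2d) ranges over \emph{all} pairs of covered $\ltlF$-nodes, not just siblings under a common $\ltlG$-node as you write.

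The actual resolution---which the paper's appendix proof also leaves implicit---is that Definition~\ref{def:cutgraph}(2d) does not fix the orientation; it only demands that exactly one direction be chosen. So one picks the orientation of (2d) edges to agree with the order of the (extreme) witness positions inside $\rho$, after which the cut function exists by construction. The paper's formula $\imap(v_i)=\EA(v_i)+K|\rho|$ for $i\ge\ell$ is correct precisely under this implicit choice.
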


\begin{proof}[Proof sketch]
The detailed proof is given in~\cite{KLVW16:arxiv}.
We will present the required constant~$K \ge 0$ and the cut
     function~$\imap$.
To this end, we use extreme appearances of
     $\ltlF$-formulas~(cf.~\cite[Sec.~4.3]{EtessamiVW02}) and use them
     to find~$\imap$.
An extreme appearance of a formula~$\ltlF \psi$ is the
     furthest point in the lasso that still witnesses~$\psi$.
There might be a subformula that is required to be witnessed in the
prefix,
 but in $\tau
    \cdot \rho^\omega$it
     is only witnessed by the loop.
To resolve this, we replace $\tau$ by a a longer prefix $\tau \cdot \rho^K$, by
 unrolling the loop~$\rho$ several times; more precisely, $K$ times, 
   where $K$ is the number
     of nodes that should precede the lasso start.
In other words, if all extreme appearances of the nodes happen to be in
     the loop part, and they appear in the order that is against the
     topological order of the graph~$\pgraph(\syntree(\varphi))$, we
     unroll the loop~$K$ times (the number of nodes that have to be in
     the prefix) to find the prefix, in which the nodes respect the
     topological order of the graph.
In the unrolled schedule we can now find extreme appearances of the
required subformulas in the
     prefix.
\end{proof}

We show that to satisfy an~$\ELTLTB$ formula, a lasso should
    (i)~satisfy propositional subformulas of $\ltlF$-formulas in the
    respective cut points, and (ii)~maintain the propositional formulas of
    $\ltlG$-formulas from some cut point on. 
This is formalized as a witness. 

In the following definition, we use a short-hand notation for propositional
    subformulas: given an~$\ELTLTB$-formula $\psi$ and its canonical
    form~$\canform(\psi) = \psi_0 \wedge
    \ltlF \psi_1 \wedge \dots \wedge \ltlF \psi_k \wedge \ltlG \psi_{k+1}$,
    we use the notation~$\prop(\psi)$ to denote the formula~$\psi_0$.

\begin{definition}\label{def:prop-witness}
Given a configuration~$\gst$, a lasso $\tau \cdot \rho^\omega$ applicable
    to~$\gst$, and an $\ELTLTB$ formula $\varphi$, a cut
    function~$\imap$ of $\pgraph(\syntree(\varphi))$ is a \emph{witness}
    of~$\infpath{\gst}{\tau \cdot \rho^\omega} \models \varphi$, if the
    three conditions hold:

\begin{enumerate}[label={\bfseries\emph{(C\arabic*)}}, leftmargin=.75cm]
        \item\label{assume:root}
            For $\canform(\varphi) \equiv \psi_0 \wedge \ltlF \psi_1 \wedge \dots
            \wedge \ltlF \psi_k \wedge \ltlG \psi_{k+1}$:
\begin{enumerate}
\item            $\gst \models \psi_0$, and
\item            $\setconf{\gst}{\tau \cdot \rho} \models \prop(\psi_{k+1})$.
\end{enumerate}
        \item\label{assume:Gfin-prefix}
            For $\left<\ltlF \psi, v\right> \in \syntree(\varphi)$
            with $\imap(v) < |\tau|$, if $\imap(v)$ cuts $\tau \cdot \rho$
            into $\pi'$ and $\pi''$ and
            $\psi \equiv \psi_0 \wedge \ltlF \psi_1 \wedge \dots
            \wedge \ltlF \psi_k \wedge \ltlG \psi_{k+1}$,
            then:
\begin{enumerate}
\item $\pi'(\gst) \models \psi_0$, and 
\item            $\setconf{\pi'(\gst)}{\pi''} \models \prop(\psi_{k+1})$.
\end{enumerate}

        \item\label{assume:Gfin-loop}
            For $\left<\ltlF \psi, v\right> \in \syntree(\varphi)$
            with $\imap(v) \ge |\tau|$, if $\imap(v)$ cuts $\tau \cdot \rho$
            into $\pi'$ and $\pi''$ and
            $\psi \equiv \psi_0 \wedge \ltlF \psi_1 \wedge \dots
            \wedge \ltlF \psi_k \wedge \ltlG \psi_{k+1}$,
            then:
\begin{enumerate}
\item $\pi'(\gst) \models \psi_0$, and 
\item            $\setconf{\tau(\gst)}{\rho} \models
  \prop(\psi_{k+1})$.
\end{enumerate}

    \end{enumerate}
\end{definition}

Conditions~(a) require that propositional formulas hold
     in a configuration, while conditions~(b) require that
     propositional formulas hold on a finite suffix.
Hence, to ensure that a cut function constitutes a witness, one has to
     check the configurations of a \emph{fixed number of finite} paths
     (between the cut points).
This property is crucial for the path reduction (see
     Section~\ref{sec:repr}).
Theorems~\ref{thm:witness-soundness}
     and~\ref{thm:witness-completeness} show that the existence of a
     witness is a sound and complete criterion for the existence of a
     lasso satisfying an~$\ELTLTB$ formula.

\newcommand{\thmwitnesssoundness}{Let~$\gst$ be a configuration, $\tau \cdot \rho^\omega$ be a lasso applicable
    to~$\gst$, and $\varphi$ be an $\ELTLTB$ formula.
If there is a witness of $\infpath{\gst}{\tau \cdot \rho^\omega} \models \varphi$,
    then the lasso~$\tau \cdot \rho^\omega$ satisfies~$\varphi$,
    that is~$\infpath{\gst}{\tau \cdot \rho^\omega} \models \varphi$.}

\begin{theorem}[Soundness]\label{thm:witness-soundness}
\thmwitnesssoundness
\end{theorem}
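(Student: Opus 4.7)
The plan is to prove this by structural induction on the canonical syntax tree $\syntree(\varphi)$, assuming via Proposition~\ref{thm:canonical-form} that $\varphi$ is already in canonical form $\psi_0 \wedge \ltlF \psi_1 \wedge \dots \wedge \ltlF \psi_k \wedge \ltlG \psi_{k+1}$. The inductive claim I would carry is: for every node $\left<\chi, w\right> \in \syntree(\varphi)$ in which $\chi$ is a canonical subformula, if $\imap(w)$ cuts $\tau \cdot \rho$ into $\pi' \cdot \pi''$, then $\infpath{\pi'(\gst)}{\pi'' \cdot \rho^\omega} \models \chi$, with the analogous statement for the root at position $0$. Since every canonical formula is a conjunction of a propositional part, some $\ltlF$-conjuncts, and a single $\ltlG$-conjunct, it suffices to prove each conjunct separately at the cut configuration.

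The propositional conjunct $\psi_0$ is delivered directly by the ``(a)'' part of condition~\ref{assume:root}, \ref{assume:Gfin-prefix}, or \ref{assume:Gfin-loop}, chosen according to whether the current node is the root, has its cut point in the prefix, or has its cut point in the loop. For each child $\left<\ltlF \psi_i, w.i\right>$, Definition~\ref{def:cutgraph}(2c) or (2d) together with the monotonicity of $\imap$ gives $\imap(w.i) \ge \imap(w)$; the induction hypothesis applied at position $\imap(w.i)$ yields $\psi_i$ at the configuration there, which in turn yields $\ltlF \psi_i$ at the current cut configuration.

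The main obstacle is the $\ltlG \psi_{k+1}$ conjunct. Fortunately, canonicity restricts $\psi_{k+1}$ to rank at most~$1$, so it takes the form $p' \wedge \ltlF \chi_1 \wedge \dots \wedge \ltlF \chi_m \wedge \ltlG \true$, bounding the depth of the recursion. For the propositional invariant $p' = \prop(\psi_{k+1})$, the corresponding ``(b)'' clause provides $p'$ throughout a single iteration $\tau \cdot \rho$ (resp.~$\rho$). I would then invoke the cycle property of lassos in this paper, $\rho(\tau(\gst)) = \tau(\gst)$ as in Proposition~\ref{prop:lasso-sched}, to conclude that every configuration of $\finpath{\tau(\gst)}{\rho^\omega}$ already appears in $\finpath{\tau(\gst)}{\rho}$ and hence satisfies $p'$. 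For each $\ltlF \chi_j$, Definition~\ref{def:cutgraph}(2b) forces its cut-graph node between $\loops$ and $\loope$, so $\imap$ places its witness inside the single iteration of $\rho$; the induction hypothesis gives $\chi_j$ at that position, and by the cycle property that configuration recurs in every future iteration, supplying an $\ltlF \chi_j$-witness at every suffix.

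The finicky bookkeeping will be verifying, at each internal $\ltlF$-inductive step, that the cut function restricted to the subtree rooted at that node is itself a witness in the sense of Definition~\ref{def:prop-witness} for the corresponding subformula. This follows routinely: descendants inherit the prefix/loop split of their ancestors through $\imap$, and the ``(b)'' clauses provide precisely the propositional invariants needed for any nested $\ltlG$-subformula. Assembling the three kinds of conjuncts at the root then yields $\infpath{\gst}{\tau \cdot \rho^\omega} \models \canform(\varphi)$, which equals $\infpath{\gst}{\tau \cdot \rho^\omega} \models \varphi$ by Proposition~\ref{thm:canonical-form}.
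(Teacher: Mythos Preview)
Your overall structural-induction plan matches the paper's approach, and your treatment of the propositional parts and of the immediate $\ltlF$-children of a $\ltlG$-node via the cycle property is right. There is, however, a genuine gap in your handling of the $\ltlF$-children in the case where the current node~$w$ is itself already covered by a $\ltlG$-node.

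You assert that ``Definition~\ref{def:cutgraph}(2c) or (2d) together with the monotonicity of~$\imap$ gives $\imap(w.i) \ge \imap(w)$.'' Clause~(2c) indeed does this, but only when \emph{neither} $w$ nor $w.i$ is covered by a $\ltlG$-node. Clause~(2d), on the other hand, merely says that some edge exists between $w$ and $w.i$ in one direction or the other; it does \emph{not} force the parent before the child. So when both $w$ and $w.i$ lie under a $\ltlG$-node (which happens, e.g., for $\ltlG\ltlF(a \wedge \ltlF b)$), you may well have $\imap(w.i) < \imap(w)$, and your argument that ``$\psi_i$ holds at~$\imap(w.i)$, hence $\ltlF\psi_i$ holds at~$\imap(w)$'' fails because the witness lies \emph{before} the current configuration in the same loop iteration. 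The paper's proof repairs exactly this case by the cycle-unrolling trick you already used for the $\ltlG$-conjunct: since both cut points lie in $[\,|\tau|,\,|\tau|+|\rho|-1\,]$ and $\rho(\tau(\gst))=\tau(\gst)$, the configuration at position~$\imap(w.i)$ recurs at position $\imap(w.i)+|\rho|$, which now is $\ge \imap(w)$, supplying the needed $\ltlF\psi_i$-witness. You should apply this argument uniformly to the $\ltlF$-children whenever the current node is inside the loop, rather than appealing to~(2d).
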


\newcommand{\thmwitnesscompleteness}{Let $\varphi$ be an $\ELTLTB$ formula, $\gst$ be a configuration and $\tau
    \cdot \rho^\omega$ be a lasso applicable to~$\gst$ such that
    $\infpath{\gst}{\tau \cdot \rho^\omega} \models \varphi$ holds. 
    There is a witness of $\infpath{\gst}{(\tau \cdot \rho^K) \cdot \rho^\omega}
    \models \varphi$ for some~$K \ge 0$.}

\begin{theorem}[Completeness]\label{thm:witness-completeness}
\thmwitnesscompleteness
\end{theorem}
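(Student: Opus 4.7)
The plan is to construct explicitly a constant $K \ge 0$ and a cut function $\imap$ satisfying conditions~\ref{assume:root}, \ref{assume:Gfin-prefix}, and~\ref{assume:Gfin-loop} of Definition~\ref{def:prop-witness}, by induction on the depth of the canonical syntax tree $\syntree(\varphi)$. The key device, as in Proposition~\ref{prop:unwinding}, is to use \emph{extreme appearances} of each $\ltlF$-subformula: for every tree node $\langle \ltlF \psi, w\rangle \in \syntree(\varphi)$, assuming $\infpath{\gst}{\tau \cdot \rho^\omega} \models \varphi$, standard LTL semantics together with the canonical form guarantee a latest-possible index in $\{0,\dots,|\tau|+|\rho|-1\}$ (modulo unrolling) at which the subformula $\psi$ of $\ltlF \psi$ holds, and at which the $\ltlG$-subformulas that surround it continue to hold on the suffix.

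First, I would treat nodes $w$ that \emph{are} covered by some $\ltlG$-node. Since $\ltlG \psi'$ holding on the loop $\rho^\omega$ implies $\psi'$ holds at every configuration of $\rho^\omega$, every such $\ltlF \psi$ is witnessed somewhere inside the single iteration of $\rho$; I would set $\imap(w)$ to that position in $\{|\tau|,\dots,|\tau|+|\rho|-1\}$. For the special nodes I take $\imap(\loops) = |\tau|$ and $\imap(\loope) = |\tau|+|\rho|-1$, as required by Definition~\ref{def:cutfun}. Second, for nodes $w$ not covered by any $\ltlG$-node, $\ltlF \psi$ is witnessed at some index in the infinite path, but this index may lie past $|\tau|$ inside one of the loop iterations of $\rho^\omega$. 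Here I would invoke unrolling: let $N$ be the number of $\ltlF$-nodes in $\syntree(\varphi)$ that are not covered by a $\ltlG$-node, and set $K = N$. In the lasso $(\tau \cdot \rho^K) \cdot \rho^\omega$, each such node can be mapped to a distinct witnessing index strictly less than $|\tau| + K\cdot|\rho|$, choosing the indices in a topological order consistent with $\pgraph(\syntree(\varphi))$ (using the fact that $\rho^j(\tau(\gst))$ is reached after every further iteration, so each copy of $\rho$ provides an additional "slot"). Topological orderings of finite DAGs always exist, so the monotonicity requirement $(v,v')\in\pgedges \Rightarrow \imap(v)\le \imap(v')$ from Definition~\ref{def:cutfun} can be enforced.

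Having fixed $\imap$, I verify the three witness conditions. For~\ref{assume:root}: part~(a) follows because the root formula $\canform(\varphi)$ is satisfied at $\gst$, so in particular the propositional conjunct $\psi_0$ holds at $\gst$; part~(b) follows because $\ltlG \psi_{k+1}$ holds along the whole path, hence its propositional part $\prop(\psi_{k+1})$ holds at every configuration visited by $\tau\cdot \rho$. For~\ref{assume:Gfin-prefix} and~\ref{assume:Gfin-loop}: at each cut point $\imap(w)$ for a node $\langle \ltlF \psi, w\rangle$, the choice of $\imap(w)$ as a witnessing index for $\ltlF \psi$ gives that the canonical subformula $\psi = \psi_0 \wedge \ltlF \psi_1 \wedge \dots \wedge \ltlF \psi_k \wedge \ltlG \psi_{k+1}$ holds at the resulting configuration $\pi'(\gst)$, which immediately yields (a); for (b), the fact that $\ltlG \psi_{k+1}$ holds at $\pi'(\gst)$ forces $\prop(\psi_{k+1})$ to hold in every subsequent configuration, either along the finite tail $\pi''$ (for~\ref{assume:Gfin-prefix}) or along the loop $\rho$ (for~\ref{assume:Gfin-loop}, using that once inside the loop the path is ultimately periodic).

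The main technical obstacle is the simultaneous scheduling of witnessing indices: many $\ltlF$-subformulas may compete for positions in the prefix, and their natural witnesses may come in an order that violates the topological order of $\pgraph(\syntree(\varphi))$. This is exactly why I pay the cost of unrolling the loop $K$ times, so that there are enough "prefix slots" to assign witnesses in topological order while preserving the invariants mandated by surrounding $\ltlG$-subformulas. Because $\pgraph$ is acyclic and finite (its size bounded by the number of $\ltlF$-nodes plus two), a finite $K$ suffices, and the resulting $\imap$ satisfies all three conditions, which is what the theorem asserts.
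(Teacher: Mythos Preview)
Your proposal is correct and follows essentially the same approach as the paper: the paper factors the construction of $K$ and $\imap$ into Proposition~\ref{prop:unwinding} (extreme appearances plus unrolling the loop $K$ times, with $K$ equal to the number of $\ltlF$-nodes that must precede $\loops$), and then verifies Conditions~\ref{assume:root}--\ref{assume:Gfin-loop} exactly as you outline; you simply inline that construction rather than citing the proposition. One minor point to tighten: after unrolling, the prefix becomes $\tau\cdot\rho^K$, so the positions for $\ltlG$-covered nodes and for $\loops$, $\loope$ should be stated relative to $|\tau|+K|\rho|$ rather than $|\tau|$, and Condition~\ref{assume:root}(b) should be checked on $\setconf{\gst}{(\tau\cdot\rho^K)\cdot\rho}$.
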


Theorem~\ref{thm:witness-soundness} is proven for subformulas of
     $\varphi$ by structural induction on the intermediate nodes of the
     canonical syntax tree.
In the proof of Theorem~\ref{thm:witness-completeness} we use
     Proposition~\ref{prop:unwinding} to prove the points of
     Definition~\ref{def:prop-witness}.
(The detailed proofs are given in~\cite{KLVW16:arxiv}.) 

\subsection{Using Cut Graphs to Enumerate Shapes of Lassos}
    \label{sec:enumerating-lassos}

Proposition~\ref{prop:lasso-sched} and Theorem~\ref{thm:witness-completeness}
    suggest that in order to find a schedule that satisfies an $\ELTLTB$
    formula~$\varphi$, it is sufficient to look for lasso schedules that can be cut in
    such a way that the configurations at the cut points and the configurations
    between the cut points satisfy certain propositional formulas. 
In fact, the cut points as defined by cut functions
    (Definition~\ref{def:cutfun}) are \emph{topological orderings} of the cut
    graph~$\pgraph(\syntree(\varphi))$.
Consequently, by enumerating the topological
    orderings of the cut graph~$\pgraph(\syntree(\varphi))$ we can enumerate
    the \emph{lasso shapes}, among which there is a lasso schedule
    satisfying~$\varphi$ (if~$\varphi$ holds on the counter
    system). 
These shapes differ in the order, in which $\ltlF$-subformulas of~$\varphi$ are
    witnessed.
For this, one can use fast generation algorithms,
    e.g.,~\cite{Canfield1995}.

    \begin{example}
Consider the cut graph in Figure~\ref{Fig:prec-graph}. 
The ordering of its vertices $0, 0.1, 0.2, \loops, 0.3.1, \loope$ corresponds
    to the lasso shape~(a) shown in Figure~\ref{fig:lasso-shapes}, while the
    ordering $\loops, 0, 0.2, 0.1, \loops, 0.3.1, \loope$ corresponds to the
    lasso shape~(b). 
These are the two lasso shapes that one has to analyze, and they are
the result of our construction using the cut graph.
The other 18 lasso shapes in the figure are not required, and not
constructed by our method.
    \end{example}

From this observation, we conclude that given a topological ordering $v_1,
\dots, v_{|\pgvertices|}$ of the cut graph $\pgraph(\syntree(\varphi)) =
    (\pgvertices, \pgedges)$, one has to look for a lasso schedule that
    can be written as an alternating sequence of configurations~$\gst_i$
    and schedules~$\tau_j$:
\begin{equation}\label{eq:lasso-sequence}
    \gst_0,\tau_0,\gst_1,\tau_1, \dots, \gst_\ell, \tau_\ell,
        \dots, \gst_{|\pgvertices|-1}, \tau_{|\pgvertices|}, \gst_{|\pgvertices|},
\end{equation}
where $v_\ell = \loops$, $v_{|\pgvertices|} = \loope$, and $\gst_\ell =
    \gst_{|\pgvertices|}$.
Moreover, by Definition~\ref{def:prop-witness}, the sequence of
    configurations and schedules should satisfy~\ref{assume:root}--\ref{assume:Gfin-loop}, e.g.,
 if a node~$v_i$ corresponds to the formula $\ltlF
    (\psi_0 \wedge \dots \wedge \ltlG \psi_{k+1})$ and this formula
    matches Condition~\ref{assume:Gfin-prefix}, then the following
    should hold:

\begin{enumerate}
\item Configuration~$\gst_i$ satisfies the propositional formula:
      $\gst_i \models \psi_0$.

\item All configurations visited by the schedule $\tau_i \concat \dots \concat
        \tau_{|\pgvertices|}$ from the configuration~$\gst_i$ satisfy the
        propositional formula~$\prop(\psi_{k+1})$.
        Formally, $\setconf{\gst_i}{\tau_i \concat \dots \concat
        \tau_{|\pgvertices|}} \models \prop(\psi_{k+1})$.
\end{enumerate}

One can write an SMT query for the sequence
    (\ref{eq:lasso-sequence}) satisfying
    Conditions~\ref{assume:root}--\ref{assume:Gfin-loop}.
However, this approach has two problems:
\begin{enumerate}
    \item The order of rules in schedules $\tau_0, \dots, \tau_{|\pgvertices|}$
        is not fixed. Non-deterministic choice of rules complicates
        the SMT query.

    \item To guarantee completeness of the search, one requires a bound
        on the length of schedules $\tau_0, \dots, \tau_{|\pgvertices|}$.
\end{enumerate}

For reachability properties these issues were addressed
     in~\cite{KVW15:CAV}  by showing that one only has to consider
     specific orders of the rules;  so-called representative schedules.
To lift this technique to~$\ELTLTB$, we are left with two issues: 

\begin{enumerate}
\item The shortening technique applies to steady schedules, i.e., the schedules
that do not change evaluation of the guards. 
Thus, we have to break the schedules $\tau_0, \dots, \tau_{|\pgvertices|}$ into
steady schedules. 
This issue is addressed in Section~\ref{sec:structguards}.

\item The shortening technique preserves state reachability, e.g.,
    after shortening of~$\tau_i$, the resulting schedule still reaches
    configuration~$\gst_{i+1}$. But it may violate an invariant
    such as $\setconf{\gst_i}{\tau_i \concat \dots \concat
        \tau_{|\pgvertices|}} \models \prop(\psi_{k+1})$.
    This issue is addressed in Section~\ref{sec:repr}.
\end{enumerate}

\makeatletter{}\section{Cutting Lassos with Threshold Guards}\label{sec:structguards}

We introduce threshold graphs to cut a lasso into steady
    schedules, in order to apply the shortening technique of
    Section~\ref{sec:repr}. 
Then, we combine the cut graphs and threshold graphs to cut a lasso into smaller
    finite segments, which can be first shortened and then checked with the
    approach introduced in Section~\ref{sec:enumerating-lassos}. 

Given a configuration~$\gst$, its context~$\statectx(\gst)$ is the set that
    consists of the lower guards unlocked in~$\gst$ and the upper guards locked
    in~$\gst$, i.e., $\statectx(\gst) = \CtxU \cup \CtxL$, where $\CtxU = \{ g
    \in \PrecondU \mid \gst \models g \}$ and $\CtxL = \{ g \in \PrecondL \mid
\gst \nmodels g \}$. 
As discussed in Example~\ref{ex:guards} on page~\pageref{ex:guards},
since the shared variables are never decreased, the contexts in a path are
    monotonically non-decreasing:

\begin{proposition}[Prop.~3 of \cite{KVW15:CAV}]\label{prop:mono}
    If a transition~$t$ is enabled in a configuration~$\gst$, then
    $\statectx(\gst) \subseteq \statectx(t(\gst))$.
\end{proposition}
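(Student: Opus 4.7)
The plan is to split the claim along the two pieces of the context, $\CtxU$ and $\CtxL$, and to reduce each inclusion to the monotonicity of shared variables under a single transition. Parameters play no role beyond the fact that they are fixed along a path, so I would treat them as constants throughout.

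The key observation I would extract from Definition~\ref{def:TofSigma} is that applying $t$ to $\gst$ yields $t(\gst).\vars = \gst.\vars + t.\trfactor \cdot t.\update$ with $t.\trfactor \in \NatZero$ and $t.\update \in \NatZero^{\numglob}$, so componentwise $t(\gst).\vars \ge \gst.\vars$, while $t(\gst).\param = \gst.\param$ by clause~(2). Enabledness of $t$ in $\gst$ is not actually needed for the inequality, only the form of the update; I would note this in passing.

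For the lower-guard part, I would take any $g \equiv x \ge a_0 + \sum_{i=1}^{\numparam} a_i p_i$ with $g \in \CtxU \subseteq \statectx(\gst)$; by definition this means $\gst.\vars[x] \ge a_0 + \sum_i a_i \cdot \gst.\param[i]$. Substituting the monotonic update $t(\gst).\vars[x] \ge \gst.\vars[x]$ and the unchanged parameters shows the inequality still holds at $t(\gst)$, so $g \in \CtxU$ of $t(\gst)$. Symmetrically, for an upper guard $g \equiv x < a_0 + \sum_i a_i p_i$ with $g \in \CtxL \subseteq \statectx(\gst)$, the condition $\gst \nmodels g$ rewrites as the linear lower bound $\gst.\vars[x] \ge a_0 + \sum_i a_i \cdot \gst.\param[i]$, and exactly the same monotonicity argument carries the violation of $g$ forward to $t(\gst)$, giving $g \in \CtxL$ of $t(\gst)$. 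Taking the union of the two inclusions yields $\statectx(\gst) \subseteq \statectx(t(\gst))$.

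I do not expect any genuine obstacle: the proposition is essentially built into the semantics chosen in Definition~\ref{def:TofSigma}, namely $\update \in \NatZero^{\numglob}$, $\trfactor \in \NatZero$, and the linearity of guard expressions in $x$. The only bookkeeping care is to keep the polarity of lower and upper guards straight, since membership in $\CtxU$ is witnessed by $\gst \models g$ whereas membership in $\CtxL$ is witnessed by $\gst \nmodels g$; both are nonetheless preserved by a monotone increase in the shared-variable vector.
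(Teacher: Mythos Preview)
Your argument is correct and is exactly the intended one: monotonicity of the shared-variable vector under a transition (from $t.\update \in \NatZero^{\numglob}$ and $t.\trfactor \in \NatZero$ in Definition~\ref{def:TofSigma}), together with the invariance of parameters, immediately preserves every lower guard that holds and every upper guard that fails. The paper does not supply its own proof of this proposition; it simply imports it as Prop.~3 of~\cite{KVW15:CAV} (and the same observation appears informally in Example~\ref{ex:guards}), so there is nothing further to compare against.
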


\begin{example}
Continuing Example~\ref{ex:guards}, which
    considers the TA in Figure~\ref{fig:stunningexample}. 
Both threshold guards~$\gamma_1$ and~$\gamma_2$ are false in the initial
    state~$\gst$. 
Thus, $\statectx(\gst)=\emptyset$. 
The transition $t=(r_1, 1)$ unlocks the guard~$\gamma_1$, i.e.,
    $\statectx(t(\gst)) = \{\gamma_1\}$.
\end{example}

As the transitions of the counter system~$\Sys(\TA)$ never decrease shared
    variables, the loop of a lasso schedule must be steady:

\begin{proposition}\label{prop:lasso-is-steady}
    For each configuration~$\gst$ and a schedule~$\tau \concat
    \rho^\omega$, if $\rho^k(\tau(\gst)) = \tau(\gst)$ for $k \ge 0$,
    then the loop $\rho$ is steady for~$\tau(\gst)$, that is,
    $\statectx(\rho(\tau(\gst))) = \statectx(\tau(\gst))$.
\end{proposition}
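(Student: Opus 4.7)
The plan is to combine the monotonicity of contexts from Proposition~\ref{prop:mono} with the cycle condition $\rho^k(\tau(\gst)) = \tau(\gst)$ to force the contexts along $\rho$ to collapse to a single value. Throughout I would abbreviate $\gst' = \tau(\gst)$ and fix some $k \ge 1$ for which the cycle condition holds (the case $k=0$ is vacuous).

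First, I would show the bare equality $\statectx(\rho(\gst')) = \statectx(\gst')$. Iterating Proposition~\ref{prop:mono} transition-by-transition along the finite schedule $\rho$ applied to $\gst'$ yields the forward inclusion $\statectx(\gst') \subseteq \statectx(\rho(\gst'))$. Iterating $\rho$ further a total of $k$ times and applying the same monotonicity argument over each copy gives $\statectx(\rho(\gst')) \subseteq \statectx(\rho^k(\gst'))$. The hypothesis now closes the loop: $\rho^k(\gst') = \gst'$, so $\statectx(\rho^k(\gst')) = \statectx(\gst')$, giving the reverse inclusion. Equality follows.

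Next, to obtain steadiness in the stronger sense of the definition\dash---that every configuration of $\finpath{\gst'}{\rho}$ carries the same context\dash---I would pick an arbitrary prefix $\rho'$ of $\rho$ and write $\rho = \rho' \concat \rho''$. Two more applications of Proposition~\ref{prop:mono} along $\rho'$ and then $\rho''$ produce the sandwich $\statectx(\gst') \subseteq \statectx(\rho'(\gst')) \subseteq \statectx(\rho(\gst'))$. Since the outer contexts are equal by the previous paragraph, every intermediate context coincides with $\statectx(\gst')$, which is exactly the definition of $\rho$ being steady for $\gst'$.

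I do not expect a real obstacle here: the whole argument is the pigeonhole-style observation that a monotone sequence (with respect to set inclusion) on a set that returns to its starting value must be constant. The only subtlety to flag is the interpretation of the hypothesis $\rho^k(\tau(\gst)) = \tau(\gst)$ as guaranteeing a genuine cycle (some $k \ge 1$ rather than the trivial $k=0$), which follows from the way Proposition~\ref{prop:lasso-sched} is used to produce the lasso in the first place.
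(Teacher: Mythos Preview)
Your argument is correct and is exactly the monotonicity-plus-cycle reasoning the paper has in mind; the paper does not spell out a proof here beyond the one-line remark that shared variables never decrease, so your write-up simply makes that explicit via Proposition~\ref{prop:mono}. One small note: the hypothesis ``$\rho^k(\tau(\gst)) = \tau(\gst)$ for $k \ge 0$'' is a universal quantifier (as in Proposition~\ref{prop:lasso-sched}), so you may take $k=1$ directly rather than worrying about picking some $k \ge 1$; your sandwich argument for the intermediate configurations is still needed and is the real content.
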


In~\cite{KVW15:CAV}, Proposition~\ref{prop:mono} was used to cut a
     finite path into segments, one per context.
We introduce threshold graphs and their topological orderings to apply
     this idea to lasso schedules.

\begin{definition}\label{def:tgraph}
   A \emph{threshold graph} is~$\tgraph(\TA)= (\tgvertices, \tgedges)$
   such that:

    \begin{itemize}
\item The vertices set~$\tgvertices$ contains the threshold guards
        and the special node~$\loops$, i.e., $\tgvertices =
        \PrecondU \cup \PrecondL \cup \{\loops\}$.

\item There is an edge from a guard~$g_1 \in \PrecondU$ to a guard~$g_2 \in
        \PrecondU$, if $g_2$ cannot be unlocked before~$g_1$, i.e., $(g_1, g_2)
        \in \tgedges$, if for each configuration $\gst \in \configs$, $\gst
        \models g_2$ implies $\gst \models g_1$.

\item There is an edge from a guard~$g_1 \in \PrecondL$ to a guard~$g_2 \in
        \PrecondL$, if $g_2$ cannot be locked before~$g_1$, i.e., $(g_1, g_2)
        \in \tgedges$, if for each configuration $\gst \in \configs$, $\gst
        \nmodels g_2$ implies $\gst \nmodels g_1$.

    \end{itemize}
\end{definition}

Note that the conditions in Definition~\ref{def:tgraph} can be easily checked
    with an SMT solver, for all configurations.

\begin{example}
The threshold graph of the TA in Figure~\ref{fig:stunningexample} has
    the vertices $\tgvertices=\{\gamma_1, \gamma_2,\loops\}$ and the edges
    $\tgedges = \{(\gamma_1, \gamma_2)\}$.
\end{example}

\begin{figure}
    \begin{center}
        \makeatletter{}\begin{tikzpicture}[x=1cm,y=1cm,font=\scriptsize,>=latex];
    \tikzstyle{node}=[circle,fill=black,minimum size=0.1cm,inner sep=0cm];
    \tikzstyle{cut}=[cross out,thick,draw=red!90!black,
        minimum size=0.15cm,inner sep=0mm,outer sep=.1mm];
    \tikzstyle{path}=[-];
    \tikzstyle{Gfin}=[-, very thick, blue];
    
  \begin{scope}[xshift=0cm, yshift=0cm]
    \node at (0.1, .3) { \normalsize\textbf{(a)} };

    \node[node] (0) at (0, 0) {};

            \foreach \x/\n in {.75/A, 1.5/B, 2.25/C, 3.0/D, 3.75/E}
        \node[cut] (\n) at (\x,0) {}; 
    \draw[path] (0) -- (A);
    \draw[path] (A) -- (B);
    \draw[path] (B) -- (C);
    \draw[Gfin] (C) -- (D);
    \draw[Gfin] (D) -- (E);
    \draw[->] (E) edge[bend right=60] (C);

    \draw[|<->|] ($(0)+(0,-.5)$)
        --node[midway, fill=white, text=black]
        {$\kappa[\ell_3]=0$} ($(E)+(0,-.5)$);
    \node at ($(A)+(0,-.25)$) {$\gamma_1$};
    \node at ($(B)+(0,-.25)$) {$\gamma_2$};
    \node at ($(D)+(0,-.25)$) {$\psi_{\mathit{fair}}$};
  \end{scope}
    
  \begin{scope}[xshift=4.3cm, yshift=0cm]
    \node at (0.1, .3) { \normalsize\textbf{(b)} };

    \node[node] (0) at (0, 0) {};

            \foreach \x/\n in {1.1/A, 2.25/C, 3.0/D, 3.75/E}
        \node[cut] (\n) at (\x,0) {}; 
    \draw[path] (0) -- (A);
    \draw[path] (A) -- (C);
    \draw[Gfin] (C) -- (D);
    \draw[Gfin] (D) -- (E);
    \draw[->] (E) edge[bend right=60] (C);

    \draw[|<->|] ($(0)+(0,-.5)$)
        --node[midway, fill=white, text=black]
        {$\kappa[\ell_3]=0$} ($(E)+(0,-.5)$);
    \node at ($(A)+(0,-.25)$) {$\gamma_1$};
    \node at ($(D)+(0,-.25)$) {$\psi_{\mathit{fair}}$};
  \end{scope}
    
  \begin{scope}[xshift=0cm, yshift=-1.5cm]
    \node at (0.1, .3) { \normalsize\textbf{(c)} };

    \node[node] (0) at (0, 0) {};

            \foreach \x/\n in {2.25/C, 3.0/D, 3.75/E}
        \node[cut] (\n) at (\x,0) {}; 
    \draw[path] (0) -- (D);
    \draw[Gfin] (C) -- (D);
    \draw[Gfin] (D) -- (E);
    \draw[->] (E) edge[bend right=60] (C);

    \draw[|<->|] ($(0)+(0,-.5)$)
        --node[midway, fill=white, text=black]
        {$\kappa[\ell_3]=0$} ($(E)+(0,-.5)$);
    \node at ($(D)+(0,-.25)$) {$\psi_{\mathit{fair}}$};
  \end{scope}
  
\end{tikzpicture}

    \end{center}

    \caption{The shapes of lassos to check the correctness property
            in Example~\ref{ex:fair}. Recall that $\gamma_1$ and $\gamma_2$
            are the threshold guards, defined as $x \ge t+1-f$ and $x \ge n-t-f$
            respectively.
    }
    \label{fig:ta-lasso-shapes}
\end{figure}

Similar to Section~\ref{sec:enumerating-lassos}, we consider a topological
    ordering $g_1, \dots, g_\ell, \dots, g_{|\tgvertices|}$ of the vertices
    of the threshold graph.
The node~$g_\ell = \loops$ indicates the point where a loop should start, and
    thus by Proposition~\ref{prop:lasso-is-steady}, after that point the context
    does not change. 
Thus, we consider only the subsequence~$g_1, \dots, g_{\ell-1}$ and split the
    path $\infpath{\gst}{\tau \concat \rho}$ of a lasso schedule~$\tau \concat
    \rho^\omega$ into an alternating sequence of configurations~$\gst_i$ and
    schedules~$\tau_0$ and $t_j \cdot \tau_j$, for $1 \le j < \ell$,
    ending up with the loop~$\rho$
    (starting in~$\gst_{\ell-1}$ and ending in~$\gst_\ell = \gst_{\ell-1}$):
\begin{equation}
    \gst_0,\tau_0,\gst_1, (t_1\concat\tau_1) , \dots,
    \gst_{\ell-2}, (t_{\ell-1}\concat\tau_{\ell-1}), \gst_{\ell-1}, \rho,
    \gst_\ell \label{eq:lasso-thresh-seq}
\end{equation}

In this sequence, the transitions $t_1, \dots, t_{\ell-1}$ change the
     context, and the schedules~$\tau_0, \tau_1, \dots, \tau_{\ell-1},
     \rho$ are steady.
Finally, we interleave a topological ordering of the vertices of the
     cut graph with a topological ordering of the vertices of the
     threshold graph.
More precisely, we use a topological ordering of the vertices of the
union of the cut graph and the threshold graph.
We use the resulting sequence to cut a lasso schedule
     following the approach in Section~\ref{sec:enumerating-lassos}
     (cf.
Equation~(\ref{eq:lasso-sequence})).
By enumerating all such interleavings, we obtain all lasso shapes.
Again, the lasso is a sequence of steady schedules and
     context-changing transitions.

\begin{example}
Continuing Example~\ref{eq:correctness} given on page~\pageref{eq:correctness},
    we consider the lasso shapes that satisfy the $\ELTLTB{}$ formula
    $\ltlG\ltlF \propfair \wedge  \counters[\ell_0] = 0 \wedge \ltlG
    \counters[\ell_3] = 0$.
Figure~\ref{fig:ta-lasso-shapes} shows the lasso shapes that have to be
    inspected by an SMT solver.
In case~(a), both threshold guards~$\gamma_1$ and~$\gamma_2$ are eventually
    changed to true, while the counter~$\kappa[\ell_3]$ is never increased
    in a fair execution.
For $n=3t$, this is actually a counterexample to the correctness property explained in
    Example~\ref{eq:correctness}.
In cases~(b) and~(c) at most one threshold guard is eventually changed to true,
    so these lasso shapes cannot produce a counterexample.
\end{example}

In the following section, we will show how to shorten steady schedules, while
    maintaining Conditions~\ref{assume:root}--\ref{assume:Gfin-loop} of
    Definition~\ref{def:prop-witness}, required to satisfy the~$\ELTLTB$
    formula.

\makeatletter{}\section{The Short Counterexample Property}\label{sec:repr}

Our verification approach focuses on counterexamples, and
as discussed in Section~\ref{sec:reach-and-live}, negations of
     specifications are expressed in~$\ELTLTB{}$.
In the case of reachability properties, counterexamples are finite
     schedules reaching a bad  state from an initial state.
An efficient method for finding counterexamples to reachability
 can be found in~\cite{KVW15:CAV}.
It is based on the short counterexample property.
Namely, it was proven that for each threshold automaton, there is a
     constant~$d$ such that if there is a schedule that reaches a bad
     state, then there must also exist an accelerated schedule that
     reaches that state in at most~$d$ transitions (i.e., $d$ is
     the diameter of the counter system).
The proof in~\cite{KVW15:CAV} is based on the following three steps:
\begin{enumerate}
 \item each finite schedule (which may or may not be a counterexample),
    can be divided into a few steady schedules,

 \item for each of these steady schedules they find a representative, 
    i.e., an accelerated schedule of bounded length,
    with the same starting and ending configurations 
    as the original schedule,

 \item at the end, all these representatives are concatenated in 
    the same order as the original steady schedules.
\end{enumerate}

This result guarantees that the system is correct if no
     counterexample to reachability properties is found using bounded
     model checking with bound~$d$.
In this section, we extend the technique from Point~2 from
     reachability properties to $\ELTLTB{}$ formulas.
The central result regarding Point~2  is the following proposition
     which is a specialization of~\cite[Prop.~7]{KVW15:CAV}:

  \begin{proposition}\label{prop:srep-ex}
Let $\Sk = (\local, \initlocal, \globset,
     \paraset, \ruleset,\ResCond)$ be a threshold automaton.
For every configuration~$\gst$ and every steady
    schedule~$\tau$ applicable to~$\gst$, there exists a steady schedule
    $\xrep{\gst}{\tau}$ with the following properties:
$\xrep{\gst}{\tau}$ is applicable
        to $\gst$, $\xrep{\gst}{\tau}(\gst) = \tau(\gst)$, and
        $|\xrep{\gst}{\tau}| \le 2 \cdot |\ruleset|$.
\end{proposition}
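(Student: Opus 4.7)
My plan is to exploit steadiness of $\tau$: by Proposition~\ref{prop:mono}, the context $\statectx$ is constant along $\tau$, so for every transition $t=(r,k)$ occurring in $\tau$ the guards $r.\precondLE\wedge r.\precondG$ are uniformly satisfied at every configuration on the path, and enabledness collapses to the single counter condition $\sigma.\counters[r.\fromstate]\ge k$. I also use the standing TA assumptions that rules inside cycles have $\update=\vec 0$ and that all cycles are simple.

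First I would compute, for each rule $r\in\ruleset$, the total factor $k_r=\sum\{t.\trfactor : t=(r,\cdot)\text{ occurs in }\tau\}$. Because the counter updates in Definition~\ref{def:TofSigma}(3)--(4) and the shared-variable updates in Definition~\ref{def:TofSigma}(1) are linear in the factor, \emph{any} steady schedule that fires each rule $r$ with total factor $k_r$, applied to $\gst$, reaches exactly $\tau(\gst)$. Hence it is enough to order the accelerated macro-transitions so that every intermediate configuration keeps the counter condition true; steadiness is automatic since upper/lower guards can only move in one direction, both $\tau$ and the representative start and end at configurations with the same context. Self-loop transitions contribute nothing to $\counters$ or $\vars$ and I would simply drop them.

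Next, after contracting self-loops, the TA's location graph is a DAG decorated with at most simple (length-two) cycles. I would topologically order the rules along this DAG. Rules not lying on a 2-cycle are emitted as a \emph{single} accelerated transition $(r,k_r)$; the one subtle case is a 2-cycle $\ell\to\ell'\to\ell$ used by a pair of rules $(r,r')$, where a producer--consumer dependency may force splitting at most one of them into a prefix firing and a suffix firing around its partner. This yields at most two accelerated transitions per rule, giving the bound $|\xrep{\gst}{\tau}|\le 2\cdot|\ruleset|$.

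The main obstacle will be proving applicability of each macro-transition: when rule $r$ fires with the chosen factor, $\counters[r.\fromstate]$ must be at least that factor. I plan to establish this by induction on the topological order of rules, tracking for each location $\ell$ the net flux $\sum_{r:r.\tostate=\ell}k_r-\sum_{r:r.\fromstate=\ell}k_r$ already consumed and comparing it with the flux observed at the corresponding intermediate points of $\tau$; since $\tau$ itself is applicable, the flux budget at $\ell$ is always non-negative, and the two-firings-per-rule allowance in the cycle case is exactly what is needed to reproduce these budgets without ever dipping below zero.
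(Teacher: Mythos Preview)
Your high-level plan---aggregate the factors per rule, order rules along the control flow, and fire each rule at most twice---is the right shape and matches the paper's construction in spirit. But there is a concrete gap: you identify ``simple cycle'' with ``length-two cycle''. The paper's restriction (Section~\ref{sec:TA}) is that every cycle is simple in the graph-theoretic sense (between any two cycle locations there is exactly one node-disjoint directed path), which allows cycles of \emph{arbitrary} length~$j$. The construction in the paragraph labelled~\ref{cons:srep} handles a loop $r_1,\dots,r_j$ by emitting the sequence $(r_1,f_1),\dots,(r_j,f_j),(r_1,f_{j+1}),\dots,(r_j,f_{2j})$ with carefully chosen factors~$f_i$, i.e., two transitions per cycle rule, and this is where the bound $2\cdot|\ruleset|$ comes from. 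Your proposal only treats $j\le 2$ and so does not cover the hypotheses of the proposition.

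The second difference is methodological. The paper does not build the representative by first computing totals $k_r$ and then arguing applicability from scratch. It repeatedly swaps adjacent transitions $t_1,t_2$ whenever $[t_2.\trrule]\linrelc[t_1.\trrule]$, merges equal-rule neighbours, and finally replaces each loop segment by the explicit $2j$-schedule above. Applicability is preserved by each individual swap (this is essentially Propositions~\ref{prop:leftmove} and~\ref{prop:movingleft}), so applicability of the sorted-and-merged schedule follows by induction; the only delicate point is the loop replacement, where the specific factor assignments are exactly what make $\counters[r_i.\fromstate]\ge f_i$ hold at each step. Your ``flux budget'' induction is correct for the acyclic part (it amounts to $\tau(\gst).\counters[\ell]\ge 0$), but for cycles it does not by itself tell you how to split the total factor of each cycle rule into two firings so that no intermediate counter becomes negative---one needs the explicit choice the paper makes, or an equivalent argument you have not supplied.

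Your remarks on steadiness (via monotonicity of the context and the fact that the representative's shared variables stay componentwise between $\gst.\vars$ and $\tau(\gst).\vars$) and on dropping self-loop transitions are fine.
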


We observe that the proposition talks about the first
     configuration~$\sigma$ and the last one $\tau(\sigma)$, while it
     ignores intermediate configurations.
However, for $\ELTLTB{}$ formulas, one has to consider all
     configurations in a schedule, and not just the first and the last
     one.

  \begin{figure}[t]
\centering
\scalebox{1.}{\makeatletter{}\tikzstyle{trans}=[->,line width=0.5mm]

\begin{tikzpicture}
 \node at (-1.5, 0) { {\Large $\tau_{\mathrm{up}}$:} };
 \node (s1) at (0,0) {$\gst_1$};
 \node (s2) at (3,0) {$\gst_2$};
 \node (s3) at (6,0) {$\gst_3$};
 
 \draw[trans] (s1) to[]
    node(k1)[align=center,anchor=south, midway]
    {$\cpp{\counters[\ell]}$}  (s2);
 \draw[trans] (s2) to[]
    node(k2)[align=center,anchor=south, midway]
    {$\counters[\ell]\scriptsize{\texttt{-}\texttt{-}}$}  (s3);

 \node at (-1.5, -1.0) { {\Large $\tau_{\mathrm{down}}$:} };
 \node (s1') at (0,-1.0) {$\gst_1$};
 \node (s2') at (3,-1.0) {$\gst'_2$};
 \node (s3') at (6,-1.0) {$\gst_3$};
 
 \draw[trans] (s1') to[]
    node(k3)[align=center,anchor=north, midway]
    {$\counters[\ell]\scriptsize{\texttt{-}\texttt{-}}$}   (s2');
 \draw[trans] (s2') to[]
    node(k4)[align=center,anchor=north, midway]
    {$\cpp{\counters[\ell]}$} (s3');
    
 \draw[->,dashed,line width=0.36mm] ($(0.1,-0.5)+(k1)$) -- ($(0.1,0.5)+(k4)$);
 \draw[->,dashed,line width=0.36mm] ($(-0.1,-0.5)+(k2)$) -- ($(-0.1,0.5)+(k3)$);
 
\end{tikzpicture}
 }
\caption{Changing the order of transitions can violate $\ELTLTB{}$ formulas. If $\sigma_1.\counters[\ell]=1$, then for the upper 
  schedule~$\tau_{\text{up}}$ holds that 
  $\setconf{\gst_1}{\tau_{\text{up}}}\models
  \counters[\ell] >  0$, while for the lower one this is not the case,
  because $\gst_2'\not\models \counters[\ell] >  0$.}
\label{fig:swapping}
\end{figure}

\begin{example} \label{ex:swap}
Figure~\ref{fig:swapping} shows the result of swapping transitions.
The approaches by~\cite{Lipton75} and~\cite{KVW15:CAV} are only
     concerned with the first and last configurations: they use the
     property that after swapping transitions, $\gst_3$ is still
     reached from~$\gst_1$.
The arguments used in~\cite{Lipton75,KVW15:CAV} do not care about the
     fact that the resulting path visits a different intermediate
     state ($\gst'_2$ instead of $\gst_2$).
However,   if $\sigma_1.\counters[\ell]=1$, then
     $\gst_2.\counters[\ell] > 0$, while $\gst'_2.\counters[\ell] =
     0$.
Hence, swapping transitions may change the evaluation of
$\ELTLTB{}$ formulas, e.g., $\ltlG (\counters[\ell] >  0)$.
\end{example}

\begin{figure*}[t]
\begin{center}
\makeatletter{} \tikzstyle{inx}=[circle,draw=black!90,fill=black!10,
    thick,minimum size=4.8mm,inner sep=0.75mm,font=\normalsize]
\tikzstyle{outx}=[circle,draw=black!90,fill=white,
    thick,minimum size=4.8mm,inner sep=0.5mm,font=\normalsize]
\tikzstyle{rule}=[->,thick]
\tikzstyle{post}=[->,thick,rounded corners,font=\normalsize]
\tikzstyle{comment}=[color=slateblue,font=\normalsize]
\tikzset{every loop/.style={min distance=5mm,in=140,out=113,looseness=2}}
\tikzstyle{token}=[draw,fill,circle,minimum size=0.8mm,inner sep=0.5mm, color=black!60]
\tikzstyle{box}=[rounded corners=.2cm,opacity=.9,very thick,fill=none]
\tikzstyle{trans}=[->,line width=0.4mm]
\tikzstyle{transt}=[->,line width=.8mm]
\tikzstyle{crit}=[draw=black!90,fill,diamond,minimum size=2mm, fill=black!10,
		  rounded corners=.02cm,text width=0.22cm]
\tikzstyle{translasso}=[->,thick]
\tikzstyle{state}=[fill,circle,minimum size=1mm,inner sep=0mm]
\tikzstyle{path}=[->,thick,dashed]
\tikzset{every loop/.style={min distance=5mm,in=140,out=113,looseness=2}}

\begin{tikzpicture}[=latex]

\begin{scope}[shift={(-7.3,0)}]
 \node[] at (0,0) [outx,label=left:\textcolor{blue}{$\ell_0$}] (0) {};
 \node[] at ($(0)+(0,-1.8)$) [outx,label=left:\textcolor{blue}{$\ell_1$}] (1) {};
 \node[] at ($(1)+(1.7,0.9)$) [inx,label=below:\textcolor{blue}{$\ell_2$}] (2) {};
 \node[] at ($(2)+(1.5,0)$) [outx,label=below right:\textcolor{blue}{$\ell_3$}] (3) {};

\draw[post] (0) to[]
    node[align=center,anchor=east, midway,yshift=-.1cm]
    {$r_2$} (2);
\draw[post] (1) to[] node[anchor=north,yshift=.3cm,xshift=-.3cm]
    {$r_1$}(2);
\draw[post] (0) -| node[anchor=south, pos=.25] (xpp)
    {$r_3$} (3);
\draw[post] (2)to[]
    node[align=center,anchor=north, midway]
    {$r_4$} (3);
\draw[post] (1) -| node[anchor=north, pos=.25,yshift=0cm,xshift=.3cm] (xget)
    {$r_5$} (3);

\draw[rule] (0) to[out=225,in=270,looseness=8]
	    node[align=center,anchor=east,midway]{$r_6$} (0);
\draw[rule] (2) to[out=15,in=60,looseness=8]
	    node[align=center,anchor=north,midway,yshift=.4cm]{$r_7$} (2);
\draw[rule] (3) to[out=15,in=60,looseness=8]
	    node[align=center,anchor=north,midway,yshift=.4cm]{$r_8$} (3);

\fill[box,draw=black!25] ($(1)+(-1,-0.5)$) rectangle ($(3)+(0.9,1.4)$);

\node[token] at ($(2)+(0.08,-0.08)$) {};
\node[token] at ($(0)+(-0.08,-0.08)$) {};
\node[token] at ($(1)+(-0.08,0.08)$) {};

\node[] at ($(2)+(0,-1.7)$) {Configuration $\gst_1$};
\end{scope}

\begin{scope}[shift={(4,0)}]
 \node[] at (0,0) [outx,label=left:\textcolor{blue}{$\ell_0$}] (0) {};
 \node[] at ($(0)+(0,-1.8)$) [outx,label=left:\textcolor{blue}{$\ell_1$}] (1) {};
 \node[] at ($(1)+(1.7,0.9)$) [inx,label=below:\textcolor{blue}{$\ell_2$}] (2) {};
 \node[] at ($(2)+(1.5,0)$) [outx,label=below right:\textcolor{blue}{$\ell_3$}] (3) {};

\draw[post] (0) to[]
    node[align=center,anchor=east, midway,yshift=-.1cm]
    {$r_2$} (2);
\draw[post] (1) to[] node[anchor=north,yshift=.3cm,xshift=-.3cm]
    {$r_1$}(2);
\draw[post] (0) -| node[anchor=south, pos=.25] (xpp)
    {$r_3$} (3);
\draw[post] (2)to[]
    node[align=center,anchor=north, midway]
    {$r_4$} (3);
\draw[post] (1) -| node[anchor=north, pos=.25,yshift=0cm,xshift=.3cm] (xget)
    {$r_5$} (3);

\draw[rule] (0) to[out=225,in=270,looseness=8]
	    node[align=center,anchor=east,midway]{$r_6$} (0);
\draw[rule] (2) to[out=15,in=60,looseness=8]
	    node[align=center,anchor=north,midway,yshift=.4cm]{$r_7$} (2);
\draw[rule] (3) to[out=15,in=60,looseness=8]
	    node[align=center,anchor=north,midway,yshift=.4cm]{$r_8$} (3);

\fill[box,draw=black!25] ($(1)+(-1,-0.5)$) rectangle ($(3)+(0.9,1.4)$);

\node[token] at ($(3)+(0.08,-0.08)$) {};
\node[token] at ($(2)+(-0.08,-0.08)$) {};
\node[token] at ($(3)+(-0.08,0.08)$) {};

\node[] at ($(2)+(0,-1.7)$) (Cc) {Configuration $\gst_2$};
\end{scope}

\begin{scope}[local bounding box=scope1]

\node[coordinate] (f1) at (-2.5,-0.15) {};
\node[coordinate] (f2) [right = 1cm of f1] {};
\node[coordinate] (f3) [right = 1cm of f2] {};
\node[coordinate] (f4) [right = 1cm of f3] {};
\node[coordinate] (f5) [right = 1cm of f4] {};
\node[coordinate] (f6) [right = 1cm of f5] {};

\node[coordinate] (s1) [below = 1.3cm of f1] {};
\node[coordinate] (s2) [right = 1cm of s1] {};
\node[coordinate] (s3) [right = 1cm of s2] {};
\node[coordinate] (s4) [right = 1cm of s3] {};
\node[coordinate] (s5) [right = 1cm of s4] {};
\node[coordinate] (s6) [right = 1cm of s5] {};

 \draw[trans] (f1) to[] 
	  node[align=center,anchor=south,midway]{{$r_1$}}(f2);
 \draw[transt] (f2) to[] 
	  node[align=center,anchor=south,midway]{{$r_6$}} (f3);
 \draw[trans] (f3) to[] 
	  node[align=center,anchor=south,midway]{{$r_4$}} (f4);
 \draw[transt] (f4) to[]  
	  node[align=center,anchor=south,midway]{{$r_2$}}(f5);
 \draw[trans] (f5) to[] 
	  node[align=center,anchor=south,midway]{{$r_4$}} (f6);

 \draw[transt] (s1) to[]  
	  node[align=center,anchor=north,midway]{{$r_6$}}(s2);
 \draw[transt] (s2) to[] 
	  node[align=center,anchor=north,midway]{{$r_2$}} (s3);
 \draw[trans] (s3) to[]  
	  node[align=center,anchor=north,midway]{{$r_1$}}(s4);
 \draw[trans] (s4) to[] 
	  node[align=center,anchor=north,midway]{{$r_4$}} (s5);
 \draw[trans] (s5) to[] 
	  node[align=center,anchor=north,midway]{{$r_4$}} (s6);

\draw[-,dashed] (f2) to[] (s1);
\draw[-,dashed] (f3) to[] (s2);
\draw[-,dashed] (f4) to[] (s2);
\draw[-,dashed] (f5) to[] (s3);

 \draw [decorate,decoration={brace,amplitude=9pt}]
($(s3)+(0,-0.5)$) -- ($(s1)+(0,-0.5)$) node (a) [black,midway,yshift=-17pt] 
{One thread};
 \draw [decorate,decoration={brace,amplitude=9pt}]
($(s6)+(0,-0.5)$) -- ($(s3)+(0,-0.5)$) node (b) [black,midway,yshift=-17pt] 
{All other threads};
\end{scope}

\end{tikzpicture}
 
\end{center}
\caption{Example of constructing a representative schedule by moving a
  thread to the beginning. The number of dots in the local states correspond to
  counter values, i.e.,
  $\sigma_1.\counters[\ell_0]= \sigma_1.\counters[\ell_1]=
  \sigma_1.\counters[\ell_2] = 1$
and $\sigma_1.\counters[\ell_3] =0$.}
\label{fig:movingthread}
\end{figure*}

Another challenge in verification of  $\ELTLTB{}$
     formulas 
     is that counterexamples to liveness properties are infinite
     paths.
As discussed in Section~\ref{sec:counterexamples}, we consider
     infinite paths of lasso shape $\vartheta \concat \rho^\omega$.
For a finite part of a schedule, $\vartheta \concat \rho$, satisfying
     an $\ELTLTB{}$ formula, we show the existence of a new schedule,
     $\vartheta' \concat \rho'$, of bounded length satisfying the
     same formula as the original one.
Regarding the shortening, our  approach uses a similar idea as the one
     from~\cite{KVW15:CAV}.
We follow modified steps from reachability analysis:   

\begin{enumerate}
\item We split $\vartheta\concat\rho$ into several steady schedules,
     using cut points introduced in Sections~\ref{sec:counterexamples}
     and~\ref{sec:structguards}.
The cut points depend not only on threshold guards, but also on the
     $\ELTLTB{}$ formula~$\varphi$ representing  the negation of a specification
     we want to check.
Given such a steady schedule~$\tau$, each configuration of~$\tau$
satisfies a set of propositional subformulas of~$\varphi$, which are
covered by the operator~$\ltlG$ in $\varphi$.

\item For each of these steady schedules we find a representative,
  that is, an accelerated schedule of bounded length that satisfies
     the necessary propositional subformulas as in the original schedule (i.e., not just
     that starting and ending configurations coincide).

\item We  concatenate the obtained representatives in the
     original order.
\end{enumerate}

In \cite{KLVW16:arxiv}, we present the mathematical details for obtaining
     these representative schedules, and prove different cases that
     taken together establish our following main theorem:

\begin{theorem}\label{thm:main}
Let $\Sk = (\local, \initlocal, \globset,
     \paraset, \ruleset,\ResCond)$ be  a threshold automaton, and
let $\mathit{Locs}\subseteq\local$ be a set of locations.
Let~$\gst$ be a confi\-gu\-ration, let~$\tau$ be a steady conventional schedule
    applicable to~$\gst$, and let~$\psi$ be one of the following formulas:
    $$\bigvee_{\ell\in \mathit{Locs}} \counters[\ell]\neq 0, \;\mbox{ or }\;
\bigwedge_{\ell\in \mathit{Locs}} \counters[\ell]= 0.$$
If all configurations visited by $\tau$ from $\sigma$
satisfy~$\psi$, i.e.,
$\setconf\gst\tau \models \psi$, then
    there is a steady representative schedule~$\reprlive$ with the following properties:
\begin{enumerate}
        \item[a)] The representative is applicable, and ends in the same final state:\\ $\reprlive$ is applicable to~$\gst$,
		  and $\reprlive(\gst)=\tau(\gst)$,
        \item[b)] The representative has bounded length: $|\reprlive|\leq 6\cdot |\ruleset |$,
        \item[c)] The representative maintains the formula $\psi$. In
          other words, $\setconf{\gst}{\reprlive} \models \psi$,
        \item[d)] The representative is a concatenation of
          three representative schedules $\mathsf{srep}$ from Proposition~\ref{prop:srep-ex}:\\
        there exist~$\tau_1$,~$\tau_2$ and~$\tau_3$, (possibly empty) subschedules of~$\tau$,
        such that $\tau_1\cdot\tau_2\cdot\tau_3$ is applicable to~$\gst$, and it holds that 
        $(\tau_1\cdot\tau_2\cdot\tau_3)(\gst)=\tau(\gst)$, 
        and $\reprlive=\sr\Ctx{\gst}{\tau_1}\cdot \sr\Ctx{\tau_1(\gst)}{\tau_2}\cdot
        \sr\Ctx{(\tau_1\cdot\tau_2)(\gst)}{\tau_3}$.
\end{enumerate}
  \end{theorem}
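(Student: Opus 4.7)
My plan is to split $\tau$ into three pieces $\tau_1, \tau_2, \tau_3$ on which suitably strengthened invariants hold, then apply Proposition~\ref{prop:srep-ex} to each piece and concatenate. Proposition~\ref{prop:srep-ex} preserves only endpoints, so the per-piece invariants need to be strong enough that any endpoint-preserving representative of a piece is forced to maintain them throughout. I case-split on the form of $\psi$.

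In the conjunctive case $\psi = \bigwedge_{\ell\in \mathit{Locs}} \counters[\ell] = 0$, because every configuration of $\finpath{\gst}{\tau}$ has $\counters[\ell] = 0$ for $\ell \in \mathit{Locs}$, no rule $r$ used in $\tau$ can have $r.\fromstate \in \mathit{Locs}$ (its precondition $\counters[r.\fromstate] \ge 1$ would fail) nor $r.\tostate \in \mathit{Locs}$ with $r.\fromstate \notin \mathit{Locs}$ (applying it would falsify $\psi$ in the next configuration). Thus $\tau$ uses only rules that neither consume nor produce tokens in $\mathit{Locs}$, and by inspection of the construction behind Proposition~\ref{prop:srep-ex} the same will hold for $\sr\Ctx{\gst}{\tau}$. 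Taking $\tau_1 = \tau$ and $\tau_2, \tau_3$ empty, every property is immediate.

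The disjunctive case $\psi = \bigvee_{\ell\in \mathit{Locs}} \counters[\ell] \neq 0$ is the main obstacle. Here $\gst$ may witness $\psi$ via a location $\ell_1 \in \mathit{Locs}$ and $\tau(\gst)$ via a possibly different $\ell_2 \in \mathit{Locs}$, and between them the populated subset of $\mathit{Locs}$ can wander. A one-shot use of Proposition~\ref{prop:srep-ex} can then pass through a configuration where every $\counters[\ell]$ with $\ell \in \mathit{Locs}$ is zero. My plan is to cut $\tau$ at two configurations: the first point at which $\counters[\ell_2] > 0$ becomes established (if not already from the outset) and the last point at which $\counters[\ell_1] > 0$ still holds. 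This gives $\tau_1$ (on which $\counters[\ell_1] > 0$ is maintained), a short middle piece $\tau_2$ bridging the two witnesses, and $\tau_3$ (on which $\counters[\ell_2] > 0$ is maintained). On $\tau_1$ and $\tau_3$ the single-location rule-restriction argument of the conjunctive case applies (restricted to one $\ell$ rather than all of $\mathit{Locs}$), so their $\mathsf{srep}$ representatives preserve the stronger single-location invariant, which in particular implies $\psi$.

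Properties~(a) and~(d) will be automatic from Proposition~\ref{prop:srep-ex} once the cut is fixed, and property~(b) is the additive bound $3 \cdot 2|\ruleset|$. The delicate part is property~(c) in the disjunctive case, which requires (i)~showing that the three-way cut always exists, using the facts that counters change in integer steps and that $\psi$ holds throughout $\tau$; and (ii)~certifying that the $\mathsf{srep}$ construction, when fed a steady schedule whose rule set is already restricted by the single-location invariant (steadiness guaranteed by Proposition~\ref{prop:mono}), produces a representative that avoids rules violating that invariant. This is essentially a refined mover analysis: we preserve not only the final state of each piece, but also the presence of at least one token in a specific $\ell \in \mathit{Locs}$ throughout the representative.
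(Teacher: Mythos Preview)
Your conjunctive case is correct and matches the paper. The disjunctive case has a genuine gap.

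First, the rule-restriction reasoning does not transfer to the invariant $\counters[\ell_1] > 0$: this invariant permits rules with $\fromstate = \ell_1$ (a token may leave while another stays) and with $\tostate = \ell_1$, so the rule set of $\tau_1$ is \emph{not} restricted in the way you assume. The conclusion you need---that $\mathsf{srep}$ applied to a steady schedule maintaining $\counters[\ell] > 0$ still maintains it---is true, but its proof (Proposition~\ref{prop:representativepath}) is a swap-order argument inside the $\mathsf{srep}$ construction using the precedence relation~$\relf$, not a rule restriction.

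The decisive gap is that your three-way cut need not exist: $\tau_2$ is not short. Take $\mathit{Locs} = \{\ell_1, \ell_2\}$ on a simple $2$-cycle and a single token alternating $\ell_1, \ell_2, \ell_1, \dots, \ell_2$ for $2k{+}1$ steps. The only start-witness $\ell_1$ is positive only at configuration~$0$, the only end-witness $\ell_2$ is first permanently positive only at the final configuration, so $\tau_2$ is all of $\tau$ except the last transition, with no single $\ell \in \mathit{Locs}$ positive throughout it. You give no argument that $\sr{}{\cdot}{\tau_2}$ preserves~$\psi$, and recursing on $\tau_2$ would exceed three pieces since the number of witness switches along~$\tau$ is unbounded.

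The paper does not cut $\tau$ at configuration indices. It decomposes $\tau$ into \emph{threads} (per-process subsequences via a naming), classifies each thread by how its first, last, and intermediate states meet $\mathit{Locs}$ (Definition~\ref{def:types}), proves that some favourable combination of thread types must occur (Proposition~\ref{prop:typeoptions}), and then \emph{reorders} $\tau$ by moving one or two specific threads to the front (Propositions~\ref{prop:movethreadtobeginning}--\ref{prop:movethreadsmix}). Only after this reordering does each of the at most three resulting pieces carry a single-location invariant $\counters[\ell_j] > 0$ (Propositions~\ref{prop:typeoptionA}--\ref{prop:typeoptionED}), to which Proposition~\ref{prop:representativepath} then applies. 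The $\tau_1, \tau_2, \tau_3$ in item~(d) are subsequences arising from this reordering, not contiguous segments of the original~$\tau$.
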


Our approach is slightly different in the case when the formula~$\psi$
     has a more complex form:  $\bigwedge_{1\le\gcri\le
     \gcritical}  \bigvee_{\ell\in \critical_\gcri}
     \counters[\ell]\neq 0$, for $\critical_\gcri \subseteq \local$,
     where $1\le \gcri\le\gcritical$ and $\gcritical\in\Natural$.
In this case, our proof requires the schedule $\tau$ to have
     sufficiently large counter values.
To ensure that there is an infinite schedule with  sufficiently large
     counter values,  we first prove that if a counterexample exists
     in a small system, there also exists one in a larger system, that
     is, we consider configurations where each counter is multiplied
     with a constant \emph{finite multiplier}~$\multipl$.
For resilience conditions that do not correspond to parameterized
     systems (i.e., fix the system size to, e.g., $n=4$) or
     pathological threshold automata, such multipliers may not exist.
However, all our benchmarks have multipliers, and existence of
     multipliers can easily be checked using simple queries to SMT
     solvers in preprocessing.
This additional restriction leads to slightly smaller bounds on the
     lengths of representative schedules:

\newcommand{\thmandor}{Fix a threshold automaton~$\Sk = (\local, \initlocal, \globset,
     \paraset, \ruleset,\ResCond)$ that has a finite multiplier~$\mu$, and a confi\-gu\-ration $\gst$.
For an  $\gcritical\in\Natural$, fix sets of locations
    $\mathit{Locs}_\gcri \subseteq \local$ for $1\le \gcri\le\gcritical$.
If $\psi =
   \bigwedge_{1\le\gcri\le \gcritical} \bigvee_{\ell\in \mathit{Locs}_\gcri} \counters[\ell]\neq 0,$ 
then  for every steady conventional schedule $\tau$, 
    applicable to $\gst$,
    with $\setconf \gst\tau \models \psi$,
    there exists a schedule $\gsrogen$ with the following properties:
\begin{enumerate}
        \item[a)]  The representative is applicable and ends in the same final
          state:\\
$\gsrogen$ is a steady schedule applicable to $\multist{\gst}{\multipl}$, 
            and $\gsrogen (\multist{\gst}{\multipl})=\multisch{\tau}{\multipl}(\multist{\gst}{\multipl})$,
        \item[b)]  The representative has bounded length: $|\gsrogen|\leq  4\cdot |\ruleset |$,
        \item[c)]   The representative maintains the formula
          $\psi$. In other words, $\setconf{\multist{\gst}{\multipl}}\gsrogen \models \psi$,
        \item[d)]  The representative is a concatenation of
          two representative schedules $\mathsf{srep}$ from Proposition~\ref{prop:srep-ex}:\\  
         $\gsrogen=\sr\Ctx{\mu\gst}{\tau}\cdot \sr\Ctx{\tau(\mu\gst)}{\multisch{\tau}{(\multipl-1)}}$.
\end{enumerate}}

\begin{theorem}\label{thm:andor}
\thmandor
\end{theorem}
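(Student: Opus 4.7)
The plan is to construct the witness schedule $\gsrogen$ exactly as item~(d) prescribes. Set
$$\gsrogen \;=\; \sr\Ctx{\multist{\gst}{\multipl}}{\tau}\cdot \sr\Ctx{\tau(\multist{\gst}{\multipl})}{\multisch{\tau}{(\multipl-1)}},$$
where each factor is supplied by Proposition~\ref{prop:srep-ex}. Since $\multisch{\tau}{\multipl}(\multist{\gst}{\multipl}) = \multisch{\tau}{(\multipl-1)}\bigl(\tau(\multist{\gst}{\multipl})\bigr)$, the endpoints of $\gsrogen$ agree with those of $\multisch{\tau}{\multipl}$ applied to $\multist{\gst}{\multipl}$, which gives property~(a). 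Property~(d) is immediate from the construction, and (b) follows because each of the two subschedules has length at most $2|\ruleset|$ by Proposition~\ref{prop:srep-ex}, so $|\gsrogen|\le 4|\ruleset|$. Note that we use \emph{two} applications of $\mathsf{srep}$ rather than the three needed in Theorem~\ref{thm:main}\dash---the multiplier will do the work of the third.

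The main obstacle is property~(c): maintaining $\psi=\bigwedge_{\gcri}\bigvee_{\ell\in \mathit{Locs}_\gcri}\counters[\ell]\neq 0$ at every intermediate configuration of $\gsrogen$. My plan is to exploit $\multipl$ as a reservoir of redundant witness processes. Since $\tau$ preserves $\psi$ from $\gst$, both $\gst\models\psi$ and $\tau(\gst)\models\psi$, so for every conjunct index $\gcri$ there exist witness locations $\ell_\gcri^0,\ell_\gcri^1\in \mathit{Locs}_\gcri$ with $\gst.\counters[\ell_\gcri^0]\ge 1$ and $\tau(\gst).\counters[\ell_\gcri^1]\ge 1$. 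After scaling, $\multist{\gst}{\multipl}.\counters[\ell_\gcri^0]\ge \multipl$ and $\tau(\multist{\gst}{\multipl}).\counters[\ell_\gcri^1]\ge \multipl-1$.

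For the first half $\sr\Ctx{\multist{\gst}{\multipl}}{\tau}$, I would show that the representative effects only the net flow of a single execution of $\tau$, so at any intermediate configuration at most $\gst.\counters[\ell]$ processes can have departed any location~$\ell$; in particular at least $(\multipl-1)\gst.\counters[\ell_\gcri^0]\ge 1$ processes remain at $\ell_\gcri^0$ and witness the $\gcri$-th disjunct. A symmetric argument handles the second half: the representative of $\multisch{\tau}{(\multipl-1)}$ depletes any location by at most $(\multipl-1)\gst.\counters[\ell]$, while the midpoint $\tau(\multist{\gst}{\multipl})$ provides the witness $\ell_\gcri^1$ with enough reserves.

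The technically most delicate step\dash---and the main obstacle\dash---is bounding the \emph{peak} deviation of each $\mathsf{srep}$-representative from its starting configuration, not merely the net flow, because acceleration can in principle move many processes in a single step. My plan is to extract from the proof of Proposition~\ref{prop:srep-ex} a componentwise invariant: at every prefix of the representative, the difference from the starting configuration is dominated by the total flow vector of the original steady schedule. Once this lemma is in hand, property~(c) reduces to the arithmetic observation that the $(\multipl-1)\gst$ reserve dominates the peak depletion\dash---and this is precisely the condition determining the required value of the multiplier $\multipl$.
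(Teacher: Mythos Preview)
Your construction is identical to the paper's, and your handling of (a), (b), (d) matches. The divergence is in (c), and it is worth pointing out because the paper's route is considerably simpler than what you propose.

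You plan to bound the \emph{peak} deviation of each $\mathsf{srep}$ directly, by extracting from the construction of Proposition~\ref{prop:srep-ex} a componentwise domination by the ``total flow vector'' of the original schedule. The paper avoids this entirely. Instead it argues about the \emph{original, conventional} schedules $\tau$ and $\multisch{\tau}{(\multipl-1)}$, not about their representatives. Concretely: since $\tau$ is applicable to $\gst$, every prefix $\tau'$ satisfies $\tau'(\gst).\counters[\ell_\gcri^0]\ge 0$; by the monotonicity in Proposition~\ref{prop:dublesize}(3), the same prefix applied to $\multist{\gst}{\multipl}$ gives a strictly larger value, hence $\ge 1$. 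An analogous argument, using that $\multisch{\tau}{(\multipl-1)}$ is applicable to $\multist{\gst}{(\multipl-1)}$, handles the second half with witness $\ell_\gcri^1$. Thus each half of the \emph{original} schedule maintains a fixed single counter positive. The transfer to the representatives then comes for free from Proposition~\ref{prop:representativepath} (packaged in Proposition~\ref{prop:srep-ex-concatenation-formula}): if $\setconf{\gst'}{\tau'}\models\counters[\ell]>0$ for a single $\ell$, then so does $\setconf{\gst'}{\sr\Ctx{\gst'}{\tau'}}$.

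So the paper never needs a peak-deviation bound for the accelerated schedule, and in particular never needs the ``dominated by total flow'' lemma you propose\dash---a statement that is stronger than required and whose proof would essentially re-derive (a strengthening of) Proposition~\ref{prop:representativepath}. The single-counter positivity preservation of $\mathsf{srep}$ is precisely the right tool, and it is already available.
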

  
The main technical challenge for proving Theorems~\ref{thm:main}
     and~\ref{thm:andor} is that we want to swap transitions and
     maintain $\ELTLTB{}$ formulas  at the same time.
As discussed in Example~\ref{ex:swap}, simply applying the ideas from
     the reachability analysis in~\cite{Lipton75,KVW15:CAV} is not
     sufficient.

We address this challenge by more refined swapping strategies
     depending on the property $\psi$ of Theorem~\ref{thm:main}.
For instance, the intuition behind $\bigvee_{\ell\in \mathit{Locs}}
     \counters[\ell]\neq 0$ is that in a given distributed algorithm,
     there should always be at least one process in one of the states
     in $\mathit{Locs}$.
Hence, we would like to consider individual processes, but in the
     context of counter systems.
Therefore, we introduce a mathematical notion we call a \emph{thread},
     which is a schedule that can be executed by an individual
     process.
A thread is then characterized depending on whether it starts in
     $\mathit{Locs}$, ends in $\mathit{Locs}$, or visits
     $\mathit{Locs}$ at some intermediate step.
Based on this characterization, we show that $\ELTLTB{}$ formulas are
     preserved if we move carefully chosen threads to the beginning of
     a steady schedule (intuitively, this corresponds to $\tau_1,$ and
     $\tau_2$ from Theorem~\ref{thm:main}).
Then, we replace the threads, one by one, by their representative
     schedules from Proposition~\ref{prop:srep-ex}, and append another
     representative schedule for the remainder of the schedule.
In this way, we then obtain the representative schedules in
     Theorem~\ref{thm:main}(d).

\begin{example}\label{ex:slides}
We
     consider the~$\Sk$ in Figure~\ref{fig:stunningexample}, and show
     how a schedule $\tau=(r_1,1),(r_6,1),(r_4,1),(r_2,1),(r_4,1)$
     applicable to~$\gst_1$, with~$\tau(\gst_1) = \gst_2$ can be
     shortened.
Figure~\ref{fig:movingthread} follows this example where $\tau$ is the
     upper schedule.
Assume that $\setconf{\sigma_1}{\tau}\models\counters[\ell_2]\ne 0$,
     and  that we want to construct a shorter schedule that produces a
     path that satisfies the same formula.

In our theory, subschedule $(r_1,1),(r_4,1)$ is a thread of $\gst_1$
     and~$\tau$ for two reasons: (1) the counter of the starting local
     state of~$(r_1,1)$ is greater than~$0$, i.e.,
     $\sigma_1.\counters[\ell_0]=1$, and (2) it is a sequence of rules
     in the control flow of the threshold automaton, i.e., it
     starts from~$\ell_0$, then uses $(r_1,1)$ to go to local state
     $\ell_2$ and then $(r_4,1)$ to arrive at $\ell_3$.
The intuition of (2) is that a thread corresponds to a process
     that executes the threshold automaton.
Similarly, $(r_6,1),(r_2,1)$ and $(r_4,1)$ are also threads of
     $\gst_1$ and~$\tau$.
In fact, we can show that each schedule can be decomposed into threads.
Based on this, we analyze which local states are visited when a thread
     is executed.
  
Our formula $\setconf{\sigma_1}{\tau}\models\counters[\ell_2]\ne 0$ talks about~$\ell_2$.
Thus, we are interested in a thread that ends at~$\ell_2$,
     because after executing this thread, intuitively there will
     always be at least one process in~$\ell_2$, i.e., the
     counter~$\counters[\ell_2]$ will be nonzero, as required.
Such a thread will be moved to the beginning.
We find that thread~$(r_6,1),(r_2,1)$ meets this requirement.
Similarly, we are also interested in a thread that starts from~$\ell_2$.
Before we execute such a thread, at least one process must always be
     in~$\ell_2$, i.e.,~$\counters[\ell_2]$ will be nonzero.
For this, we single out the thread~$(r_4,1)$, as it starts
     from~$\ell_2$.

Independently of the actual positions of these threads within a
     schedule, our condition $\counters[\ell_2]\ne 0$ is true
     \emph{before}~$(r_4,1)$ starts, and \emph{after}~$(r_6,1),(r_2,1)$ ends.
Hence, we move the thread~$(r_6,1),(r_2,1)$ to the beginning,
     and obtain a schedule that ensures our condition in all visited
     configurations; cf.\ the lower schedule in Figure~\ref{fig:movingthread}.
Then we replace the thread~$(r_6,1),(r_2,1)$, by a representative schedule
     from Proposition~\ref{prop:srep-ex},  and the remaining part~$(r_1,1)$,
     $(r_4,1)$, $(r_4,1)$,
     by another one.
Indeed in our example, we could merge $(r_4,1),(r_4,1)$ into one accelerated
     transition $(r_4,2)$ and obtain a schedule which is shorter
     than~$\tau$ while maintaining $\counters[\ell_2]\ne 0$.
\end{example}

\makeatletter{}\section{Application of the Short Counterexample Property and 
Experimental Evaluation}

\subsection{SMT Encoding}
\label{sec:smtencodings}

We use the theoretical results from the previous section to give an
     efficient encoding of lasso-shaped executions in SMT with linear
     integer arithmetic.
The definitions of counter systems in Section~\ref{sec:countsys}
     directly tell us how to encode paths of the counter system.
Definition~\ref{def:config} describes a configuration $\sigma$ as
     tuple $(\counters,\vars,\param)$, where each component is encoded
     as a vector of SMT integer variables.
Then, given a path $\gst_0, t_1, \gst_1, \dots, t_{k-1}, \gst_{k-1},
     t_k, \dots \gst_{k}$ of length~$k$, by $\counters^i$,
     $\vars^i$, and $\param^i$ we denote the values of the vectors that
     correspond to $\sigma_i$, for $0\le i \le k$.
As the parameter values do not change, we use one copy of the
     variables $\param$ in our SMT encoding.
By $\counters^i_\ell$, for $1 \le \ell \le |\local|$, we denote the
     $\ell$th component of $\counters^i$, that is, the counter
     corresponding to the number of processes in local
     state~$\ell$ after the $i$th iteration.
Definition~\ref{def:config} also gives us the constraint
on the initial states, namely:
\begin{equation}\label{eq:smt-init}
    \mathrm{init}(0) \equiv \sum_{\ell \in \initlocal}
\counters^0_\ell = \syssize(\param) \wedge
     \sum_{\ell \not\in \initlocal} \counters^0_\ell = 0 \wedge
     \vars^0=\vec{0} \wedge RC(\param)
\end{equation}

\begin{example}
\label{ex:init}
The TA from Figure~\ref{fig:stunningexample} has four
local states $\ell_0$, $\ell_1$, $\ell_2$, $\ell_3$ among which $\ell_0$ and
$\ell_1$ are the initial states.
In this example, $\syssize(\param)$ is $n-f$, and the resilience
condition requires that there are less than a third of the processes
faulty, i.e., $n>3t$. We obtain
$
\mathrm{init}(0) \equiv \counters^0_0 +
\counters^0_1 = n-f \wedge  \counters^0_2 +
\counters^0_3 = 0 \wedge
     x^0=0 \wedge n > 3t \wedge t\ge f \wedge f \ge 0
$.
The constraint is in linear integer arithmetic.
\end{example}

Further, Definition~\ref{def:TofSigma} encodes the transition
     relation.
A transition is identified by a rule and an acceleration factor.
A rule is identified by threshold guards $\precondLE$ and $\precondG$,
     local states $\mathit{from}$ and $\mathit{to}$ between which processes
     are moved, and by $\update$, which defines the increase of shared
     variables.
As according to Section~\ref{sec:structguards} only a fixed number of
     transitions change the context and thus may change the evaluation
     of $\precondLE$ and $\precondG$, we do not encode $\precondLE$
     and $\precondG$ for each rule.
In fact, we check the guards $\precondLE$ and $\precondG$ against a
     fixed number of configurations, which correspond to the cut
     points defined by the threshold guards.
The acceleration factor~$\delta$ is indeed the only variable in a
     transition, and the SMT solver has to find assignments of these
     factors.
Then this transition from the $i$th to the $(i+1)$th configuration is
     encoded using rule $r=(\fromstate, \tostate, \precondLE,
     \precondG,\update)$ as follows: 
\begin{eqnarray}
    T(i,r) &\equiv& \mathit{Move}(\mathit{from}, \mathit{to}, i)
        \wedge \mathit{IncShd}(\update, i) \label{eq:smt-rule}\\
        \mathit{Move}(\ell, \ell', i) &\equiv& 
\ell \ne \ell'
  \rightarrow \counters^{i}_\ell - \counters^{i+1}_\ell
  = \delta^{i+1} = \counters^{i+1}_{\ell'} - \counters^{i}_{\ell'} \notag\\
    &\wedge& \ell = \ell'
\rightarrow \big( \counters^{i}_\ell = \counters^{i+1}_\ell
\wedge \counters^{i+1}_{\ell'} = \counters^{i}_{\ell'} \big) \notag\\
        &\wedge&
  \bigwedge_{ s \in \local \setminus \{\ell, \ell'\}}
    \counters^{i}_s =  \counters^{i+1}_s \notag\\
    \mathit{IncShd}(\update, i) &\equiv&\vars^{i+1} - \vars^i = \delta^{i+1} \cdot \update \notag
\end{eqnarray}

Given a schedule~$\tau$, we encode in linear integer arithmetic the paths that
    follow this schedule from an initial state as follows: $$E(\tau) \equiv
    \mathrm{init}(0) \wedge T(0,r_1) \wedge T(1,r_2)  \wedge \dots$$ We can now
    ask the SMT solver for assignments of the parameters as well as the factors
    $\delta^1, \delta^2, \dots$ in order to check whether a path with this
    sequence of rules exists. 
Note that some factors can be equal to~0, which means that the corresponding
    rule does not have any effect (because no process executes it). 
If $\tau$ encodes a lasso shape, and the SMT solver reports a satisfying
    assignment, this assignment is a counterexample. 
If the SMT solver reports unsat on all lassos discussed in
    Section~\ref{sec:structguards}, then there does not exists a counterexample
    and the algorithm is verified.

\begin{example}
In Example~\ref{ex:fair} we have seen the fairness requirement
$\propfair$, which is a property of a configuration that can be
encoded as
$\mathrm{fair}(i) \equiv \counters^i_1 = 0 \wedge 
     (x^i\ge t+1 \rightarrow \counters^i_0 = 0 \wedge \counters^i_1 = 0) \wedge
     (x^i\ge n-t \rightarrow \counters^i_0 = 0 \wedge
\counters^i_2 = 0)$, which is a formula in linear integer
arithmetic. Then, e.g., $\mathrm{fair}(5)$ encodes that the
fifth configuration satisfies the predicate. Such state
formulas can be added as conjunct to the formula $E(\tau)$
that encodes a path.
\end{example}

As discussed in Sections~\ref{sec:counterexamples}
    and~\ref{sec:structguards} we have to encode
    lassos of the form $\vartheta \cdot
     \rho^\omega$ starting from an initial configuration~$\sigma$. We
     immediately obtain a finite representation by encoding the
     fixed length execution $E(\vartheta \cdot
     \rho)$ as above, and adding the constraint that applying $\rho$
     returns to the start of the lasso loop, that is,
     $\vartheta(\sigma) = \rho(\vartheta(\sigma))$. In SMT this is
     directly encoded as equality of integer variables.

\subsection{Generating the SMT Queries}\label{sec:one-order}

\begin{figure}[t]
\lstinputlisting[language=pseudo,numbers=left,numberstyle=\scriptsize,
    columns=fullflexible]{smt-algo.tex}
\caption{Checking one topological order with SMT.}
\label{fig:pseudosmt}
\end{figure}

The high-level structure of the verification algorithm is given in
    Figure~\ref{fig:pseudo} on page~\pageref{fig:pseudo}. 
In this section, we give the details of the
    procedure~\texttt{check\_one\_order}, whose pseudo code is given
    in Figure~\ref{fig:pseudosmt}. 
It receives as the input the following parameters: a threshold
    automaton~$\TA$, an $\ELTLTB$ formula~$\varphi$, a cut graph $\pgraph$ of
    $\varphi$, a threshold graph~$\tgraph$ of~$\TA$, and a topological
    order~$\prec$ on the vertices of the graph $\pgraph \cup \tgraph$.

The procedure~\texttt{check\_one\_order} constructs SMT assertions
    about the configurations of the lassos that correspond to the
    order~$\prec$.
As explained in Section~\ref{sec:smtencodings}, an ith configuration is defined
    by the vectors of SMT variables $(\counters^i, \vars^i, \param)$. 
We use two global variables: the number~$\mathsf{fn}$ of the configuration
    under construction, and the number~$\mathsf{fs}$ of the configuration that
    corresponds to the loop start. 
Thus, with the expressions $\counters^{\mathsf{fn}}$ and $\vars^{\mathsf{fn}}$
    we refer to the SMT variables of the configuration whose number is stored
    in~$\mathsf{fn}$.

In the pseudocode in Figure~\ref{fig:pseudosmt}, we call
     \texttt{SMT\_assert($\counters^{\mathsf{fn}}$,
     $\vars^{\mathsf{fn}}$, $\param \models \psi$)} to add an
     assertion~$\psi$ about the configuration
     $(\counters^{\mathsf{fn}}, \vars^{\mathsf{fn}}, \param)$ to the
     SMT query.
Finally, the call \texttt{SMT\_sat()} returns true, only if there is a
     satisfying assignment for the assertions collected so far.
Such an assignment can be accessed with \texttt{SMT\_model()} and
     gives the values for the configurations and acceleration factors,
     which together constitute a witness lasso.

The procedure~\texttt{check\_one\_order} creates the assertions about the
    initial configurations. 
The assertions consist of: the assumptions~$\mathsf{init}(0)$ about the initial
    configurations of the threshold automaton, the top-level propositional
    formula~$\psi_0$, and the invariant propositional formula~$\psi_{k+1}$ that
    should hold from the initial configuration on. 
By writing \texttt{assume($\psi = \psi_0 \wedge \ltlF \wedge \psi_1 \dots \ltlF
    \psi_k \wedge \ltlG \psi_{k+1}$)}, we extract the subformulas of a
canonical formula~$\psi$ (see Section~\ref{sec:shapelasso}). 
The procedure finds the minimal node in the order~$\prec$ on the nodes of the
    graph~$\pgraph \cup \tgraph$ and calls the procedure~\texttt{check\_node}
    for the initial node, the initial invariant~$\psi_{k+1}$, and the empty
    context~$\emptyset$.

\begin{figure*}[t]
    \begin{center}
    \includegraphics[width=.45\textwidth]{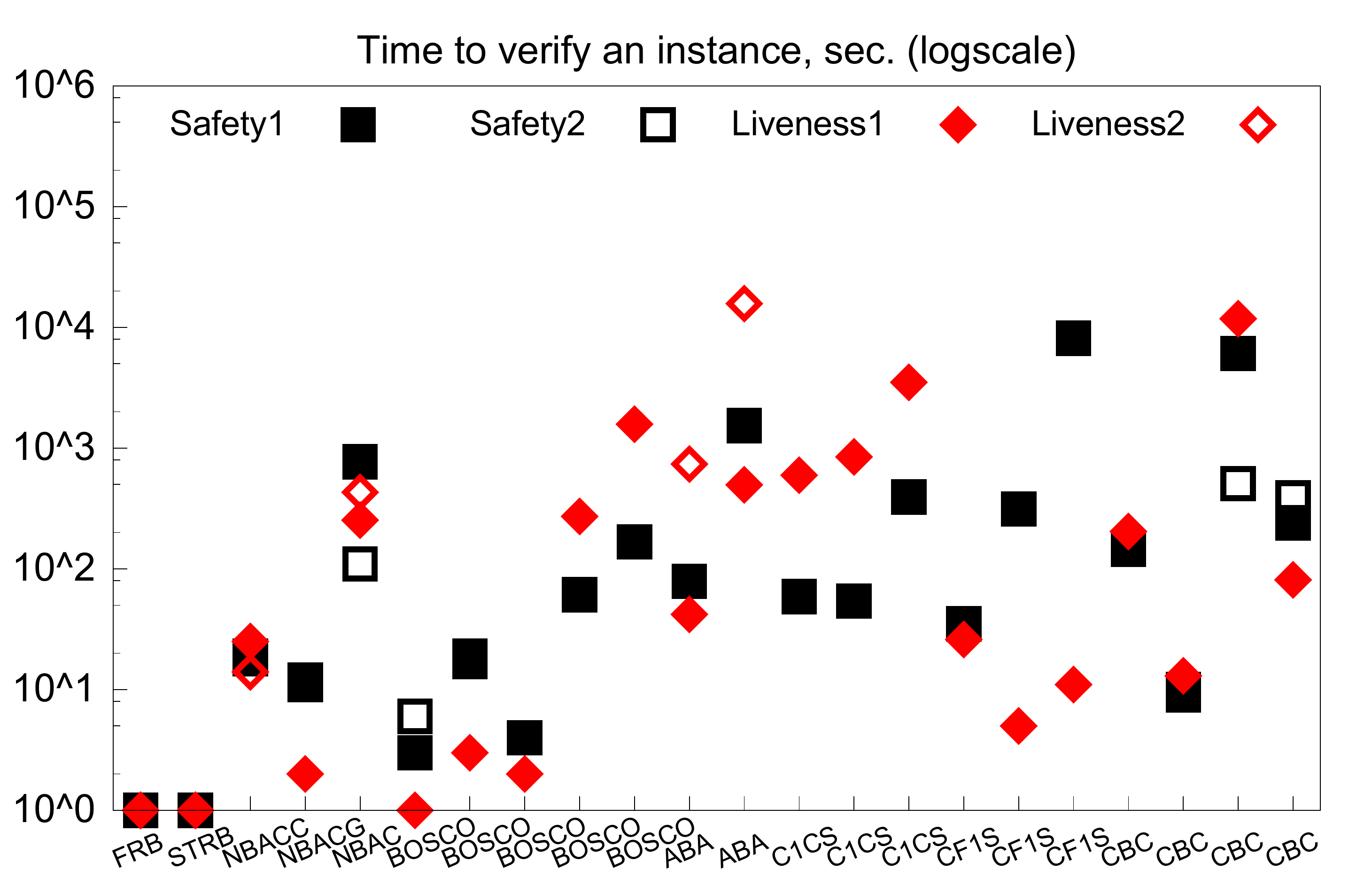}
    \includegraphics[width=.45\textwidth]{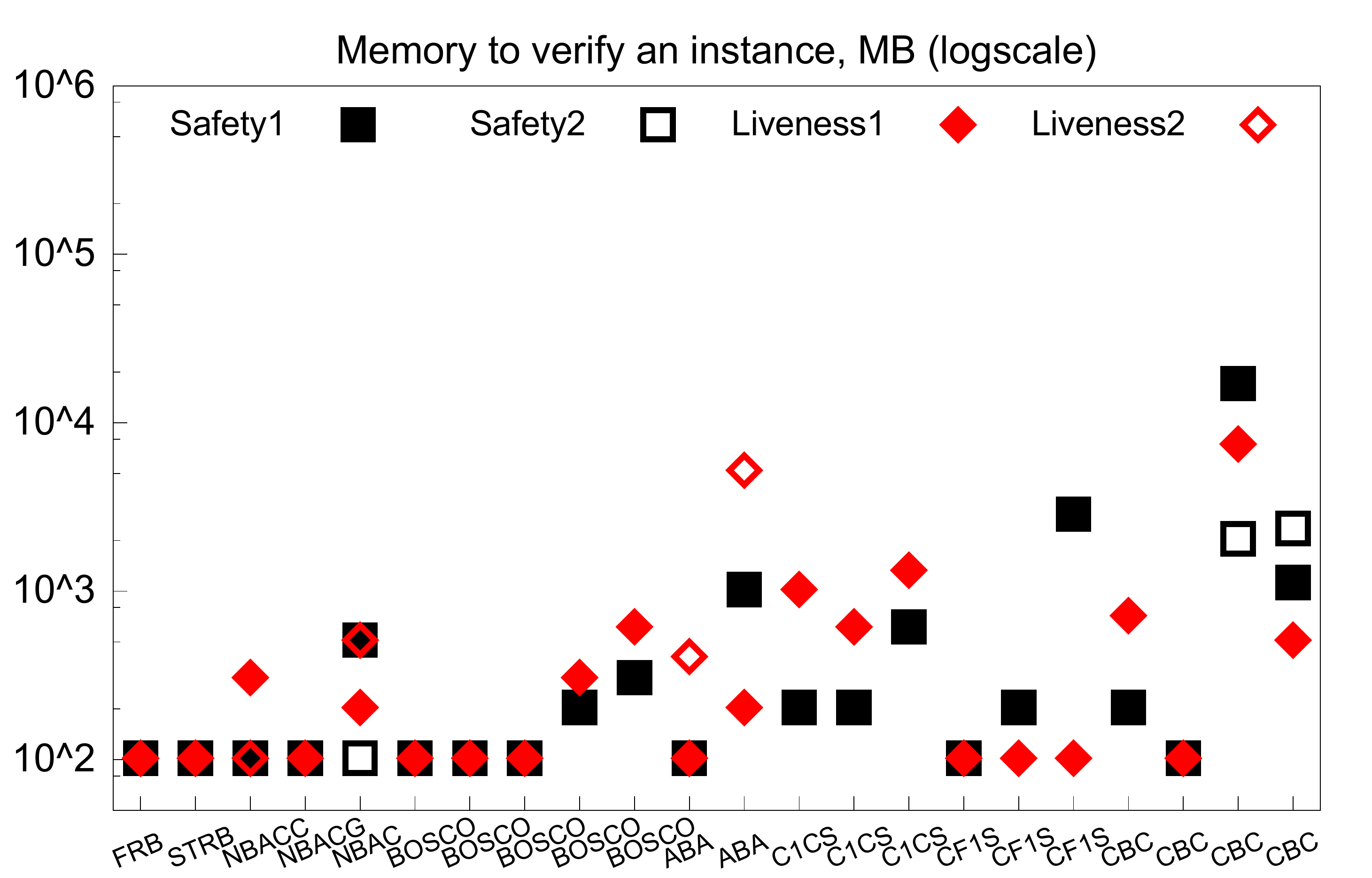}
    \includegraphics[width=.45\textwidth]{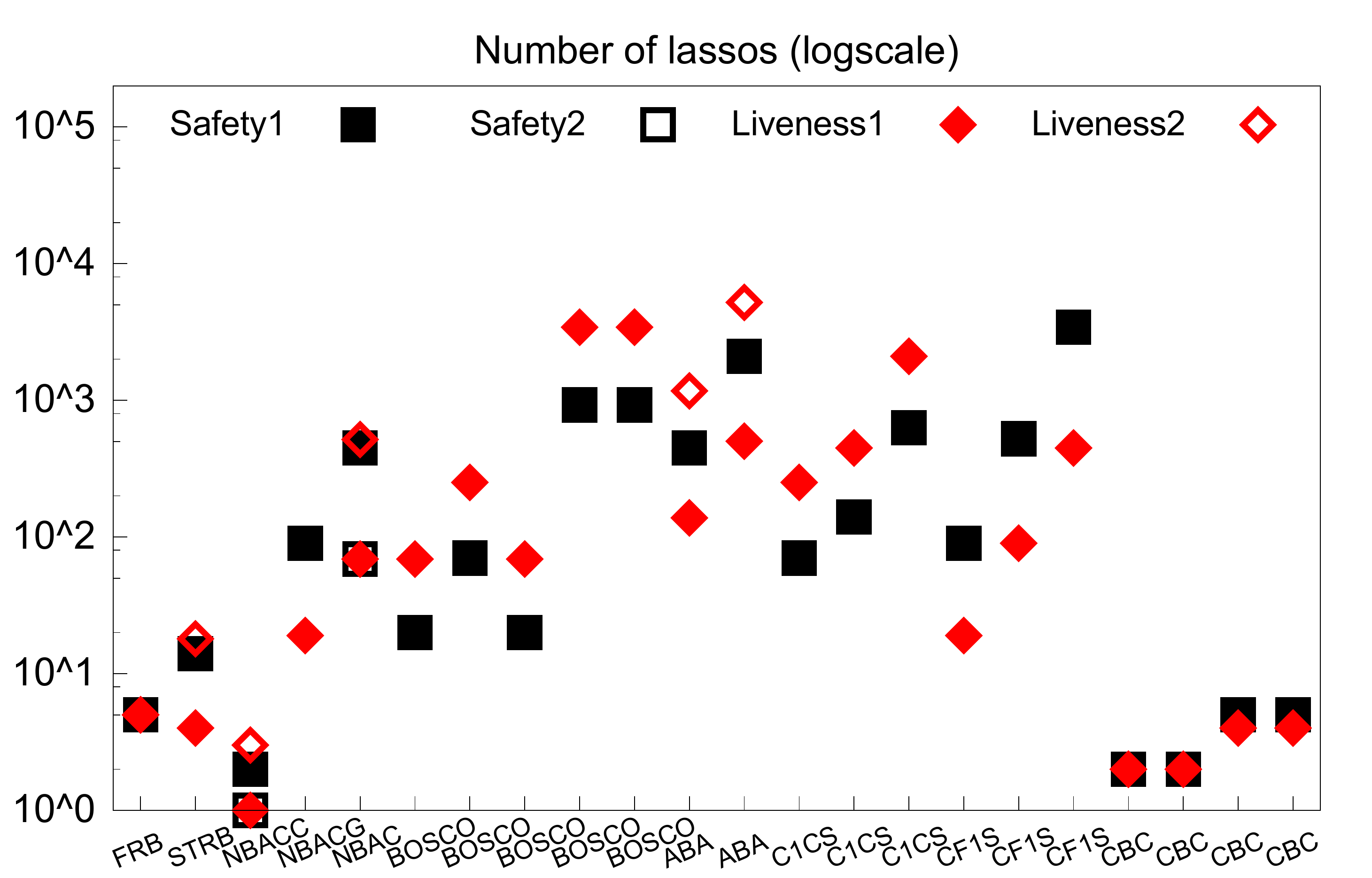}
    \includegraphics[width=.45\textwidth]{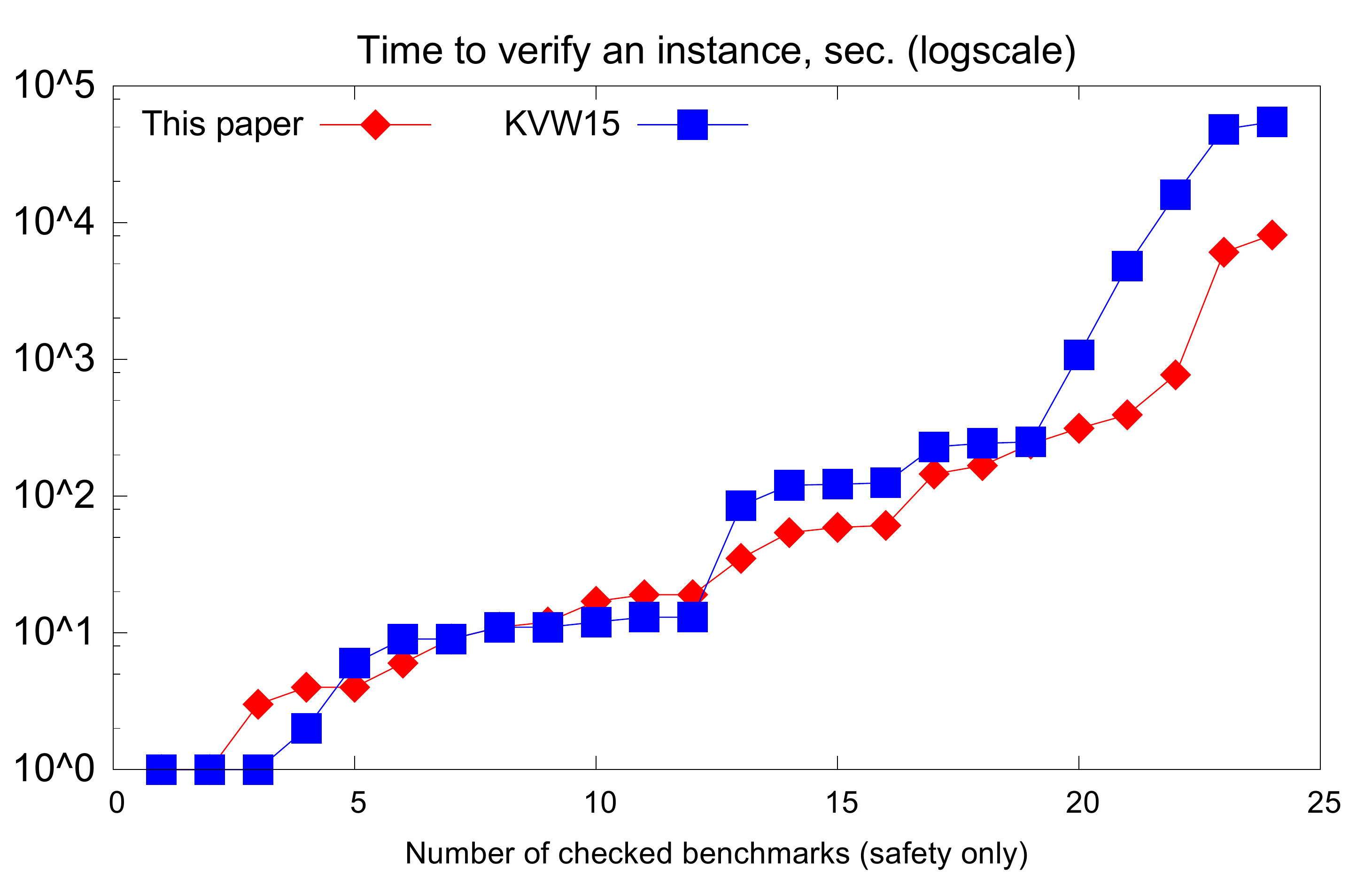}
    \end{center}

    \caption{The plots summarize the following results of running
        our implementation on all benchmarks: used time in seconds (top left),
        used memory in megabytes (top right),
        the number of checked lassos (bottom left),
        time used both by our implementation and~\cite{KVW15:CAV}
        to check \emph{safety only} (bottom right).
        Several occurrences
        of the same benchmark correspond to different cases,
        such as $f > 1$, $f=1$, and $f=0$.
        Symbols $\blacksquare$ and $\square$ correspond to the safety properties of
        each benchmark, while symbols $\blacklozenge$ and $\lozenge$ correspond to the liveness
        properties.
    }
\label{fig:plotsky}
\end{figure*}

The procedure \texttt{check\_node} is called with a node $v$ of the
     graph $\pgraph \cup \tgraph$ as a parameter.
It adds assertions that encode a finite path and constraints on the
     configurations of this path.
The finite path leads from the configuration that corresponds to the
     node $v$ to the configuration that corresponds to  $v$'s
     successor in the order~$\prec$.
The constraints depend on $v$'s origin: (a)~$v$ labels a
     formula~$\ltlF \psi$ in the syntax tree of~$\varphi$, (b)~$v$
     carries a threshold guard from the set~$\PrecondU \cup
     \PrecondL$, (c)~$v$ denotes the loop start, or (d)~$v$ denotes
     the loop end.
In case~(a), we add an SMT assertion that the current configuration
     satisfies the propositional formula $\prop(\psi)$
     (line~\ref{line:Fwitness}), and add a sequence of rules that
     leads to $v$'s successor while maintaining the
     invariants~$\psi_{\mathit{inv}}$ of the preceding nodes and the
     $v$'s invariant~$\psi_{k+1}$ (line~\ref{line:FGinv}).
In case~(b), in line~\ref{line:guard-unlock}, we add a sequence of
     rules, one of which should unlock (resp.
lock) the threshold guard in~$v \in \PrecondU$ (resp.
$v \in \PrecondL$).
Then, in line~\ref{line:guard-segment}, we add a sequence of rules
     that leads to a configuration of  $v$'s successor.
All added configurations are required to satisfy the current
     invariant~$\psi_{\mathit{inv}}$.
As the threshold guard in~$v$ is now unlocked (resp.
locked), we include the guard (resp.
its negation) in the current context~$\Ctx$.
In case~(c), we store the current configuration as the loop start in
     the variable~$\mathrm{fs}$ and, as in~(a) and~(b), add a sequence
     of rules leading to $v$'s successor.
Finally, in case~(d), we should have reached the ending configuration
     that coincides with the loop start.
To this end, in line~\ref{line:close-loop}, we add the constraint that
     forces the counters of both configurations to be equal.
At this point, all the necessary SMT constraints have been added, and
     we call \texttt{SMT\_sat} to check whether there is an assignment
     that satisfies the constraints.
If there is one, we report it as a lasso witnessing the
     $\ELTLTB$-formula~$\varphi$ that consists of: the concrete
     parameter values, the values of the counters and shared variables
     for each configuration, and the acceleration factors.
Otherwise, we report that there is no witness lasso for the
     formula~$\varphi$.

The procedure~\texttt{push\_segment} constructs a sequence of currently
    unlocked rules, as in the case of reachability~\cite{KVW15:CAV}. 
However, this sequence should be repeated several times, as required by
    Theorems~\ref{thm:main} and~\ref{thm:andor}. 
Moreover, the freshly added configurations are required to satisfy the current
    invariant~$\psi_{\mathit{inv}}$.

\subsection{Experiments}
\label{sec:subexp}

We extended the tool ByMC~\cite{KVW15:CAV} with our  technique and conducted
    experiments\footnote{The details on the experiments and the artifact
    are available at:\\
\url{http://forsyte.at/software/bymc/popl17-artifact}} with the freely
available benchmarks from~\cite{KVW15:CAV}: folklore reliable broadcast
(FRB)~\cite{CT96}, consistent broadcast (STRB)~\cite{ST87:abc}, asynchronous
Byzantine agreement (ABA)~\cite{BrachaT85}, condition-based consensus
(CBC)~\cite{MostefaouiMPR03}, non-blocking atomic commitment (NBAC and
NBACC~\cite{Raynal97} and NBACG~\cite{Gue02}), one-step consensus with zero
degradation (CF1S~\cite{DobreS06}), consensus in one communication step
(C1CS~\cite{BrasileiroGMR01}), and one-step Byzantine asynchronous consensus
(BOSCO~\cite{SongR08}). 
These threshold-guarded fault-tolerant distributed algorithms are encoded in a
    parametric extension of Promela.

Negations of the safety and liveness specifications of our benchmarks\dash---written in~$\ELTLTB$\dash---follow three patterns: unsafety
    $\ltlE (p \wedge \ltlF q)$, non-termination $\ltlE (p \wedge \ltlG
    \ltlF r \wedge \ltlG q)$, and non-response $\ltlE (\ltlG \ltlF r \wedge
    \ltlF (p \wedge \ltlG q))$. 
    The propositions $p$, $q$, and $r$ follow the syntax of~$\mathit{pform}$ (cf. Table~\ref{table:eltlft-syntax}), e.g., $p \equiv \bigwedge_{\ell
    \in \critical_1} \kappa[\ell] = 0$ and $q \equiv \bigvee_{\ell \in \critical_2} \kappa[\ell]
    \ne 0$ for some sets of locations~$\critical_1$ and~$\critical_2$.

The results of our experiments are summarized in Figure~\ref{fig:plotsky}. 
Given the properties of the distributed algorithms found in the literature, we
    checked for each benchmark one or two safety properties (depicted
    with~$\blacksquare$ and~$\square$) and one or two liveness properties
(depicted with~$\blacklozenge$ and~$\lozenge$). 
For each benchmark, we display the running times and the memory used together
    by~ByMC and the SMT solver~Z3~\cite{DeMouraB08}, as well as the number of
    exercised lasso shapes as discussed in Section~\ref{sec:structguards}.

For safety properties, we compared our implementation against the
    implementation of~\cite{KVW15:CAV}. 
The results are summarized the bottom right plot
    in Figure~\ref{fig:plotsky}, which shows that there is no clear winner.
For instance, our implementation is 170 times faster on BOSCO for the case $n >
    5t$. 
However, for the benchmark ABA we experienced a tenfold slowdown. 
In our experiments, attempts to improve the SMT encoding for liveness
    usually impaired safety results.

Our implementation has verified safety and liveness of all ten
     parameterized algorithms in less than a day.
Moreover, the tool reports counterexamples to liveness of CF1S and
     BOSCO exactly for the cases predicted by the distributed
     algorithms literature, i.e., when there are not enough correct
     processes to reach consensus in one communication step.
Noteworthy, liveness of only the two simplest benchmarks (STRB and
     FRB) had been automatically verified
     before~\cite{JohnKSVW13:fmcad}.

\makeatletter{}\section{Conclusions}
\label{sec:concl}

Parameterized verification approaches the problem
    of verifying systems of thousands of processes
    by proving correctness for all system sizes.
Although the literature predominantly deals with safety,
      parameterized verification for liveness is of growing
     interest, and has been addressed mostly in the context of
     programs that solve mutual exclusion or dining
     philosophers~\cite{Atig2012,FarzanKP16,PXZ02,FPPZ06}.
These techniques do not apply to fault-tolerant distributed algorithms
     that have arithmetic conditions on the fraction of faults,
     threshold~guards, and typical specifications that evaluate
     a global system state.

Parameterized verification is in general undecidable~\cite{AK86}.
As recently surveyed by Bloem et al.~\cite{2015Bloem}, one can escape
     undecidability by restricting, e.g., communication semantics,
     local state space, the local control flow, or the temporal logic
     used for specifications.
Hence, we make explicit the required
     restrictions. 
On the one hand, these restrictions still allow us to model
     fault-tolerant distributed algorithms and their specifications,
     and on the other hand, they give rise to a practical verification method.
The restrictions are on the local control flow (loops) of processes
     (Section~\ref{sec:TA}), as well as on the temporal operators and
     propositional formulas (Section~\ref{sec:reach-and-live}).
We conjecture that lifting these restrictions quite quickly leads to
     undecidability again.
In addition, we justify our restrictions with the considerable number
     of benchmarks~\casest\ that fit into our fragment, and with the
     convincing experimental results from Figure~\ref{fig:plotsky}.

Our main technical contribution is to combine and extend several
     important techniques: First, we extend the ideas by Etessami et
     al.~\cite{EtessamiVW02} to reason about shapes of infinite
     executions of lasso shape.
These executions are counterexample candidates.
Then we extend reductions introduced by Lipton~\cite{Lipton75}
 to deal with $\ELTLTB{}$
     formulas.
(Techniques that extend Lipton's in other directions can be found in~\movers.)
Our reduction is specific to threshold guards which are typical for
     fault-tolerant distributed algorithms and are found in
     domain-specific languages.
Using on our reduction we apply
     acceleration~\cite{BardinFLP08,KVW16:IandC} in order to arrive at
     our short counterexample property.

Our short counterexample property implies a completeness threshold,
     that is, a bound $b$ that ensures that if no lasso
     of length up to $b$ is
     satisfying an~$\ELTLTB$ formula, then there is no
     infinite path satisfying this formula.
For linear temporal logic with the $\ltlF$ and $\ltlG$ operators,
     Kroening et al.~\cite{KroeningOSWW11} prove bounds on the
     completeness thresholds on the level of B\"uchi automata.
Their bound involves the recurrence diameter of the transition
     systems, which is prohibitively large for counter
     systems.
Similarly, the general method to transfer liveness with fairness to
     safety checking by Biere et al.~\cite{BIERE2002160} leads to an
     exponential growth of the diameter, and thus to too
     large values of~$b$.
Hence, we decided to conduct an analysis on the level of threshold
     automata, accelerated counter systems, and a fragment of the
     temporal logic, which allows us to exploit specifics of the
     domain, and get bounds that can be used in practice.

Acceleration has been applied for parameterized verification
     by means of regular model
     checking~\cite{Pnueli2000,Bouajjani2004,Abdulla1998,SCHUPPAN200679}.
As noted by Fisman et al.~\cite{FismanKL08}, to verify
     fault-tolerant distributed algorithms, one would have to intersect
     the regular languages that describe sets of states with
     context-free languages that enforce the resilience condition
     (e.g., $n > 3t$).
Our approach of reducing to SMT handles resilience conditions
     naturally in linear integer arithmetic.

There are two reasons for our restrictions in the temporal logic: On one hand,
    in our benchmarks, there is no need to find counterexamples that contain a
    configuration that satisfies $\counters[\ell] = 0 \vee \counters[\ell'] =
    0$ for some~$\ell, \ell' \in \local$. 
One would only need such a formula to specify requirement that at least one
    process is at location~$\ell$ and at least one process is at
    location~$\ell'$ (the disjunction would be negated in the specification),
    which is unnatural for fault-tolerant distributed algorithms. 
On the other hand, enriching our logic with $\bigvee_{i \in \critical} \counters[i] =
    0$ allows one to express tests for zero in the counter system, which leads
    to undecidability~\cite{2015Bloem}. 
For the same reason, we avoid disjunction, as it would allow one to indirectly
    express test for zero: $\counters[\ell] = 0 \vee \counters[\ell'] = 0$.

The restrictions we put on threshold automata are justified from
     a practical viewpoint of our application domain, namely,
     threshold-guarded fault-tolerant algorithms.
We assumed that all the cycles in threshold automata
     are simple (while the benchmarks have only self-loops or cycles of length 2).
As our analysis already is quite involved, these restrictions allow us
     to concentrate on our central results without obfuscating the
     notation and theoretical results.
Still, from a theoretical viewpoint it might be interesting to relax
     the restrictions on cycles in the future.

More generally, these restrictions allowed us to develop a completely
     automated verification technique.
In general, there is a trade-off between degree of automation and
     generality.
Our method is completely automatic, but our input language cannot
     compete in generality with mechanized proof methods  that rely
     heavily on human expertise, e.g.,
     IVY~\cite{PadonMPSS16}, Verdi~\cite{WilcoxWPTWEA15},
     IronFleet~\cite{HawblitzelHKLPR15}, TLAPS~\cite{ChaudhuriDLM10}.

\bibliographystyle{abbrvnat}

\bibliography{lit}

\ifproofs

\makeatletter{}\clearpage

\appendix

\section*{APPENDIX}

\makeatletter{}\section{Specifications of fault-tolerant distributed algorithms}
    \label{sec:specs}

In this section, we summarize the specifications of fault-tolerant distributed
    algorithms, which we used to conduct the experiments. 
As our method receives a negation of the original specification, we give only
    the negated formulas in~$\ELTLTB$.

\makeatletter{}\begin{table*}[t!]

\begin{tabular}{lc|rr|rrrr|rrrr|rrrr}
        \tbh{Input}
        & \tbh{Case}
        & \multicolumn{2}{c|}{\tbh{Lasso length}}
        & \multicolumn{4}{c|}{\tbh{Nr. of lassos}}
        & \multicolumn{4}{c|}{\tbh{Time, seconds}}
        & \multicolumn{4}{c}{\tbh{Memory, GB}} \\
        \tbh{FTDA} & \scriptsize{(if more than one)}
        & \tbh{$\tbh{avg}$}
        & \tbh{$\tbh{max}$}
        & \tbh{$\tbh{S1}$}
        & \tbh{$\tbh{S2}$}
        & \tbh{$\tbh{L1}$}
        & \tbh{$\tbh{L2}$}
                & \tbh{$\tbh{S1}$}
        & \tbh{$\tbh{S2}$}
        & \tbh{$\tbh{L1}$}
        & \tbh{$\tbh{L2}$}
        & \tbh{$\tbh{S1}$}
        & \tbh{$\tbh{S2}$}
        & \tbh{$\tbh{L1}$}
        & \tbh{$\tbh{L2}$}

    \csvreader[before line=\\, before first line=\\\hline,        late after last line=\\\hline]    {good-table.csv}{1=\algo, 2=\rc, 4=\sonen, 5=\stwon, 6=\lonen, 7=\ltwon,
        8=\sonet, 9=\stwot, 10=\lonet, 11=\ltwot,
        12=\sonem, 13=\stwom, 14=\lonem, 15=\ltwom, 16=\savg, 17=\smax}    {\scriptsize{\textsf{\textbf{{\algo}}}}
            & \scriptsize{\rc}
            & \scriptsize{\savg} & \scriptsize{\smax}
            & \scriptsize{\sonen} & \scriptsize{\stwon}
            & \scriptsize{\lonen} & \scriptsize{\ltwon}
            & \scriptsize{\sonet} & \scriptsize{\stwot}
            & \scriptsize{\lonet} & \scriptsize{\ltwot}
            & \scriptsize{\sonem} & \scriptsize{\stwom}
            & \scriptsize{\lonem} & \scriptsize{\ltwom}
    }
   \end{tabular}
  \caption{ Summary of our experiments on Intel\textregistered\ 
      Xeon\textregistered\ E5345, 4 cores, 48~GB.
      We apply the optimizations introduced
      in~\cite[Sec.~4.4.]{KVW15:CAV}.
      A \highlight{gray box} highlights the benchmarks, for which the tool
      reports a counterexample.
    }
  \label{tab:experiments}
\end{table*}

\paragraph{Consistent broadcast (STRB).} The negated safety specification~(\textsf{S1})
    is as follows:

\begin{enumerate}
  \item[S1:] $\tlE (( \bigvee_{\ell:\ \ell.\pc = \RI} \counters[\ell]
      \ne 0 ) \wedge \ltlF \bigvee_{\ell:\ \ell.\pc = \textrm{AC}} \kappa_\ell \ne 0)$.

\end{enumerate}

All our benchmarks have similar fairness constraints, that is the property of
    reliable communication that requires the processes to eventually receive
    the messages from all other correct processes. 
The fairness constraint that encodes reliable communication for STRB is as follows:
$\varphi_{rc} \equiv 
  (\sent < t+1 \vee \bigwedge_{\ell:\ \ell \in \alpha(\rcvd < t + 1)} \counters[\ell] = 0)
    \wedge
  (\sent < n-t \vee \bigwedge_{\ell:\ \ell.\rcvd \in \alpha(x < n-t)} \counters[\ell] = 0)
$, where $\alpha(x < n-t)$ is the set of abstract values produced by the parametric
    interval data abstraction~\cite{JohnKSVW13:fmcad}.

Using~$\varphi_{rc}$, we write the liveness properties~\textsf{L1}
    and~\textsf{L2} as:

\begin{enumerate}
\item[L1:] $\ltlE \big(
    \ltlG \ltlF \varphi_{rc}
    \wedge \bigwedge_{\ell:\ \ell.\pc = \IT} \kappa_\ell = 0
    \wedge \ltlG \bigwedge_{\ell:\ \ell.\pc = \textrm{AC}} \kappa_\ell = 0 \big)$.

\item[L2:] $\ltlE \big(\ltlG \ltlF \varphi_{rc}
    \wedge \ltlF \big( \bigvee_{\ell:\ \ell.\pc = \textrm{AC}} \kappa_\ell \ne 0
    \wedge \ltlG (\bigvee_{\ell:\ \ell.\pc \ne \textrm{AC}} \kappa_\ell \ne 0)
    \big)$.

\end{enumerate}

\paragraph{Folklore Reliable Broadcast (FRB).}
    FRB has exactly the same specifications~\textsf{S1}, \textsf{L1}, and~\textsf{L2}
    as STRB, but fewer local states. 

\paragraph{Asynchronous Byzantine agreement (ABA).}
    ABA has exactly the same specifications~\textsf{S1}, \textsf{L1}, and~\textsf{L2}
    as STRB, but more local states and guards. 
In addition to~$\varphi_{\mathit{rc}}$, ABA has four fairness constraints
    that enforce local progress of enabled process transitions,
    e.g., $\ltlG \ltlF (\sent < 2t + 1
    \vee \bigwedge_{\ell:\ \ell.\pc = \textrm{RD}} \counters[\ell] = 0)$.

\paragraph{Condition-based consensus (CBC).}
CBC has two unique initial local states, where the processes
    are initialized with values 0 and 1 respectively.
    In our encoding, the numbers of processes in these states are counted
    with~$\kappa_0$ and~$\kappa_1$.
    The negation of \emph{termination} is defined as follows:
\begin{enumerate}
   \item[L1: ] $\ltlE \big(
    \ltlG \ltlF \varphi_{rc}
    \wedge |\kappa_0 - \kappa_1| > t
    \wedge \ltlG \bigvee_{\ell:\ \ell.\pc \not\in \{\textrm{AC}, \textrm{CR}\}} \kappa_\ell \ne 0 \big)$
\end{enumerate}

The negations of \emph{validity} and \emph{agreement} are as follows:

\begin{enumerate}
  \item[S1:] $\tlE (( \bigvee_{\ell:\ \ell.\pc = \RI} \counters[\ell]
      \ne 0 ) \wedge \ltlF \bigvee_{\ell:\ \ell.\pc = \textrm{AC0}} \kappa_\ell \ne 0)$

  \item[S2:] $\tlE (|\kappa_0 - \kappa_1| > t \wedge
      \ltlF (\bigvee_{\ell:\ \ell.\pc \in \{\textrm{AC0}, \textrm{AC1}\}} \kappa_\ell \ne 0))$
 
\end{enumerate}

\paragraph{Non-blocking atomic commit.}

In addition to the fairness constraint~$\varphi_{rc}$, NBAC, NBACC, NBACG use
    a fairness constraint $\varphi_{\mathit{fd}}$ on a failure detector
    defined as:
    $\bigwedge_{\ell:\ \ell.\mathit{some\_fail} = \mathit{true}}
    \kappa_\ell = 0 \wedge \ltlG (\bigwedge_{\ell:\ \ell.\pc = \textrm{CR}} \kappa_\ell = 0)$.

    The negation of \emph{termination} is defined as follows:
\begin{enumerate}
   \item[L1: ] $\ltlE (
    \ltlG \ltlF \varphi_{rc} \wedge \varphi_{fd}
    \wedge \ltlG \bigvee_{\ell:\ \ell.\pc \not\in \{\textrm{COMMIT}, \textrm{ABORT}, \textrm{CR}\}} \kappa_\ell \ne 0 )$
\end{enumerate}

The negations of \emph{abort-validity} and \emph{agreement} are as follows:

\begin{enumerate}
    \item[S1:] $\tlE (( \bigvee_{\ell:\ \ell.\pc = \textrm{NO}} \counters[\ell]
      \ne 0 ) \wedge \ltlF \bigvee_{\ell:\ \ell.\pc = \textrm{COMMIT}} \kappa_\ell \ne 0)$.

  \item[S2:] $\tlE ( \ltlF (\bigvee_{\ell:\ \ell.\pc = \textrm{ABORT}} \kappa_\ell \ne 0
      \wedge \bigvee_{\ell:\ \ell.\pc = \textrm{COMMIT}} \kappa_\ell \ne 0))$.

\end{enumerate}

\paragraph{CFCS and C1CS.}

The negation of \emph{fast termination} for value 0 is:
\begin{enumerate}
   \item[L1: ] $\ltlE (
    \ltlG \ltlF \varphi_{rc}
    \wedge \bigwedge_{\ell:\ \ell.\pc \ne \textrm{V0}} \kappa_\ell = 0
    \wedge \ltlG \bigvee_{\ell:\ \ell.\pc \not\in \{\textrm{D0}, \textrm{CR}\}} \kappa_\ell \ne 0 )$
\end{enumerate}

The negation of \emph{one-step} for value 0 is:

\begin{enumerate}
   \item[S1: ] $\ltlE (
       \bigwedge_{\ell:\ \ell.\pc \ne \textrm{V0}} \kappa_\ell = 0
    \wedge \ltlF \bigvee_{\ell:\ \ell.\pc \in \{\textrm{D1}, \textrm{U0}, \textrm{U1}\}} \kappa_\ell \ne 0 )$
\end{enumerate}

\paragraph{BOSCO~\cite{SongR08}.}

The negation of \emph{fast termination} for value 0 is:
\begin{enumerate}
   \item[L1: ] $\ltlE (
    \ltlG \ltlF \varphi_{rc}
    \wedge \bigwedge_{\ell:\ \ell.\pc \ne \textrm{V0}} \kappa_\ell = 0
    \wedge \ltlG \bigvee_{\ell:\ \ell.\pc \not\in \{\textrm{D0}, \textrm{CR}\}} \kappa_\ell \ne 0 )$
\end{enumerate}

The negations of \emph{Lemma~3} and \emph{Lemma~4}
    of~\cite{SongR08} are:

\begin{enumerate}
   \item[S1: ] $\ltlE (
       \ltlF (\bigvee_{\ell:\ \ell.\pc = \textrm{D0}} \kappa_\ell \ne 0
       \wedge \bigvee_{\ell:\ \ell.\pc = \textrm{D1}} \kappa_\ell \ne 0 ))$

   \item[S2: ] $\ltlE (
       \ltlF (\bigvee_{\ell:\ \ell.\pc = \textrm{D0}} \kappa_\ell \ne 0
       \wedge \bigvee_{\ell:\ \ell.\pc = \textrm{U1}} \kappa_\ell \ne 0 ))$
\end{enumerate}

\makeatletter{}\section{Detailed Proofs for Section~\ref{sec:counterexamples}}

\recallproposition{prop:lasso-sched}{\proplassoscheda}

\makeatletter{}\begin{proof}
    We do not give details on B\"uchi automata and the construction by
    Vardi and Wolper, since this construction is well-known and can be found
    in the original paper~\cite{VW86} as well as in a number of textbooks,
    e.g.,~\cite{CGP1999}[Ch. 9] and \cite{BK08}[Ch.~5].

Using the construction from~\cite{VW86}, we translate the formula~$\varphi$
    into a B\"uchi automaton $B=(\emph{AP}, Q, \Delta, Q^0, F)$, which has a
    finite set~$Q$ of states, a finite set~$Q_0 \subseteq Q$ of initial states,
    a finite set~$F$ of accepting states, a finite alphabet~$\emph{AP}$ of
    atomic propositions (which corresponds to the propositional formulas
    derived from~$\mathit{pform}$), and the transition relation~$\Delta \subseteq Q
    \times \emph{AP} \times Q$. 
The key property is that the automaton~$B$ recognizes exactly those sequences
    of propositions that satisfy the formula~$\varphi$.

Let $(\configs, \iconfigs, \transrel)$ be the counter system~$\Sys(\TA)$ as
    defined in Section~\ref{sec:countsys}. 
The system $\Sys(\TA)$ is a transition system, so following~\cite{VW86} we can
    construct the product B\"uchi automaton $\Sys(\TA) \otimes B$ that
    corresponds to the synchronous product of $\Sys(\TA)$ and $B$. 
Formally, $\Sys(\TA) \otimes B$ is the B\"uchi automaton $(\emph{AP}, Q_P,
    \Delta_P, Q_P^0, F_P)$ defined as follows:

\begin{itemize}
\item The set of states $Q_P$ is the Cartesian product $(\configs \cup
        \{\iota\}) \times Q$, where $\iota \not \in \configs$ is a dummy
        configuration, which is used to delay initialization of the counter
        system by one step.

\item The set of initial states $Q^0_P$ is the Cartesian product $\{\iota\}
        \times Q^0$.

\item The set of accepting states $F_P$ is the Cartesian product $\configs
        \times F$.

\item The transition relation~$\Delta_P$ includes the following triples:
        \begin{itemize}
\item an initial transition $((\iota, q_0), p, (\gst, q))$ for $q_0 \in Q_0$,
        $\gst \in \iconfigs$, and $q \in Q$ such that $(q_0, p, q) \in \Delta$
        and $\gst \models p$.

\item a transition $((\gst, q), p, (\gst', q'))$ for $q, q' \in Q$ and
        $\gst, \gst' \in \configs$ such that $(q, p, q') \in \Delta$
        and $\gst' \models p$.
    \end{itemize}

\end{itemize}

A run of the product automaton is an infinite sequence $(\iota, q_0),$
    $(\gst_1, q_1), \dots, (\gst_i, q_i), \dots$ such that $((\iota, q_0), p_0,
    (\gst_1, q_1)) \in \Delta$ and $((\gst_i, q_i), p_i, (\gst_{i+1}, q_{i+1}))
    \in \Delta$ for $i \ge 1$ and some propositions~$p_0, p_1, \dots \in
    \mathit{AP}$.
The run is accepting, if there is a state $(\gst_j, q_j) \in F_P$ that appears
    infinitely often in the run. 

In contrast to~\cite{VW86}, the product automaton $\Sys(\TA) \otimes
     B$ has infinitely many states.
However, by Proposition~\ref{prop:finite}, every path of~$\Sys(\TA)$
     visits only finitely many states, and thus every run of the
     product automaton visits finitely many states too.
Hence, in each run there are finitely many accepting states.
Due to this, and since, by assumption, $\Sys(\TA) \models \ltlE
     \varphi$, the product has an accepting run $(\iota, q_0),(\gst_1,
     q_1), \dots, (\gst_i, q_i), \dots$ with state $(\gst_j, q_j) \in
     F_P$ appearing infinitely often for some $j \ge 1$.
Hence, there is an index $k \ge 0$ such that $(\gst_{j+k+1},
     q_{j+k+1}) = (\gst_j, q_j)$.
Consequently, we construct a lasso run by taking the sequence of states
     $(\iota, q_0),(\gst_1, q_1),\dots,(\gst_{j-1}, q_{j-1})$ as a
     prefix and the sequence $(\gst_j, q_j), \dots,(\gst_{j+k},
     q_{j+k})$ as a loop, which is repeated infinitely.
This lasso run is also an accepting run of the product automaton.

It is immediate from the construction, that the infinite sequence
    $\gst_1,\dots,\gst_{j-1}, (\gst_j,\dots,\gst_{j+k})^\omega$
corresponds to a path
    of~$\Sys(\TA)$ starting from an initial configuration~$\gst_1 \in
    \iconfigs$, and this path satisfies the formula~$\varphi$. 
Thus, there are schedules $\tau = t_1, \dots, t_{j-1}$ and $\rho = t_j, \dots,
    t_{j+k}$ such that:
    \begin{enumerate}
        \item Schedule $\tau$ is applicable to~$\gst_1$ and
            the prefixes of~$\tau$ visit the intermediate configurations:
                        \begin{equation*}
                (t_1, \dots, t_i)(\gst_1) = \gst_i \mbox{ for } 1 \le i < j,
            \end{equation*}
    
        \item Schedule $\rho$ is applicable to~$\gst_j$,
            the prefixes of~$\rho$ visit the intermediate configurations,
            and~$\rho$ closes the loop:
                        \begin{equation*}
            (t_j, \dots, t_m)(\gst_j) =
                \begin{cases}
                    \gst_m, & \mbox{ when } j \le m < j+k\\
                    \gst_j, & \mbox{ when } m = j+k.
                \end{cases}
            \end{equation*}

    \end{enumerate}

The infinite schedule~$\tau \cdot \rho^\omega$ is the required schedule.
Indeed, $\infpath{\gst_1}{\tau \cdot \rho^\omega} \models \varphi$ and
    $\rho^i(\tau(\gst_1)) = \tau(\gst_1)$ for $i \ge 0$.
\end{proof}

\recallproposition{prop:unwinding}{\propunwinding}

\makeatletter{}
\begin{proof}

For each node $\left<\ltlF \psi, w\right> \in \syntree(\varphi)$, we
    define an extreme appearance~$\EA(w)$ as follows:

\begin{enumerate}
    \item\label{itm:EAomega} If there is an index
        $k \in \{|\tau|, \dots, |\tau| + |\rho| - 1\}$ such
        that $k$ cuts~$\tau\cdot\rho$ in~$\tau\cdot\rho'$ and $\rho''$, and
        it holds that
        $\infpath{(\tau \concat \rho')(\gst)}{\rho'' \concat \rho^\omega}
        \models \psi$, then we set $\EA(w)$ to the maximal such~$k
        \ge |\tau|$.

\item \label{itm:EA} Otherwise, we set $\EA(w)$ to the maximal $k < |\tau|$
        such that $k$ cuts~$\tau$ in~$\tau', \tau''$ and
        $\infpath{\tau'(\gst)}{\tau'' \concat \rho^\omega} \models \psi$.
(Such~$k$ exists, as the case~\ref{itm:EAomega} does not apply, it holds
    $\infpath{\gst}{\tau \concat \rho^\omega} \models \ltlF \psi$, and since
    temporal formulas are connected only with the conjunction~$\wedge$.)
\end{enumerate}

Consider a topologically ordered sequence $v_1, v_2, \dots, v_{|\pgvertices|}$
    of the vertices of the cut graph~$\pgraph(\syntree(\varphi)) =
    (\pgvertices, \pgedges)$, that is, the condition $(v_i, v_j) \in \pgedges$
    implies $i < j$ for $1 \le i,j \le |\pgvertices|$. 
Such a sequence exists, since the graph $\pgraph(\syntree(\varphi))$ is a
    directed acyclic graph. 
Let $\ell \in \{1,\dots,|\pgvertices|\}$ be the index of the node $\loops$,
    i.e., $v_\ell = \loops$.

We unroll the loop~$K = \ell - 1$ times. 
Formally, for $1 \le i \le |\pgvertices|$, we set the cut point~$\imap(v_i)$
     as follows:
\begin{equation*}
    \imap(v_i) =
        \begin{cases}
            |\tau| + |\rho| \cdot K,& \mbox{ if } v_i = \loops\\
            |\tau| + |\rho| \cdot (K+1) - 1,& \mbox{ if } v_i = \loope\\
            \EA(v_i),& \mbox{ if } \EA(v_i) < |\tau|\\
            \EA(v_i) + |\rho| \cdot (i-1),&
                \mbox{ if } i < \ell \mbox{ and } |\tau| \le \EA(v_i) \\
            \EA(v_i) + |\rho| \cdot K,& \mbox{ if } i \ge \ell
        \end{cases}
\end{equation*}    
It is easy to see that~$\imap$ satisfies Definition~\ref{def:cutfun}.
By the construction of extreme appearances, for a node~$\left<\ltlF
     \psi, w\right>$, the formula~$\psi$ is satisfied at the extreme
     appearance~$\EA(w)$.
Since $\imap(w) - \EA(w) = |\rho| \cdot i$ for some $i \ge 0$, it
     follows that if $\imap(w)$ cuts $(\tau \cdot \rho^K) \cdot
     \rho^\omega$ into $\pi'$ and $\pi''$, then
     $\infpath{\pi'(\gst)}{\pi'' \cdot \rho^\omega} \models \psi$
     holds.
\end{proof}

\begin{lemma}\label{lemma:prefix-Gs}
Let~$\gst$ be a configuration, $\tau \cdot \rho^\omega$ be a lasso schedule
    applicable to~$\gst$, and $\varphi$ be an $\ELTLTB$ formula. 
If an index $k < |\tau|$ cuts $\tau$ into $\pi'$ and $\pi''$ and
    $\setconf{\pi'(\gst)}{\pi'' \cdot \rho} \models \varphi$ holds, then
    $\infpath{\pi'(\gst)}{\pi'' \cdot \rho^\omega} \models \ltlG \varphi$ holds.
\end{lemma}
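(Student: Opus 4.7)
The plan is to unfold the infinite path $\infpath{\pi'(\sigma)}{\pi'' \cdot \rho^\omega}$ into a finite prefix $\pi''$ followed by infinitely many copies of the loop $\rho$, and then exploit the lasso hypothesis that $\rho$ closes back on itself. Concretely, since $\tau \cdot \rho^\omega$ is a lasso and $\pi' \cdot \pi'' = \tau$, we have $\pi''(\pi'(\sigma)) = \tau(\sigma)$ and $\rho(\tau(\sigma)) = \tau(\sigma)$. So every iteration of $\rho$ starts and ends in the same configuration $\tau(\sigma)$, and by a straightforward induction on the iteration number $j \ge 0$, for every $0 \le i \le |\rho|$ the configuration reached after applying $\pi'' \cdot \rho^j$ followed by the first $i$ transitions of $\rho$ to $\pi'(\sigma)$ coincides with the configuration reached after applying $\pi''$ followed by the first $i$ transitions of $\rho$ to $\pi'(\sigma)$.

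From this position-wise correspondence I would conclude that the (possibly infinite) set of configurations visited by $\infpath{\pi'(\sigma)}{\pi'' \cdot \rho^\omega}$ is exactly equal to the finite set of configurations visited by $\finpath{\pi'(\sigma)}{\pi'' \cdot \rho}$. By the hypothesis $\setconf{\pi'(\sigma)}{\pi'' \cdot \rho} \models \varphi$, every such configuration satisfies the (propositional) formula $\varphi$; hence every configuration on the infinite path satisfies $\varphi$. By the semantics of $\ltlG$ as recalled in Section~\ref{sec:reach-and-live} (for any suffix $\tau = \tau' \concat \tau''$ of the schedule, $\tau'(\pi'(\sigma)) \models \varphi$), this gives precisely $\infpath{\pi'(\sigma)}{\pi'' \cdot \rho^\omega} \models \ltlG \varphi$.

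The condition $k < |\tau|$ is used only to ensure that $\pi''$ is a nonempty suffix of $\tau$ on which the loop gets attached; it does not enter the argument beyond making $\pi''(\pi'(\sigma)) = \tau(\sigma)$ well-typed. The main obstacle is administrative rather than conceptual: one must be careful about indexing the positions of $\pi'' \cdot \rho^\omega$ so that the inductive identification of configurations across loop iterations is stated unambiguously, and one must invoke the lasso property $\rho(\tau(\sigma)) = \tau(\sigma)$ explicitly (which is part of the standing assumption on lasso schedules, used throughout Section~\ref{sec:counterexamples} and in Proposition~\ref{prop:lasso-sched}). Once the set-of-visited-configurations equality is established, no deeper structural reasoning about $\ELTLTB$ or about threshold guards is required.
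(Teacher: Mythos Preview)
Your proposal is correct and takes essentially the same approach as the paper. The paper splits $\setconf{\pi'(\gst)}{\pi'' \cdot \rho} \models \varphi$ into the prefix part $\setconf{\pi'(\gst)}{\pi''} \models \varphi$ and the loop part $\setconf{\tau(\gst)}{\rho} \models \varphi$, then uses the lasso identity $\rho^i(\tau(\gst)) = \tau(\gst)$ to extend the loop part to $\infpath{\tau(\gst)}{\rho^\omega} \models \ltlG \varphi$ and concatenates; your argument that the set of configurations visited by $\infpath{\pi'(\gst)}{\pi'' \cdot \rho^\omega}$ coincides with those visited by $\finpath{\pi'(\gst)}{\pi'' \cdot \rho}$ is exactly the same observation, phrased globally rather than as prefix-plus-loop.
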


\begin{proof}
    From $\setconf{\pi'(\gst)}{\pi'' \cdot \rho} \models \varphi$,
    we immediately conclude that two subsets of
    $\setconf{\pi'(\gst)}{\pi'' \cdot \rho}$ also satisfy~$\varphi$:
        \begin{align}
        \setconf{\pi'(\gst)}{\pi''} & \models  \varphi \label{eq:prefix}\\
        \setconf{\tau(\gst)}{\rho} & \models \varphi \label{eq:loop}
    \end{align}

Since $\tau \cdot \rho^\omega$ is a lasso schedule, we have $\rho^i(\tau(\gst))
= \tau(\gst)$ for $i \ge 0$. 
From this and Equation~(\ref{eq:loop}), we conclude that
$\infpath{\tau(\gst)}{\rho^\omega} \models \ltlG \varphi$ holds. 
By combining this with Equation~(\ref{eq:prefix}), we arrive at the required
    property $\infpath{\pi'(\gst)}{\pi'' \cdot \rho^\omega} \models \ltlG
    \varphi$.
\end{proof}

\begin{lemma}\label{lemma:loop-Gs}
Let~$\gst$ be a configuration, $\tau \cdot \rho^\omega$ be a lasso schedule
    applicable to~$\gst$, and $\varphi$ be an $\ELTLTB$ formula. 
If an index $k: |\tau| \le k < |\tau| + |\rho|$ cuts $\tau \cdot \rho$
    into $\pi'$ and $\pi''$ and
    $\setconf{\tau(\gst)}{\rho} \models \varphi$ holds, then
    $\infpath{\pi'(\gst)}{\pi'' \cdot \rho^\omega} \models \ltlG \varphi$ holds.
\end{lemma}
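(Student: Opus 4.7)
The plan is to exploit the lasso structure: since $\rho^i(\tau(\gst)) = \tau(\gst)$ for all $i \ge 0$, the set of configurations visited by $\rho^\omega$ starting from $\tau(\gst)$ is exactly the finite set $\setconf{\tau(\gst)}{\rho}$, repeated infinitely.

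First, I would decompose $\rho$ according to the cut index. Since $|\tau| \le k < |\tau| + |\rho|$, there exist schedules $\rho_1$ and $\rho_2$ with $\rho = \rho_1 \cdot \rho_2$ and $|\rho_1| = k - |\tau|$, so that $\pi' = \tau \cdot \rho_1$ and $\pi'' = \rho_2$. Thus $\pi'(\gst) = \rho_1(\tau(\gst))$ and the infinite tail of interest is $\rho_2 \cdot \rho^\omega$ applied to $\rho_1(\tau(\gst))$.

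Next, I would argue that every configuration visited by $\infpath{\pi'(\gst)}{\pi'' \cdot \rho^\omega}$ lies in $\setconf{\tau(\gst)}{\rho}$. The configurations visited during the prefix $\rho_2$ from $\rho_1(\tau(\gst))$ coincide with the suffix configurations of $\finpath{\tau(\gst)}{\rho}$, which are in $\setconf{\tau(\gst)}{\rho}$ by definition. After $\rho_2$ we are at $\rho_2(\rho_1(\tau(\gst))) = \rho(\tau(\gst)) = \tau(\gst)$ by the lasso property. From there, each iteration of $\rho$ traces out exactly the configurations in $\setconf{\tau(\gst)}{\rho}$, and then returns to $\tau(\gst)$, so the subsequent iterations visit exactly the same set.

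Finally, since by hypothesis $\setconf{\tau(\gst)}{\rho} \models \varphi$, every configuration visited by $\infpath{\pi'(\gst)}{\pi'' \cdot \rho^\omega}$ satisfies $\varphi$. Applying the semantic clause for $\ltlG$ from Section~\ref{sec:reach-and-live}, we conclude $\infpath{\pi'(\gst)}{\pi'' \cdot \rho^\omega} \models \ltlG \varphi$. I do not expect any real obstacle here: the argument is purely combinatorial about the set of configurations revisited by a lasso, entirely parallel to Lemma~\ref{lemma:prefix-Gs}, except that the suffix $\pi''$ is a proper part of the loop rather than of the prefix, which is handled by the single extra identity $\rho(\tau(\gst)) = \tau(\gst)$.
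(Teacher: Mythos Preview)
Your proof is correct and follows essentially the same approach as the paper: both use the lasso identity $\rho^i(\tau(\gst)) = \tau(\gst)$ to conclude that every configuration visited by $\infpath{\pi'(\gst)}{\pi'' \cdot \rho^\omega}$ already appears in $\setconf{\tau(\gst)}{\rho}$, and hence satisfies~$\varphi$. The only difference is presentational: you explicitly decompose $\rho = \rho_1 \cdot \rho_2$ and trace the configurations, whereas the paper compresses this into the single observation that $\infpath{\pi'(\gst)}{\pi'' \cdot \rho^\omega}$ is a subsequence (suffix) of $\infpath{\tau(\gst)}{\rho^\omega}$.
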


\begin{proof}
Since $\tau \cdot \rho^\omega$ is a lasso schedule, we have $\rho^i(\tau(\gst))
    = \tau(\gst)$ for $i \ge 0$. 
Thus, $\setconf{\tau(\gst)}{\rho} \models \varphi$ implies
    $\infpath{\tau(\gst)}{\rho^\omega} \models \ltlG \varphi$. 
As $\infpath{\pi'(\gst)}{\pi'' \cdot \rho^\omega}$ is a subsequence of
    $\infpath{\tau(\gst)}{\rho^\omega}$, we arrive at
    $\infpath{\pi'(\gst)}{\pi'' \cdot \rho^\omega} \models \ltlG \varphi$.
\end{proof}

\recallthm{thm:witness-soundness}{\thmwitnesssoundness}

\makeatletter{}\begin{proof}
Let the cut graph $\pgraph(\syntree(\varphi))$ be $(\pgvertices, \pgedges)$.
We start with defining the notion of the parent cutpoint for a formula that has
    the form~$\ltlG \psi$. 
Given a tree node $\left<\ltlG \psi, u.j\right> \in \syntree(\varphi)$ with
    $\psi \ne \true$, we denote with~$\pnode(u.j)$ the parent node
    $\left<\psi', u\right> \in \syntree(\varphi)$. 
(By the definition of a canonical syntax tree, the formula $\ltlG \psi$
    alone cannot be the formula of the root node.)
Note that the id~$u$ always points to either the root node, or a node of the
    form $\left<\ltlF \psi'', u\right>$ for some formula~$\psi'' \in \ELTLTB$. 
We define the parent cutpoint as follows:
\begin{equation*}
    \pcutpoint(w) =
    \begin{cases}
\imap(u), \mbox{ when } u \in \pgvertices, w = u.j \mbox{ for some } j \in
    \NatZero,\\ 0, \mbox{ otherwise}.
    \end{cases}
\end{equation*}

We prove the following statements about the intermediate
 tree nodes using structural
    induction on the tree~$\syntree(\varphi)$:

\begin{enumerate}[label=(\roman*)]
\item\label{itm:G} for a node $\left<\ltlG \psi, w\right>$ with $\psi \ne
        \true$, if $\pcutpoint(w)$ cuts $\tau \cdot \rho$ into $\pi'$ and
        $\pi''$, then $\infpath{\pi'(\gst)}{\pi'' \cdot \rho^\omega} \models
        \ltlG \psi$.

\item\label{itm:F} for a node $\left<\ltlF \psi, w\right>$, if $\imap(w)$ cuts
        $\tau \cdot \rho$ into $\pi'$ and $\pi''$, then
        $\infpath{\pi'(\gst)}{\pi'' \cdot \rho^\omega} \models \psi$.
\end{enumerate}

Based on this we finally prove
\begin{enumerate}
\item[(iii)]\label{itm:root} for the root node $\left<\canform(\varphi), 0\right> \in
        \syntree(\varphi)$, it holds that $\infpath{\gst}{\tau \cdot
        \rho^\omega} \models \canform(\varphi)$.
\end{enumerate}
which establishes the theorem.

\paragraph{Proving~\ref{itm:G}.}
    Fix a tree node $\left<\ltlG \psi, w\right>$ with $\psi \ne
        \true$. Let $\pcutpoint(w)$ cut $\tau \cdot \rho$ into $\pi'$ and
        $\pi''$.
    We have to show that $\infpath{\pi'(\gst)}{\pi'' \cdot \rho^\omega} \models
        \ltlG \psi$.
Since $\psi \ne \true$, by the definition of a canonical formula, $\psi$ has
    the form $\psi_0 \wedge \ltlF \psi_1 \dots \ltlF \psi_k \wedge \ltlG \true$
    for some $k \ge 0$, a propositional formula~$\psi_0$ and canonical
    formulas~$\psi_1, \dots, \psi_k$. 
It is sufficient to show that: (a)~$\pi'(\gst) \models \ltlG \psi_0$, and
    (b)~$\infpath{\pi'(\gst)}{\pi'' \cdot \rho^\omega} \models \ltlG \ltlF
    \psi_i$ for $1 \le i \le k$.

To show (a), we consider three cases:
\begin{enumerate}
    \item Case: $\pnode(w)$ is the root and $\varphi$ is not of the form
        $\ltlF (\dots)$. 
Then $\canform(\varphi) = \dots \wedge \ltlG \psi$, and by
    Condition~\ref{assume:root} of Definition~\ref{def:prop-witness},
    we have $\setconf{\gst}{\tau
    \cdot \rho} \models \prop(\psi)$. As $\tau \cdot \rho^\omega$ is a lasso,
    i.e., $(\tau \cdot \rho^k(\gst)) = \tau(\gst)$ for $k \ge 0$,
    we have that $\gst \models \ltlG \prop(\psi)$.
    From this, and $\prop(\psi) = \psi_0$, we conclude
    that $\pi'(\gst) \models \ltlG \psi_0$, as $\pcutpoint(w) = 0$ and
    thus $\pi'$ is the empty schedule.

\item Case: $\pnode(w) = \left<\ltlF \psi'', u\right>$ for some $\psi'' \in
\ELTLTB$ and $u \in \NatZero^\omega$, and $\imap(u) < |\tau|$. 
In this case, $\psi'' = \dots \wedge \ltlG \psi$.
By Condition~\ref{assume:Gfin-prefix} of Definition~\ref{def:prop-witness},
$\setconf{\pi'(\gst)}{\pi''} \models \prop(\psi)$. 
By noticing $\prop(\psi) = \psi_0$ and applying Lemma~\ref{lemma:prefix-Gs}, we
    have $\infpath{\pi'(\gst)}{\pi'' \cdot \rho^\omega} \models \ltlG
    \psi_0$.

\item Case: $\pnode(w) = \left<\ltlF \psi'', u\right>$ for some $\psi'' \in
        \ELTLTB$ and $u \in \NatZero^\omega$, and $|\tau| \le \imap(u) < |\tau|
+ |\rho|$. 
In this case, $\psi'' = \dots \wedge \ltlG \psi$.
By Condition~\ref{assume:Gfin-loop} of Definition~\ref{def:prop-witness},
$\setconf{\tau(\gst)}{\rho} \models \prop(\psi)$. 
By noticing $\prop(\psi) = \psi_0$ and applying Lemma~\ref{lemma:loop-Gs}, we
    arrive at $\infpath{\pi'(\gst)}{\pi'' \cdot \rho^\omega} \models \ltlG
    \psi_0$.
\end{enumerate}

\bigskip

To show (b), we fix an index~$i: 1 \le i \le k$ and prove
    $\infpath{\pi'(\gst)}{\pi'' \cdot \rho^\omega} \models \ltlG \ltlF \psi_i$.
Let~$w_i$ be the node id of the $\psi$'s subformula $\ltlF \psi_i$ in
    the syntax tree~$\syntree(\varphi)$. 
Note that $\left<\ltlF \psi_i, w_i\right>$ is covered by a $\ltlG$-node, since
    it is created from a subformula of $\ltlG \psi$. 
Thus, $(\loops, w_i) \in \pgedges$, and by the definition of the cut
    function~$\imap$, we have~$\imap(w_i) \ge \imap(\loops) \ge |\tau|$. 
Let~$\imap(w_i)$ cut~$\tau \cdot \rho$ in~$\tau \cdot \beta'$ and~$\beta''$. 
By the inductive hypothesis, Point~\ref{itm:F} holds for the
    tree node~$w_i$, and thus $\finpath{(\tau \cdot \beta')(\gst)}{\beta'' \cdot \rho^\omega}
\models \psi_i$ holds.
Since $\tau \cdot \rho^\omega$ is a lasso-shaped schedule, we have $\tau(\gst)
    = (\tau \cdot \rho^j)(\gst)$ for $j \ge 0$, that is, the state~$\tau(\gst)$
    occurs infinitely often in the path $\infpath{(\tau \cdot
    \beta')(\gst)}{\beta'' \cdot \rho^\omega}$.
Hence, we arrive at:
\begin{equation*}
\infpath{(\tau \cdot \beta')(\gst)}{\beta'' \cdot \rho^\omega}
\models \ltlG \ltlF \psi_i, \mbox{ for } 1 \le i \le k.
\end{equation*}

From~(a) and~(b), and the standard \LTL{} property $(\ltlG A) \wedge (\ltlG B)
    \Rightarrow \ltlG(A \wedge B)$, Point~\ref{itm:G} follows for the tree
    node~$\left<\ltlG \psi, w\right>$.

\paragraph{Proving~\ref{itm:F}.} Fix a tree node $\left<\ltlF \psi, w\right>$, and
    let $\imap(w)$ cut $\tau \cdot \rho$ into $\pi'$ and $\pi''$. 
We have to show that $\infpath{\pi'(\gst)}{\pi'' \cdot \rho^\omega} \models
    \psi$ holds.

By the definition of a canonical formula,~$\psi$ has the form $\psi_0 \wedge
    \ltlF \psi_1 \wedge \dots \wedge \ltlF \psi_k \wedge \ltlG \psi_{k+1}$ for
    some $k \ge 0$, a propositional formula~$\psi_0$, canonical
    formulas~$\psi_1, \dots, \psi_k$, and a formula~$\psi_{k+1}$
    that is either a canonical formula, or equals~$\true$.
    We will show that: (a)~$\pi'(\gst) \models \psi_0$, and
    (b)~$\infpath{\pi'(\gst)}{\pi'' \cdot \rho^\omega} \models
    \ltlG \psi_{k+1}$, and
    (c)~$\infpath{\pi'(\gst)}{\pi'' \cdot \rho^\omega} \models
    \ltlF \psi_i$ for $1 \le i \le k$.
From (a)--(c), the required statement immediately follows.

To show~(a), we notice that there are two cases: $\imap(w) < |\tau|$, or
    $\imap(w) \ge |\tau|$. 
In these cases, either Assumption~(\ref{assume:Gfin-prefix}) or
    Assumption~(\ref{assume:Gfin-loop}) implies that $\pi'(\gst) \models
    \psi_0$.

To show~(b), we focus on the case $\psi_{k+1} \ne \true$, as the case
    $\psi_{k+1} = \true$ is trivial. 
Notice that by the definition of the syntax tree~$\syntree(\varphi)$, the
    subformula $\ltlG \psi_{k+1}$ has the id $w.j$ for $j = k+1$, and thus
    $\pcutpoint(w.j) = \imap(w)$. 
Thus,~$(b)$ follows directly from the inductive hypothesis~\ref{itm:G}, which
    has already been shown to hold
    for the tree node $\left<\ltlG \psi_{k+1}, w.j\right>$.

To show~(c), fix an index $i \in \{1, \dots, k\}$. 
Let $\imap(w.i)$ cut $\tau \cdot \rho$ into $\beta'$ and $\beta''$. 
The inductive hypothesis~\ref{itm:F} has been shown to hold for the tree node
    $\left<\ltlF \psi_i, w.i\right> \in \syntree(\varphi)$, and thus we have:
\begin{equation}\label{eq:psi-at-beta}
    \infpath{\beta'(\gst)}{\beta'' \cdot \rho^\omega} \models \psi_i.
\end{equation}    

We consider three cases, based on whether~$w$ and~$w.i$ are covered by
    a $\ltlG$-node:
\begin{enumerate}

\item Case: neither $w$, nor $w.i$ is covered by a $\ltlG$-node. 
By the definition of the cut graph, $(w, w.i) \in \pgedges$, and thus
    by the definition of the cut function~$\imap$, it holds that
    $\imap(w) \le \imap(w.i)$. 
    From this and Equation~(\ref{eq:psi-at-beta}), it follows that
    $\infpath{\pi'(\gst)}{\pi'' \cdot \rho^\omega} \models \ltlF \psi_i$.

\item Case: $w.i$ is covered by a $\ltlG$-node. 
By the definition of the cut graph, the node~$w.i$ has to be inside the
loop: $(\loops, w.i) \in \pgedges$. 
Consequently, by the definition of the cut function~$\imap$, it holds that
    $\imap(w.i) \ge |\tau|$.
Let $\beta'_l$ be the suffix of~$\beta'$ inside the loop, i.e.,
    $\beta' = \tau \cdot \beta'_l$.
As $\tau \cdot \rho^\omega$ is a lasso-shaped schedule, we have
    $(\tau \cdot \rho \cdot \beta'_l)(\gst) = \beta'(\gst)$.
Consequently, we can advance one iteration of the loop and
    derive the following from Equation~(\ref{eq:psi-at-beta}):
\begin{equation}\label{eq:one-iter-further}
\infpath{(\tau \cdot \rho \cdot \beta'_l)(\gst)}{\beta'' \cdot \rho^\omega}
    \models \psi_i \end{equation}
    
Notice that in Equation~(\ref{eq:one-iter-further}) we use $\tau \cdot \rho
\cdot \beta''_l$, not $\tau \cdot \rho$.
The definition of~$\imap$ requires that $\imap(w) < |\tau| + |\rho|$.
Since $|\tau \cdot \rho| \le |\tau \cdot \rho \cdot \beta'_l|$, we have
    $\imap(w) \le |\tau \cdot \rho \cdot \beta'_l|$, that is, the
    formula~$\psi_i$ is satisfied at the state $(\tau \cdot \rho \cdot
    \beta'_l)(\gst)$ that either coincides with the state $\pi'(\gst)$ or
    occurs after the state $\pi'(\gst)$ in the path $\infpath{\pi'(\gst)}{\pi''
    \cdot \rho^\omega}$.
From this and Equation~(\ref{eq:one-iter-further}), we conclude that
$\infpath{\pi'(\gst)}{\pi'' \cdot \rho^\omega} \models \ltlF \psi_i$
    holds.

    \item Case: $w$ is covered by a $\ltlG$-node, while $w.i$ is not.
        This case is impossible, since the node with id~$w.i$ is
        the child of the node with id~$w$ in the syntax
        tree~$\syntree(\varphi)$.

\end{enumerate}

From~(a)--(c), Point~\ref{itm:F} follows for the tree
    node~$\left<\ltlF \psi, w\right>$.

\paragraph{Proving~\ref{itm:root}.}

Let $\canform(\varphi) \equiv \psi_0 \wedge \ltlF \psi_1 \wedge \dots \wedge
    \ltlF \psi_k \wedge \ltlG \psi_{k+1}$, for some $k \ge 0$, a propositional
    formula~$\psi_0$, canonical formulas~$\psi_1, \dots, \psi_k$, and a
    formula~$\psi_{k+1}$ that is either a canonical formula, or equals to~$\true$.
We have to show $\infpath{\gst}{\tau \cdot \rho^\omega} \models
    \canform(\varphi)$. 
To this end, we will show that: (a)~$\gst \models \psi_0$, and
    (b)~$\infpath{\gst}{\tau \cdot \rho^\omega} \models \ltlG \psi_{k+1}$, and
    (c)~$\infpath{\gst}{\tau \cdot \rho^\omega} \models \ltlF \psi_i$ for $1
    \le i \le k$.  From (a)--(c), the required statement immediately follows.

Point~(a) follows directly from Condition~\ref{assume:root} of
    Definition~\ref{def:prop-witness}.

To show~(b), we focus on the case $\psi_{k+1} \ne \true$, as the case
    $\psi_{k+1} = \true$ is trivial. 
Notice that by the definition of the syntax tree~$\syntree(\varphi)$, the
    subformula $\ltlG \psi_{k+1}$ has the id $0.j$ for $j = k+1$, and thus
    $\pcutpoint(0.j) = 0$, which cuts $\tau \cdot \rho$ into
    the empty schedule and $\tau \cdot \rho$ itself.
Thus,~$(b)$ follows directly from the inductive hypothesis~\ref{itm:G}, which
    has already been shown to hold
    for the tree node $\left<\ltlG \psi_{k+1}, 0.j\right>$.

To show~(c), fix an index $i \in \{1, \dots, k\}$. 
Let $\imap(0.i)$ cut $\tau \cdot \rho$ into $\beta'$ and $\beta''$. 
The inductive hypothesis~\ref{itm:F} has been shown to hold for the tree node
    $\left<\ltlF \psi_i, 0.i\right> \in \syntree(\varphi)$, and thus we have:
\begin{equation}\label{eq:psi-at-beta-root}
    \infpath{\beta'(\gst)}{\beta'' \cdot \rho^\omega} \models \psi_i.
\end{equation}    

Since $\infpath{\beta'(\gst)}{\beta'' \cdot \rho^\omega}$ is a suffix of
    $\infpath{\gst}{\tau \cdot \rho^\omega}$, from
    Equation~\ref{eq:psi-at-beta-root}, we immediately obtain the required
    statement: $\infpath{\gst}{\tau \cdot \rho^\omega} \models \ltlF \psi_i$.

By collecting Points (a)--(c), we immediately arrive at: $\infpath{\gst}{\tau
    \cdot \rho^\omega} \models \canform(\varphi)$. 
By Definition~\ref{thm:canonical-form}, the formula $\varphi$ is
    equivalent to~$\canform(\varphi)$.
This finishes the proof.
\end{proof}

\recallthm{thm:witness-completeness}{\thmwitnesscompleteness}

\makeatletter{}\begin{proof}

We apply Proposition~\ref{prop:unwinding} to find the required
    number $K \ge 0$ and the cut function~$\imap$.
It remains to show that Conditions~\ref{assume:root}--\ref{assume:Gfin-loop}
    of Definition~\ref{def:prop-witness} are satisfied for the
    configuration~$\gst$ and the lasso~$(\tau \concat \rho^K) \concat \rho^\omega$.

\paragraph{Showing Condition~\ref{assume:root}.} This condition does not depend
    on the structure of~$\imap$. 
Let $\canform(\varphi) = \psi_0 \wedge \ltlF \psi_1 \wedge \dots \ltlF \psi_k
    \wedge \ltlG \psi_{k+1}$. 
Since $\infpath{\gst}{(\tau \concat \rho^K) \concat \rho^\omega} \models \varphi$,
    we immediately have $\gst \models \psi_0$ and $\infpath{\gst}{(\tau \concat
    \rho^K) \concat \rho^\omega} \models \ltlG \psi_{k+1}$. 
By the semantics of \LTL, the latter implies that for all configurations
    $\sigma'$ visited by the path $\infpath{\gst}{(\tau \concat \rho^K) \concat
    \rho^\omega}$, it holds that $\sigma' \models \prop(\psi_{k+1})$. 
Since $\finpath{\gst}{(\tau \concat \rho^K) \concat \rho}$ is a subsequence of
    $\infpath{\gst}{(\tau \concat \rho^K) \concat \rho^\omega}$, we immediately
    arrive at $\setconf{\gst}{(\tau \concat \rho^K) \concat \rho} \models
    \prop(\psi_{k+1})$.

\paragraph{Showing Conditions~\ref{assume:Gfin-prefix}
    and~\ref{assume:Gfin-loop}.} Let $\psi = \psi_0 \wedge \ltlF \psi_1 \wedge
    \dots \wedge \ltlF \psi_k \wedge \ltlG \psi_{k+1}$. 
Further, assume that~$\imap(v)$ cuts~$(\tau \concat \rho^K) \concat \rho$
    into~$\pi'$ and~$\pi''$. 
By Proposition~\ref{prop:unwinding}, we have $\infpath{\pi'(\gst)}{\pi''
    \concat \rho^\omega} \models \psi$. 
Thus, we have the following:
\begin{align}
\infpath{\pi'(\gst)}{\pi'' \concat \rho^\omega} &\models \psi_0 \label{eq:prop} \\
    \infpath{\pi'(\gst)}{\pi'' \concat \rho^\omega} &\models \ltlG
    \prop(\psi_{k+1}) \label{eq:always}
\end{align}
            
It remains to prove the specific statements about~\ref{assume:Gfin-prefix}
and~\ref{assume:Gfin-loop}:

\begin{enumerate}
    \item Case $\imap(v) < |\tau \concat \rho^K|$. We have to show
    Condition~\ref{assume:Gfin-prefix}.

The path $\finpath{\pi'(\gst)}{\pi''}$ is a subsequence of
$\infpath{\pi'(\gst)}{\pi'' \concat \rho^\omega}$. 
Thus, from Equation~(\ref{eq:always}), we obtain that for every
    configuration~$\gst'$ visited by the finite path
    $\finpath{\pi'(\gst)}{\pi''}$, it holds that $\gst' \models
    \prop(\psi_{k+1})$. In other words:
\begin{equation}
\setconf{\pi'(\gst)}{\pi''} \models \prop(\psi_{k+1}) \label{eq:setconf-prefix}
\end{equation}

Equations~(\ref{eq:prop}) and~(\ref{eq:setconf-prefix}) give us
    Condition~\ref{assume:Gfin-prefix}.

    \item Case $\imap(v) \ge |\tau \concat \rho^K|$. We have to show
    Condition~\ref{assume:Gfin-loop}.
    In this case, $\pi' = (\tau \concat \rho^K) \concat \pi'_l$
        for some schedule $\pi'_l$.
   
    Consider the configuration~$\gst' = 
        (\tau \concat \rho^K \concat \rho \concat \pi'_l)(\gst)$,
    that is, $\gst'$ is the result of applying to~$\gst$
    the prefix $\tau \concat \rho^K$, one iteration
    of the loop~$\rho$, and then the first part of the loop~$\pi'_l$.
    The configuration~$\gst'$ is located at the cut point~$\imap(v)$
    in the loop, and the path $\finpath{\gst'}{\pi'' \concat \pi'_l}$
    reaches the same configuration again, i.e.,
        $(\pi'' \concat \pi'_l)(\gst') = \gst'$.
    From Equation~(\ref{eq:always}), we have
    that the propositional formula~$\prop(\psi_{k+1})$ holds
    on the path $\finpath{\gst'}{\pi'' \concat \pi'_l}$.
    Since both paths $\finpath{\gst'}{\pi'' \concat \pi'_l}$ and
    $\finpath{(\tau\concat\rho^K)(\gst)}{\rho}$ visit all configurations
    of the loop, we have:
        \begin{equation}
    \setconf{(\tau\concat\rho^K)(\gst)}{\rho} \models
        \ltlG \prop(\psi_{k+1}) \label{eq:setconf-loop}
    \end{equation}

Equations~(\ref{eq:prop}) and~(\ref{eq:setconf-loop}) give us
    Condition~\ref{assume:Gfin-loop}.

\end{enumerate}

The theorem follows.
\end{proof}

\makeatletter{}

\section{Detailed Proofs for Section~\ref{sec:repr}}

From now on we fix a threshold  automaton~$\Sk = (\local, \initlocal,
     \globset,$ $\paraset,$ $\ruleset,$ $\ResCond)$ and conduct our
     analysis in Sections~\ref{subsec:cav15}
     to~\ref{subsec:liveness3} for this~$\Sk$.

\subsection{Preliminaries}\label{subsec:cav15}

We start by formalizing the notion of a \emph{thread}.

\begin{definition}[Thread]\label{def:thread}
For a configuration $\sigma$ and a schedule $\tau= \tau_1\concat t_1\concat \tau_2\concat
     \ldots\concat t_k\concat \tau_{k+1}$ applicable to $\sigma$, we define the
     sequence of transitions $\vartheta=t_1,\ldots, t_k$, $k>0$ to be
     a thread of $\sigma$ and $\tau$ if  
\begin{enumerate}
   \item $t_i.\trfactor=1$, for every $1\leq i\leq k$,
   \item $t_i.\tostate=t_{i+1}.\fromstate$, for every $1\le i < k$.
\end{enumerate}
For thread $\vartheta$, by $\vartheta.\fromstate$ and
$\vartheta.\tostate$ we denote $t_1.\fromstate$ and $t_k.\tostate$,
respectively. 
\end{definition}

\begin{definition}[Naming, Projection, and Decomposition]\label{def:Naming} 
A \emph{naming} is a function $\naming\colon
     \Natural \rightarrow \Natural$.
For a schedule $\tau$, and a set $S\subseteq \Natural$, by $\proj{\tau}{\naming,S}$ 
    we denote the sequence of transitions $\tau[j]$ satisfying $\naming(j)\in S$ 
    that preserves the order of transitions from $\tau$, i.e., 
    for all $j_1, j_2, l_1, l_2$, $j_1< j_2$, if $\proj{\tau}{\naming,S}[l_1]=\tau[j_1]$ and
    $\proj{\tau}{\naming,S}[l_2]=\tau[j_2]$, 
    then $l_1<l_2$.
If $S$ is a one-element set $\{i\}$, we write $\proj{\tau}{\naming,i}$
    instead of $\proj{\tau}{\naming,\{i\}}$.
We use the notation $\threads(\sigma,\tau,\naming)$ for the set
     $\{i \colon \proj{\tau}{\naming,i} \text{ is
     a thread of } \sigma \text{ and } \tau \}$.
For a configuration $\sigma$ and a schedule $\tau$, a naming is called
     a \emph{decomposition} of $\sigma$ and $\tau$ if
\begin{enumerate}
\item  for all
     $i\in\Natural$, $\proj{\tau}{\naming,i}$  is a thread of
     $\sigma$ and $\tau$, or $\proj{\tau}{\naming,i}$ is the empty sequence.
\item for all $\ell \in \local$, $\sigma.\counters[\ell] \ge | \{ i \colon
  i \in \threads(\sigma,\tau,\naming) \wedge \proj{\tau}{\naming,i}.\fromstate = \ell    \}  |$
\end{enumerate}
\end{definition} 

\begin{example}\label{ex:decomposition}
Let us reconsider Example~\ref{ex:slides} and
Figure~\ref{fig:movingthread}, with configuration~$\gst_1$
    where $\sigma_1.\counters[\ell_0]= \sigma_1.\counters[\ell_1]=
    \sigma_1.\counters[\ell_2] = 1$ and $\sigma_1.\counters[\ell_3]=0$, 
    and the
    schedule~$\tau=(r_1,1),$ $(r_6,1),$ $(r_4,1),$ $(r_2,1),$ $(r_4,1)$. 
The function~$\naming\colon\Natural \rightarrow \Natural$ 
    with $\naming(1)=\naming(5)=1$, $\naming(2)=\naming(4)=2$,
    $\naming(3)=3$, and $\naming(k)=4$ for every $k\ge 6$, is a
    naming. We will now see that $\naming$ is a decomposition by
    checking the two points.
    
(1) Since $\naming(1)=\naming(5)=1$, the projection
     $\proj{\tau}{\naming,1}$ consists of the  first and the fifth
     transition, in that particular order,  i.e.,
     $\proj{\tau}{\naming,1}=(r_1,1),(r_4,1)$.
This is a thread, as the factor of both transitions is~$1$,  and
     $r_1.\tostate=\ell_2=r_4.\fromstate$.
Similarly, $\proj{\tau}{\naming,2}=(r_6,1),(r_2,1)$, and
     $\proj{\tau}{\naming,3}=(r_4,1)$ are threads.
Besides, $\proj{\tau}{\naming,4}$ is the empty sequence,  as
     numbers mapping to~4 are $n\ge 6$,  and~$\tau$ has length~5,
     i.e., there is no transition~$\tau[n]$ for $n\ge 6$. Further, for
     every $i>4$, $\proj{\tau}{\naming,i}$ is the empty sequence,  as
     there is no $m\in\Natural$, with $\naming(m)=i$.
Thus, $\threads(\sigma,\tau,\naming)=\{1,2,3\}$.
Note that $\newrest{\tau}{\naming}{2}=\proj{\tau}{\naming,\{1,3\}}=
     (r_1,1),(r_4,1),(r_4,1).$ 

(2) As $\proj{\tau}{\naming,2}=r_6.\fromstate=\ell_0$,
    we obtain $\{ i \colon i \in \threads(\sigma,\tau,\naming)
    \wedge \proj{\tau}{\naming,i}.\fromstate = \ell_0\}=\{2\}$,
    and $\gst_1.\counters[\ell_0]=1 \ge |\{2\}|$.
Similarly we check this inequality for other local states, 
    and conclude that~$\naming$ is a decomposition of~$\gst_1$ and~$\tau$.
\end{example}

\begin{proposition}\label{prop:prefixdecomposition} If $\sigma$ is a configuration,
     $\tau$ is a steady schedule applicable to~$\sigma$, and $\naming$
     is a decomposition of $\sigma$ and~$\tau$, then for each prefix
     $\tau'$ of $\tau$, the naming~$\naming$ is  a
     decomposition of $\sigma$ and $\tau'$. Further
     $\threads(\sigma,\tau',\naming) \subseteq \threads(\sigma,\tau,\naming)$.
\end{proposition}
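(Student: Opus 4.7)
The plan is to verify both defining conditions of a decomposition (Definition~\ref{def:Naming}) for the prefix~$\tau'$, and then to establish the containment $\threads(\sigma,\tau',\naming) \subseteq \threads(\sigma,\tau,\naming)$. The observation driving the whole argument is that since $\tau'$ is a prefix of $\tau$ and the projection operation preserves the order of transitions, for every $i \in \Natural$ the projection $\proj{\tau'}{\naming,i}$ is itself a prefix of $\proj{\tau}{\naming,i}$.

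First I would verify Condition~1. Fix $i \in \Natural$. By hypothesis, $\proj{\tau}{\naming,i}$ is either a thread of $\sigma$ and $\tau$ or the empty sequence. In the empty case, the prefix $\proj{\tau'}{\naming,i}$ is also empty. In the thread case, every transition satisfies $t_j.\trfactor = 1$ and $t_j.\tostate = t_{j+1}.\fromstate$; these conditions are local and closed under taking prefixes, so any non-empty prefix of a thread is again a thread of $\sigma$ and $\tau'$, while the empty prefix trivially fits the second alternative. This establishes Condition~1 and also shows the containment: if $i \in \threads(\sigma,\tau',\naming)$, then $\proj{\tau'}{\naming,i}$ is non-empty, so $\proj{\tau}{\naming,i}$ is non-empty too, hence a thread, giving $i \in \threads(\sigma,\tau,\naming)$.

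Next I would verify Condition~2. The key additional observation is that for any $i \in \threads(\sigma,\tau',\naming)$, the first transition of $\proj{\tau'}{\naming,i}$ coincides with the first transition of $\proj{\tau}{\naming,i}$, and therefore $\proj{\tau'}{\naming,i}.\fromstate = \proj{\tau}{\naming,i}.\fromstate$. Combining this equality with the containment proved above yields, for every $\ell \in \local$,
\[
  \{ i \in \threads(\sigma,\tau',\naming) \mid \proj{\tau'}{\naming,i}.\fromstate = \ell \}
  \subseteq
  \{ i \in \threads(\sigma,\tau,\naming)  \mid \proj{\tau}{\naming,i}.\fromstate = \ell \}.
\]
Monotonicity of cardinality, together with Condition~2 holding for $\sigma$ and $\tau$, gives the desired bound by $\sigma.\counters[\ell]$ for $\sigma$ and $\tau'$.

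There is no genuine obstacle here; the only subtlety is remembering to check that being a thread is preserved under taking prefixes (which is immediate from the purely local nature of Definition~\ref{def:thread}) and that the source location of a thread is determined by its first transition. Note also that the hypothesis that $\tau$ is steady is not used in this argument; it is consistent with the fact that the bound in Condition~2 only constrains the initial configuration~$\sigma$, which is shared by $\tau$ and $\tau'$.
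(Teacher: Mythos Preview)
Your argument is correct. The paper states this proposition without proof, treating it as an immediate consequence of the definitions; your write-up supplies exactly the straightforward verification one would expect, including the observation that the steadiness hypothesis is not actually used.
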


From \cite[Prop.\ 12]{KVW16:IandC} we directly obtain:

\begin{proposition}\label{prop:threadcounts}
If $\sigma$ is a configuration, $\tau$ is a steady schedule
applicable to $\sigma$, and $\naming$ is a decomposition of $\sigma$
and~$\tau$, then for all $\ell$ in $\local$:
\begin{multline*}\tau(\sigma).\counters[\ell] = \sigma.\counters[\ell] +
| \{ i \colon  i \in \threads(\sigma,\tau,\naming) 
\wedge \proj{\tau}{\naming,i}.\tostate = \ell \} |\\
-  | \{  i \colon i \in \threads(\sigma,\tau,\naming) 
\wedge \proj{\tau}{\naming,i}.\fromstate = \ell \} |
.\end{multline*}
\end{proposition}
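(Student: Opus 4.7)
The plan is to proceed by induction on the length $|\tau|$. For brevity, write $T^{\tostate}_{\tau,\ell} = \{ i \in \threads(\sigma,\tau,\naming) \colon \proj{\tau}{\naming,i}.\tostate = \ell\}$ and define $T^{\fromstate}_{\tau,\ell}$ analogously; the claim is then $\tau(\sigma).\counters[\ell] = \sigma.\counters[\ell] + |T^{\tostate}_{\tau,\ell}| - |T^{\fromstate}_{\tau,\ell}|$. For the base case $|\tau|=0$, we have $\tau(\sigma)=\sigma$ and every projection $\proj{\tau}{\naming,i}$ is empty, so $\threads(\sigma,\tau,\naming) = \emptyset$ and both sides collapse to $\sigma.\counters[\ell]$.

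For the inductive step, I would write $\tau = \tau' \cdot t$ and let $i^{\star} = \naming(|\tau|)$ be the name of the last transition. By Proposition~\ref{prop:prefixdecomposition}, $\naming$ is also a decomposition of $\sigma$ and $\tau'$, so the induction hypothesis applies to~$\tau'$. Since $\proj{\tau}{\naming,i^{\star}}$ contains $t$ and is therefore nonempty, Definition~\ref{def:Naming}~(1) forces it to be a thread, so by Definition~\ref{def:thread}~(1) we have $t.\trfactor = 1$. Definition~\ref{def:TofSigma} then describes the update concretely: if $t.\fromstate \ne t.\tostate$, then $\counters[t.\fromstate]$ decreases by one and $\counters[t.\tostate]$ increases by one; if $t.\fromstate = t.\tostate$, all counters are preserved.

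It remains to show that adding $t$ shifts the right-hand side of the identity by the same amount. I would split on whether $i^{\star}$ begins a fresh thread or extends an existing one. In the \emph{fresh} case, $\proj{\tau'}{\naming,i^{\star}}$ is empty and $\proj{\tau}{\naming,i^{\star}} = t$ is a one-element thread, so $i^{\star}$ is added simultaneously to $T^{\fromstate}_{\tau,t.\fromstate}$ and $T^{\tostate}_{\tau,t.\tostate}$, with no other set changing. In the \emph{extended} case, $\proj{\tau'}{\naming,i^{\star}}$ is already a thread $\vartheta'$, and Definition~\ref{def:thread}~(2) applied to $\vartheta' \cdot t$ (which must be a thread because $\naming$ decomposes $\tau$) forces $\vartheta'.\tostate = t.\fromstate$; the starting state of $\proj{\tau}{\naming,i^{\star}}$ equals that of $\vartheta'$, while its ending state moves from $t.\fromstate$ to $t.\tostate$, so $i^{\star}$ is removed from $T^{\tostate}_{\tau',t.\fromstate}$ and inserted into $T^{\tostate}_{\tau,t.\tostate}$, with no changes to any $T^{\fromstate}$-set.

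A straightforward per-location check then confirms that in every subcase the change in $|T^{\tostate}_{\tau,\ell}| - |T^{\fromstate}_{\tau,\ell}|$ exactly matches the change $\tau(\sigma).\counters[\ell] - \tau'(\sigma).\counters[\ell]$ dictated by Definition~\ref{def:TofSigma}. Combined with the induction hypothesis, this yields the identity for $\tau$. The only genuine subtlety is the uniform treatment of self-loops $t.\fromstate = t.\tostate$: in the fresh subcase the additions to $T^{\fromstate}$ and $T^{\tostate}$ at the same location cancel, and in the extended subcase the thread's ending state is unchanged, so no $T$-set is altered\dash---both outcomes consistent with the fact that Definition~\ref{def:TofSigma}~(4) leaves all counters untouched.
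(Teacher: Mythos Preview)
Your induction argument is correct. The paper does not actually prove this proposition in the text: it is introduced with the sentence ``From \cite[Prop.\ 12]{KVW16:IandC} we directly obtain'' and then simply stated, so there is no in-paper proof to compare against. Your approach is a clean, self-contained argument that avoids the external citation; the case split on whether $i^\star$ opens a fresh thread or extends an existing one is exactly the right organizing principle, and your handling of the self-loop subcases is accurate. One minor remark: you implicitly rely on the fact that for every $j \ne i^\star$ the projection $\proj{\tau}{\naming,j}$ is unchanged when passing from $\tau'$ to $\tau$, which is immediate from the definition of the projection but worth stating explicitly when you assert ``with no other set changing.''
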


\newcommand{\propexistsdecomp}{If $\sigma$ is a configuration, $\tau$ is a steady conventional
     schedule applicable to~$\sigma$, then there is a decomposition of
     $\sigma$ and $\tau$.}

\begin{proposition}\label{prop:existsdecomp}
\propexistsdecomp
\end{proposition}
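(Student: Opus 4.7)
The plan is to construct a decomposition explicitly by tracking individual ``process tokens'' as they move through the schedule. Since $\tau$ is conventional, each transition $\tau[j] = (r, 1)$ moves exactly one process, so we can assign each transition to a specific token, and tokens that traverse the automaton become the threads.

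\medskip

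\noindent\textbf{Construction.} Let $\tau = \tau[1], \dots, \tau[m]$. Initialize, for each $\ell \in \local$, a pool $P(\ell)$ of $\sigma.\counters[\ell]$ distinct unnamed tokens; no two pools share tokens. Maintain a counter $N$ of used names, initially $0$. Process $j = 1, \dots, m$ in order. For each $j$, let $\tau[j] = (r, 1)$ and do the following:
\begin{enumerate}
\item Pick any token $T \in P(r.\fromstate)$ (we argue below that this pool is non-empty).
\item If $T$ already has a name $i$, set $\naming(j) := i$; otherwise increment $N$, give $T$ the fresh name $N$, and set $\naming(j) := N$.
\item Remove $T$ from $P(r.\fromstate)$ and insert it into $P(r.\tostate)$.
\end{enumerate}
For $j > m$, set $\naming(j)$ to pairwise fresh values not used in the first $m$ steps, so that $\proj{\tau}{\naming, \naming(j)}$ is empty for all such $j$.

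\medskip

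\noindent\textbf{The pools track the counters.} We claim the invariant that, after processing the first $j$ transitions, $|P(\ell)| = \tau^j(\sigma).\counters[\ell]$ for every $\ell \in \local$. Initially, $|P(\ell)| = \sigma.\counters[\ell]$ by construction, and at each step the pool update exactly matches Definition~\ref{def:TofSigma}(3) (note $\trfactor = 1$). Applicability of $\tau$ to $\sigma$ gives $\tau^{j-1}(\sigma).\counters[r.\fromstate] \geq 1$, so by the invariant $P(r.\fromstate)$ is non-empty when needed, and step~(1) is well defined.

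\medskip

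\noindent\textbf{Verifying the two conditions of Definition~\ref{def:Naming}.} For condition~(1), fix $i \le N$ and consider $\proj{\tau}{\naming, i}$; this is non-empty by construction. Each of its transitions has factor $1$, since $\tau$ is conventional. If $\tau[j_1]$ and $\tau[j_2]$ are consecutive occurrences of $i$ with $j_1 < j_2$, then after step $j_1$ the token named $i$ lies in $P(\tau[j_1].\tostate)$ and is only moved again at step $j_2$, from the pool $P(\tau[j_2].\fromstate)$; hence $\tau[j_1].\tostate = \tau[j_2].\fromstate$, so $\proj{\tau}{\naming, i}$ is a thread. For any $i > N$, $\proj{\tau}{\naming, i}$ is empty. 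For condition~(2), a name $i$ belongs to $\threads(\sigma, \tau, \naming)$ with $\proj{\tau}{\naming, i}.\fromstate = \ell$ exactly when the freshly named token in step~(2) was drawn from the initial population of $P(\ell)$. Since each initial token of $P(\ell)$ is named at most once, the number of such threads is at most $\sigma.\counters[\ell]$, as required.

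\medskip

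\noindent\textbf{Main obstacle.} The only non-trivial point is maintaining the correspondence between pools and counters along every prefix, which is where conventionality ($\trfactor = 1$) and applicability of $\tau$ are used together; steadiness plays no direct role here but is consistent with the applicability assumption. Everything else is bookkeeping.
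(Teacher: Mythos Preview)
Your proof is correct and follows essentially the same idea as the paper: both process the schedule transition by transition, attaching each new transition either to an existing thread that currently sits at the right location or to a fresh thread, and then verify the counter bound for the starting locations. The paper phrases this as an induction on $|\tau|$ and appeals to Proposition~\ref{prop:threadcounts} to justify the fresh-thread case, whereas your explicit token-tracking with the pool invariant makes the same argument more directly.
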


\newcommand{\proofexistsdecomp}{
\begin{proof}
We have to prove the two properties of Definition~\ref{def:Naming}. We
do so by induction on the length of $\tau$.

\begin{itemize}
\item
$|\tau| = 1$.
Let $\tau=t_1$ for a transition~$t_1$, and let $\naming$ be the identity
     function.
Then, $\proj{\tau}{\naming,1} = t_1$ is a thread and for all $i>1$,
the sequence $\proj{\tau}{\naming,i}$ is empty.
As~$\tau$ is applicable to $\sigma$,
     $\sigma.\counters[t_1.\fromstate]\ge 1$.

\item
$|\tau| > 1$.  Let $\tau= \tau' \concat t_{|\tau|}$, and let $\naming'$ be a
decomposition of $\sigma$ and $\tau'$, which exists by the induction
hypothesis. 
We distinguish two cases for $T=\{ i \colon
i \in \threads(\sigma,\tau',\naming') \wedge \proj{\tau'}{\naming',i}.\tostate
= t_{|\tau|}.\fromstate \}$:

\begin{itemize}
\item If $T \ne \emptyset$, then for some $j\in T$
 let
$$
\naming(k) = 
\begin{cases}
j & \text{if } k= |\tau| \\
\naming'(k) & \text{otherwise,}
\end{cases}
$$
that is, we append transition $t_{|\tau|}$ to thread $j$.
Therefore, $\threads(\sigma,\tau,\naming) =
     \threads(\sigma,\tau',\naming')$ and consequently for all
     $\ell\in\local$ we have $ \{  i \colon i \in
     \threads(\sigma,\tau,\naming) \wedge
     \proj{\tau}{\naming,i}.\fromstate = \ell \} =  \{  i \colon i \in
     \threads(\sigma,\tau',\naming') \wedge
     \proj{\tau'}{\naming',i}.\fromstate = \ell \}$. 
Hence, it follows from the induction hypothesis
     that for all $\ell \in \local$,
     $\sigma.\counters[\ell] \ge | \{ i \colon i \in
     \threads(\sigma,\tau,\naming) \wedge
     \proj{\tau}{\naming,i}.\fromstate = \ell    \}  |$.

\item If $T = \emptyset$, then for some $j
  \not\in\threads(\sigma,\tau',\naming')$ let
$$
\naming(k) = 
\begin{cases}
j & \text{if } k= |\tau| \\
\naming'(k) & \text{otherwise,}
\end{cases}
$$
that is, we add a new thread consisting of $t_{|\tau|}$ only.
From applicability of $\tau$ to $\sigma$ follows that
$\tau'(\sigma).\counters[t_{|\tau|}.\fromstate] \ge 1$.
Now from
Proposition~\ref{prop:threadcounts} follows that
\begin{multline*} \sigma.\counters[t_{|\tau|}.\fromstate] \ge 1 -
| T |+\\
+ | \{  i \colon i \in \threads(\sigma,\tau',\naming') \wedge
\proj{\tau'}{\naming',i}.\fromstate = t_{|\tau|}.\fromstate \} | 
.\end{multline*}
As $| T | = 0$ in this case and since by construction 
$ | \{  i \colon i \in \threads(\sigma,\tau',\naming') \wedge
\proj{\tau'}{\naming',i}.\fromstate = t_{|\tau|}.\fromstate \} | =
 | \{  i \colon i \in \threads(\sigma,\tau,\naming) \wedge
\proj{\tau}{\naming,i}.\fromstate = t_{|\tau|}.\fromstate \} | -1$, we obtain
that  $\sigma.\counters[t_{|\tau|}.\fromstate] \ge | \{  i \colon 
i \in \threads(\sigma,\tau,\naming) \wedge
\proj{\tau}{\naming,i}.\fromstate = t_{|\tau|}.\fromstate \} |$ as required.
For the other components of $\sigma.\counters$, the proposition
follows from the induction hypothesis as $\threads(\sigma,\tau,\naming) =
     \threads(\sigma,\tau',\naming') \cup \{j\}$.
\end{itemize}

\end{itemize}

\end{proof}
}
\proofintext{\proofexistsdecomp}
\proofexistsdecomp

\newcommand{\propleftmove}{
If $\sigma$ is a configuration, $\tau = \tau_1
     \concat t_{i-1} \concat t_i \concat \tau_2$ is a steady schedule
     applicable to $\sigma$, $\naming$ is a decomposition of $\sigma$
     and $\tau$, and $\naming(i-1) \ne \naming(i)$, then  $\tau_1
     (\sigma).\counters[t_i.\fromstate] \ge 1$.
}

\newcommand{\proofleftmove}{
\begin{proof} From Proposition~\ref{prop:threadcounts} follows that
     \begin{multline*}
     \tau_1(\sigma).\counters[t_i.\fromstate] =
     \sigma.\counters[t_i.\fromstate] + | \{ k \colon
     \proj{\tau_1}{\naming,k}.\tostate = t_i.\fromstate \} |\\ - | \{k \colon
     \proj{\tau_1}{\naming,k}.\fromstate = t_i.\fromstate \} |.
     \end{multline*}
We distinguish two cases: 
\begin{itemize}
\item If $t_{i} = \proj{\tau}{\naming,\naming(i)}[1]$, that is, it is the first
     in the thread, then $| \{
     k \colon \proj{\tau}{\naming,k}.\fromstate = t_i.\fromstate \}| >
     | \{k \colon \proj{\tau_1}{\naming,k}.\fromstate = t_i.\fromstate \}|$.
By assumption, $\sigma.\counters[t_i.\fromstate] \ge | \{ k \colon
     \proj{\tau}{\naming,k}.\fromstate = t_i.\fromstate\} |$.
Thus, $\tau_1(\sigma).\counters[t_i.\fromstate] > | \{ k
     \colon   \proj{\tau_1}{\naming,k}.\tostate = t_i.\fromstate \} |
     \ge 0$, which proves the proposition in this case.

\item Otherwise, by Definition~\ref{def:Naming}~(2), we have that
     $\sigma.\counters[t_i.\fromstate] -  | \{k \colon
     \proj{\tau_1}{\naming,k}.\fromstate = t_i.\fromstate \} | \ge 0$.
Therefore, it holds that $\tau_1(\sigma).\counters[t_i.\fromstate] \ge | \{ k
     \colon  \proj{\tau_1}{\naming,k}.\tostate = t_i.\fromstate \} |$.
As $\tau_1$ contains the prefix of $\proj{\tau}{\naming,\naming(i)}$, we find
     that $| \{k \colon \proj{\tau_1}{\naming,k}.\tostate =
     t_i.\fromstate \} | \ge 1$ such that
     $\tau_1(\sigma).\counters[t_i.\fromstate] \ge 1$ as required.
\end{itemize}
\end{proof}
}
\proofintext{\proofleftmove}

\begin{proposition} \label{prop:leftmove}
\propleftmove
\end{proposition}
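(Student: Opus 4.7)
The plan is to apply Proposition~\ref{prop:threadcounts} to the prefix~$\tau_1$. By Proposition~\ref{prop:prefixdecomposition}, the naming~$\naming$ is still a decomposition of~$\sigma$ and~$\tau_1$, and since steadiness is preserved on prefixes, Proposition~\ref{prop:threadcounts} yields
\[
\tau_1(\sigma).\counters[t_i.\fromstate] = \sigma.\counters[t_i.\fromstate] + A - B,
\]
where $A = |\{k \in \threads(\sigma,\tau_1,\naming) \mid \proj{\tau_1}{\naming,k}.\tostate = t_i.\fromstate\}|$ and $B = |\{k \in \threads(\sigma,\tau_1,\naming) \mid \proj{\tau_1}{\naming,k}.\fromstate = t_i.\fromstate\}|$. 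It suffices to show this quantity is at least~$1$.

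I would then split on whether $t_i$ is the first transition of its thread~$\naming(i)$ in~$\tau$. In the first case, the thread $\naming(i)$ contributes a transition starting at~$t_i.\fromstate$ in~$\tau$ that is \emph{not} present in~$\tau_1$ (it lives at position~$i$, and by hypothesis $\naming(i-1)\ne\naming(i)$, so it is also not at position $i-1$). Therefore the count of threads of~$(\sigma,\tau,\naming)$ starting at~$t_i.\fromstate$ strictly exceeds~$B$, and Definition~\ref{def:Naming}(2) gives $\sigma.\counters[t_i.\fromstate] \ge B+1$; combined with $A \ge 0$, this delivers the bound.

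In the second case, $t_i$ is not the first transition in its thread, so $\proj{\tau}{\naming,\naming(i)}$ contains a transition~$t_j$ immediately preceding~$t_i$ within that thread, with $j<i$. Since $\naming(i-1)\neq\naming(i)$, necessarily $j<i-1$, so $t_j$ lies inside~$\tau_1$. By Definition~\ref{def:thread}(2) we have $t_j.\tostate = t_i.\fromstate$, so the thread~$\naming(i)$ contributes to~$A$, giving $A \ge 1$. Together with $\sigma.\counters[t_i.\fromstate] \ge B$ from Definition~\ref{def:Naming}(2), this again yields the required inequality.

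The argument is essentially a careful bookkeeping of projections; the main obstacle is simply ensuring, in Case~2, that the preceding transition of thread~$\naming(i)$ really sits inside~$\tau_1$ rather than at position~$i-1$, which is exactly where the hypothesis $\naming(i-1)\ne\naming(i)$ is used. No reliance on steadiness beyond inheriting the decomposition property is needed, so the proof is short and mirrors the structure of the existing proofs of Propositions~\ref{prop:existsdecomp} and~\ref{prop:threadcounts}.
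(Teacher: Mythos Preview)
Your proof is correct and follows essentially the same approach as the paper's: apply Proposition~\ref{prop:threadcounts} to the prefix~$\tau_1$, then split on whether $t_i$ is the first transition of its thread, using Definition~\ref{def:Naming}(2) in both cases. You are in fact slightly more explicit than the paper---you invoke Proposition~\ref{prop:prefixdecomposition} and spell out why $j<i-1$ in Case~2---though your appeal to $\naming(i-1)\ne\naming(i)$ in Case~1 is unnecessary (if $t_i$ is first in its thread, no earlier position is in that thread regardless).
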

\proofleftmove

Now we show that in a steady
     schedule, a transition of a thread commutes with transitions of
     other threads.
This will  allow us to move whole threads.

\begin{definition}[Move]\label{def:move}
For a schedule $\tau$, and a natural number~$i$, $1<i\le |\tau|$,
  the schedule $\move{\tau}{i}$ is obtained by
  moving the $i$th transition of $\tau$ to the left, and
  naming~$\move{\naming}{i}(k)$ is defined accordingly,
  for every $k\in\Natural$,
  i.e.,
$$
\move{\tau}{i}[k] = 
\begin{cases}
\tau[i] & \text{if $k=i-1$}\\
\tau[i-1] & \text{if $k=i$}\\
\tau[k] & \text{otherwise,}
\end{cases}
$$
$$
\move{\naming}{i}(k) = 
\begin{cases}
\naming(i) & \text{if $k=i-1$}\\
\naming(i-1) & \text{if $k=i$}\\
\naming(k) & \text{otherwise.}
\end{cases}
$$
For natural numbers $n$ and $m$, where $1\le n\le m\le |\tau|$, we define 
    $\movemove\tau{n}{m}$ to be the schedule obtained from $\tau$
    by moving the $m$th transition of $\tau$ to the $n$th position
    (that is $m-n$ times to the left),
    and naming $\movemove{\naming}{n}{m}(k)$ accordingly,
    for every $k\in\Natural$, i.e.,
    $$\movemove{\tau}{n}{m}\,=\,
    \move{(\ldots (\move{(\move{\tau}{m})}{m -1})\ldots )}{n+1}\quad \mbox{ and}$$
    $$\movemove{\naming}{n}{m}(k)=
    \move{(\ldots (\move{(\move{\naming}{m})}{m -1})\ldots )}{n+1}(k).$$
\end{definition}

\begin{example}
Note that if $m=n$, then $\movemove{\tau}{n}{m}=\tau$ and 
    $\movemove{\naming}{n}{m}=\naming$.
If $\tau=t_1,t_2,\ldots,t_{|\tau|}$, then 
    for $i, n, m \in \Natural$ with $n<m\le |\tau|$, and $i\le |\tau|$, it is
    $\move{\tau}{i}=t_1,\ldots,t_{i-2},t_{i},t_{i-1},t_{i+1},\ldots,t_{|\tau|}$
    and
    $\movemove{\tau}{n}{m}=
    t_1,\ldots,t_{n-1},t_m,t_{n},t_{n+1},
    \ldots,t_{m-1},t_{m+1},\ldots,t_{|\tau|}$.
\end{example}

\newcommand{\propmovingleft}{ If $\sigma$ is a configuration, $\tau$ is a steady
     schedule applicable to $\sigma$, and $\naming$ is a decomposition
     of $\sigma$ and $\tau$, then for every $i\in\Natural$, 
     if $1< i \le |\tau|$ and $\naming(i-1) \ne \naming(i)$, then 
\begin{enumerate}
\item $\move{\tau}{i}$ is a steady schedule applicable to $\sigma$,
\item $\move{\naming}{i}$ is a decomposition of $\sigma$ and $\move{\tau}{i}$,
      and $\proj{\move{\tau}{i}}{\move{\naming}{i},j}=\proj{\tau}{\naming,j}$,
      for every $j\in\threads(\sigma,\tau,\naming)$,
\item $\move{\tau}{i}(\gst)=\tau(\gst)$.
\end{enumerate}}

\begin{proposition}\label{prop:movingleft}
\propmovingleft
\end{proposition}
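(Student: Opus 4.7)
The plan is to verify the three claims in turn, reducing each one to the commutativity of two transitions from different threads in a steady schedule. Write $\tau = \tau_1 \concat t_{i-1} \concat t_i \concat \tau_2$, where $\tau_1$ has length $i-2$, so that $\move{\tau}{i} = \tau_1 \concat t_i \concat t_{i-1} \concat \tau_2$. The hypothesis $\naming(i-1)\ne\naming(i)$ says that $t_{i-1}$ and $t_i$ belong to different threads under the decomposition~$\naming$.

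For~(1), I would first show that $t_i$ is applicable in $\tau_1(\sigma)$ and then that $t_{i-1}$ is applicable in $t_i(\tau_1(\sigma))$. Applicability of~$t_i$: by Proposition~\ref{prop:leftmove}, $\tau_1(\sigma).\counters[t_i.\fromstate]\ge 1$, and since $\tau$ is steady the context at $\tau_1(\sigma)$ coincides with the context at $(\tau_1\concat t_{i-1})(\sigma)$, so the guards $t_i.\precondLE\wedge t_i.\precondG$ that were unlocked just before~$t_i$ fires in $\tau$ are also unlocked in~$\tau_1(\sigma)$. Applicability of~$t_{i-1}$ after~$t_i$: the guards again carry over by steadiness; for the counter, a short case analysis on whether $t_{i-1}.\fromstate$ coincides with $t_i.\fromstate$ or $t_i.\tostate$ suffices. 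If $t_{i-1}.\fromstate=t_i.\fromstate$, then $t_{i-1}$ and~$t_i$ start two distinct threads at the same location, so by Definition~\ref{def:Naming}(2) applied to the decomposition $\naming$ and the prefix~$\tau_1$ (using Proposition~\ref{prop:prefixdecomposition}), $\tau_1(\sigma).\counters[t_{i-1}.\fromstate]$ is at least the number of active threads currently sitting there, which is at least~2, hence decreasing by one to fire~$t_i$ still leaves a token; the other sub-cases are either trivial (if $t_i.\tostate=t_{i-1}.\fromstate$ the counter only grows) or untouched. Steadiness of $\move{\tau}{i}$ itself follows because $\move{\tau}{i}$ and $\tau$ induce the same updates on $\vars$ after each common prefix up to $\tau_1$ and again after the swap, and the intermediate configuration $t_i(\tau_1(\sigma))$ has the same context as $(\tau_1\concat t_{i-1})(\sigma)$ by monotonicity (Proposition~\ref{prop:mono}) sandwiched inside the steady context of~$\tau$.

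For~(3), I would compute $\move{\tau}{i}(\sigma)$ componentwise from Definition~\ref{def:TofSigma}. The update to $\vars$ is linear in the transitions, so swapping two adjacent transitions leaves $\vars$ unchanged. For the counters, only the four states $t_{i-1}.\fromstate,t_{i-1}.\tostate,t_i.\fromstate,t_i.\tostate$ are affected; in each case the net contribution $\pm 1$ from each transition is independent of the order in which they are applied, so the counters at the end coincide with those produced by~$\tau$. The parameter vector is untouched by construction.

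For~(2), observe that by the definition of $\move{\naming}{i}$, positions $i-1$ and~$i$ exchange thread labels exactly as the transitions are swapped, so $\proj{\move{\tau}{i}}{\move{\naming}{i},j}=\proj{\tau}{\naming,j}$ holds literally for every $j$, and the set of threads is unchanged: $\threads(\sigma,\move{\tau}{i},\move{\naming}{i})=\threads(\sigma,\tau,\naming)$. Because the threads themselves are unchanged, Definition~\ref{def:Naming}(1)\dash---each projection is a thread or empty\dash---is immediate. Definition~\ref{def:Naming}(2)\dash---the counter bound\dash---depends only on $\sigma$ and the (multi)set of projections, which are the same, so the inequality transfers directly from $\naming$ to $\move{\naming}{i}$.

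The main obstacle I expect is the case analysis in~(1) that handles $t_{i-1}.\fromstate=t_i.\fromstate$, where one has to argue that two distinct threads simultaneously occupy that location; this is where Definition~\ref{def:Naming}(2) and Proposition~\ref{prop:leftmove} really pull their weight, and getting the self-loop subcases right (where $\fromstate=\tostate$ and the counter is unchanged) requires a little bookkeeping but no new ideas.
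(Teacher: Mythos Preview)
Your plan matches the paper's proof closely: both use Proposition~\ref{prop:leftmove} for applicability of $t_i$, a case analysis on how $t_i$ affects the counter at $t_{i-1}.\fromstate$ for applicability of $t_{i-1}$, and the observation that projections (hence threads and the decomposition conditions) are unchanged; the paper derives~(3) from~(2) via Proposition~\ref{prop:threadcounts} rather than by direct computation, but your direct calculation is equally valid.

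One imprecision to fix in your case $t_{i-1}.\fromstate=t_i.\fromstate$: you write that ``$t_{i-1}$ and $t_i$ start two distinct threads at the same location'', but they need not be the \emph{first} transitions of their threads, and Definition~\ref{def:Naming}(2) bounds $\sigma.\counters[\ell]$, not $\tau_1(\sigma).\counters[\ell]$. The correct argument is that after $\tau_1$ each of the two threads $\naming(i-1),\naming(i)$ has its current position at~$\ell$ (either because the thread has not yet started and $\fromstate=\ell$, or because its last transition in $\tau_1$ has $\tostate=\ell$); combining Definition~\ref{def:Naming}(2) with Proposition~\ref{prop:threadcounts} then gives $\tau_1(\sigma).\counters[\ell]\ge 2$. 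The paper instead extracts $\ge 2$ from applicability of the original $\tau$ when $t_{i-1}$ is not a self-loop; your thread-counting route also covers the self-loop sub-case cleanly once stated precisely.
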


\newcommand{\proofmovingleft}{
\begin{proof}
(1) To prove this we have to show that (1a) $\tau[i]$ is applicable to
$\tau^{i-2}(\sigma)$, and that (1b) $\tau[i-1]$ is applicable to
$\tau^{i-2} \concat \tau[i] (\sigma)$. Point (1) then follows from
commutativity of addition and subtraction on the counters.
\begin{enumerate}
\item[1a]
Since $\tau$ is a steady schedule, then it suffices to show that
     $\tau^{i-2}(\sigma).\counters[\tau[i].\fromstate] \ge 1$, which
 follows from Proposition~\ref{prop:leftmove}.

\item[1b] If $\tau[i].\fromstate \ne \tau[i-1].\fromstate$, then
     $\tau^{i-2}\concat \tau[i](\sigma).\counters[\tau[i-1].\fromstate]
     \ge \tau^{i-2}(\sigma).\counters[\tau[i-1].\fromstate]$ and the
     statement follows from applicability of $\tau$ to $\sigma$.
Otherwise, from applicability of  $\tau$ to $\sigma$ for the case
  $\tau[i-1].\fromstate = \tau[i-1].\tostate$ it follows that
     $\tau^{i-2}(\sigma).\counters[\tau[i-1].\fromstate] \ge 1$, and for
 $\tau[i-1].\fromstate \ne \tau[i-1].\tostate$
it follows that
 $\tau^{i-2}(\sigma).\counters[\tau[i-1].\fromstate] \ge 2$. In both cases the
     statement follows.
\end{enumerate}

(2)
We firstly show that every transition from $\proj{\move{\tau}{i}}{\move{\naming}{i},j}$
    is also in $\proj{\tau}{\naming,j}$.
Let $\move{\tau}{i}[k]$ be a transition from $\proj{\move{\tau}{i}}{\move{\naming}{i},j}$.
Thus, $\move{\naming}{i}(k)=j$. We want to show that $\move{\tau}{i}[k]$
    is also in $\proj{\tau}{\naming,j}$.
We consider three cases:
\begin{itemize}
 \item If $k=i-1$, then $\move{\tau}{i}[k]=\move{\tau}{i}[i-1]=\tau[i]$ and 
	  $\naming(i)=\move{\naming}{i}(i-1)=\move{\naming}{i}(k)=j$.
       As $\naming(i)=j$, then $\tau[i]$ belongs to $\proj{\tau}{\naming,j}$.
       Now $\tau[i]=\move{\tau}{i}[k]$ gives the required.
 \item If $k=i$, then $\move{\tau}{i}[k]=\move{\tau}{i}[i]=\tau[i-1]$ and 
	  $\naming(i-1)=\move{\naming}{i}(i)=\move{\naming}{i}(k)=j$.
       As $\naming(i-1)=j$, then $\tau[i-1]=\move{\tau}{i}[k]$ belongs to $\proj{\tau}{\naming,j}$.
 \item If $k\ne i-1$ and $k\ne i$, then by Definition~\ref{def:move} we have 
	  $\move{\tau}{i}[k]=\tau[k]$ and 
	  $\naming(k)=\move{\naming}{i}(k)=j$.
       Since $\naming(k)=j$, then $\tau[k]$ is in $\proj{\tau}{\naming,j}$.
	  Now $\tau[k]=\move{\tau}{i}[k]$ gives the required.
\end{itemize}

Proving that every transition from $\proj{\tau}{\naming,j}$
    is also in $\proj{\move{\tau}{i}}{\move{\naming}{i},j}$,
    is analogous to the previous direction.

Now we know that for every $j\in\threads(\sigma,\tau,\naming)$, 
    schedules $\proj{\move{\tau}{i}}{\move{\naming}{i},j}$ and
    $\proj{\tau}{\naming,j}$ contain same transitions.
The order of these transitions remains the same, since the only two transitions
    with different positions in $\tau$ and $\move{\tau}{i}$ are
    adjacent transitions from two different threads.
    
Now, knowing that $\naming$ is a decomposition of $\gst$ and $\tau$,
    and that all threads remain the same,
    we conclude that $\move{\naming}{i}$ is a decomposition of 
    $\sigma$ and $\move{\tau}{i}$.

(3) Follows from the step~(2) and Proposition~\ref{prop:threadcounts}.
\end{proof}
}
\proofintext{\proofmovingleft}
\proofmovingleft

\begin{proposition}\label{prop:leftmovefromto}
 Let $\sigma$ be a configuration, let $\tau$ be a steady
     schedule applicable to $\sigma$, and let $\naming$ be a decomposition
     of $\sigma$ and $\tau$.
 If for $n,m\in\Natural$ holds that $1\le n\le  m \le |\tau|$ and
     $\naming(m) \ne \naming(i)$, for every $i$ with $n\le i<m$, then 
\begin{enumerate}
\item $\movemove{\tau}{n}{m}$ is a steady schedule applicable to $\sigma$,
\item $\movemove{\naming}{n}{m}$ is a decomposition of $\sigma$ and $\movemove{\tau}{n}{m}$,
      and $\proj{\movemove{\tau}{n}{m}}{\movemove{\naming}{n}{m},j}=\proj{\tau}{\naming,j}$,
      for every $j\in\threads(\sigma,\tau,\naming)$,
\item $\movemove{\tau}{n}{m}(\gst)=\tau(\gst)$.
\end{enumerate}
\end{proposition}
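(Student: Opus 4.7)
The plan is to prove the proposition by induction on the quantity $m-n$, using Proposition~\ref{prop:movingleft} as the single-step building block. The base case $m-n=0$ is trivial: by Definition~\ref{def:move}, $\movemove{\tau}{n}{m}=\tau$ and $\movemove{\naming}{n}{m}=\naming$, so the three conclusions hold by hypothesis.

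For the inductive step, suppose the claim holds whenever the number of swaps is less than $m-n$. Because the hypothesis gives $\naming(m)\neq\naming(m-1)$ (take $i=m-1$ in the assumption), Proposition~\ref{prop:movingleft} applies at position~$m$: the schedule $\tau'=\move{\tau}{m}$ is a steady schedule applicable to $\gst$, the naming $\naming'=\move{\naming}{m}$ is a decomposition of $\gst$ and $\tau'$, for each thread index~$j$ we have $\proj{\tau'}{\naming',j}=\proj{\tau}{\naming,j}$, and $\tau'(\gst)=\tau(\gst)$. Then I would apply the induction hypothesis to $\tau'$, $\naming'$, and the indices $n$ and $m-1$, which corresponds to the remaining $m-1-n$ swaps that Definition~\ref{def:move} prescribes, namely $\movemove{\tau}{n}{m}=\movemove{\tau'}{n}{m-1}$ and likewise for the naming.

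Before invoking the induction hypothesis I have to verify its precondition $\naming'(m-1)\neq\naming'(i)$ for every $i$ with $n\le i<m-1$. By the definition of $\move{\cdot}{m}$, we have $\naming'(m-1)=\naming(m)$ and $\naming'(i)=\naming(i)$ for $i<m-1$, so the condition reduces to the original assumption $\naming(m)\neq\naming(i)$ for those~$i$. The induction hypothesis then yields that $\movemove{\tau'}{n}{m-1}$ is a steady schedule applicable to $\gst$, that $\movemove{\naming'}{n}{m-1}$ is a decomposition of $\gst$ and $\movemove{\tau'}{n}{m-1}$, that the projections onto each thread are unchanged from those in $\tau'$ (hence, transitively, from those in $\tau$), and that $\movemove{\tau'}{n}{m-1}(\gst)=\tau'(\gst)=\tau(\gst)$, which is exactly what is required.

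I do not anticipate a real obstacle here; the only delicate point is the bookkeeping of indices, namely checking that the precondition $\naming(m)\neq\naming(i)$ for $n\le i<m$ correctly ``shifts'' along the induction so that each individual swap invoked is legal under Proposition~\ref{prop:movingleft}. Since all swaps only ever exchange the entry whose naming label is originally $\naming(m)$ with a neighbour whose label differs from it, this bookkeeping goes through cleanly.
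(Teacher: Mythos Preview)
Your proposal is correct and follows exactly the approach the paper takes: the paper's proof is simply the one-line remark that the statement is obtained by applying Proposition~\ref{prop:movingleft} inductively $m-n$ times (with the case $m=n$ trivial), and your write-up spells out precisely that induction, including the index bookkeeping that the paper leaves implicit.
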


\begin{proof}
 This statement is a consequence of 
    Proposition~\ref{prop:movingleft} applied inductively $m-n$ times,
    as Definition~\ref{def:move} suggests.
 In the case when $m=n$, the statement is trivially satisfied.
\end{proof}

\begin{proposition}\label{prop:movethreadtomiddle}
  Let $\sigma$ be a configuration, let $\tau$ be a steady
     schedule applicable to $\sigma$, and let $\naming$ be a decomposition
     of $\sigma$ and $\tau$.
  Fix an $i\in \threads(\sigma,\tau,\naming)$.
  Let us denote
    $\tau^*=\tau'\concat\proj{\tau}{\naming,i}\concat \tau''$, 
        such that 
    $\tau'$ is a possibly empty prefix of $\tau$ which contains
    no transitions from $\proj{\tau}{\naming,i}$, and
    $\tau'\concat\tau''=\newrest{\tau}{\naming}{i}$.
  Then 
  \begin{enumerate}
   \item $\tau^*$ is a steady schedule applicable to $\sigma$,
   \item there exists a decomposition $\naming^*$ of $\gst$ and $\tau^*$
      such that $\proj{\tau^*}{\naming^*,l}=\proj{\tau}{\naming,l}$, for every 
      $l\in \threads(\sigma,\tau,\naming)$.
   \item $\tau^*(\gst)=\tau(\gst)$.
  \end{enumerate}
     
\end{proposition}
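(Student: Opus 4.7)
The plan is to prove the proposition by iteratively applying Proposition~\ref{prop:leftmovefromto}, moving the transitions of thread~$i$ one by one to the positions immediately following~$\tau'$ while preserving their relative order. Let $k=|\proj{\tau}{\naming,i}|$, and let $p_1<p_2<\dots<p_k$ be the positions in $\tau$ of the transitions of thread~$i$. Since $\tau'$ contains no transitions from thread~$i$, we have $p_1 > |\tau'|$.

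For the base step I would apply Proposition~\ref{prop:leftmovefromto} with $n=|\tau'|+1$ and $m=p_1$: this is justified because by the definitions of $\tau'$ and $p_1$, no transition in positions $|\tau'|+1,\dots,p_1-1$ belongs to thread~$i$, hence $\naming(p_1)=i\ne \naming(l)$ for every $|\tau'|+1\le l<p_1$. The proposition yields a steady schedule applicable to $\sigma$, ending in $\tau(\sigma)$, together with a decomposition whose per-thread projections agree with those of $\naming$.

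Inductively, suppose that after $j-1$ applications the transitions originally at $p_1,\dots,p_{j-1}$ sit at positions $|\tau'|+1,\dots,|\tau'|+j-1$, while the transition originally at $p_j$ is still at position $p_j$ (leftward moves leave positions to the right of the moved transition unchanged). In the current schedule, positions $|\tau'|+j,\dots,p_j-1$ contain, in their original relative order, exactly the non-thread-$i$ transitions that in $\tau$ appeared at positions in $[|\tau'|+1,p_j-1]$; in particular, none of them belongs to thread~$i$. Hence Proposition~\ref{prop:leftmovefromto} applies again with $n=|\tau'|+j$ and $m=p_j$, preserving applicability, steadiness, the final configuration, and the per-thread projections. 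After $k$ such applications the thread-$i$ transitions occupy positions $|\tau'|+1,\dots,|\tau'|+k$ in their original order, and the remaining transitions occupy positions $1,\dots,|\tau'|$ and $|\tau'|+k+1,\dots,|\tau|$ while retaining their relative order from $\tau$, so they form $\tau'\cdot\tau''=\newrest{\tau}{\naming}{i}$. The resulting schedule is therefore exactly $\tau^*$. Points~(1) and~(3) follow from the conclusions of each invocation of Proposition~\ref{prop:leftmovefromto}, and point~(2) follows by composing the decompositions produced at each step, each of which preserves all thread projections.

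The main obstacle is the positional bookkeeping needed to verify the hypothesis $\naming(m)\ne\naming(l)$ for $n\le l<m$ at each iteration, since earlier moves shift positions on the left while leaving those on the right intact. This can be made rigorous with an explicit invariant: after iteration $j$, the positions of the not-yet-moved thread-$i$ transitions are exactly $p_{j+1},\dots,p_k$, and the transitions strictly between $|\tau'|+j$ and $p_{j+1}$ are all non-thread-$i$. This invariant is preserved by the move step (the proof is immediate from the description of $\movemove{\cdot}{n}{m}$ in Definition~\ref{def:move}) and directly supplies the hypothesis needed to invoke Proposition~\ref{prop:leftmovefromto} at the next iteration.
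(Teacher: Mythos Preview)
Your proposal is correct and follows essentially the same approach as the paper: both proofs enumerate the positions $p_1<\dots<p_k$ (the paper writes $n_1<\dots<n_k$) of the thread-$i$ transitions, set $s=|\tau'|$, and then repeatedly apply Proposition~\ref{prop:leftmovefromto} to move the $j$th thread-$i$ transition to position $s+j$, proving by induction on~$j$ that applicability, steadiness, the final configuration, and all per-thread projections are preserved. Your explicit positional invariant (that positions to the right of the moved element are unchanged, so the remaining thread-$i$ transitions still sit at $p_{j+1},\dots,p_k$) is exactly the bookkeeping the paper sketches more informally.
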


\begin{proof}
 Let us firstly enumerate all transitions from $\proj{\tau}{\naming,i}$,
    for example, let $\proj{\tau}{\naming,i}=t_{n_1},t_{n_2},\ldots,t_{n_k}$,
    for $1\le n_1 <n_2<\cdots <n_k\le |\tau|$.
 Thus, $\tau'=t_1,\ldots,t_{s}$, for $0\le s < n_1$.
 The idea is that we move transitions from $\proj{\tau}{\naming,i}$, one by one, 
    to the left, namely $t_{n_1}$ to the place $(s+1)$ in $\tau$,
    then $t_{n_2}$ to the place $s+2$, and so on,
    by repeatedly applying Proposition~\ref{prop:movingleft}, that 
    preserves the required properties.
 Formally, $\tau^*=\movemove{(\ldots(\movemove{(\movemove{\tau}{s+1}{n_1})}{s+2}{n_2})\ldots)}{s+k}{ n_k }$.

 For every $j$ with $1\le j\le k$, we denote
 $$\tau_j=\movemove{(\ldots(\movemove{(\movemove{\tau}{1}{n_1})}{2}{n_2})\ldots)}{j}{n_j}
    \mbox{ and }$$
 $$\naming_j=\movemove{(\ldots(\movemove{(\movemove{\naming}{1}{n_1})}{2}{n_2})\ldots)}{j}{n_j}.$$
 We prove by induction that for every $j$, with $1\le j\le k$,
  it holds that:
    \begin{enumerate}
    \item[a)] $\tau_j$ is a steady schedule applicable to $\sigma$,
    \item[b)] $\naming_j$ is a decomposition of $\sigma$ and $\tau_j$, and
      $\proj{\tau_j}{\naming_j,l}=\proj{\tau}{\naming,l}$, for every 
      $l\in \threads(\sigma,\tau,\naming)$,
    \item[c)] $\tau_j(\gst)=\tau(\gst)$.
    \end{enumerate}
 
 If $j=1$, then $\tau_j=\movemove{\tau}{s+1}{n_1}$.
 Note that $\naming(n_1) \ne \naming(m)$, 
    for every $m$ with $s+1\le m<n_1$, since $t_{n_1}$ is the first
    transition in $\proj{\tau}{\naming,i}$, or in other words, the 
    smallest number mapped to $i$ by $\naming$.
  Now, as $1\le s+1\le n_1 \le |\tau|$, the required holds  
    by Proposition~\ref{prop:leftmovefromto}.
 
 Assume that the statement holds for $j$, and let us show that 
    then it holds for $j+1$ as well.
 Note that $\tau_{j+1}=\movemove{(\tau_j)}{(j+1)}{n_{(j+1)}}$.
 We show that we can apply Proposition~\ref{prop:leftmovefromto} to
    $\gst$, $\tau_j$, $\naming_j$, $s+j+1$ and $n_{j+1}$.
 By induction hypothesis, $\tau_j$ is a steady schedule applicable to $\gst$,
    and $\naming_j$ is a decomposition of $\sigma$ and $\tau_j$.
 From the assumption that $1\le s+1\le n_1 <n_2<\cdots <n_k\le |\tau|$,
    follows that $1\le s+j+1\le n_{j+1} \le |\tau|$.
 By construction, $\tau_j$ has a form 
    $\tau'\cdot t_{n_1}\cdot t_{n_2}\cdot \ldots\cdot t_{n_j}\cdot
    \rho_1 \cdot t_{n_{j+1}}\cdot \rho_2$,
    where $\rho_1\cdot\rho_2=\tau''$.
 Note that no transition from $\rho_1$ is in $\proj{\tau}{\naming,i}$,
    which is, by induction hypothesis, same as 
    $\proj{\tau_j}{\naming_j,i}$.
 Thus, $\naming_j(n_{j+1})\ne \naming_j(m)$, for every $m$ with $s+j+1 < m\le n_{j+1}$.
 Now we can apply Proposition~\ref{prop:leftmovefromto}, and obtain the required.
\end{proof}

\begin{proposition}\label{prop:movethreadtobeginning}
  Let $\sigma$ be a configuration, let $\tau$ be a steady
     schedule applicable to $\sigma$, and let $\naming$ be a decomposition
     of $\sigma$ and $\tau$.
  If $i\in \threads(\sigma,\tau,\naming)$, and we denote
     $\tau^*=\proj{\tau}{\naming,i}\cdot\newrest{\tau}{\naming}{i}$, then 
  \begin{enumerate}
   \item $\tau^*$ is a steady schedule applicable to $\sigma$,
   \item there exists a decomposition $\naming^*$ of $\gst$ and $\tau^*$
      such that $\proj{\tau^*}{\naming^*,l}=\proj{\tau}{\naming,l}$,
      for every $l\in \threads(\sigma,\tau,\naming)$,
   \item $\tau^*(\gst)=\tau(\gst)$.
  \end{enumerate}
\end{proposition}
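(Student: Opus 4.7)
The plan is to derive this proposition as a direct instantiation of Proposition~\ref{prop:movethreadtomiddle}, which has already established the more general statement that a thread can be moved to any position in the schedule after an arbitrary prefix that contains no transitions of the thread. Specifically, I would choose the prefix $\tau'$ in the hypothesis of Proposition~\ref{prop:movethreadtomiddle} to be the empty schedule. This is a legal choice: the empty schedule is trivially a prefix of $\tau$, and it vacuously contains no transitions from $\proj{\tau}{\naming,i}$. With $\tau' = \varepsilon$, the decomposition $\tau' \cdot \tau'' = \newrest{\tau}{\naming}{i}$ forces $\tau'' = \newrest{\tau}{\naming}{i}$.

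Substituting these choices into the construction $\tau^* = \tau' \cdot \proj{\tau}{\naming,i} \cdot \tau''$ from Proposition~\ref{prop:movethreadtomiddle} yields exactly the schedule $\tau^* = \proj{\tau}{\naming,i} \cdot \newrest{\tau}{\naming}{i}$ considered here. Conclusions~(1), (2), and~(3) of the current proposition then correspond one-to-one with conclusions~(1), (2), and~(3) of Proposition~\ref{prop:movethreadtomiddle}: applicability of $\tau^*$ and that it is steady, existence of a decomposition $\naming^*$ agreeing with $\naming$ on all thread projections, and equality of the resulting configurations $\tau^*(\sigma) = \tau(\sigma)$.

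Since the entire work is delegated to the already-proved proposition, there is no hard step left. Thus the ``proof'' reduces to a single sentence noting the specialization. I would phrase it as: \emph{Apply Proposition~\ref{prop:movethreadtomiddle} with $\tau' = \varepsilon$ and $\tau'' = \newrest{\tau}{\naming}{i}$; the three conclusions are immediate from the corresponding conclusions of that proposition.} If one wished to make the argument fully self-contained, the only alternative would be to re-prove it inductively in the spirit of the proof of Proposition~\ref{prop:movethreadtomiddle}, moving the transitions $t_{n_1}, t_{n_2}, \dots, t_{n_k}$ of $\proj{\tau}{\naming,i}$ leftwards one by one via repeated application of Proposition~\ref{prop:leftmovefromto}, but this would merely duplicate existing work.
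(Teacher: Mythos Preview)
Your proposal is correct and matches the paper's own proof essentially verbatim: the paper also observes that this proposition is the special case of Proposition~\ref{prop:movethreadtomiddle} with~$\tau'$ chosen to be the empty schedule.
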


\begin{proof}
 This Proposition is a special case of Proposition~\ref{prop:movethreadtomiddle},
  when~$\tau'$ is the empty schedule.
\end{proof}

\begin{proposition}\label{prop:movethreadsmix}
  Let $\sigma$ be a configuration, let $\tau$ be a steady
     schedule applicable to $\sigma$, and let $\naming$ be a decomposition
     of $\sigma$ and $\tau$.
  Fix $i,j\in \threads(\sigma,\tau,\naming)$.
  If $\proj{\tau}{\naming,j}$ can be written as 
    $\proj{\tau}{\naming,j}^1\cdot\proj{\tau}{\naming,j}^2$,
    for some schedules $\proj{\tau}{\naming,j}^1$ and
    $\proj{\tau}{\naming,j}^2$,
    and if we denote
    $$\tau^*=\proj{\tau}{\naming,j}^1\cdot\proj{\tau}{\naming,i}\cdot 
     \proj{\tau}{\naming,j}^2\cdot \newrest{\tau}{\naming}{i,j},$$
     then the following holds:
     \begin{enumerate}
      \item $\tau^*$ is a steady schedule applicable to $\sigma$,
      \item there exists a decomposition $\naming^*$ of 
     $\gst$ and $\tau^*$ such that $\proj{\tau^*}{\naming^*,i}=\proj{\tau}{\naming,i}$
     and $\proj{\tau^*}{\naming^*,j}=\proj{\tau}{\naming,j}$, and
      \item $\tau^*(\gst)=\tau(\gst)$.
     \end{enumerate}
\end{proposition}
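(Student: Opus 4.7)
The plan is to reduce this statement to two successive applications of Proposition~\ref{prop:movethreadtobeginning} and Proposition~\ref{prop:movethreadtomiddle}. We may assume $i \ne j$ (otherwise the claim degenerates and the target schedule contains duplicated transitions). The idea is: first pull all of thread~$j$ to the very front of the schedule, and then, operating on that intermediate schedule, slide thread~$i$ into the gap between $\proj{\tau}{\naming,j}^1$ and $\proj{\tau}{\naming,j}^2$.

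First, I would apply Proposition~\ref{prop:movethreadtobeginning} to $\sigma$, $\tau$, $\naming$, and~$j$. This yields a steady schedule $\tau_1 = \proj{\tau}{\naming,j}\cdot\newrest{\tau}{\naming}{j}$ applicable to~$\sigma$, with $\tau_1(\sigma) = \tau(\sigma)$, and a decomposition $\naming_1$ of $\sigma$ and $\tau_1$ such that $\proj{\tau_1}{\naming_1,l} = \proj{\tau}{\naming,l}$ for every $l \in \threads(\sigma,\tau,\naming)$; in particular, $\proj{\tau_1}{\naming_1,i} = \proj{\tau}{\naming,i}$ and $\proj{\tau_1}{\naming_1,j} = \proj{\tau}{\naming,j}$.

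Next, I would view $\tau_1$ as $\tau' \cdot \tau''$, where $\tau' = \proj{\tau}{\naming,j}^1$ and $\tau'' = \proj{\tau}{\naming,j}^2 \cdot \newrest{\tau}{\naming}{j}$. Since $i \ne j$, the prefix $\tau'$ consists only of transitions from thread~$j$ under $\naming_1$, so it contains no transitions from $\proj{\tau_1}{\naming_1,i}$. Therefore Proposition~\ref{prop:movethreadtomiddle} applies to $\sigma$, $\tau_1$, $\naming_1$, and~$i$ with this choice of split, producing a steady schedule $\tau^\star = \tau' \cdot \proj{\tau_1}{\naming_1,i} \cdot \tau''_{-i}$ that is applicable to $\sigma$ with $\tau^\star(\sigma) = \tau_1(\sigma) = \tau(\sigma)$, together with a decomposition $\naming^\star$ of $\sigma$ and $\tau^\star$ preserving each thread projection, i.e., $\proj{\tau^\star}{\naming^\star,l} = \proj{\tau_1}{\naming_1,l} = \proj{\tau}{\naming,l}$ for every $l$. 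Here $\tau''_{-i}$ denotes $\tau''$ with the transitions of thread~$i$ removed (equivalently, $\proj{\tau_1}{\naming_1, \Natural\setminus\{i\}}$ restricted to the positions of $\tau''$).

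To conclude, I would verify the purely combinatorial identity $\tau''_{-i} = \proj{\tau}{\naming,j}^2 \cdot \newrest{\tau}{\naming}{i,j}$: the block $\proj{\tau}{\naming,j}^2$ survives untouched because it is entirely thread~$j$ and $i \ne j$, while removing thread~$i$ from $\newrest{\tau}{\naming}{j} = \proj{\tau}{\naming,\Natural\setminus\{j\}}$ yields precisely $\proj{\tau}{\naming,\Natural\setminus\{i,j\}} = \newrest{\tau}{\naming}{i,j}$. Substituting this identity into $\tau^\star$ gives the schedule $\tau^*$ from the statement and closes the argument. The main obstacle is bookkeeping: one must carefully track how the naming functions $\naming_1$ and $\naming^\star$ inherit the thread projections from $\naming$ through two successive rearrangements, and check that the hypothesis of Proposition~\ref{prop:movethreadtomiddle} (prefix containing no transitions of the moved thread) is satisfied in the intermediate schedule. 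Once the thread-preservation property of both earlier propositions is used, everything else is routine.
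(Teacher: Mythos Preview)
Your proposal is correct and follows essentially the same route as the paper: first apply Proposition~\ref{prop:movethreadtobeginning} to pull thread~$j$ to the front, then apply Proposition~\ref{prop:movethreadtomiddle} with the prefix~$\proj{\tau}{\naming,j}^1$ to insert thread~$i$ at the desired position. The paper's proof is more terse and does not spell out the combinatorial identity you verify at the end, but the two arguments are the same.
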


\begin{proof}
 Firstly, we apply Proposition~\ref{prop:movethreadtobeginning} for 
 configuration $\gst$, schedule $\tau$, decomposition
  $\naming$ and $j\in \threads(\sigma,\tau,\naming)$.
 Then we obtain schedule $\rho=\proj{\tau}{\naming,j}\cdot\newrest{\tau}{\naming}{j}$
  and decomposition $\naming_\rho$ of~$\gst$ and~$\rho$.
 Then we apply Proposition~\ref{prop:movethreadtomiddle} for 
  configuration $\gst$, schedule $\rho$,
  decomposition $\naming_\rho$, $i\in \threads(\sigma,\tau,\naming)$, 
  and a prefix $\proj{\tau}{\naming,j}^1$ of $\rho$ (as $\tau'$ from the proposition).
\end{proof}

\makeatletter{}

Until now, we discussed when transitions can be moved.
In~\cite{KVW15:CAV}, the goal of this movings is to transform a
     schedule into a so-called \emph{representative schedule} that reaches
     the same final configuration (cf.~Example~\ref{ex:swap}).
These representative schedules are highly accelerated, and their
     length can be bounded.
After some preliminary definitions, we recall how these representative
     schedules are constructed, and then give
     Proposition~\ref{prop:representativepath} that establishes that
     representatives maintain an important trace property.

Given a threshold
    automaton~$(\local, \initlocal, \globset, \paraset, \ruleset,\ResCond)$, we
    define the \emph{precedence relation} $\relf$: for a pair of rules $r_1,
    r_2 \in \ruleset$, it holds that {$r_1 \relf r_2$} if and only if
    $r_1.\tostate = r_2.\fromstate$. 
We denote by $\relftrans$ the transitive closure of~$\relf$.
Further, we say that $r_1 \erelf r_2$, if $r_1 \relftrans r_2 \, \wedge
    \, r_2 \relftrans r_1$, or $r_1 = r_2$.
The relation $\erelf$ defines equivalence classes of
     rules.
For a given set of rules~$\ruleset$ let $\ruleclass$ be the set of
     equivalence classes defined by~$\erelf$.
We denote by $\classof{r}$ the equivalence class of rule~$r$.
For two classes $c_1$ and $c_2$ from $\ruleclass$ we write $c_1 \relc
     c_2$ iff there are two rules $r_1$ and $r_2$ in~$\ruleset$
     satisfying $\classof{r_1}=c_1$ and $\classof{r_2}=c_2$ and $r_1
     \relftrans r_2$ and $r_1 \nerelf r_2$.
As the relation~$\relc$ is a strict partial order, there are linear
     extensions of~$\relc$.
We denote by $\linrelc$ a linear
     extension of~$\relc$.

\paragraph{Construction of Representative Schedule.} \label{cons:srep} 

Given a configuration~$\gst$, and a steady schedule $\tau$ applicable
     to $\gst$, $\xrep{\gst}{\tau}$ is generated from $\tau$ by
     repeatedly swapping two neighboring transitions $t_1$ and $t_2$
     if $[t_2]  \linrelc [t_1]$ until no more such transitions exist.
Then all neighboring transitions that belong to the same rule are
     merged into a single (possibly accelerated) transition.

Then the transitions belonging to loops are replaced by a quite
     involved construction in~\cite[Prop.~5]{KVW15:CAV}.
As discussed in Section~\ref{sec:TA}, in this paper we consider the
     restriction that loops are simple (there may be self-loops).
Hence, we can have a simplified construction: If for some~$j$, the
     rules $r_1, r_2, \dots, r_j$ build a loop, then all the
     transitions in the subschedule $\tau_{\mathrm{loop}}$ that belong to the loop
     are replaced by the schedule that is constructed as follows: 
\begin{enumerate}
\item let $\sigma_{\mathrm{end}} = \tau_{\mathrm{loop}}(\sigma_0)$

\item let $\tau'=$ 
     $(r_1,f_1),$ $(r_2,f_2), \dots ,$ $(r_j,f_j),$ $(r_1,f_{j+1}),$
     $(r_2,f_{j+2}),$ $\dots$, $(r_j,f_{2j})$ be the schedule that is 
obtained by
\begin{itemize}
\item If $r_1, \dots, r_j$ appear in $\tau_{\mathrm{loop}}$:
 inductively assigning values to the
    acceleration
     factors $f_i$, for $1 \le i \le 2j$ as follows:
\begin{itemize}
\item for $1 \le i \le j$: \\
 $f_i = \sigma_{i-1}.\counters[r_{i}.\fromstate] -  \min
 (\sigma_0.\counters[r_{i}.\fromstate],
\sigma_{\mathrm{end}}.\counters[r_{i}.\fromstate])$
 and for $t_i = (r_i,f_i)$, we get $\sigma_i =
 t_i(\sigma_{i-1})$ 

\item for $j+1 \le i \le 2j$: \\
 $f_i =  \sigma_{i-1}.\counters[r_{i-j}.\fromstate]
 - \sigma_{\mathrm{end}}.\counters[r_{i-j}.\fromstate]$ and\\
 for $t_i = (r_i,f_i)$, we obtain $\sigma_i =
 t_i(\sigma_{i-1})$ 
\end{itemize}
\item otherwise, that is, if some rules are missing in the schedule, then we set their
acceleration factors to zero. Note that due to the missing rules, the
loop falls apart into several independent chains. Each of this chains
is a subschedule of~$\tau'$, we just have to sum up the acceleration
factors for the present rules. Formally, we proceed in two steps: First,
 if $r_i$ is not present in  $\tau_{\mathrm{loop}}$, and for all
$k<i$, $r_k$  is  present in  $\tau_{\mathrm{loop}}$ then $f_\ell =
0$, for all $\ell$ that satisfy $\ell \le i$ or $\ell \ge
j+i$. Second, for $\ell$ with $i<\ell \le j$, the factor $f_\ell$ is the sum of the
acceleration factors of transitions in $\tau_{\mathrm{loop}}$ with the rule $r_\ell$. For  $\ell$ with
$j < \ell < i+j$, $f_\ell$ is the sum of the acceleration factors of transitions in $\tau_{\mathrm{loop}}$  with the rule $r_{\ell-j}$.

\end{itemize}

\item $\tau''$ is obtained from $\tau'$ by removing all transitions with zero
 acceleration factors 
\item we replace $\tau_{\mathrm{loop}}$ with $\tau''$.
\end{enumerate}

\newcommand{\propLarryBird}{
Let $\tau_{\mathrm{loop}}$ be a schedule applicable to $\sigma_0$ that
consists of transitions
whose rules all belong to the same loop. For all $\ell\in\local$, if
$\sigma_0.\counters[\ell]>0$ and
$\tau_{\mathrm{loop}}(\sigma_0).\counters[\ell]>0$, 
then
$\setconf{\sigma_0}{{\sr{bla}{\sigma_0}{\tau_{\mathrm{loop}}}}} \models \counters[\ell]>0$.
}

\begin{proposition}\label{prop:LarryBird}
\propLarryBird
\end{proposition}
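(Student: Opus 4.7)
The plan is to perform a case distinction on whether the location $\ell$ lies on the simple loop formed by the rules $r_1,\dots,r_j$. The central observation is that the construction of $\xrep{\sigma_0}{\tau_{\mathrm{loop}}}$ is engineered precisely so that, during the first pass, after the ``outgoing'' rule at each loop-location has fired, the counter at that location is exactly $\min(\sigma_0.\counters[\ell],\sigma_{\mathrm{end}}.\counters[\ell])$ and never smaller, and during the second pass it is $\sigma_{\mathrm{end}}.\counters[\ell]$ and never smaller. Both quantities are strictly positive under the hypothesis, which is exactly what the invariant demands.

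If $\ell$ is not on the loop, then no rule $r_i$ has $\ell\in\{r_i.\fromstate,r_i.\tostate\}$, so no transition of $\xrep{\sigma_0}{\tau_{\mathrm{loop}}}$ modifies $\counters[\ell]$. The invariant $\counters[\ell]>0$ then holds trivially along the whole schedule from $\sigma_0.\counters[\ell]>0$.

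If $\ell$ lies on the loop, the simple-cycle assumption of Section~\ref{sec:TA} tells us that $\ell$ is the source of exactly one rule $r_k$ and the target of exactly one rule $r_{k-1}$ (with indices taken cyclically). I then trace the representative schedule. In the first pass, rules $r_1,\dots,r_{k-1}$ either do not touch $\counters[\ell]$ or, in the case of $r_{k-1}$, only increase it by $f_{k-1}\ge 0$; hence $\counters[\ell]$ stays at least $\sigma_0.\counters[\ell]$ up to the moment $r_k$ fires. When $r_k$ fires with the factor $f_k = \sigma_{k-1}.\counters[\ell] - \min(\sigma_0.\counters[\ell],\sigma_{\mathrm{end}}.\counters[\ell])$, the counter drops to $\min(\sigma_0.\counters[\ell],\sigma_{\mathrm{end}}.\counters[\ell])$, which is positive by hypothesis; the remaining rules $r_{k+1},\dots,r_j$ of the first pass do not touch $\ell$. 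The second pass is analogous, with the lower bound $\sigma_{\mathrm{end}}.\counters[\ell]>0$ replacing $\min(\sigma_0.\counters[\ell],\sigma_{\mathrm{end}}.\counters[\ell])$. A side check establishes that all $f_i$ are non-negative and all transitions are enabled along the way: non-negativity holds because the preceding loop rule can only add to the relevant counter before it is drained, and enabledness holds because steady loop transitions do not modify shared variables under our assumption on threshold automata, so guards remain unchanged.

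The main obstacle is the ``otherwise'' branch of the construction, where some rules of the loop are absent from $\tau_{\mathrm{loop}}$, several $f_i$ are set to zero, and the loop decomposes into disjoint chains along which the surviving factors are sums of the original acceleration factors. Here I plan to argue that $\ell$ belongs to exactly one such chain and to analyze that chain alone. If the outgoing rule $r_k$ of $\ell$ is missing, then $\counters[\ell]$ is never decreased in either pass, and the claim follows at once from $\sigma_0.\counters[\ell]>0$. If $r_k$ is present but a different rule of the loop is missing, the chain containing $\ell$ behaves like a prefix of the full-loop analysis above, with the same accounting: only $r_{k-1}$ (if present in the same chain) increases $\counters[\ell]$ before $r_k$ fires, and the factor of $r_k$ again leaves $\min(\sigma_0.\counters[\ell],\sigma_{\mathrm{end}}.\counters[\ell])$. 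Combining both cases yields $\setconf{\sigma_0}{\xrep{\sigma_0}{\tau_{\mathrm{loop}}}}\models\counters[\ell]>0$, as required.
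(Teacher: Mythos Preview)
The paper does not actually supply a proof of Proposition~\ref{prop:LarryBird}; the proposition is stated immediately after the explicit loop-replacement construction and is then invoked as a black box in the proof of Proposition~\ref{prop:representativepath}. So there is nothing to compare against, and your argument has to stand on its own.

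It does, modulo two small imprecisions that do not affect the conclusion. First, in the second pass the counter at $\ell$ is not bounded below by $\sigma_{\mathrm{end}}.\counters[\ell]$ throughout: just before $r_k$ fires again, the value carried over from the first pass may still be the smaller quantity $\min(\sigma_0.\counters[\ell],\sigma_{\mathrm{end}}.\counters[\ell])$, and only after $r_k$ fires does it become $\sigma_{\mathrm{end}}.\counters[\ell]$. Since the smaller quantity is already positive by hypothesis, the invariant survives. Second, in the ``otherwise'' branch the factors are defined as total occurrence counts, not via the $\min$-formula; consequently, after $r_k$ fires the counter equals $\sigma_0.\counters[\ell]+F_{k-1}-F_k=\sigma_{\mathrm{end}}.\counters[\ell]$ rather than the minimum. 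The crucial structural fact you rely on---that in the cyclic ordering $r_{i+1},\dots,r_j,r_1,\dots,r_{i-1}$ the incoming rule $r_{k-1}$ (if present) always precedes the outgoing rule $r_k$---is correct, and that is what guarantees $\counters[\ell]\ge\sigma_0.\counters[\ell]>0$ before $r_k$ fires and $\counters[\ell]=\sigma_{\mathrm{end}}.\counters[\ell]>0$ afterwards. With these two wording fixes your case analysis is complete and your direct verification from the construction is a sound proof the paper omits.
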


In this way, $\xrep{\gst}{\tau}$ contains a subset of the rules of $\tau$ but
     ordered according to the linear extension $\linrelc$ of the
     control flow of the automaton. 
Thus, from the above construction we directly obtain:

\newcommand{\propMichaelJordan}{
Let~$\gst$ be a configuration and
    let~$\tau$ be a steady schedule applicable to~$\gst$. 
The rules contained in transitions of $\sr{\Ctx}{\gst}{\tau}$ are a
    subset of the 
   rules contained in transitions of $\tau$.}

\begin{proposition}\label{prop:MichaelJordan}
\propMichaelJordan
\end{proposition}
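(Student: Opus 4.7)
The plan is to follow the four-step construction of $\xrep{\gst}{\tau}$ given in the ``Construction of Representative Schedule'' paragraph and argue that each step preserves the invariant ``every rule appearing in the current schedule already appeared in $\tau$.'' Since the composite construction is the composition of these steps, the proposition follows.

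First I would handle the swapping step: if $t_1, t_2$ are adjacent transitions with $[t_2] \linrelc [t_1]$ and we swap them, then the \emph{multiset} of rules used by the schedule is clearly unchanged, so any finite number of such swaps preserves the set of rules. Second, the merging step replaces two neighboring transitions with the same rule $r$ (and factors $f_1, f_2$) by a single accelerated transition with rule $r$ and factor $f_1+f_2$; the set of rules used is either preserved or strictly reduced, and in either case remains a subset of the rules used in $\tau$.

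The main obstacle, and the only nontrivial case, is the third step: the loop-replacement construction. Here I would argue case by case. In the ``all rules present'' case, the replacement $\tau''$ uses rules from $\{r_1, \dots, r_j\}$, and by the case hypothesis each of these rules already appears in $\tau_{\mathrm{loop}}$, hence in $\tau$. In the ``some rules missing'' case, I would inspect the definition of the acceleration factors: for every index $\ell$ such that $r_\ell$ (or $r_{\ell-j}$) does not appear in $\tau_{\mathrm{loop}}$, the factor $f_\ell$ is defined as an empty sum and therefore equals $0$; the subsequent step~(3) of the construction removes precisely those transitions whose factor is $0$. Consequently, every transition surviving in $\tau''$ carries a rule that occurs in $\tau_{\mathrm{loop}}$, and since $\tau_{\mathrm{loop}}$ is a subschedule of $\tau$, the rule occurs in $\tau$.

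Finally I would observe that the overall construction replaces each loop subschedule of the (already swapped-and-merged) schedule by such a $\tau''$, and leaves all other transitions untouched. Combining the three preservation arguments above by transitivity establishes the proposition. The technical care needed is only bookkeeping in the missing-rules sub-case of the loop replacement; the rest is immediate from the definitions.
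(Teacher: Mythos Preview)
Your proposal is correct and is essentially what the paper does: the paper simply asserts that the proposition follows ``directly from the above construction'' without spelling out the steps, and your argument is precisely the detailed unpacking of that claim (swapping preserves the rule multiset, merging can only shrink it, and in the loop-replacement step every surviving transition carries a rule whose factor is a nonempty sum over occurrences in $\tau_{\mathrm{loop}}$). There is no substantive difference in approach; you have simply made explicit the bookkeeping that the paper leaves implicit.
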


From Proposition~\ref{prop:srep-ex}, we know that we can replace a
     schedule by its representative, and maintain the same final
     state.
In the following propositions, we show that the representative schedule
     also maintains non-zero counters.

\newcommand{\proprepresentativepath}{
Let~$\gst$ be a configuration, and 
    let~$\tau$ be a steady schedule applicable to~$\gst$.
For every $\ell\in\local$, it holds that $\setconf{\gst}{\tau}\models\counters[\ell]> 0$
    implies
    $\setconf{\gst}{\srgen}\models\counters[\ell]> 0.$}

\begin{proposition}\label{prop:representativepath}
\proprepresentativepath
\end{proposition}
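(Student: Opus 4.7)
The plan is to exploit the structure of~$\srgen$, constructed on page~\pageref{cons:srep} by sorting rules according to a linear extension $\linrelc$ of~$\relc$ and applying a dedicated construction to loop classes. By hypothesis $\setconf{\gst}{\tau}\models\counters[\ell]>0$, taking the empty prefix and $\tau$ itself as prefixes yields $\gst.\counters[\ell]>0$ and $\tau(\gst).\counters[\ell]>0$. Proposition~\ref{prop:srep-ex} guarantees that $\srgen$ is applicable to~$\gst$ with $\srgen(\gst)=\tau(\gst)$, so the endpoint values of $\counters[\ell]$ along~$\srgen$ are fixed and positive. I will split the argument according to whether $\ell$ lies on a cycle of the control flow.

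If $\ell$ lies on no cycle, I will show that along $\srgen$ every transition whose rule $r$ satisfies $r.\tostate=\ell$ occurs strictly before every transition whose rule $r'$ satisfies $r'.\fromstate=\ell$. Indeed, for such $r,r'$ with $r.\tostate=\ell=r'.\fromstate$ we have $r\relf r'$; they cannot be $\erelf$-equivalent, as that would produce a cycle through~$\ell$, contradicting the case assumption. Hence $\classof{r}\relc\classof{r'}$, and $\linrelc$, being a linear extension of $\relc$, preserves this strict order. Consequently, along $\srgen$ the value of $\counters[\ell]$ is non-decreasing up to the last $\ell$-entering transition and non-increasing thereafter, while transitions whose rules have neither $\fromstate$ nor $\tostate$ equal to~$\ell$ leave it unchanged. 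In the non-decreasing phase, $\counters[\ell]\ge\gst.\counters[\ell]>0$; in the non-increasing phase, $\counters[\ell]\ge\tau(\gst).\counters[\ell]>0$ since the value ends at $\tau(\gst).\counters[\ell]>0$.

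If $\ell$ lies on a cycle, then, under the standing assumption that all cycles of the threshold automaton are simple, every rule with $r.\fromstate=\ell$ or $r.\tostate=\ell$ belongs to the single $\erelf$-class of that unique cycle. The representative handles this class via the dedicated loop construction. Before and after the block of $\srgen$ corresponding to this class, no rule affects $\counters[\ell]$, so the counter value entering the block equals $\gst.\counters[\ell]>0$ and the counter value leaving the block equals $\tau(\gst).\counters[\ell]>0$. Proposition~\ref{prop:LarryBird} then applies directly to the loop subschedule to yield $\counters[\ell]>0$ throughout the block. The main obstacle I anticipate is the bookkeeping in the non-cyclic case: arguing that transitions from other classes interleaved among the relevant ones neither decrement $\counters[\ell]$ nor disturb the monotonicity across the split point. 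This reduces to the observation that $\linrelc$ totally orders all classes that contain an $\ell$-incoming or $\ell$-outgoing rule, so the split point is well defined, combined with the easy fact that only rules with $\fromstate=\ell$ or $\tostate=\ell$ can modify $\counters[\ell]$.
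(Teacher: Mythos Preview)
Your non-cyclic case is essentially correct, but the cyclic case rests on a false claim. You assert that if $\ell$ lies on a cycle then \emph{every} rule $r$ with $r.\fromstate=\ell$ or $r.\tostate=\ell$ belongs to the $\erelf$-class of that cycle. This already fails in Figure~\ref{fig:stunningexample}: location $\ell_0$ lies on a cycle (the self-loop $r_6$), yet rule $r_2$ has $r_2.\fromstate=\ell_0$ and is \emph{not} $\erelf$-equivalent to $r_6$, since there is no path from $\ell_2$ back to $\ell_0$. Consequently, rules outside the loop class can modify $\counters[\ell]$ before and after the loop block, and your claim that the counter entering and leaving that block equals $\gst.\counters[\ell]$ and $\tau(\gst).\counters[\ell]$ is unfounded.

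The gap is patchable along the lines of your non-cyclic argument: any non-loop rule $r$ with $r.\tostate=\ell$ satisfies $\classof{r}\relc C$ (the loop class), and any non-loop rule $r'$ with $r'.\fromstate=\ell$ satisfies $C\relc\classof{r'}$; hence non-loop $\ell$-increments precede the loop block and non-loop $\ell$-decrements follow it in $\srgen$. This yields an entry value $\ge\gst.\counters[\ell]>0$ and an exit value $\ge\tau(\gst).\counters[\ell]>0$, which is what Proposition~\ref{prop:LarryBird} needs. But you did not write this, and as stated your cyclic case does not go through.

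For comparison, the paper avoids the case split on $\ell$ entirely. It argues swap-by-swap: if a single swap of adjacent $t_i,t_{i+1}$ drove $\counters[\ell]$ to zero at the new intermediate configuration, one would have $t_{i+1}.\fromstate=\ell=t_i.\tostate$, i.e., $t_i\relf t_{i+1}$, contradicting the swap condition $[t_{i+1}]\linrelc[t_i]$. Since every swap preserves the invariant, the fully sorted schedule still satisfies $\counters[\ell]>0$ everywhere, and then Proposition~\ref{prop:LarryBird} handles the loop-replacement step. This local argument is shorter and does not require reasoning about where $\ell$ sits in the control-flow graph.
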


\newcommand{\proofrepresentativepath}{
\begin{proof}
Schedule $\srgen$ is constructed by first swapping transitions and then
  reducing loops. 
We first show that swapping maintains $\counters[\ell]> 0$, and then
that reducing loops does so, too.

Consider the sub-path $\sigma_{i-1}, t_i, \sigma_{i},
     t_{i+1},\sigma_{i+1}$ of one schedule in the construction and
     $\sigma_{i-1}, t_{i+1}, \sigma'_{i}, t_{i},\sigma_{i+1}$ be the
     path obtained by swapping.
Assume by ways of contradiction that  $\sigma_{i-1}.\counters[\ell]>0$,
     $\sigma_{i}.\counters[\ell]>0$, and 
     $\sigma_{i+1}.\counters[\ell]>0$, but $\sigma'_{i}.\counters[\ell]
     = 0$.
As $\counters[\ell]$ reduces from $\sigma_{i-1}$ to $\sigma'_{i}$, we
     get $t_{i+1}.\fromstate = \ell$.
By similar reasoning  on $\sigma'_{i}$ and $\sigma_{i+1}$ we obtain
     $t_i.\tostate = \ell$.
It thus holds that $t_i \relf t_{i+1}$, which contradicts that these
     transitions are swapped in the construction of $\srgen$.

Let $\dots, \sigma, \tau, \sigma', \dots$ be the path before the
     loops are replaced and $\tau$ consist of all the transition
     belonging to one loop.
From the above paragraph we know that $\sigma.\counters[\ell]>0$ and
     $\sigma'.\counters[\ell]>0$.
We may thus apply Proposition~\ref{prop:LarryBird} and the proposition
     follows.
\end{proof}
}
\proofintext{\proofrepresentativepath}
\proofrepresentativepath

For threshold-guarded fault-tolerant algorithms, the restrictions we
     put on threshold automata are well justified.
In this paper we used the assumption that all the cycles in threshold
     automata are simple.
In fact this assumption is  a generalization of the TAs we
     found in our benchmarks.
The authors of~\cite{KVW15:CAV} did not make this assumption.
As a consequence, they have to explicitly treat contexts (the guards
     that currently evaluate to true), which lead to context-specific
     representative schedules.
Our restriction allows us to use only one way to construct simple
     representative schedules (cf.\ Section~\ref{cons:srep}).
In addition, with this restriction, we can easily proof
     Proposition~\ref{prop:MichaelJordan}, while this proposition not
     true under the assumptions in~\cite{KVW15:CAV}.
We conjecture that even under their assumption a proposition similar
     to our Proposition~\ref{prop:representativepath} can be proven,
     so that our results can be extended.
As our analysis already is quite involved, these restrictions allow us
     to concentrate on our central results without obfuscating the
     notation and theoretical results.
Still, from a theoretical viewpoint it might be interesting to lift
     the restrictions on loops.

We have now seen how to construct simple representative schedules.
In the following Sections~\ref{subsec:liveness1}
     to~\ref{subsec:liveness3}, we will see how we can construct
     representative schedules that maintain different forms of
     specifications.

\makeatletter{}

\subsection{Representative Schedules maintaining
 \boldmath $\bigvee_{i \in \critical}
    \counters[i] \ne 0$}\label{subsec:liveness1}

\begin{definition}[Types]\label{def:types}
Let $\sigma$ be a configuration, 
     $\tau$ be a schedule applicable to $\sigma$,
     $\vartheta=t_1,\ldots,t_n$ be a thread of $\sigma$ and $\tau$,
         $\firststate{\vartheta}=t_1.\fromstate$, 
    $\laststate{\vartheta}=t_n.\tostate$, 
    $\middlestate{\vartheta}=\{t_i.\tostate\colon 1\le i < n\}$, and $\critical \subseteq \local$.
We say that  $\vartheta$ is of $\critical$-type:
\begin{itemize}
    \item $\typeall$, 
        if $\{\firststate{\vartheta},\laststate{\vartheta}\}\cup \middlestate{\vartheta}
        \subseteq \critical$;
    \item $\typebeg$,
        if $\firststate{\vartheta}\in \critical $, 
        $\laststate{\vartheta}\not\in\critical $;
    \item $\typeend$,
        if $\firststate{\vartheta}\not\in \critical $, 
        $\laststate{\vartheta}\in\critical $;
    \item $\typemid$,
        if $\firststate{\vartheta}\not\in \critical  $, 
        $\laststate{\vartheta}\not\in\critical $,
        $\middlestate{\vartheta}\cap\critical \neq\emptyset$;
    \item $\typebegend$, 
        if $\firststate{\vartheta}\in\critical $,
        $\laststate{\vartheta}\in\critical $,
        $\middlestate{\vartheta} \nsubseteq\critical$
    \item $\typenot$,
        if $(\{\firststate{\vartheta}, \laststate{\vartheta}\} \cup 
        \middlestate{\vartheta})\cap\critical= \emptyset$.
\end{itemize}
\end{definition}

\begin{example}
 Let us consider the threshold automaton from Figure~\ref{fig:movingthread},
    and the subset of local states $\critical=\{\ell_2\}$.
 Schedule $(r_4,1)$ is of $\critical$-type~$\typebeg$,
 schedule $(r_6,1),(r_2,1)$ is of $\critical$-type~$\typeend$,
      and $(r_1,1),(r_4,1)$ is of $\critical$-type~$\typemid$.  
\end{example}

\newcommand{\proponeofatype}{Given a configuration $\gst$, a schedule $\tau$ applicable to $\sigma$, 
    and a subset of local states $\critical$, 
    every thread $\vartheta$ of $\gst$ and $\tau$ is of exactly one 
    $\critical$-type.
}

\begin{proposition}\label{prop:oneofatype}
\proponeofatype
\end{proposition}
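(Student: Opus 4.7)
The plan is a straightforward case analysis on the membership of $\firststate{\vartheta}$ and $\laststate{\vartheta}$ in $\critical$, with a further refinement for the two combinations where these two endpoints agree. Both existence (at least one type applies) and uniqueness (at most one applies) will follow from this case split.

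First I would observe that the six types are distinguished primarily by the pair of Booleans $(\firststate{\vartheta} \in \critical,\ \laststate{\vartheta} \in \critical)$. Specifically, $\typebeg$ corresponds uniquely to $(\mathit{true},\mathit{false})$ and $\typeend$ uniquely to $(\mathit{false},\mathit{true})$, while $\typeall$ and $\typebegend$ share $(\mathit{true},\mathit{true})$ and $\typemid$ and $\typenot$ share $(\mathit{false},\mathit{false})$. For the two pairs that share an endpoint pattern, the types are further distinguished by the behavior of $\middlestate{\vartheta}$: $\typeall$ requires $\middlestate{\vartheta} \subseteq \critical$ while $\typebegend$ requires $\middlestate{\vartheta} \not\subseteq \critical$; $\typemid$ requires $\middlestate{\vartheta} \cap \critical \neq \emptyset$ while $\typenot$ requires $\middlestate{\vartheta} \cap \critical = \emptyset$. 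These two pairs of conditions on $\middlestate{\vartheta}$ are negations of each other, hence mutually exclusive and exhaustive.

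Then I would carry out the four-way case analysis: if both endpoints lie in $\critical$, split on whether $\middlestate{\vartheta} \subseteq \critical$, which (using $\{\firststate{\vartheta},\laststate{\vartheta}\} \subseteq \critical$ in the positive case) yields exactly $\typeall$ or $\typebegend$; if $\firststate{\vartheta} \in \critical$ and $\laststate{\vartheta} \notin \critical$, only $\typebeg$ matches; symmetrically for $\typeend$; if neither endpoint is in $\critical$, split on whether $\middlestate{\vartheta} \cap \critical$ is empty, which yields exactly $\typenot$ or $\typemid$. This establishes existence. For uniqueness, since each of the four endpoint patterns is disjoint, and in the two ambiguous patterns the middle-state conditions partition the possibilities, no thread can fall into two types simultaneously.

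The proof is entirely elementary, so there is no substantive obstacle; the only care needed is to verify that the defining conditions of $\typeall$ and $\typebegend$ (respectively $\typemid$ and $\typenot$) really do form a partition of the case where both endpoints share their $\critical$-membership, which reduces to checking that $\middlestate{\vartheta} \subseteq \critical$ is the negation of $\middlestate{\vartheta} \not\subseteq \critical$, and that $\middlestate{\vartheta} \cap \critical \neq \emptyset$ is the negation of $\middlestate{\vartheta} \cap \critical = \emptyset$. Both are tautological, so the proposition follows.
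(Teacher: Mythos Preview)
Your proposal is correct and follows essentially the same approach as the paper's proof: a four-way case split on the membership of $\firststate{\vartheta}$ and $\laststate{\vartheta}$ in $\critical$, with a further split on $\middlestate{\vartheta}$ in the two ambiguous cases. Your treatment is slightly more explicit about uniqueness than the paper's, but the argument is the same.
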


\newcommand{\proofoneofatype}{
\begin{proof}
We consider an arbitrary thread $\vartheta$ of $\gst$ and $\tau$.
There are two possibilities for $\firststate{\vartheta}$, namely, 
    $\firststate{\vartheta}\in \critical$ or $\firststate{\vartheta}\not\in \critical$,
    and similarly for $\laststate{\vartheta}$, 
    $\laststate{\vartheta}\in \critical$ or $\laststate{\vartheta}\not\in \critical$. 
Combining these possibilities, we obtain four cases:
\begin{itemize}
\item Assume $\firststate{\vartheta}\in \critical$ and $\laststate{\vartheta}\in \critical$.
    If $\middlestate{\vartheta}\subseteq\critical$, then $\vartheta$ is of $\critical$-type $\typeall$.
    Otherwise, if $\middlestate{\vartheta}\nsubseteq\critical$, then $\vartheta$ is of $\critical$-type $\typebegend$.
\item If $\firststate{\vartheta}\in \critical$ and $\laststate{\vartheta}\not\in \critical$,
    then $\vartheta$ is of $\critical$-type $\typebeg$.
\item If $\firststate{\vartheta}\not\in \critical$ and $\laststate{\vartheta}\in \critical$,
   then $\vartheta$ is of $\critical$-type $\typeend$.
\item Finally, assume $\firststate{\vartheta}\not\in \critical$ and $\laststate{\vartheta}\not\in \critical$.
    If $\middlestate{\vartheta}\cap\critical\neq \emptyset$, then $\vartheta$ is of $\critical$-type $\typemid$.
    Otherwise, if $\middlestate{\vartheta}\cap\critical=\emptyset$, then $\vartheta$ is of $\critical$-type $\typenot$.
\end{itemize}
\end{proof}
}
\proofintext{\proofoneofatype}
\proofoneofatype

\newcommand{\propTypeA}{
Let $\sigma$ be a configuration, and let $\tau$ be a steady conventional
     schedule applicable to $\sigma$.
If there exists a decomposition $\naming$ of $\sigma$ and $\tau$ that
     satisfies $|\threads(\sigma,\tau,\naming)| = 1$ and $\tau$ is of
     $\critical$-type~$\typeall$, then $\xrep{\sigma}{\tau}$ is a thread of
     $\critical$-type $\typeall$.
}

\newcommand{\proofTypeA}{
\begin{proof}
By definition of a thread, the transitions in $\tau$ are ordered by
     the flow relation $\relf$.
Due to our restriction of loops and the construction of representative
schedules, $\xrep{\sigma}{\tau}$ does not contain rules that are not
contained in $\tau$.
Hence, no new intermediate states are added in the construction of
     $\xrep{\sigma}{\tau}$ which proves the proposition.
\end{proof}
}
\proofintext{\proofTypeA}

\begin{proposition}\label{prop:TypeA}
\propTypeA
\end{proposition}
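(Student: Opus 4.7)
The plan is to unpack the construction of $\xrep{\sigma}{\tau}$ step by step and verify that each step preserves two invariants of~$\tau$: (i)~the consecutive rules are connected by the flow relation~$\relf$, and (ii)~every source and target location of every transition lies in~$\critical$. The hypothesis $|\threads(\sigma,\tau,\naming)|=1$ together with Definition~\ref{def:thread} gives invariant~(i) for~$\tau$ itself: the rules of~$\tau$ form a single $\relf$-chain $r_1 \relf r_2 \relf \cdots \relf r_n$ with factor~$1$ each. The assumption that $\tau$ is of $\critical$-type~$\typeall$ immediately gives invariant~(ii) for~$\tau$.

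First I would handle the swapping phase. By the construction recalled in Section~\ref{cons:srep}, two neighbouring transitions $t_i, t_{i+1}$ of~$\tau$ are swapped only if $[t_{i+1}]\linrelc[t_i]$. Since $\linrelc$ is a linear extension of~$\relc$ and since $r_i\relf r_{i+1}$ inside a single thread, a swap can occur only when $[r_i]=[r_{i+1}]$, i.e., when $r_i$ and $r_{i+1}$ belong to the same equivalence class with respect to~$\erelf$. By our assumption on threshold automata, such an equivalence class corresponds to a cycle (or a self-loop). Hence the only reorderings happen \emph{inside} a single loop-class, and all the locations involved in that class are already among the from/to-states of~$\tau$, which by~(ii) lie in~$\critical$. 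Merging neighbouring same-rule transitions into one accelerated transition clearly preserves both invariants as well, since the from- and to-state of the merged transition are the same as those of the original transitions.

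Next I would handle the loop replacement phase. The special construction for a loop $r_1,\dots,r_j$ replaces the sub-schedule~$\tau_{\mathrm{loop}}$ by a new sub-schedule whose rules are drawn from $\{r_1,\dots,r_j\}$ only; this is exactly the content (and the simplifying consequence) of Proposition~\ref{prop:MichaelJordan}. Since those very rules already occur in $\tau$ and all their from/to-states lie in~$\critical$ by hypothesis~(ii), the replaced sub-schedule only visits locations in~$\critical$, and its consecutive rules are still connected by~$\relf$ (they form a cyclic traversal of the loop, possibly with some $f_i=0$ removed, but removal cannot introduce a broken $\relf$-link inside a simple loop where the remaining segments are independent chains as spelled out in the construction). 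So invariants~(i) and~(ii) propagate through this phase as well.

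Putting the two phases together, the resulting $\xrep{\sigma}{\tau}$ is a sequence of (possibly accelerated) transitions whose underlying rules still form an $\relf$-chain and whose source and target locations still belong to~$\critical$; in particular $\firststate{\xrep{\sigma}{\tau}}$, $\laststate{\xrep{\sigma}{\tau}}$ and every intermediate state lie in~$\critical$, so $\xrep{\sigma}{\tau}$ is of $\critical$-type~$\typeall$. The main obstacle I expect is the loop replacement step: one has to argue carefully that, even when some acceleration factors~$f_i$ are set to zero and those transitions dropped, the remaining transitions still form an $\relf$-connected sequence through locations of~$\critical$. Under our simple-cycle assumption this follows because each maximal ``present'' sub-arc of the loop is a path $r_{i_1}\relf r_{i_1+1}\relf\cdots\relf r_{i_2}$ whose endpoints are themselves from/to-states of rules already occurring in~$\tau$, and hence in~$\critical$.
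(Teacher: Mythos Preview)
Your approach is essentially the same as the paper's, only considerably more detailed. The paper's proof is three sentences: since $\tau$ is a single thread its rules are already $\relf$-ordered, and by the simple-cycle restriction the construction of $\xrep{\sigma}{\tau}$ introduces no rules that were not already in~$\tau$ (this is Proposition~\ref{prop:MichaelJordan}), hence no new intermediate states appear and the $\critical$-type~$\typeall$ property is inherited.

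One remark on your invariant~(i). Your observation about swapping is actually stronger than you state: if $r_i \relf r_{i+1}$ and $[r_i]\ne[r_{i+1}]$ then $[r_i]\relc[r_{i+1}]$, so $[r_i]\linrelc[r_{i+1}]$ and the swap condition $[r_{i+1}]\linrelc[r_i]$ fails; and if $[r_i]=[r_{i+1}]$ then again $[r_{i+1}]\linrelc[r_i]$ fails since $\linrelc$ is an order on \emph{classes}. So for a single thread the swapping phase is a no-op. On the other hand, your claim that invariant~(i) survives the loop-replacement phase is shakier than you suggest: when zero-factor transitions are dropped, consecutive remaining transitions need not be $\relf$-adjacent, and after merging you may obtain accelerated transitions with factor~$>1$, so $\xrep{\sigma}{\tau}$ need not be a thread in the strict sense of Definition~\ref{def:thread}. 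Fortunately this does not matter: the only downstream use (Proposition~\ref{prop:TypeA.acc}) needs just invariant~(ii), namely that every $\fromstate$ and $\tostate$ lies in~$\critical$, which your argument (and the paper's) establishes cleanly via Proposition~\ref{prop:MichaelJordan}. The paper is equally loose on this point.
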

\proofTypeA

\newcommand{\propTypeAacc}{
Let $\sigma$ be a configuration, and let $\tau$ be a steady conventional
     schedule applicable to $\sigma$. Fix a set $\critical\subseteq\local$.
If there exists a decomposition $\naming$ of $\sigma$ and $\tau$ that
     satisfies $|\threads(\sigma,\tau,\naming)| = 1$ and $\tau$ is of
     $\critical$-type~$\typeall$, then 
     $\setconf \gst{\xrep{\sigma}{\tau}}\models \bigvee_{\ell\in
       \critical} \counters[\ell]\neq 0$.}

\newcommand{\proofTypeAacc}{
\begin{proof}
Let $\xrep{\sigma}{\tau}=t_1,\ldots,t_n$, for an $n\in \Natural$.
Since $\tau$ is of $\critical$-type $\typeall$, by Proposition~\ref{prop:TypeA},
    $\xrep{\sigma}{\tau}$ is of $\critical$-type~$\typeall$,
    which yields that for all $1\le i\le n$ 
    both $t_i.\fromstate$ and $t_i.\tostate$ are in $\critical$. 
\begin{itemize}
\item Since $\xrep{\sigma}{\tau}$ is applicable to $\gst$,
        it must be the case that $\gst\models \counters[\ell^*]\neq 0$,
        where $\ell^*=t_1.\fromstate\in\critical$.
\item If $\tau'=t_1,\ldots,t_k$, $1\le k\le n$, is a nonempty prefix of $\xrep{\sigma}{\tau}$, 
    then, by definition of a counter system from Section \ref{sec:countsys},
    we have that $\tau'(\gst).\counters[t_k.\tostate] > 0$,
    and also $t_k.\tostate\in \critical$.
\end{itemize}
Therefore, $\setconf \gst{\xrep{\sigma}{\tau}}\models \bigvee_{\ell\in \critical}
    \counters[\ell]\neq 0$.
\end{proof}
}
\proofintext{\proofTypeAacc}

\begin{proposition}\label{prop:TypeA.acc}
\propTypeAacc
\end{proposition}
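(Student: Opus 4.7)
The plan is to combine Proposition~\ref{prop:TypeA} with a direct inductive check of the invariant along every prefix of the representative schedule. First, applying Proposition~\ref{prop:TypeA} tells us that $\xrep{\sigma}{\tau}$ itself is a single thread of $\critical$-type~$\typeall$, so writing $\xrep{\sigma}{\tau} = t_1, \dots, t_n$ every transition satisfies $t_i.\fromstate \in \critical$ and $t_i.\tostate \in \critical$. Moreover, the construction in Section~\ref{cons:srep} strips away transitions with zero acceleration factor (and merges neighbouring equal-rule transitions into a single accelerated one with positive factor), so we may assume $t_i.\trfactor \ge 1$ for every $i$.

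With this structural information in hand, the argument proceeds by verifying the invariant $\bigvee_{\ell \in \critical} \counters[\ell] \ne 0$ at every configuration on the path $\finpath{\sigma}{\xrep{\sigma}{\tau}}$. For the empty prefix (configuration~$\sigma$), applicability of $t_1$ gives $\sigma.\counters[t_1.\fromstate] \ge t_1.\trfactor \ge 1$, and since $t_1.\fromstate \in \critical$ the invariant holds at $\sigma$. For a nonempty prefix of length $k \ge 1$ producing the configuration $\sigma_k$, I would split on whether $t_k.\fromstate = t_k.\tostate$: if they differ, Definition~\ref{def:TofSigma}(3) yields $\sigma_k.\counters[t_k.\tostate] \ge t_k.\trfactor \ge 1$; if they coincide, Definition~\ref{def:TofSigma}(4) together with applicability of $t_k$ keeps $\sigma_k.\counters[t_k.\tostate] \ge t_k.\trfactor \ge 1$. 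In both cases $t_k.\tostate \in \critical$ witnesses the required disjunction at $\sigma_k$.

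I do not expect a substantial obstacle, since the structural heavy lifting\dash---that the representative remains a single thread of the same $\critical$-type, so that all visited locations lie in $\critical$\dash---has already been done in Proposition~\ref{prop:TypeA}. The only points requiring attention are (i)~correctly handling self-loop rules where source and target locations coincide, and (ii)~noting that the zero-factor transitions removed by the representative construction must really be discarded before invoking applicability, so that the counter bounds we derive are strict enough to establish nonzeroness rather than only nonnegativity.
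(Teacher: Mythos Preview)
Your proposal is correct and follows essentially the same approach as the paper: invoke Proposition~\ref{prop:TypeA} to conclude that every $t_i.\fromstate$ and $t_i.\tostate$ lies in~$\critical$, then use applicability at the empty prefix and the counter update at each nonempty prefix to witness a nonzero $\critical$-counter. Your treatment is in fact slightly more careful than the paper's, which does not explicitly separate the self-loop case or spell out why the acceleration factors are strictly positive.
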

\proofTypeAacc

\newcommand{\lemmaOtherThreads}{
Let $\sigma$ be a configuration, 
     let $\tau$ be a steady conventional schedule
     applicable to $\sigma$,
     and let $\naming$ be a decomposition of $\sigma$
     and $\tau$.
If $k\in \threads(\sigma,\tau,\naming)$ and $n\in\Natural$ are such that 
    $t_n$ is the last transition from $\proj{\tau}{\naming,k}$, i.e.,
    $n$ is the maximal number with $\naming(n)=k$,
    then for every prefix $\tau'$ of $\tau$, 
    of length $|\tau'|\geq n$,
    we have that $\tau' (\gst).\counters[\ell]\ne 0$,
    for $\ell=\laststate{\proj{\tau}{\naming,k}}$.}

\newcommand{\proofOtherThreads}{
\begin{proof}
Fix a prefix $\tau'$ of $\tau$ of length at least $n$.
Then, by Proposition~\ref{prop:prefixdecomposition}, 
    $\naming$ is a decomposition of $\gst$ and $\tau'$.
Note that $k\in \threads(\sigma,\tau',\naming)$, and
    $\proj{\tau}{\naming,k} = \proj{\tau'}{\naming,k}$.
Therefore $\proj{\tau'}{\naming,k}.\tostate=
     \proj{\tau}{\naming,k}.\tostate=t_n.\tostate.$
Proposition~\ref{prop:threadcounts}, when applied to $\tau'$, yields 
    \begin{multline*}
     \tau'(\sigma).\counters[t_n.\tostate] = 
    \sigma.\counters[t_n.\tostate]\\ +
    | \{ i \colon   i \in \threads(\sigma,\tau',\naming) \wedge 
    \proj{\tau'}{\naming,i}.\tostate =t_n.\tostate \} |\\
    - | \{  i \colon  i \in \threads(\sigma,\tau',\naming) \wedge 
    \proj{\tau'}{\naming,i}.\fromstate =t_n.\tostate \} |
    \end{multline*} 
    By Definition~\ref{def:Naming}, we have that  
   $\sigma.\counters[t_n.\tostate]-
    | \{  i \colon  i \in \threads(\sigma,\tau',\naming) \wedge 
    \proj{\tau'}{\naming,i}.\fromstate = t_n.\tostate \}  |
    \ge 0$. 
Since $\naming (n) =k \in  \{ i \colon  i \in \threads(\sigma,\tau',\naming) \wedge
    \proj{\tau'}{\naming,i}.\tostate = t_n.\tostate\} $, 
    we conclude that $| \{ i \colon  i \in \threads(\sigma,\tau',\naming) \wedge
    \proj{\tau'}{\naming,i}.\tostate = t_n.\tostate\}|\ge 1$.
Thus $\tau'(\sigma).\counters[t_n.\tostate]\ge ~1$.
\end{proof}
}
\proofintext{\proofOtherThreads}

\begin{lemma}\label{lem:other.threads.do.not.move.process.from.the.first.thread.nonaccelerated}
\lemmaOtherThreads
\end{lemma}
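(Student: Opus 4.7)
The plan is to derive the claim by directly counting contributions to $\counters[\ell]$ via Proposition~\ref{prop:threadcounts} applied to the prefix $\tau'$, using the fact that thread $k$ has completed inside $\tau'$ (since $|\tau'|\geq n$) and therefore contributes $+1$ to the final-state tally at $\ell=\laststate{\proj{\tau}{\naming,k}}$.

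First, I would fix an arbitrary prefix $\tau'$ of $\tau$ with $|\tau'|\geq n$. By Proposition~\ref{prop:prefixdecomposition}, $\naming$ is a decomposition of $\gst$ and $\tau'$, and $\threads(\gst,\tau',\naming)\subseteq\threads(\gst,\tau,\naming)$. Because $n\leq |\tau'|$ and $\naming(n)=k$, the whole thread $\proj{\tau}{\naming,k}$ lies inside $\tau'$, so $\proj{\tau'}{\naming,k}=\proj{\tau}{\naming,k}$; in particular, $k\in\threads(\gst,\tau',\naming)$ and $\proj{\tau'}{\naming,k}.\tostate=t_n.\tostate=\ell$.

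Next, I would apply Proposition~\ref{prop:threadcounts} to $\gst$, $\tau'$, $\naming$ at location $\ell$, which yields
\[
\tau'(\gst).\counters[\ell] \;=\; \gst.\counters[\ell] \;+\; T_{\mathrm{in}} \;-\; T_{\mathrm{out}},
\]
where $T_{\mathrm{in}}=|\{i\in\threads(\gst,\tau',\naming):\proj{\tau'}{\naming,i}.\tostate=\ell\}|$ and $T_{\mathrm{out}}=|\{i\in\threads(\gst,\tau',\naming):\proj{\tau'}{\naming,i}.\fromstate=\ell\}|$. From condition~(2) of Definition~\ref{def:Naming} applied to the decomposition $\naming$ of $\gst$ and $\tau'$, we have $\gst.\counters[\ell]\geq T_{\mathrm{out}}$, and therefore $\tau'(\gst).\counters[\ell]\geq T_{\mathrm{in}}$. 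Since thread $k$ witnesses membership in the set defining $T_{\mathrm{in}}$, we get $T_{\mathrm{in}}\geq 1$, whence $\tau'(\gst).\counters[\ell]\geq 1$, as required.

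There is no real obstacle: the only subtlety is making sure that passing from the decomposition of $(\gst,\tau)$ to that of $(\gst,\tau')$ preserves the identity of thread $k$ and its endpoint, which is exactly what Proposition~\ref{prop:prefixdecomposition} and the maximality of $n$ with $\naming(n)=k$ give us.
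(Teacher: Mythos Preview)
Your proposal is correct and follows essentially the same route as the paper: fix a prefix $\tau'$ of length at least $n$, use Proposition~\ref{prop:prefixdecomposition} to keep $\naming$ as a decomposition of $\gst$ and $\tau'$ with thread $k$ intact and ending at $\ell$, apply Proposition~\ref{prop:threadcounts}, bound $T_{\mathrm{out}}$ by $\gst.\counters[\ell]$ via Definition~\ref{def:Naming}(2), and use thread $k$ to force $T_{\mathrm{in}}\ge 1$. The only cosmetic point is that the sentence ``Because $n\leq |\tau'|$ and $\naming(n)=k$, the whole thread lies inside $\tau'$'' should explicitly invoke the maximality of $n$ (which you do mention at the end); otherwise your argument matches the paper's proof step for step.
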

\proofOtherThreads

\newcommand{\lemmawaitingprocessnonacc}{Let $\sigma$ be a configuration, let $\tau$ be a steady conventional schedule
     applicable to $\sigma$, and let $\naming$ be a decomposition of $\sigma$
     and $\tau$.
If $k\in\threads (\sigma,\tau,\naming)$ and $n\in\Natural$ are such that 
    $t_n$ is the first transition from $\proj{\tau}{\naming,k}$, i.e.,
    $n$ is the minimal number with $\naming(n)=k$,
    then for every prefix $\tau'$ of $\tau$,
     of length $|\tau'| < n$,
    we have that $\tau' (\gst).\counters[\ell]\ne 0$, 
    for $\ell=\firststate{\proj{\tau}{\naming, k}}.$
}

\newcommand{\proofwaitingprocessnonacc}{
\begin{proof}
By repeated application of Proposition~\ref{prop:movingleft}, the first
transition of $\proj{\tau}{\naming,k}$ can be moved to the beginning
of the schedule. Applying Proposition~\ref{prop:leftmove} to the
resulting schedule proves this lemma.
\end{proof}
}
\proofintext{\proofwaitingprocessnonacc}

\begin{lemma}\label{lem:waitingprocess.nonacc}
\lemmawaitingprocessnonacc
\end{lemma}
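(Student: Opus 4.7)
The plan is to prove the lemma by using the symmetric counterpart of Lemma~\ref{lem:other.threads.do.not.move.process.from.the.first.thread.nonaccelerated}: instead of arguing about what happens after the \emph{last} transition of thread $k$, I argue about what must hold before its \emph{first} transition $t_n$ by commuting that transition leftward to the position immediately after the chosen prefix.

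Fix an arbitrary prefix $\tau'$ of $\tau$ with $|\tau'| = j < n$; write $\tau' = \tau^j$ and let $\ell = t_n.\fromstate = \firststate{\proj{\tau}{\naming,k}}$. The key observation is that, by the minimality of $n$, for every index $i$ with $j+1 \le i < n$ we have $\naming(i) \ne k = \naming(n)$. This is exactly the hypothesis needed to apply Proposition~\ref{prop:leftmovefromto} with $n' = j+1$ and $m' = n$, yielding the schedule $\tau^* = \movemove{\tau}{j+1}{n}$ in which $t_n$ has been pulled leftward through $t_{j+1}, \ldots, t_{n-1}$.

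By Proposition~\ref{prop:leftmovefromto}, $\tau^*$ is a steady schedule applicable to $\sigma$. Since the reorderings only touch positions $j+1, \ldots, n$, the first $j$ transitions of $\tau^*$ coincide with those of $\tau$, so $(\tau^*)^j(\sigma) = \tau^j(\sigma)$. The $(j+1)$-st transition of $\tau^*$ is $t_n$, which is conventional and has source state $\ell$; by applicability, it must be enabled in $(\tau^*)^j(\sigma)$, which forces $\tau^j(\sigma).\counters[\ell] \ge 1$. Since $j$ was an arbitrary index strictly less than $n$, this establishes the claim for every prefix of length less than $n$. (The case $j = 0$ is covered as well: $\tau^*$ then starts with $t_n$, and its applicability to $\sigma$ directly yields $\sigma.\counters[\ell] \ge 1$.)

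There is no real obstacle here; the whole argument is just the dual of the preceding lemma's proof, and all the work has already been done in Propositions~\ref{prop:movingleft} and~\ref{prop:leftmovefromto}. The only point that requires attention is to record \emph{why} the minimality of $n$ gives the precise hypothesis $\naming(n) \ne \naming(i)$ on the range $j+1 \le i < n$, so that the swap lemma is applicable for every choice of $j$.
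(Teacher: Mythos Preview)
Your proof is correct and follows essentially the same approach as the paper: commute $t_n$ leftward past transitions of other threads (the paper phrases this as repeated Proposition~\ref{prop:movingleft}, you package it via Proposition~\ref{prop:leftmovefromto}) and then read off $\counters[\ell]\ge 1$ from applicability of the resulting schedule at the position where $t_n$ now sits. Your version is more explicit about fixing an arbitrary prefix length~$j$ and moving $t_n$ to position $j+1$, whereas the paper's two-sentence proof leaves the quantification over prefixes implicit, but the underlying argument is the same.
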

\proofwaitingprocessnonacc

\newcommand{\proptypeoptions}{Let $\sigma$ be a configuration, let $\tau=t_1,\ldots,t_{|\tau|}$ 
    be a nonempty steady conventional schedule
     applicable to $\sigma$, and let $\naming$ be a decomposition of $\sigma$
     and $\tau$. Fix a set $\critical$ of local states. 
If there is no local state $\ell\in \critical$ such that $\setconf\gst{\tau}\models \counters[\ell]\neq 0$,
    but it holds that $\setconf \gst\tau \models
    \bigvee_{\ell\in \critical} \bk[\ell]\neq 0,$
then at least one of the following cases is true:
    \begin{enumerate}
        \item There is at least one thread of $\gst$ and $\tau$, which is of $\critical$-type $\typeall$;
        \item There is a thread of $\critical$-type $\typebeg$ or $\typebegend$, 
            and an additional of $\critical$-type $\typeend$ or $\typebegend$;
        \item There is a thread of $\critical$-type $\typebegend$, 
        and one of $\critical$-type $\typemid$.
\end{enumerate}
}

\begin{proposition}\label{prop:typeoptions}
\proptypeoptions
\end{proposition}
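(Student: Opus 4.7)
The plan is to argue by contradiction. Assume none of the three cases holds. A single $\typebegend$-thread would simultaneously belong to $\{\typebeg,\typebegend\}$ and to $\{\typeend,\typebegend\}$, so its existence alone would make case~2 hold; hence no such thread exists. With $\typebegend$ excluded, the failure of case~3 forces the absence of $\typemid$-threads, and case~1 failing excludes $\typeall$-threads. The failure of case~2 then reduces to: either no $\typebeg$-thread exists, or no $\typeend$-thread exists. I would handle these two subcases separately using the counter formula from Proposition~\ref{prop:threadcounts}, which for any prefix $\tau'$ of $\tau$ and any location $\ell$ reads
\[
    \tau'(\gst).\counters[\ell] \;=\; \gst.\counters[\ell] \;+\; a_\ell(\tau') \;-\; b_\ell(\tau'),
\]
where $a_\ell(\tau')$ counts threads in $\threads(\gst,\tau',\naming)$ whose current position is $\ell$ and $b_\ell(\tau')$ counts those whose starting state is $\ell$; by Proposition~\ref{prop:prefixdecomposition} this formula applies at every visited configuration.

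In the subcase with no $\typebeg$-thread, no allowed type starts in $\critical$ (the remaining types that touch $\critical$ would only be $\typeend$, besides $\typenot$), so $b_\ell(\tau') = 0$ for every $\ell \in \critical$ and every prefix $\tau'$. Using the disjunction hypothesis at the initial configuration I would fix $\ell_0 \in \critical$ with $\gst.\counters[\ell_0] > 0$; the formula then yields $\tau'(\gst).\counters[\ell_0] \ge \gst.\counters[\ell_0] > 0$ throughout $\tau$, which contradicts the assumption that $\counters[\ell_0] = 0$ holds at some visited configuration.

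In the other subcase no $\typeend$-thread exists, so no thread ends in $\critical$ and hence $a_\ell(\tau) = 0$ for every $\ell \in \critical$. By the disjunction at the final configuration I would pick $\ell_* \in \critical$ with $\tau(\gst).\counters[\ell_*] > 0$; evaluating the formula at $\tau' = \tau$ gives $\tau(\gst).\counters[\ell_*] = \gst.\counters[\ell_*] - s$, writing $s = b_{\ell_*}(\tau)$ for the number of $\typebeg$-threads starting at $\ell_*$ (the only allowed type with that starting state). Hence $\gst.\counters[\ell_*] > s$. Applying the zero-hypothesis to $\ell_*$, I would pick a prefix $\tau^*$ with $\tau^*(\gst).\counters[\ell_*] = 0$; the formula at $\tau^*$ reads $0 = \gst.\counters[\ell_*] + a_{\ell_*}(\tau^*) - b_{\ell_*}(\tau^*)$, and with $a_{\ell_*}(\tau^*) \ge 0$ and $b_{\ell_*}(\tau^*) \le s$ this forces $a_{\ell_*}(\tau^*) \le s - \gst.\counters[\ell_*] < 0$, which is absurd. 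The main obstacle, though modest, will be careful bookkeeping in Proposition~\ref{prop:threadcounts}: a thread enters $\threads(\gst,\tau',\naming)$ only after it has contributed its first transition by $\tau'$, while its originating process is still counted in $\gst.\counters$, and both arguments rely on exactly this accounting to close cleanly.
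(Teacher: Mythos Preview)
Your counting argument via Proposition~\ref{prop:threadcounts} is a clean alternative to the paper's direct case analysis, but it has a genuine gap at the very first step. Case~2 requires two \emph{distinct} threads: the word ``additional'' in the statement means the second thread is different from the first, and this reading is confirmed by the downstream Proposition~\ref{prop:typeoptionBC}, which explicitly assumes $i\neq j$. A single $\typebegend$-thread therefore does \emph{not} satisfy case~2 on its own, so your conclusion ``hence no such thread exists'' is unjustified. This leaves open the situation with exactly one $\typebegend$-thread and no $\typebeg$- or $\typeend$-threads (and then, by case~3 failing, no $\typemid$-threads, so all other threads are $\typenot$). Neither of your two subcases applies here: the lone $\typebegend$-thread starts in $\critical$, so you cannot claim $b_\ell(\tau')=0$ for all $\ell\in\critical$ in subcase~1, and it ends in $\critical$, so you cannot claim $a_\ell(\tau)=0$ in subcase~2. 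You need a separate argument for this residual case, for instance by picking the prefix at which the $\typebegend$-thread is currently outside $\critical$ and checking that all $\critical$-counters vanish at that configuration (this is essentially what the paper does).

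A minor side remark: the inference ``with $\typebegend$ excluded, the failure of case~3 forces the absence of $\typemid$-threads'' is a non sequitur---once $\typebegend$ is absent, case~3 fails vacuously regardless of $\typemid$. This slip is harmless, though: both of your subcase arguments actually go through with $\typemid$-threads present, since $\typemid$-threads neither start nor end in $\critical$, so the relevant $b_\ell$ and $a_\ell$ bounds are unaffected.
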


\newcommand{\prooftypeoptions}{
\begin{proof}
Firstly, if $|\threads(\sigma,\tau,\naming)| = 1$,    
   we prove by contradiction that $\tau$ is of $\critical$-type $\typeall$. 
Namely, if we suppose the opposite, we distinguish three cases:
\begin{itemize}
 \item If $\tau$ is of $\critical$-type $\typeend$, $\typemid$ or $\typenot$, 
    then $\gst\not\models \bigvee_{\ell\in \critical} \counters[\ell]\neq 0$, and therefore 
    $\setconf \gst\tau \nmodels\bigvee_{\ell\in \critical} \bk[\ell]\neq 0$;
 \item If $\tau$ is of $\critical$-type $\typebeg$, 
    then $\tau(\gst)\not\models \bigvee_{\ell\in \critical} \counters[\ell]\neq 0$, and therefore 
    again $\setconf \gst\tau \nmodels\bigvee_{\ell\in \critical} \bk[\ell]\neq 0$;
 \item If $\tau$ is of $\critical$-type $\typebegend$, 
    and a $k$, $1\le k < |\tau|$, is such that $t_k.\tostate\not\in \critical$,
    then for the prefix $\tau'$ of $\tau$ of length $k$
    holds that $\tau'(\gst)\not\models \bigvee_{\ell\in \critical} \counters[\ell]\neq 0$.
\end{itemize}
Thus, for all three options we get a contradiction, which tells us that $\tau$ cannot be of any other 
  type, and leaves the only remaining option: that $\tau$ is of $\critical$-type $\typeall$. 
This gives us the case 1.

Otherwise, if $|\threads(\sigma,\tau,\naming)| \ge 2$, we have two options:
\begin{itemize}
\item If one of the threads is of 
    $\critical$-type $\typeall$, then this is the case $1$.
\item If there is no thread of 
    $\critical$-type $\typeall$, we consider two possibilities:
    \begin{itemize}
    \item  There is a thread $\proj{\tau}{\naming,i}$ of $\critical$-type $\typebegend$,
	      for some $i\in \threads(\sigma,\tau,\naming)$.
	   Then, by definition, there is a $k\in\Natural$ such that $\naming(k)=i$ and 
	      $t_k.\tostate\not\in \critical$.
	   Assume by contradiction that we are not is cases 2. nor 3.
           Then, among the other threads, there are no threads of $\critical$-type $\typeall$,
	      $\typebeg$, $\typeend$, $\typemid$, nor $\typebegend$.
	   In other words, all the other threads are of $\critical$-type $\typenot$.
	   Then the prefix $\tau'$ of $\tau$ of length $k$ has the property that
	      $\tau'(\gst)\not\models \bigvee_{\ell\in \critical} \counters[\ell]\neq 0$.
           This is a contradiction with the assumption that $\setconf \gst\tau \models
	      \bigvee_{\ell\in \critical} \bk[\ell]\neq 0$.
    \item There is no thread of $\critical$-type $\typebegend$.
	  Since $\gst\models \bigvee_{\ell\in \critical} \counters[\ell]\neq 0$, there exists
	    an $\ell'\in \critical$ such that $\gst\models \counters[\ell']\neq 0$.
	  From the assumption that $\setconf\gst{\tau}\nmodels \counters[\ell']\neq 0$, we obtain that
            there must exist a thread $\vartheta_1$ with $\firststate{\vartheta_1}=\ell'\in\critical$.
          Since in this case there are no threads of $\critical$-type $\typeall$ nor $\typebegend$,
	    this implies that $\vartheta_1$ is of $\critical$-type $\typebeg$.
          
	  Similarly, since $\tau(\gst)\models \bigvee_{\ell\in \critical} \counters[\ell]\neq 0$,
	    there exists an $\ell''\in \critical$ such that $\tau(\gst)\models \counters[\ell'']\neq 0$.
	  Now, from the assumption that $\setconf\gst{\tau}\nmodels \counters[\ell'']\neq 0$, we obtain that
            there exists a thread $\vartheta_2$ with $\laststate{\vartheta_2}=\ell''\in\critical$.
          Thus, $\vartheta_2$ is of $\critical$-type $\typeend$, and this case is the case $2$.
    \end{itemize}
\end{itemize} 
 Therefore, at least one of the given cases is true.
\end{proof}
}

\proofintext{\prooftypeoptions}
\prooftypeoptions

\newcommand{\proptypeoptionA}{Let $\sigma$ be a configuration, 
     let $\tau$ be a steady conventional schedule
     applicable to $\sigma$, and let $\naming$ be a decomposition of $\sigma$
     and $\tau$. 
Fix a set $\critical$ of local states, and an $i\in \threads(\gst,\tau,\naming)$.
If $\proj{\tau}{\naming,i}$ is a thread of $\critical$-type $\typeall$,
  and if we denote $\ell^*=\laststate{\proj{\tau}{\naming,i}}$, 
  then $\ell^*\in\critical$, and 
  $$\setconf {\proj{\tau}{\naming,i}(\gst)}{ \newrest{\tau}{\naming}{i}} 
  \models\counters[\ell^*]\ne 0.$$
}

\newcommand{\prooftypeoptionA}{
\begin{proof}
Firstly note that $\proj{\tau}{\naming,i}\cdot \newrest{\tau}{\naming}{i}$ is a 
    steady schedule applicable to $\gst$, and 
    $\proj{\tau}{\naming,i}\cdot \newrest{\tau}{\naming}{i}(\gst)=\tau(\gst)$,
    by Proposition~\ref{prop:movethreadtobeginning}.
Let $\tau'$ be a prefix of $\newrest{\tau}{\naming}{i}$.
Then $\proj{\tau}{\naming,i}\cdot\tau'$ is a prefix of
    $\proj{\tau}{\naming,i}\cdot \newrest{\tau}{\naming}{i}$
    of length $l\ge |\proj{\tau}{\naming,i}|$.
By Lemma~\ref{lem:other.threads.do.not.move.process.from.the.first.thread.nonaccelerated}, 
    it is $\proj{\tau}{\naming,i}\cdot\tau'(\gst).\counters[\ell^*]\neq 0$,
    where $\ell^*=\laststate{\proj{\tau}{\naming,i}}$. 
As $\proj{\tau}{\naming,i}$ is of $\critical$-type $\typeall$,
    then $\ell^*\in \critical$.
\end{proof}
}
\proofintext{\prooftypeoptionA}

\begin{proposition}\label{prop:typeoptionA}
\proptypeoptionA
\end{proposition}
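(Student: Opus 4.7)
The plan is to reduce this proposition to an application of Lemma~\ref{lem:other.threads.do.not.move.process.from.the.first.thread.nonaccelerated} after rearranging $\tau$ so that thread $i$ is executed first. The first claim, $\ell^* \in \critical$, is immediate from Definition~\ref{def:types}: a thread of $\critical$-type $\typeall$ has its final location in $\critical$, so $\ell^* = \laststate{\proj{\tau}{\naming,i}} \in \critical$.

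For the invariant claim, I would first invoke Proposition~\ref{prop:movethreadtobeginning} on $\gst$, $\tau$, $\naming$, and $i$ to obtain the schedule $\tau^* = \proj{\tau}{\naming,i} \cdot \newrest{\tau}{\naming}{i}$, which is steady, applicable to $\gst$, satisfies $\tau^*(\gst) = \tau(\gst)$, and admits a decomposition $\naming^*$ with $\proj{\tau^*}{\naming^*, l} = \proj{\tau}{\naming, l}$ for every $l \in \threads(\gst,\tau,\naming)$. In particular, $i \in \threads(\gst,\tau^*,\naming^*)$, and the last transition of $\proj{\tau^*}{\naming^*,i}$ occurs at position $n = |\proj{\tau}{\naming,i}|$ of $\tau^*$ (since thread $i$ sits as a contiguous block at the very beginning of $\tau^*$).

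Next I would apply Lemma~\ref{lem:other.threads.do.not.move.process.from.the.first.thread.nonaccelerated} to $\gst$, $\tau^*$, $\naming^*$, $i$, and $n$. The lemma yields that for every prefix $\pi$ of $\tau^*$ with $|\pi| \ge n$, we have $\pi(\gst).\counters[\ell^*] \ne 0$, where $\ell^* = \laststate{\proj{\tau^*}{\naming^*,i}} = \laststate{\proj{\tau}{\naming,i}}$. But every prefix of $\tau^*$ of length at least $n$ has exactly the form $\proj{\tau}{\naming,i} \cdot \tau'$ for some prefix $\tau'$ of $\newrest{\tau}{\naming}{i}$ (including the empty prefix). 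Thus for every such $\tau'$, $(\proj{\tau}{\naming,i} \cdot \tau')(\gst).\counters[\ell^*] \ne 0$, which, by the definition of $\setconf{\cdot}{\cdot}$, is exactly the statement $\setconf{\proj{\tau}{\naming,i}(\gst)}{\newrest{\tau}{\naming}{i}} \models \counters[\ell^*] \ne 0$.

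There is no serious obstacle here: the movability results of Section~\ref{subsec:cav15} have been set up precisely so that a chosen thread can be relocated to the front while preserving the starting configuration, the decomposition structure, and the final configuration, and the lemma about persistence of the terminal counter of a thread handles the rest. The only thing to be careful about is checking that the prefixes of $\tau^*$ considered by the lemma correspond bijectively to the intermediate configurations visited by $\finpath{\proj{\tau}{\naming,i}(\gst)}{\newrest{\tau}{\naming}{i}}$, which is clear from the block structure of $\tau^*$.
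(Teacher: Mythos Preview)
Your proposal is correct and follows essentially the same approach as the paper's proof: invoke Proposition~\ref{prop:movethreadtobeginning} to bring thread~$i$ to the front, then apply Lemma~\ref{lem:other.threads.do.not.move.process.from.the.first.thread.nonaccelerated} to every prefix of the rearranged schedule of length at least $|\proj{\tau}{\naming,i}|$, concluding that $\counters[\ell^*]\ne 0$ throughout $\newrest{\tau}{\naming}{i}$; the paper is slightly terser about the decomposition $\naming^*$, but the argument is the same.
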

\prooftypeoptionA

\newcommand{\proptypeoptionBC}{Let $\sigma$ be a configuration, let $\tau$ be a steady conventional schedule
     applicable to $\sigma$, let $\naming$ be a decomposition of $\sigma$
     and $\tau$, and let $\critical$ be a subset of $\local$. 
If $i,j\in \threads(\gst,\tau,\naming)$ are such that
    $i\neq j$,
    $\proj{\tau}{\naming,i}$ is a thread of $\critical$-type $\typebeg$ or $\typebegend$, and
    $\proj{\tau}{\naming,j}$ is a thread of $\critical$-type $\typeend$ or $\typebegend$,
    then it holds that 
    \begin{enumerate}
    \item[1)]
        $\setconf \gst{\proj{\tau}{\naming,j}} \models\counters[\ell_1]\neq 0$,
        for $\ell_1=\firststate{\proj{\tau}{\naming,i}}\in \critical$,
    \item[2)]
        $\setconf{\proj{\tau}{\naming,j}(\gst)}{\newrest{\tau}{\naming}{j}} \models 
        \counters[\ell_2]\neq 0,$
        for $\ell_2=\laststate{\proj{\tau}{\naming,j}}\in \critical$.
    \end{enumerate}}

\newcommand{\prooftypeoptionBC}{
\begin{proof}
Firstly note that $\proj{\tau}{\naming,j}\cdot \newrest{\tau}{\naming}{j}$ is a 
    steady schedule applicable to $\gst$, and 
    $\proj{\tau}{\naming,j}\cdot \newrest{\tau}{\naming}{j}(\gst)=\tau(\gst)$,
    by Proposition~\ref{prop:movethreadtobeginning}.
    
1) 
Let  $\tau'$ be a prefix of $\proj{\tau}{\naming,j}$.
Note that in this case $\tau'$ is a prefix of 
    $\proj{\tau}{\naming,j}\cdot \newrest{\tau}{\naming}{j}$
     of length $l\le |\proj{\tau}{\naming,j}|$.
From Lemma~\ref{lem:waitingprocess.nonacc} we obtain
    $\tau'(\gst)\models \counters[\ell_1]\neq 0$,
    where $\ell_1=\firststate{\proj{\tau}{\naming,i}}$.
Since $\proj{\tau}{\naming,i}$ is of 
    type $\typebeg$ or $\typebegend$,
    we have that $\ell_1\in\critical$.

2)
Let  $\tau'$ be a prefix of $\newrest{\tau}{\naming}{j}$.
In this case, $\proj{\tau}{\naming,j}\cdot\tau'$ is a prefix of
    $\proj{\tau}{\naming,j}\cdot \newrest{\tau}{\naming}{j}$ 
    of length $l\ge |\proj{\tau}{\naming,j}|$.
By Lemma~\ref{lem:other.threads.do.not.move.process.from.the.first.thread.nonaccelerated} 
    we have that $\proj{\tau}{\naming,j}\cdot\tau'(\gst)\models \counters[\ell_2]\neq 0$, 
    or, equivalently, $\tau'(\proj{\tau}{\naming,j}(\gst))\models \counters[\ell_2]\neq 0$,
    where $\ell_2=\laststate{\proj{\tau}{\naming,j}}$.
Since $\proj{\tau}{\naming,j}$ is of $\critical$-type $\typeend$ or $\typebegend$,
    then $\ell_2\in\critical$. 
\end{proof}
}
\proofintext{\prooftypeoptionBC}

\begin{proposition}\label{prop:typeoptionBC}
\proptypeoptionBC
\end{proposition}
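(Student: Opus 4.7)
The plan is to reduce the claim, by a single application of Proposition~\ref{prop:movethreadtobeginning}, to the two helper lemmas already used for the one-thread case (the proof of Proposition~\ref{prop:typeoptionA}), namely Lemma~\ref{lem:waitingprocess.nonacc} and Lemma~\ref{lem:other.threads.do.not.move.process.from.the.first.thread.nonaccelerated}. First I would move thread $j$ to the beginning: apply Proposition~\ref{prop:movethreadtobeginning} to $\gst$, $\tau$, $\naming$, and the index $j\in\threads(\gst,\tau,\naming)$ to obtain the schedule $\tau^{*}=\proj{\tau}{\naming,j}\concat\newrest{\tau}{\naming}{j}$ together with a decomposition $\naming^{*}$ of $\gst$ and $\tau^{*}$ such that $\tau^{*}$ is steady and applicable to $\gst$, $\tau^{*}(\gst)=\tau(\gst)$, and $\proj{\tau^{*}}{\naming^{*},l}=\proj{\tau}{\naming,l}$ for every $l\in\threads(\gst,\tau,\naming)$. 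In particular, the prefix $\proj{\tau}{\naming,j}$ is applicable to $\gst$, so the claims make sense. Note also that $\ell_{1}=\firststate{\proj{\tau}{\naming,i}}\in\critical$ because thread $i$ is of type $\typebeg$ or $\typebegend$ (both start in $\critical$), and $\ell_{2}=\laststate{\proj{\tau}{\naming,j}}\in\critical$ because thread $j$ is of type $\typeend$ or $\typebegend$ (both end in $\critical$).

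For~(1), I would invoke Lemma~\ref{lem:waitingprocess.nonacc} on $\gst$, $\tau^{*}$, $\naming^{*}$, and thread $i$: let $n$ be the minimal index of $\tau^{*}$ with $\naming^{*}(n)=i$, i.e., the position of the first transition of $\proj{\tau^{*}}{\naming^{*},i}=\proj{\tau}{\naming,i}$ inside $\tau^{*}$. Since $i\ne j$ and all transitions of thread $j$ occur in the initial block $\proj{\tau}{\naming,j}$ of $\tau^{*}$, we have $n>|\proj{\tau}{\naming,j}|$. Every prefix $\tau'$ of $\proj{\tau}{\naming,j}$ is a prefix of $\tau^{*}$ of length at most $|\proj{\tau}{\naming,j}|<n$, so the lemma gives $\tau'(\gst).\counters[\ell_{1}]\ne 0$. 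This establishes $\setconf{\gst}{\proj{\tau}{\naming,j}}\models\counters[\ell_{1}]\ne 0$.

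For~(2), I would invoke Lemma~\ref{lem:other.threads.do.not.move.process.from.the.first.thread.nonaccelerated} on $\gst$, $\tau^{*}$, $\naming^{*}$, and thread $j$: let $m$ be the maximal index of $\tau^{*}$ with $\naming^{*}(m)=j$. By construction of $\tau^{*}$, $m=|\proj{\tau}{\naming,j}|$. For every prefix $\tau'$ of $\newrest{\tau}{\naming}{j}$ the concatenation $\proj{\tau}{\naming,j}\concat\tau'$ is a prefix of $\tau^{*}$ of length at least $m$, so the lemma yields $(\proj{\tau}{\naming,j}\concat\tau')(\gst).\counters[\ell_{2}]\ne 0$, equivalently $\tau'(\proj{\tau}{\naming,j}(\gst)).\counters[\ell_{2}]\ne 0$. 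This establishes $\setconf{\proj{\tau}{\naming,j}(\gst)}{\newrest{\tau}{\naming}{j}}\models\counters[\ell_{2}]\ne 0$.

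The step I expect to require most care is the bookkeeping around which schedule the two helper lemmas are applied to: the statement of the proposition talks about configurations visited by a prefix of $\proj{\tau}{\naming,j}$ (and then of $\newrest{\tau}{\naming}{j}$) applied to $\gst$, whereas the lemmas are phrased for prefixes of a whole schedule under some decomposition. The reordered schedule $\tau^{*}$ provided by Proposition~\ref{prop:movethreadtobeginning} is designed exactly for this purpose, and because $\naming^{*}$ preserves all thread projections, the same projection $\proj{\tau}{\naming,i}$ (respectively $\proj{\tau}{\naming,j}$) can be used to locate the first transition of thread $i$ and the last transition of thread $j$ inside $\tau^{*}$, turning the two assertions into direct instances of the lemmas. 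Everything else reduces to routine indexing.
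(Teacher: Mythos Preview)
Your proposal is correct and follows essentially the same approach as the paper: move thread~$j$ to the front via Proposition~\ref{prop:movethreadtobeginning}, then apply Lemma~\ref{lem:waitingprocess.nonacc} to thread~$i$ for part~(1) and Lemma~\ref{lem:other.threads.do.not.move.process.from.the.first.thread.nonaccelerated} to thread~$j$ for part~(2). Your exposition is in fact slightly more careful than the paper's (you make explicit the decomposition~$\naming^{*}$ and the inequality $n>|\proj{\tau}{\naming,j}|$, which the paper leaves implicit), but the argument is the same.
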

\prooftypeoptionBC

\newcommand{\proptypeoptionED}{Let $\sigma$ be a configuration, let $\tau$ be a steady conventional schedule
     applicable to $\sigma$, and let $\naming$ be a decomposition of $\sigma$
     and $\tau$.
Fix a subset $\critical$ of $\local$. 
For $i,j\in \threads(\gst,\tau,\naming)$,
    let $\proj{\tau}{\naming,i}$ be a thread of $\critical$-type $\typebegend$, and
    let $\proj{\tau}{\naming,j}$ be a thread of $\critical$-type $\typemid$.
Let us write $\proj{\tau}{\naming,j}$ as $\proj{\tau}{\naming,j}^1\cdot\proj{\tau}{\naming,j}^2$,
    where $\laststate{\proj{\tau}{\naming,j}^1}\in \critical$.
If we denote $$\tau^*=\proj{\tau}{\naming,j}^1\cdot\proj{\tau}{\naming,i}\cdot 
    \proj{\tau}{\naming,j}^2\cdot \newrest{\tau}{\naming}{i,j},$$ then we obtain that
    \begin{enumerate}
    \item  $\setconf \gst{\proj{\tau}{\naming,j}^1} \models \counters[\ell_1]\neq 0$, 
	for $\ell_1=\firststate{\proj{\tau}{\naming,i}}\in\critical$,
    \item  $\setconf {\proj{\tau}{\naming,j}^1(\gst)}{\proj{\tau}{\naming,i}} \models
        \counters[\ell_2]\neq 0$, for 
        $\ell_2=\laststate{\proj{\tau}{\naming,j}^1}\in \critical$,
    \item $\setconf {\proj{\tau}{\naming,j}^1\cdot\proj{\tau}{\naming,i}(\gst)}
        {\proj{\tau}{\naming,j}^2\cdot \newrest{\tau}{\naming}{i,j}} \models
        \counters[\ell_3]\neq 0$, for $\ell_3=\laststate{\proj{\tau}{\naming,i}}\in\critical$. 
    \end{enumerate}}

\newcommand{\prooftypeoptionED}{
\begin{proof}
By Proposition~\ref{prop:movethreadsmix},
  $\tau^*$ is a steady schedule applicable to $\gst$, 
  $\tau^*(\gst)=\tau(\gst)$,
  and there exists a decomposition $\naming^*$ of $\gst$ and $\tau^*$
    such that $\proj{\tau}{\naming,i}=\proj{\tau^*}{\naming^*,i}$
    and $\proj{\tau}{\naming,j}=\proj{\tau^*}{\naming^*,j}$.
Let $l_1=|\proj{\tau}{\naming,j}^1|$ and $l_2=|\proj{\tau}{\naming,i}|$.

1)
Let $\tau'$ be a prefix of $\proj{\tau}{\naming,j}^1$, and therefore a prefix 
    of $\tau^*$ of length $l\le l_1$.
By Lemma~\ref{lem:waitingprocess.nonacc}, 
    we have that 
    $\tau'(\gst)\models 
    \counters[\ell_1]\neq 0$, where $\ell_1=\firststate{\proj{\tau}{\naming,i}}$.
Since $\proj{\tau}{\naming,i}$ is of $\critical$-type $\typebegend$, 
    it is $\ell_1\in \critical$. 

2) 
Let $\tau'$ be a prefix of $\proj{\tau}{\naming,i}$.
Then $\proj{\tau}{\naming,j}^1\cdot \tau'$ is a prefix 
    of $\proj{\tau}{\naming,j}^1\cdot \proj{\tau}{\naming,i}$ of length $l\ge l_1$.
We apply 
    Lemma~\ref{lem:other.threads.do.not.move.process.from.the.first.thread.nonaccelerated}
    for the configuration $\gst$, 
    the schedule $\proj{\tau}{\naming,j}^1\cdot \proj{\tau}{\naming,i}$, 
    and the decomposition $\naming^*$.
With the decomposition $\naming^*$, 
    schedule $\proj{\tau}{\naming,j}^1$ is a thread 
    of $\gst$ and $\proj{\tau}{\naming,j}^1\cdot \proj{\tau}{\naming,i}$,
    and therefore from Lemma~\ref{lem:other.threads.do.not.move.process.from.the.first.thread.nonaccelerated}
    we obtain that 
    $\proj{\tau}{\naming,j}^1\cdot\tau'(\gst)\models \counters[\ell_2]\neq 0$,
    or, equivalently, $\tau'(\proj{\tau}{\naming,j}^1(\gst))\models \counters[\ell_2]\neq 0$,
    where $\ell_2= \laststate{\proj{\tau}{\naming,j}^1}$.
From the construction of $\proj{\tau}{\naming,j}^1$ follows
    that $\ell_2 \in\critical$.
    
3)
Let  $\tau'$ be a prefix of $ \proj{\tau}{\naming,j}^2\cdot \newrest{\tau}{\naming}{i,j}$.
Then $\tau''=\proj{\tau}{\naming,i}\cdot \tau'$ is a prefix of 
    $\tau^*_1=\proj{\tau}{\naming,i}\cdot \proj{\tau}{\naming,j}^2\cdot \newrest{\tau}{\naming}{i,j}$ 
    of length $l\ge l_2$. 
We define a naming $\naming_1$ of $\proj{\tau}{\naming,j}^1(\gst)$ and
    $\tau^*_1$, 
    for every $n\in\Natural$, as follows:
     $$\naming_1(n) = \naming^*(n+l_1).$$
Note that $\naming_1$ is a decomposition of $\proj{\tau}{\naming,j}^1(\gst)$ and
    $\tau^*_1$, and  $\proj{\tau^*_1}{\naming_1,i}=\proj{\tau^*}{\naming^*,i}=\proj{\tau}{\naming,i}$.
We apply Lemma~\ref{lem:other.threads.do.not.move.process.from.the.first.thread.nonaccelerated}
    for the configuration $\proj{\tau}{\naming,j}^1(\gst)$, 
    the schedule $\tau^*_1$, 
    and the decomposition $\naming_1$,
    and obtain that for the prefix $\tau''$ of $\tau^*_1$ holds
    $\tau''(\proj{\tau}{\naming,j}^1(\gst)).\counters[\ell_3]\ge~1$,
    or, equivalently, 
    $$\tau'(\proj{\tau}{\naming,i}(\proj{\tau}{\naming,j}^1(\gst))).\counters[\ell_3]\ge~1,$$
    where $\ell_3 =\laststate{\proj{\tau^*_1}{\naming_1,i}}=
    \laststate{\proj{\tau}{\naming,i}}$.
Again, as $\proj{\tau}{\naming,i}$ is of $\critical$-type $\typebegend$, 
    then $\ell_3 \in\critical$.
\end{proof}
}
\proofintext{\prooftypeoptionED}

\begin{proposition}\label{prop:typeoptionED}
\proptypeoptionED
\end{proposition}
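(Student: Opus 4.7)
The plan is to first invoke Proposition~\ref{prop:movethreadsmix} on $\gst$, $\tau$, $\naming$, and the two threads $i,j$ (with the prescribed splitting of $\proj{\tau}{\naming,j}$ into $\proj{\tau}{\naming,j}^1\cdot\proj{\tau}{\naming,j}^2$). This delivers three essential facts at once: $\tau^*$ is a steady schedule applicable to~$\gst$, we have $\tau^*(\gst)=\tau(\gst)$, and there is a decomposition~$\naming^*$ of $\gst$ and $\tau^*$ whose projections on indices $i$ and $j$ agree with those of $\naming$. Because the threads corresponding to the types $\typebegend$ and $\typemid$ guarantee $\ell_1,\ell_3\in\critical$ and because $\proj{\tau}{\naming,j}^1$ ends in $\critical$ by hypothesis, the three target locations lie in~$\critical$ for free; the real work is to show that the three invariant conditions hold along the respective subpaths.

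For part~(1), I would take an arbitrary prefix~$\tau'$ of $\proj{\tau}{\naming,j}^1$ and observe that $\tau'$ is a prefix of $\tau^*$ whose length is strictly less than $|\proj{\tau}{\naming,j}^1|+1$, which is precisely the position of the first transition of thread~$i$ inside $\tau^*$ under~$\naming^*$. Applying Lemma~\ref{lem:waitingprocess.nonacc} to $\gst$, $\tau^*$ and $\naming^*$ then yields $\tau'(\gst).\counters[\ell_1]\neq 0$ as needed. For part~(2), I would consider the prefix $\rho=\proj{\tau}{\naming,j}^1\cdot\proj{\tau}{\naming,i}$ of $\tau^*$ and invoke Proposition~\ref{prop:prefixdecomposition} to see that $\naming^*$ restricted to~$\rho$ is a decomposition of $\gst$ and~$\rho$; in this restriction, the projection of thread~$j$ is exactly $\proj{\tau}{\naming,j}^1$ (the second half has not yet appeared), so $\proj{\tau}{\naming,j}^1$ is a complete thread inside~$\rho$ whose last position is $|\proj{\tau}{\naming,j}^1|$. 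For any prefix $\proj{\tau}{\naming,j}^1\cdot\tau''$ of $\rho$ with $\tau''$ a prefix of $\proj{\tau}{\naming,i}$, Lemma~\ref{lem:other.threads.do.not.move.process.from.the.first.thread.nonaccelerated} applied to thread~$j$ then gives $(\proj{\tau}{\naming,j}^1\cdot\tau'')(\gst).\counters[\ell_2]\neq 0$, which rewrites as $\tau''(\proj{\tau}{\naming,j}^1(\gst)).\counters[\ell_2]\neq 0$.

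For part~(3), I would shift the base configuration to $\gst' = \proj{\tau}{\naming,j}^1(\gst)$ and the schedule to $\tau^*_1 = \proj{\tau}{\naming,i}\cdot\proj{\tau}{\naming,j}^2\cdot\newrest{\tau}{\naming}{i,j}$, so that $\tau^*=\proj{\tau}{\naming,j}^1\cdot\tau^*_1$. Defining a shifted naming $\naming_1(n)=\naming^*(n+|\proj{\tau}{\naming,j}^1|)$ gives a decomposition of $\gst'$ and $\tau^*_1$ in which thread~$i$ appears as a complete thread ending at position $|\proj{\tau}{\naming,i}|$. Any prefix $\tau'$ of $\proj{\tau}{\naming,j}^2\cdot\newrest{\tau}{\naming}{i,j}$ then makes $\proj{\tau}{\naming,i}\cdot\tau'$ a prefix of $\tau^*_1$ of length at least $|\proj{\tau}{\naming,i}|$, so Lemma~\ref{lem:other.threads.do.not.move.process.from.the.first.thread.nonaccelerated} applied to thread~$i$ gives $(\proj{\tau}{\naming,i}\cdot\tau')(\gst').\counters[\ell_3]\neq 0$, which is exactly the claim.

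The main obstacle I anticipate is the book-keeping in part~(2): the proof has to treat $\proj{\tau}{\naming,j}^1$ as a \emph{complete} thread even though, in the full schedule $\tau^*$ under $\naming^*$, thread~$j$ consists of the concatenation $\proj{\tau}{\naming,j}^1\cdot\proj{\tau}{\naming,j}^2$. The reconciliation hinges on observing that Proposition~\ref{prop:prefixdecomposition} preserves the decomposition when restricting to a prefix, and that in the chosen prefix the second half of thread~$j$ has simply not yet occurred, so the projection collapses to $\proj{\tau}{\naming,j}^1$. Once this is set up cleanly, parts~(1) and~(3) are straightforward applications of the two invariance lemmas with the appropriate index and base configuration.
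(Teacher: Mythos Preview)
Your proposal is correct and follows essentially the same route as the paper: invoke Proposition~\ref{prop:movethreadsmix} to obtain~$\tau^*$ and the decomposition~$\naming^*$, then handle part~(1) via Lemma~\ref{lem:waitingprocess.nonacc}, part~(2) via Lemma~\ref{lem:other.threads.do.not.move.process.from.the.first.thread.nonaccelerated} applied to thread~$j$ on the prefix~$\proj{\tau}{\naming,j}^1\cdot\proj{\tau}{\naming,i}$, and part~(3) via the same lemma after shifting the naming by~$|\proj{\tau}{\naming,j}^1|$. Your explicit appeal to Proposition~\ref{prop:prefixdecomposition} in part~(2) makes precise a step the paper leaves implicit, but otherwise the arguments coincide.
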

\prooftypeoptionED

\newcommand{\propboring}{Let $\gst$ be a confi\-gu\-ration, and 
    let $\tau$ be a steady conventional schedule applicable to $\gst$.
Fix a set $\critical\subseteq\local$.
If there exist a local state $\ell^*\in\critical$ such that 
    $\setconf \gst\tau \models \counters[\ell^*]\ne 0$,
    then  $$\setconf \gst\srgen \models
    \bigvee_{\ell\in \critical} \counters[\ell]\neq 0.$$ }

\newcommand{\proofboring}{
\begin{proof}
If there is a local state $\ell^*\in \critical$ such that $\setconf\gst{\tau}\models \counters[\ell^*]\neq 0$,
    then we have $\setconf \gst\srgen \models\counters[\ell^*]\neq 0$,
    by Proposition~\ref{prop:representativepath}.
Therefore, $\setconf \gst\srgen \models  \bigvee_{\ell\in \critical} \counters[\ell]\neq 0$. 
\end{proof}
}
\proofintext{\proofboring}

\begin{proposition}\label{prop:boring}
\propboring
\end{proposition}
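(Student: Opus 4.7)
The plan is to treat this as a direct corollary of Proposition~\ref{prop:representativepath}, which already shows that the representative schedule $\srgen$ preserves nonzeroness of any single location's counter along the whole path. The statement to prove weakens this to a disjunction over a set of locations, under the hypothesis that at least one specific location in that set is witnessed to stay nonzero by~$\tau$.

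Concretely, I would proceed as follows. First, by hypothesis there exists some $\ell^*\in\critical$ with $\setconf{\gst}{\tau}\models \counters[\ell^*]\neq 0$, i.e., every configuration on the path $\finpath{\gst}{\tau}$ assigns a positive value to $\counters[\ell^*]$. Second, apply Proposition~\ref{prop:representativepath} with this particular $\ell^*$: since $\tau$ is steady and applicable to $\gst$, the proposition yields $\setconf{\gst}{\srgen}\models \counters[\ell^*]\neq 0$, so every configuration visited by $\finpath{\gst}{\srgen}$ has $\counters[\ell^*]>0$. Third, since $\ell^*\in\critical$, the propositional formula $\counters[\ell^*]\neq 0$ implies $\bigvee_{\ell\in\critical}\counters[\ell]\neq 0$ pointwise at every configuration, giving the desired $\setconf{\gst}{\srgen}\models \bigvee_{\ell\in\critical}\counters[\ell]\neq 0$.

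There is essentially no obstacle here\dash---the work was already done in Proposition~\ref{prop:representativepath}, where the delicate reasoning about swapping transitions and collapsing loop segments preserved positivity of an individual counter along the entire representative path. The present proposition just lifts that single-location invariant to a disjunctive invariant by monotonicity of disjunction. The only thing to double-check is that the hypothesis ``$\setconf{\gst}{\tau}\models\counters[\ell^*]\neq 0$'' from the proposition statement matches exactly the premise of Proposition~\ref{prop:representativepath}, which it does verbatim. Hence the proof is a two-line invocation.
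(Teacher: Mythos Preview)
Your proposal is correct and is essentially identical to the paper's own proof: both invoke Proposition~\ref{prop:representativepath} on the witnessing location~$\ell^*$ and then weaken the resulting single-counter invariant to the disjunction over~$\critical$.
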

\proofboring

\begin{proposition}\label{prop:srep-ex-concatenation-basic}
 
Let~$\gst$ be a confi\-gu\-ration,
    let $\tau=\tau_1\cdot\ldots\cdot\tau_n$, for $n\ge 1$, 
    be a steady conventional schedule applicable to $\gst$.
Fix a set $\critical\subseteq\local$. 
If we denote 
    $\tau^*= \sr\Ctx\gst{\tau_1}\cdot \sr\Ctx{\tau_1(\gst)}{\tau_2}\cdot
    \ldots\cdot\sr\Ctx{\tau_1\cdot\ldots\cdot\tau_{n-1}(\gst)}{\tau_n}$,
    then the following holds:
    \begin{itemize}
     \item[a)] $\tau^*$ is applicable to $\gst$, and $\tau^*(\gst)=\tau(\gst)$,
     \item[b)] $|\tau^*|\le 2\cdot n\cdot |\ruleset |.$
    \end{itemize}
\end{proposition}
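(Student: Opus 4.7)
My plan is to prove the proposition by straightforward induction on $n$, using Proposition~\ref{prop:srep-ex} as the main ingredient. The base case $n=1$ is immediate: $\tau^* = \sr\Ctx{\gst}{\tau_1}$, and Proposition~\ref{prop:srep-ex} directly gives applicability to $\gst$, the equality $\tau^*(\gst)=\tau_1(\gst)=\tau(\gst)$, and the length bound $|\tau^*|\leq 2\cdot|\ruleset|\leq 2\cdot 1\cdot |\ruleset|$.

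For the inductive step, I would assume the claim for $n-1$, so that $\rho^* = \sr\Ctx{\gst}{\tau_1}\cdot\ldots\cdot\sr\Ctx{\tau_1\cdots\tau_{n-2}(\gst)}{\tau_{n-1}}$ is applicable to $\gst$, ends in $(\tau_1\cdots\tau_{n-1})(\gst)$, and has length at most $2\cdot(n-1)\cdot|\ruleset|$. Now set $\gst' = (\tau_1\cdots\tau_{n-1})(\gst)$. Since $\tau_n$ is a steady conventional schedule applicable to $\gst'$, Proposition~\ref{prop:srep-ex} gives a schedule $\sr\Ctx{\gst'}{\tau_n}$ that is applicable to $\gst'$, satisfies $\sr\Ctx{\gst'}{\tau_n}(\gst')=\tau_n(\gst')$, and has length at most $2\cdot|\ruleset|$. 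Concatenation preserves applicability (since $\rho^*(\gst)=\gst'$ and $\sr\Ctx{\gst'}{\tau_n}$ is applicable to $\gst'$), and composition of the end-state equalities yields $\tau^*(\gst) = \sr\Ctx{\gst'}{\tau_n}(\rho^*(\gst)) = \tau_n(\gst') = (\tau_1\cdots\tau_n)(\gst) = \tau(\gst)$. For the length, $|\tau^*| = |\rho^*| + |\sr\Ctx{\gst'}{\tau_n}| \leq 2\cdot(n-1)\cdot|\ruleset| + 2\cdot|\ruleset| = 2\cdot n\cdot|\ruleset|$.

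I do not anticipate any real obstacle here, as the proof is essentially a bookkeeping induction that chains Proposition~\ref{prop:srep-ex} across consecutive segments. The only subtlety worth stating explicitly is that each individual $\tau_i$ must itself be a steady schedule applicable to the appropriate intermediate configuration in order to invoke Proposition~\ref{prop:srep-ex}; this follows from the hypothesis that $\tau=\tau_1\cdot\ldots\cdot\tau_n$ is a steady conventional schedule applicable to $\gst$, since any contiguous subschedule of a steady schedule is itself steady (the context cannot change) and applicability of $\tau_i$ to $(\tau_1\cdots\tau_{i-1})(\gst)$ follows from applicability of $\tau$ to $\gst$.
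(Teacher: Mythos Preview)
Your proposal is correct and takes essentially the same approach as the paper: both arguments chain Proposition~\ref{prop:srep-ex} across the segments $\tau_1,\dots,\tau_n$, using that each $\tau_k$ is a steady conventional schedule applicable to $(\tau_1\cdots\tau_{k-1})(\gst)$. The paper presents this as a direct per-segment enumeration rather than an explicit induction, but the content is identical.
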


\begin{proof}
 Firstly note that for every $k$ with $1\le k\le n$, $\tau_k$
    is a steady conventional schedule applicable to
    $\tau_1\cdot\ldots\cdot\tau_{k-1}(\gst)$.
 Therefore, by Proposition~\ref{prop:srep-ex}, for every $k$ with 
    $1\le k\le n$ holds that 
    \begin{itemize}
     \item $\sr\Ctx{\tau_1\cdot\ldots\cdot\tau_{k-1}(\gst)}{\tau_k}$ is 
    applicable to $\tau_1\cdot\ldots\cdot\tau_{k-1}(\gst)$, 
     \item $\sr\Ctx{\tau_1\cdot\ldots\cdot\tau_{k-1}(\gst)}{\tau_k}
    (\tau_1\cdot\ldots\cdot\tau_{k-1}(\gst))=
    \tau_k(\tau_1\cdot\ldots\cdot\tau_{k-1}(\gst))=
    \tau_1\cdot\ldots\cdot\tau_{k}(\gst)$,
     \item $|\sr\Ctx{\tau_1\cdot\ldots\cdot\tau_{k-1}(\gst)}{\tau_k}|
    \le 2\cdot |\ruleset |$.
    \end{itemize}
 The first two observations imply the statement {\it a)}, and the third one 
    implies {\it b)}.
 \end{proof}

\begin{proposition}\label{prop:srep-ex-concatenation-formula}
Let $\gst$ be a confi\-gu\-ration,
    let $\tau_1\cdot\ldots\cdot\tau_n$, for $n\ge 1$, 
    be a steady conventional schedule applicable to $\gst$
    and
    let $\psi \equiv \bigvee_{\ell\in \critical} \counters[\ell]\neq 0$.
Fix a set $\critical\subseteq\local$. 
If for every $k$ with $1\le k\le n$ holds at least one
    of the following:
    \begin{enumerate}
     \item[a)] $\tau_k$ is a thread of $\gst$ and $\tau_1\cdot\ldots\cdot\tau_n$
	   of $\critical$-type~$A$,
     \item[b)] $\setconf{\tau_1\cdot\ldots\cdot\tau_{k-1}(\gst)}{\tau_k}
	       \models \counters[\ell]\neq 0$,
	       for some $\ell\in\critical$,
    \end{enumerate}
    then $\setconf\gst{\tau^*}\models \psi,$
    for $\tau^*= \sr\Ctx\gst{\tau_1}\cdot \sr\Ctx{\tau_1(\gst)}{\tau_2}\cdot
        \ldots\cdot\sr\Ctx{\tau_1\cdot\ldots\cdot\tau_{n-1}(\gst)}{\tau_n}$.
\end{proposition}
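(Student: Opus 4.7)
The plan is to reduce the claim to a segment-by-segment analysis of the concatenated representative $\tau^*$. For $1 \le k \le n+1$, write $\gst_k = \tau_1 \cdot \ldots \cdot \tau_{k-1}(\gst)$ (with $\gst_1 = \gst$). By Proposition~\ref{prop:srep-ex-concatenation-basic}, the schedule $\tau^*$ is applicable to $\gst$, and because each representative $\sr\Ctx{\gst_k}{\tau_k}$ ends in the same configuration as $\tau_k$ does, the path $\finpath{\gst}{\tau^*}$ visits the configurations $\gst_1, \gst_2, \ldots, \gst_{n+1}$ as its boundary points, with the $k$-th segment being the path $\finpath{\gst_k}{\sr\Ctx{\gst_k}{\tau_k}}$. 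Hence to show $\setconf{\gst}{\tau^*} \models \psi$ it suffices to prove, for every $k$ with $1 \le k \le n$, that $\setconf{\gst_k}{\sr\Ctx{\gst_k}{\tau_k}} \models \psi$.

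Next I would do a case split according to which of the two hypotheses holds at index~$k$. If hypothesis~(b) holds, then there exists $\ell^* \in \critical$ with $\setconf{\gst_k}{\tau_k} \models \counters[\ell^*] \neq 0$; since $\tau_k$ is a steady conventional schedule applicable to $\gst_k$, Proposition~\ref{prop:boring} (applied with base configuration $\gst_k$) immediately gives $\setconf{\gst_k}{\sr\Ctx{\gst_k}{\tau_k}} \models \psi$. If hypothesis~(a) holds, then $\tau_k$ is a thread of $\gst$ and $\tau_1\cdot\ldots\cdot\tau_n$ of $\critical$-type~$\typeall$; I would first argue that $\tau_k$ is also a thread of $\gst_k$ and $\tau_k$ itself, because the thread property of Definition~\ref{def:thread} depends only on the factor-one condition on transitions and on the flow relation between them, both of which are preserved. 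Since $\tau_k$ is applicable at $\gst_k$, the counter of its first local state is at least~$1$ there, so the identity naming is a decomposition of $\gst_k$ and $\tau_k$ with exactly one thread. Proposition~\ref{prop:TypeA.acc} then yields $\setconf{\gst_k}{\sr\Ctx{\gst_k}{\tau_k}} \models \psi$.

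Concatenating these segment-wise statements establishes that $\psi$ holds at every configuration visited strictly inside and at the endpoints of each segment of $\tau^*$, i.e., at every configuration visited by $\finpath{\gst}{\tau^*}$; in particular $\gst \models \psi$ is obtained from the $k=1$ segment. This gives $\setconf{\gst}{\tau^*} \models \psi$, as required.

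The only non-routine step I anticipate is the thread-transfer argument in case~(a): one must justify that a thread of the whole schedule, when viewed starting from the intermediate configuration $\gst_k$, still admits a one-thread decomposition so that Proposition~\ref{prop:TypeA.acc} is applicable. All remaining work consists of bookkeeping that is already handled by Proposition~\ref{prop:srep-ex-concatenation-basic} and by the soundness of representatives on individual steady segments.
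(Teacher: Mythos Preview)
Your proposal is correct and follows essentially the same approach as the paper: reduce to a segment-by-segment verification that $\setconf{\gst_k}{\sr\Ctx{\gst_k}{\tau_k}} \models \psi$, then handle case~(a) via Proposition~\ref{prop:TypeA.acc} and case~(b) via Proposition~\ref{prop:representativepath} (which your invocation of Proposition~\ref{prop:boring} simply wraps). Your extra care in justifying that $\tau_k$ admits a one-thread decomposition from $\gst_k$ before invoking Proposition~\ref{prop:TypeA.acc} is a point the paper leaves implicit.
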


\begin{proof}
 For every $k$, $1\le k\le n$, we know that $\tau_k$
    is a steady conventional schedule applicable to
    $\tau_1\cdot\ldots\cdot\tau_{k-1}(\gst)$.
 We prove the statement by showing that for every $k$ 
    with $1\le k\le n$ holds that
    $$\setconf{\tau_1\cdot\ldots\cdot\tau_{k-1}(\gst)}
    {\sr\Ctx{\tau_1\cdot\ldots\cdot\tau_{k-1}(\gst)}{\tau_k}}
    \models \psi.$$
 If we fix one such $k$, then there are two cases:
 \begin{itemize}
  \item If $\tau_k$ is a thread of $\gst$ and $\tau_1\cdot\ldots\cdot\tau_n$
	of $\critical$-type~$A$, then Proposition~\ref{prop:TypeA.acc} yields 
	the required.
  \item If there exists an $\ell\in\critical$ such that 
	$\setconf{\tau_1\cdot\ldots\cdot\tau_{k-1}(\gst)}{\tau_k}
	\models \counters[\ell]\neq 0$,
	then by Proposition~\ref{prop:representativepath} we know that
	$\setconf{\tau_1\cdot\ldots\cdot\tau_{k-1}(\gst)}
	{\sr\Ctx{\tau_1\cdot\ldots\cdot\tau_{k-1}(\gst)}{\tau_k}}
	\models \counters[\ell]\neq 0,$ which implies the required.
 \end{itemize}

\end{proof}

Proposition~\ref{prop:typeoptions} provides us with a case distinction.
To prove the following theorem, for each of the cases we construct a
     representative schedule.
We do so by repeatedly using Proposition~\ref{prop:movingleft}, to
     reorder transitions in the following way:  In Case~1 we move the
     thread of $\critical$-type~$A$ to the beginning of the schedule.
Then, the representative schedule is obtained by applying
     Proposition~\ref{prop:srep-ex} to the thread of
     $\critical$-type~$A$ and then to the rest.
In Case~2 we move the thread of $\critical$-type~$C$ or $E$ to the
     beginning, and again apply Proposition~\ref{prop:srep-ex} to the
     thread and the rest.
Case~3 is the most involved construction.
A prefix of the $\critical$-type~$D$ thread is moved to the beginning
     followed by the complete $\critical$-type~$E$ thread.
Proposition~\ref{prop:srep-ex} is applied to the prefix, the thread
     and the rest.
If the assumption of the proposition is not satisfied (i.e., there is
     a local state $\ell\in \critical$ such that
     $\setconf\gst{\tau}\models \counters[\ell]\neq 0$), then we just
     apply Proposition~\ref{prop:srep-ex} to~$\tau$.

\newcommand{\thmsteadyonedisjunction}{Fix a threshold automaton~$\Sk = (\local,$ 
    $\initlocal,$ $\globset,$ $\paraset,$ $\ruleset,$ $\ResCond)$, and a set 
    $\critical\subseteq\local$. 
Let $\gst$ be a confi\-gu\-ration such that $\omega(\gst)=\Ctx$,
    and let $\psi \equiv \bigvee_{\ell\in \critical} \counters[\ell]\neq 0$.
Then for every steady conventional schedule $\tau$, applicable to $\gst$,
    with $\setconf \gst\tau \models \psi$,
    there is a steady schedule $\srogen$ with the properties:
\begin{enumerate}
        \item[a)] $\srogen$ is applicable to $\gst$, and $\srogen (\gst)=\tau(\gst)$,
        \item[b)] $|\srogen|\leq 6\cdot |\ruleset |$,
        \item[c)] $\setconf \gst\srogen \models \psi$,
        \item[d)] 
        there exist $\tau_1$, $\tau_2$ and $\tau_3$, (not necessarily nonempty) subschedules of  $\tau$, such that
        $\tau_1\cdot\tau_2\cdot\tau_3$ is applicable to $\gst$, it holds that 
        $\tau_1\cdot\tau_2\cdot\tau_3(\gst)=\tau(\gst)$, 
        and $\srogen=\sr\Ctx\gst{\tau_1}\cdot \sr\Ctx{\tau_1(\gst)}{\tau_2}\cdot
        \sr\Ctx{\tau_1\cdot\tau_2(\gst)}{\tau_3}$.
\end{enumerate}}

\begin{theorem}\label{thrm:steady,one.disjunction}
\thmsteadyonedisjunction
\end{theorem}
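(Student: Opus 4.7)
\textbf{Proof plan for Theorem~\ref{thrm:steady,one.disjunction}.} The strategy is a direct case analysis driven by Proposition~\ref{prop:typeoptions}, together with the thread-moving machinery (Propositions~\ref{prop:movethreadtobeginning} and~\ref{prop:movethreadsmix}) and the per-segment shortening via Proposition~\ref{prop:srep-ex}. The length bound~$6 \cdot |\ruleset|$ in~(b) is the budget for three concatenated applications of $\mathsf{srep}$, each of length at most $2 \cdot |\ruleset|$ (Proposition~\ref{prop:srep-ex}), which also explains the three-segment form demanded by~(d). Throughout, the final-configuration guarantee~(a) will follow from Proposition~\ref{prop:srep-ex-concatenation-basic} applied to whichever decomposition $\tau_1 \cdot \tau_2 \cdot \tau_3$ of $\tau$ we produce, and the main work is in establishing the invariant~(c).

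\emph{Trivial case.} First, I would dispatch the situation where some $\ell^* \in \critical$ already satisfies $\setconf{\gst}{\tau} \models \counters[\ell^*] \neq 0$: here take $\tau_1 = \tau$ and $\tau_2 = \tau_3 = \varepsilon$ (so $\srogen = \sr{\Ctx}{\gst}{\tau}$). Applicability and the endpoint condition are immediate from Proposition~\ref{prop:srep-ex}, the length bound is~$2 \cdot |\ruleset| \le 6 \cdot |\ruleset|$, and (c) is precisely Proposition~\ref{prop:boring}.

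\emph{Nontrivial case.} Otherwise, fix by Proposition~\ref{prop:existsdecomp} a decomposition $\naming$ of $\gst$ and $\tau$, and invoke Proposition~\ref{prop:typeoptions} to land in one of three subcases.

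\begin{enumerate}
\item[(1)] There is a thread of $\critical$-type $\typeall$, say $\proj{\tau}{\naming,i}$. Set $\tau_1 = \proj{\tau}{\naming,i}$, $\tau_2 = \newrest{\tau}{\naming}{i}$, $\tau_3 = \varepsilon$. Proposition~\ref{prop:movethreadtobeginning} ensures that $\tau_1 \cdot \tau_2$ is a steady schedule applicable to $\gst$ with the same endpoint as $\tau$. Applying Proposition~\ref{prop:srep-ex-concatenation-formula}, where the $\tau_1$-piece is handled by clause~(a) via Proposition~\ref{prop:TypeA.acc} and the $\tau_2$-piece is handled by clause~(b) using Proposition~\ref{prop:typeoptionA} (which supplies a witness location $\ell^* \in \critical$ that stays nonzero along $\tau_2$), yields~(c).

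\item[(2)] There are two threads $\proj{\tau}{\naming,i}$ and $\proj{\tau}{\naming,j}$ with $i \ne j$, the first of $\critical$-type $\typebeg$ or $\typebegend$ and the second of $\critical$-type $\typeend$ or $\typebegend$. Set $\tau_1 = \proj{\tau}{\naming,j}$, $\tau_2 = \newrest{\tau}{\naming}{j}$, $\tau_3 = \varepsilon$, again using Proposition~\ref{prop:movethreadtobeginning}. Invariant~(c) follows from Proposition~\ref{prop:srep-ex-concatenation-formula}: Proposition~\ref{prop:typeoptionBC}(1) shows that $\firststate{\proj{\tau}{\naming,i}} \in \critical$ is kept nonzero along $\tau_1$, and Proposition~\ref{prop:typeoptionBC}(2) shows $\laststate{\proj{\tau}{\naming,j}} \in \critical$ is kept nonzero along $\tau_2$.

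\item[(3)] There is a thread $\proj{\tau}{\naming,i}$ of $\critical$-type $\typebegend$ together with a thread $\proj{\tau}{\naming,j}$ of $\critical$-type $\typemid$. Split $\proj{\tau}{\naming,j} = \proj{\tau}{\naming,j}^1 \cdot \proj{\tau}{\naming,j}^2$ so that $\laststate{\proj{\tau}{\naming,j}^1} \in \critical$ (such a split exists by the definition of type $\typemid$). Set $\tau_1 = \proj{\tau}{\naming,j}^1$, $\tau_2 = \proj{\tau}{\naming,i}$, $\tau_3 = \proj{\tau}{\naming,j}^2 \cdot \newrest{\tau}{\naming}{i,j}$. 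Proposition~\ref{prop:movethreadsmix} justifies applicability and endpoint preservation of $\tau_1 \cdot \tau_2 \cdot \tau_3$, while the three clauses of Proposition~\ref{prop:typeoptionED} provide exactly the invariants that, fed into Proposition~\ref{prop:srep-ex-concatenation-formula} segment by segment, give~(c).
\end{enumerate}

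In all cases, (a) and (d) are given by construction and by Proposition~\ref{prop:srep-ex-concatenation-basic} (with $n \le 3$), which also yields $|\srogen| \le 2 \cdot 3 \cdot |\ruleset| = 6 \cdot |\ruleset|$, proving~(b). The only genuinely delicate point I anticipate is case~(3): one must be careful that the prefix/suffix split of the $\typemid$-thread is compatible with moving the entire $\typebegend$-thread into the middle, and that the resulting three segments each admit a $\mathsf{srep}$-replacement that preserves the relevant nonzero counter. That verification is precisely what Proposition~\ref{prop:typeoptionED} was set up to deliver, so once we align notation, the argument reduces to bookkeeping.
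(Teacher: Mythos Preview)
Your proposal is correct and follows essentially the same approach as the paper: the trivial case via Proposition~\ref{prop:boring}, then the three-way case split from Proposition~\ref{prop:typeoptions}, with each subcase handled by the corresponding thread-moving proposition (\ref{prop:movethreadtobeginning} or~\ref{prop:movethreadsmix}) and the matching invariant proposition (\ref{prop:typeoptionA}, \ref{prop:typeoptionBC}, or~\ref{prop:typeoptionED}), all wrapped up by Propositions~\ref{prop:srep-ex-concatenation-basic} and~\ref{prop:srep-ex-concatenation-formula}. The decomposition into $\tau_1, \tau_2, \tau_3$ and the length accounting are identical to the paper's.
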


\newcommand{\proofsteadyonedisjunction}{
\begin{proof}
We give a constructive proof, and therefore $\tau_1$, $\tau_2$, $\tau_3$ and its 
    properties will be obvious from the construction.

If there is a local state $\ell^*\in \critical$ such that $\setconf\gst{\tau}\models \counters[\ell^*]\neq 0$,
    by Proposition~\ref{prop:boring} we have $\setconf \gst\srgen \models\psi$.
Using properties of $\srgen$ described in Proposition~\ref{prop:srep-ex},
    we see that the required schedule is $$\srogen=\srgen.$$
    
If this is not the case, and $\naming$ is a decomposition of $\gst$ and $\tau$,
    then, since $\setconf \gst\tau \models \psi$, by Proposition~\ref{prop:typeoptions}
    at least one of the following cases is true:
\begin{itemize}
\item[(1)] 
Assume there is an $i\in \threads(\sigma,\tau,\naming)$
    such that $\proj{\tau}{\naming,i}$ is of $\critical$-type $\typeall$. 
We claim that the required schedule is
    $$\srogen=\sr \Ctx\gst{\proj{\tau}{\naming,i}}\cdot 
    \sr \Ctx{\proj{\tau}{\naming,i}(\gst)}{\newrest{\tau}{\naming}{i}}.$$

By Proposition~\ref{prop:movethreadtobeginning}, 
     $\proj{\tau}{\naming,i}\cdot\newrest{\tau}{\naming}{i}$
     is a steady schedule applicable to $\sigma$, and
     $\proj{\tau}{\naming,i}\cdot\newrest{\tau}{\naming}{i}(\gst)=\tau(\gst)$.
Therefore, we can apply Proposition~\ref{prop:srep-ex-concatenation-basic}
    to obtain {\it a)} and {\it b)}.
Since $\proj{\tau}{\naming,i}$ is a thread of $\gst$ and $\tau$, of $\critical$-type $\typeall$,
    and by Proposition~\ref{prop:typeoptionA} there is an $\ell^*\in \critical$ such that 
    $\setconf{\proj{\tau}{\naming,i}(\gst)}{\newrest{\tau}{\naming}{i}} \models\counters[\ell^*]\ne 0$,
    then {\it c)} holds by Proposition~\ref{prop:srep-ex-concatenation-formula}.

\item[(2)] Here we assume there exist $i,j\in \threads(\sigma,\tau,\naming)$ 
    such that $i\neq j$, 
    $\proj{\tau}{\naming,j}$ is of $\critical$-type $\typebeg$ or $\typebegend$, and 
    $\proj{\tau}{\naming,i}$ is of $\critical$-type $\typeend$ or $\typebegend$.
We show that the required schedule is $$\srogen=
      \sr \Ctx\gst{\proj{\tau}{\naming,j}} \cdot
      \sr\Ctx{\proj{\tau}{\naming,j}(\gst)}{\newrest{\tau}{\naming}{j}}.$$
Again, by Proposition~\ref{prop:movethreadtobeginning}, 
     $\proj{\tau}{\naming,j}\cdot\newrest{\tau}{\naming}{j}$
     is a steady schedule applicable to $\sigma$, and
     $\proj{\tau}{\naming,j}\cdot\newrest{\tau}{\naming}{j}(\gst)=\tau(\gst)$.
Therefore, we can apply Proposition~\ref{prop:srep-ex-concatenation-basic}
    to obtain {\it a)} and {\it b)}.
By Proposition~\ref{prop:typeoptionBC}, there exist $\ell_1,\ell_2\in \critical$ such that 
    $\setconf \gst{\proj{\tau}{\naming,j}} \models\counters[\ell_1]\neq 0$
    and $\setconf{\proj{\tau}{\naming,j}(\gst)}{\newrest{\tau}{\naming}{j}} \models 
    \counters[\ell_2]\neq 0.$
Thus, {\it c)} holds by Proposition~\ref{prop:srep-ex-concatenation-formula}.

\item[(3)]
For the last case we assume there exist 
    $i,j\in \threads(\sigma,\tau,\naming)$ 
    such that 
    $\proj{\tau}{\naming,i}$ is of $\critical$-type $\typebegend$, and 
    $\proj{\tau}{\naming,j}$ is of $\critical$-type $\typemid$.
We represent $\proj{\tau}{\naming,j}$ as 
    $\proj{\tau}{\naming,j}^1\cdot\proj{\tau}{\naming,j}^2$,
    where $\laststate{\proj{\tau}{\naming,j}^1}\subseteq \critical$.
With a similar idea as in the previous cases, we show that the required schedule is
\begin{eqnarray*}
 \srogen &=&\sr \Ctx\gst{\proj{\tau}{\naming,j}^1} \cdot \\
    &&\cdot \sr \Ctx{\proj{\tau}{\naming,j}^1(\gst)}{\proj{\tau}{\naming,i}} \cdot \\
    &&\cdot \sr \Ctx{\proj{\tau}{\naming,j}^1\cdot 
    \proj{\tau}{\naming,i}(\gst)}{\proj{\tau}{\naming,j}^2\cdot
    \newrest{\tau}{\naming}{i,j}}.
\end{eqnarray*}
Again, statements {\it a)} and {\it b)} follow from Proposition~\ref{prop:movethreadsmix} 
  and Proposition~\ref{prop:srep-ex-concatenation-basic}.
By Proposition~\ref{prop:typeoptionED}, there exist 
    $\ell_1,\ell_2,\ell_3\in\critical$ such that
    \begin{itemize}
        \item $\setconf \gst{\proj{\tau}{\naming,j}^1} \models \counters[\ell_1]\neq 0$,
        \item $\setconf {\proj{\tau}{\naming,j}^1(\gst)}{\proj{\tau}{\naming,i}} \models
            \counters[\ell_2]\neq 0$, and 
        \item $ \setconf {\proj{\tau}{\naming,j}^1\cdot\proj{\tau}{\naming,i}(\gst)}
            {\proj{\tau}{\naming,j}^2\cdot \newrest{\tau}{\naming}{i,j}}\models
            \counters[\ell_3]\neq 0$.
    \end{itemize}
Using these three facts and Proposition~\ref{prop:srep-ex-concatenation-formula}, we obtain {\it c)}.
\end{itemize}
\end{proof}
}
\proofintext{\proofsteadyonedisjunction}

\proofsteadyonedisjunction

\subsection{Representative Schedules maintaining
 \boldmath $\bigwedge_{\critical \in Y}
    \bigvee_{i \in \critical} \counters[i] \ne 0$}\label{subsec:liveness3}

The construction we give in this section requires us to apply the same
     schedule twice.
We confirm that we can do that by proving in
     Proposition~\ref{prop:dublesize}  that if a counterexample exists
     in a small system, there also exists one in a bigger system.
In the context of counter systems we formalize this using a
     multiplier:   
 
\begin{definition}[Multiplier]
 A \emph{multiplier}~$\multipl$ of a threshold automaton is a number~$\multipl\in \Natural$,
    such that for every guard $\varphi$, if $(\gst.\counters, \gst.\vars, \gst.\param)\models\varphi$,
    then also $({\multipl}\cdot{\gst}.\counters,{\multipl}\cdot{\gst}.\vars,
    {\multipl}\cdot{\gst}.\param)\models\varphi$, and
    ${\multipl}\cdot{\gst}.\param \in \AdmP$.
\end{definition}

For specific pathological threshold automata, such multipliers may not
     exist.
However, all our benchmarks have multipliers, and as can be seen from
     the definitions, existence of multipliers can easily be checked
     using simple queries to SMT solvers in preprocessing.

\begin{definition}
If $\gst$ is a configuration, and $\multipl\ge 1$ is a multiplier, then
    we define $\multist{\gst}{\multipl}$ to be the configuration with 
    $(\multist{\gst}{\multipl}).\counters=\multipl\cdot\gst.\counters$, 
    $(\multist{\gst}{\multipl}).\vars=\multipl\cdot\gst.\vars$, and
    $(\multist{\gst}{\multipl}).\param = \multipl\cdot \gst.\param$.
If $\tau$ is a conventional schedule,
    we define $\multisch{\tau}{\multipl}= 
    \underbrace{\tau \concat \ldots \concat \tau}_{\multipl \text{ times}}.$
\end{definition}

\newcommand{\propdublesize}{Let $\gst_1$, $\gst_1'$ and $\gst_2$ be configurations, let
    $\tau$ be a steady conventional schedule 
    applicable to $\gst_1$ and $\gst_2$, and let $\ell$ be an 
    arbitrary local state.
If a multiplier is $\multipl > 1$, then the following holds:
\begin{enumerate}
 \item $\multisch{\tau}{\multipl}$ is applicable to $\multist{\gst_1}{\multipl}$,
      and if $\tau(\gst_1)=\gst_1'$ then 
      $\multisch{\tau}{\multipl}(\multist{\gst_1}{\multipl})= \multist{\gst_1'}{\multipl}$,
 \item for every propositional formula $\psi$, if $\gst\models \psi$, 
      then $\multist{\gst}{\multipl}\models \psi$,
 \item if $\gst_1.\counters[\ell]<\gst_2.\counters[\ell]$, then
     $\tau(\gst_1).\counters[\ell]<\tau(\gst_2).\counters[\ell]$,
 \item if $\gst_1.\counters[\ell]>0$ then
     $\multist{\gst_1}{\multipl}.\counters[\ell] > \gst_1.\counters[\ell]$.
\end{enumerate}
}
\begin{proposition}\label{prop:dublesize}
\propdublesize
\end{proposition}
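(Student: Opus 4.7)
The plan is to dispatch items (3) and (4) with quick computations, prove (2) by structural induction on $\psi$, and concentrate the substantive work on (1).

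Item (4) is immediate from $\multist{\gst_1}{\multipl}.\counters[\ell] = \multipl \cdot \gst_1.\counters[\ell]$, which strictly exceeds $\gst_1.\counters[\ell]$ whenever $\multipl > 1$ and $\gst_1.\counters[\ell] > 0$. For item (3), I will observe that by Definition~\ref{def:TofSigma} the net effect of a conventional schedule $\tau$ on each counter is determined solely by the sequence of rules in $\tau$, not by the starting configuration; therefore $\tau(\gst_2).\counters[\ell] - \tau(\gst_1).\counters[\ell] = \gst_2.\counters[\ell] - \gst_1.\counters[\ell]$, and the strict inequality is preserved. For item (2), a structural induction on $\psi$ handles counter literals ($\multipl \geq 1$ preserves zero and non-zero values), atomic threshold guards (multiplier definition), and Boolean combinations (inductive hypothesis); the negation case for guard formulas uses that the negation of a lower guard syntactically belongs to $\PrecondL$ and vice versa, so the multiplier property applies in both directions required by $\mathit{gform}$.

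The main obstacle is item (1). I will proceed by induction on the number $k \in \{0, \ldots, \multipl\}$ of complete $\tau$-iterations applied to $\multist{\gst_1}{\multipl}$, showing that $\multisch{\tau}{k}$ is applicable and that the resulting configuration has counter and shared-variable vectors equal to $(\multipl - k)\gst_1 + k\gst_1'$ (with parameter vector $\multipl \cdot \gst_1.\param$, which lies in $\AdmP$ by the multiplier definition). Writing $\gst_1^{(j)}$ for the configuration reached after the first $j$ transitions of $\tau$ applied to $\gst_1$, the intermediate configuration $\sigma^{(k,j)}$ during iteration $k+1$ satisfies the closed form
\[
\sigma^{(k,j)}.\counters = (\multipl - k - 1)\gst_1.\counters + k\gst_1'.\counters + \gst_1^{(j)}.\counters,
\]
with an analogous expression for $\sigma^{(k,j)}.\vars$. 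Counter applicability of the $(j{+}1)$st transition reduces to $\gst_1^{(j)}.\counters[r.\fromstate]$ being at least the acceleration factor of that transition, which holds because $\tau$ is applicable to $\gst_1$ and the remaining summands in the closed form are non-negative.

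The delicate step is verifying that the threshold guards stay satisfied throughout all $\multipl$ iterations. Here I exploit both the steadiness of $\tau$ and the multiplier property. By Proposition~\ref{prop:mono}, every lower guard of a rule in $\tau$ is already satisfied at $\gst_1$, hence the multiplier property gives its satisfaction at $\multist{\gst_1}{\multipl}$; since $\sigma^{(k,j)}.\vars \geq \multipl \cdot \gst_1.\vars$ componentwise (shared variables are non-decreasing along $\tau$), these lower guards remain satisfied at every intermediate configuration. Dually, every upper guard of a rule in $\tau$ is satisfied at $\gst_1'$ by steadiness, so the multiplier gives its satisfaction at $\multist{\gst_1'}{\multipl}$, and from $\sigma^{(k,j)}.\vars \leq \multipl \cdot \gst_1'.\vars$ componentwise these upper guards hold throughout as well. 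Finally, evaluating the closed form at $k = \multipl$, $j = 0$ yields exactly $\multist{\gst_1'}{\multipl}$, establishing the second claim of item (1).
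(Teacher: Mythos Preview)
Your proof is correct and supplies exactly the details the paper omits: the paper's own proof is the single line ``All properties follow directly from definition of counter systems.'' Your explicit treatment of item~(1)---the closed form $(\multipl-k-1)\gst_1 + k\gst_1' + \gst_1^{(j)}$ for intermediate configurations together with the steadiness-based bounds $\multipl\gst_1.\vars \le \sigma^{(k,j)}.\vars \le \multipl\gst_1'.\vars$ to control lower and upper guards---is precisely the substantive content behind that sentence, and your arguments for items~(2)--(4) are straightforward and adequate.
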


\newcommand{\proofdublesize}{
\begin{proof}
 All properties follow directly from definition of counter systems.
\end{proof}
}
\proofintext{\proofdublesize}
\proofdublesize

\newcommand{\propdublemaintains}{
Let~$\gst$ be a confi\-gu\-ration, let~$\tau$ be a steady conventional schedule
    applicable to $\gst$, and let~$\psi$ be a propositional formula.
If~$\multipl$ is a multiplier and if $\setconf \gst\tau \models \psi$, then
    $\setconf {\multist{\gst}{\multipl}}{\multisch{\tau}{\multipl}} \models \psi$.
}
\begin{proposition}\label{prop:dublemaintains}
\propdublemaintains
\end{proposition}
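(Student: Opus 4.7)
The plan is to prove the proposition by induction on the structure of~$\psi$, after first establishing a closed-form description of the configurations visited by $\multisch{\tau}{\multipl}$ starting from $\multist{\gst}{\multipl}$. Let $\sigma_0 = \gst, \sigma_1, \dots, \sigma_m = \tau(\gst)$ be the configurations visited by $\tau$ from $\gst$, and write $\sigma_{j,i}$ (for $0 \le j < \multipl$ and $0 \le i \le m$) for the configuration reached from $\multist{\gst}{\multipl}$ by applying $j$ full iterations of $\tau$ followed by the first $i$ transitions of the next iteration. Since $\tau$ is conventional, every one of its transitions changes each counter and each shared variable by a fixed amount that is independent of the starting configuration; combining this observation with Proposition~\ref{prop:dublesize}(1), an easy induction yields the componentwise identity
\[
    \sigma_{j,i} = (\multipl - j - 1)\,\gst + j\,\tau(\gst) + \sigma_i
\]
on counters and shared variables (the parameters being $\multipl \cdot \gst.\param$ throughout), in which the three coefficients $(\multipl - j - 1)$, $j$, and $1$ are non-negative integers summing to~$\multipl$.

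Using this identity, I would then do a case analysis on~$\psi$. When $\psi \equiv \bigvee_{\ell \in \mathit{Locs}} \counters[\ell] \ne 0$, the hypothesis yields for each $i$ some $\ell_i \in \mathit{Locs}$ with $\sigma_i.\counters[\ell_i] > 0$, and the identity gives $\sigma_{j,i}.\counters[\ell_i] \ge \sigma_i.\counters[\ell_i] > 0$. When $\psi \equiv \bigwedge_{\ell \in \mathit{Locs}} \counters[\ell] = 0$, each of $\gst$, $\tau(\gst)$, and $\sigma_i$ has $\counters[\ell] = 0$ for $\ell \in \mathit{Locs}$ (take $i = 0$ and $i = m$), so the non-negative combination vanishes. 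For a lower guard $g: x \ge a_0 + \sum_k a_k p_k$, steadiness of $\tau$ forces $g$ to hold at $\gst$, at $\tau(\gst)$, and at every $\sigma_i$; the multiplier property then yields $\multipl\,\gst.\vars[x] \ge a_0 + \multipl \sum_k a_k p_k$ (and analogously for $\tau(\gst)$ and $\sigma_i$), and multiplying these three inequalities by the non-negative weights $(\multipl - j - 1)/\multipl$, $j/\multipl$, $1/\multipl$ (which sum to~$1$) and adding gives $\sigma_{j,i}.\vars[x] \ge a_0 + \multipl \sum_k a_k p_k$, i.e., $\sigma_{j,i} \models g$. Upper guards are analogous, the strict inequality being preserved because the weight $1/\multipl$ on the strict bound from $\sigma_i$ is positive. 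The inductive cases for $\wedge$ and the top-level $\mathit{gform} \vee \mathit{cform}$ rule are routine; for the latter, steadiness of $\tau$ makes the $\mathit{gform}$ disjunct constant-valued along $\tau$, so the problem reduces to one of the two base disjuncts.

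The main obstacle will be handling $\neg$ inside $\mathit{gform}$, because the multiplier definition a priori preserves the \emph{truth} of a guard under scaling but not its falsity. I would resolve this by observing that $\neg g$ for a lower (resp.\ upper) guard $g$ is semantically equivalent to an inequality of the opposite, upper (resp.\ lower), form over the same shared variable, so that the convex-combination argument above applies uniformly both to $g$ and to~$\neg g$ and extends through the boolean structure of $\mathit{gform}$.
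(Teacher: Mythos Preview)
Your explicit formula for $\sigma_{j,i}$ and the treatment of the counter-formula cases are correct, and this concrete route is quite different from the paper's two-line argument (which simply invokes Proposition~\ref{prop:dublesize}(1) and~(2) together with a textbook trace-equivalence/LTL preservation result). However, your handling of $\neg g$ does not go through. The convex-combination step for a guard $g$ relies on the multiplier property to pass from ``$g$ holds at $\gst$'' to the scaled inequality ``$\mu\,\gst.\vars[x] \ge a_0 + \mu\sum_k a_k p_k$''. That property is assumed \emph{only} for the guards of the threshold automaton, not for arbitrary linear inequalities of guard shape. Rewriting $\neg g$ in upper-guard form does not make it a guard of the automaton, so you cannot invoke the multiplier on it. Concretely, for $g\colon x \ge 2$ (so $a_0 = 2$ and all $a_k = 0$) and $\mu = 3$, the multiplier condition for $g$ is vacuously satisfied, yet $\gst.\vars[x] = 1$ gives $\gst \models \neg g$ while $\mu\gst.\vars[x] = 3$ gives $\mu\gst \not\models \neg g$; the convex-combination bound you obtain is $\sigma_{j,i}.\vars[x] < \mu a_0 + \mu\sum_k a_k p_k$, which is weaker than the required $< a_0 + \mu\sum_k a_k p_k$ whenever $a_0 > 0$.

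The paper sidesteps this by directly citing Proposition~\ref{prop:dublesize}(2), which asserts that \emph{every} propositional formula is preserved by scaling with a multiplier; the paper thus treats preservation of $\neg g$ as already established rather than deriving it from the one-sided multiplier definition. If you want to keep your elementary argument self-contained, you would either have to strengthen the notion of multiplier to the two-sided ``$\gst \models \varphi \Leftrightarrow \mu\gst \models \varphi$'' for all guards, or restrict the statement to the counter-formula fragment of $\mathit{pform}$ (which is all that is needed for its sole downstream use in Theorem~\ref{thm:andor}).
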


\newcommand{\proofdublemaintains}{
\begin{proof}
Paths~$\setconf \gst\tau$ and~$\setconf {\multist{\gst}{\multipl}}{\multisch{\tau}{\multipl}}$
    are trace equivalent by Proposition~\ref{prop:dublesize} 1 and 2.
Therefore, by~\cite[Corollary 3.8]{BK08}, they satisfy the same linear temporal properties.
\end{proof}
}
\proofintext{\proofdublemaintains}
\proofdublemaintains

To prove the following theorem, we use the schedule $\multisch{\tau}\multipl$.
As in the previous cases, we divide it in two parts, namely
     to $\tau$ and $\multisch{\tau}{(\multipl-1)}$, and then apply
     Proposition~\ref{prop:srep-ex} to both of them separately.
For the proof, we use statements (3) and (4) from Proposition~\ref{prop:dublesize}.

\recallthm{thm:andor}{\thmandor}

\newcommand{\proofandor}{
\begin{proof}
We show that the required schedule is 
    $$\gsrogen= \sr\Ctx{\multist{\gst}{\multipl}}\tau \cdot 
    \sr\Ctx{\tau(\multist{\gst}{\multipl})}{\multisch{\tau}{(\multipl-1)}}.$$
By the properties of $  \sr\Ctx{\multist{\gst}{\multipl}}\tau$ and 
    $\sr\Ctx{\tau(\multist{\gst}{\multipl})}{\multisch{\tau}{(\multipl-1)}}$
    from Proposition~\ref{prop:srep-ex},  we see that $\gsrogen$ 
    is a steady schedule applicable to $\multist{\gst}{\multipl}$, 
    that $\gsrogen (\multist{\gst}{\multipl})=\multisch{\tau}{\multipl}(\multist{\gst}{\multipl})$, 
    and finally $|\gsrogen|\leq  4\cdot |\ruleset |$.
Now it remains just to show that {\it c)} holds. 

 For every $\gcri\le \gcritical$, we denote $\bigvee_{\ell\in \mathit{Locs}_\gcri} \counters[\ell]\neq 0$ 
    by $\psi_\gcri$.
Since $\psi=\bigwedge_{1\le\gcri\le \gcritical} \psi_\gcri$, 
    we prove that for every $\gcri\le \gcritical$, holds 
    $\setconf {\multist{\gst}{\multipl}}\gsrogen \models \psi_\gcri$.
Let us fix an $\gcri\le \gcritical$.
Since $\setconf \gst\tau \models \psi$, it is also true that
    $\setconf \gst\tau \models \psi_\gcri$. 
Therefore, we have that
\begin{itemize}
\item $\gst\models\psi_\gcri$, which implies that there exist an $\ell_\gcri^1\in\mathit{Locs}_\gcri$ 
    with $\gst.\counters[\ell_\gcri^1]\ge 1$, and
\item $\tau(\gst)\models\psi_\gcri$, which implies that there is an $\ell_\gcri^2\in\mathit{Locs}_\gcri$ 
    with $\tau(\gst).\counters[\ell_\gcri^2]\ge 1$.
    \end{itemize}

Now we show that:\\
i) $\setconf{\multist{\gst}{\multipl}}{\tau}\models \counters[\ell_\gcri^1]\ge 1$, and\\
ii) $\setconf{\tau(\multist{\gst}{\multipl})}{\multisch{\tau}{(\multipl-1)}}\models \counters[\ell_\gcri^2]\ge 1$.

i) Let $\tau'$ be an arbitrary prefix of $\tau$.
From the assumption and Proposition~\ref{prop:dublesize}~(4) we have that
    $1\le \gst.\counters[\ell_\gcri^1]< (\multist{\gst}{f}).\counters[\ell_\gcri^1]$.
Then, from Proposition~\ref{prop:dublesize}~(3) we see that 
    $\tau'(\gst).\counters[\ell_\gcri^1]< \tau'(\multist{\gst}{\multipl}).\counters[\ell_\gcri^1]$.
Now, since $\tau'$ is applicable to $\gst$, and therefore it is $\tau'(\gst).\counters[\ell_\gcri^1]\ge 0$, 
    we obtain that $ \tau'(\multist{\gst}{\multipl}).\counters[\ell_\gcri^1]\ge 1$.
Hence, we have $\setconf{\multist{\gst}{\multipl}}{\tau}\models \counters[\ell_\gcri^1]\ge 1$.
    
ii) Let us denote $\{i \colon i \in \threads(\multist{\gst}\multipl,\tau,\naming) \wedge 
    \proj{\tau}{\naming,i}.\tostate = \ell_\gcri^2\}$ by $T$, and 
    $\{i \colon i \in \threads(\multist{\gst}\multipl,\tau,\naming) \wedge 
    \proj{\tau}{\naming,i}.\fromstate = \ell_\gcri^2\}$ by $F$.
 By Proposition~\ref{prop:threadcounts} we have that
     $0< \tau(\gst).\counters[\ell_\gcri^2]= \gst.\counters[\ell_\gcri^2] + |T| -|F|.$
 This implies that $ \gst.\counters[\ell_\gcri^2] > |F| -|T|.$
By Proposition~\ref{prop:threadcounts}, we also obtain 
\begin{eqnarray*}
 \tau(\multist{\gst}{\multipl}).\counters[\ell_\gcri^2]&= &
     \multist{\gst}{\multipl}.\counters[\ell_\gcri^2] + |T| -|F|=\\
 &=& \multist{\gst}{(\multipl-1)}.\counters[\ell_\gcri^2] + \gst.\counters[\ell_\gcri^2] - (|F| -|T|),
\end{eqnarray*}
      which combined with $\gst.\counters[\ell_\gcri^2] > |F| -|T|$ yields 
      $$ \multist{\gst}{(\multipl-1)}.\counters[\ell_\gcri^2] < 
      \tau(\multist{\gst}{\multipl}).\counters[\ell_\gcri^2].$$
Let now $\tau'$ be an arbitrary prefix of $\multisch{\tau}{(\multipl-1)}$. 
Using the fact that $\tau'$ is applicable to $\multist{\gst}{(\multipl-1)}$, 
    and Proposition~\ref{prop:dublesize}~(3),
    we obtain that 
    $$0\le \tau'(\multist{\gst}{(\multipl-1)}).\counters[\ell_\gcri^2] < 
    \tau'(\tau(\multist{\gst}{\multipl})).\counters[\ell_\gcri^2].$$
Therefore, we obtain that $ \tau'(\tau(\multist{\gst}{\multipl})).\counters[\ell_\gcri^2]\ge 1$, and
    hence $\setconf{\tau(\multist{\gst}{\multipl})}{\multisch{\tau}{(\multipl-1)}}
    \models \counters[\ell_\gcri^2]\ge 1$.
   
Now, when the statements i) and ii) are proved, we can apply Proposition~\ref{prop:srep-ex-concatenation-formula}
This gives us that
$\setconf {\multist{\gst}{\multipl}}\gsrogen \models \psi_\gcri$, for an arbitrary 
    $\gcri\le \gcritical$, which implies that {\it c)} is true, and concludes the proof.
\end{proof}
}
\proofintext{\proofandor}
\proofandor

\subsection{Representative Schedules maintaining
 \boldmath $\bigwedge_{i \in \critical} \counters[i] = 0$}
\label{subsec:liveness3}

This case is the simplest one, so that $\sr{\Ctx}{\gst}{\tau}$ from
     Section~\ref{cons:srep} can directly be used as representative schedule.

\newcommand{\thmblabla}{
Fix a threshold automaton~$\Sk = (\local,$ $\initlocal,$ $\globset,$
     $\paraset,$ $\ruleset,$ $\ResCond)$, and a configuration $\sigma$.
If     $\psi \equiv
    \bigwedge_{i\in \critical} \bk[i] = 0,$ 
for $\critical\subseteq\local$,
then for every steady schedule $\tau$ applicable to $\gst$,
     and with $\setconf \gst\tau \models \psi$, schedule~$\srgen$ satisfies:
\begin{enumerate}
        \item[a)] $\srgen$ is applicable to $\gst$, and $\srgen(\gst)=\tau(\gst)$,
        \item[b)] $|\srgen|\leq 2\cdot |\ruleset|$,
        \item[c)] $\setconf \gst\srgen\models \psi$.
\end{enumerate}}

\begin{theorem}\label{thm:blabla}
\thmblabla
\end{theorem}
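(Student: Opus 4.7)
}
The plan is to observe that parts~(a) and~(b) are immediate consequences of Proposition~\ref{prop:srep-ex} applied to the steady schedule~$\tau$: that proposition directly gives applicability of~$\srgen$ to~$\gst$, the equality $\srgen(\gst)=\tau(\gst)$, and the length bound $|\srgen|\le 2\cdot|\ruleset|$. All the work is therefore in part~(c), showing that the invariant $\bigwedge_{i\in\critical}\counters[i]=0$ is preserved along the representative schedule.

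The key observation for~(c) is that no rule used by a transition of~$\tau$ with nonzero acceleration factor can touch~$\critical$. First I would argue by case analysis on a rule~$r$ appearing as $(r,k)$ in~$\tau$ with $k>0$: if $r.\fromstate\in\critical$, then applicability requires $\counters[r.\fromstate]\ge k\ge 1$ at the relevant configuration, contradicting $\setconf{\gst}{\tau}\models \counters[r.\fromstate]=0$; if $r.\tostate\in\critical$ and $r.\fromstate\ne r.\tostate$, then after applying the transition the counter $\counters[r.\tostate]$ becomes at least~$k\ge 1$, which is again visited along the path of~$\tau$ and contradicts the invariant. Hence every rule that actually fires in~$\tau$ has $\{r.\fromstate, r.\tostate\} \cap \critical = \emptyset$.

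Next I would invoke Proposition~\ref{prop:MichaelJordan}, which states that the rules appearing in $\srgen$ form a subset of the rules appearing in~$\tau$. Combined with the previous step, this tells us that every transition in~$\srgen$ uses a rule~$r$ with $\{r.\fromstate, r.\tostate\}\cap \critical = \emptyset$. By Definition~\ref{def:TofSigma}, such a transition leaves every counter $\counters[\ell]$ with $\ell\in\critical$ unchanged. Since the initial configuration~$\gst$ already satisfies $\counters[i]=0$ for all $i\in\critical$ (as this is the empty-prefix case of $\setconf{\gst}{\tau}\models\psi$), a straightforward induction on the prefixes of~$\srgen$ yields $\setconf{\gst}{\srgen}\models\psi$, establishing~(c).

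There is essentially no hard obstacle here: the only delicate point is to be rigorous in the case analysis ruling out rules that touch~$\critical$, in particular handling transitions with factor~$0$ (which are vacuous and need no argument) and self-loops with $\fromstate=\tostate\in\critical$ (which still require a positive counter to fire with nonzero factor and are therefore excluded by the same argument). In contrast to Theorems~\ref{thm:main} and~\ref{thm:andor}, no thread-based reordering or multiplier trick is needed, because the invariant here is monotone in the ``trivial'' direction: preserving zeros of certain counters only constrains which rules may appear, and this constraint is automatically inherited by the representative through Proposition~\ref{prop:MichaelJordan}.
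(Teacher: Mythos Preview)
Your proposal is correct and follows essentially the same approach as the paper's proof: use Proposition~\ref{prop:srep-ex} for (a) and (b), argue that rules actually firing in~$\tau$ cannot touch~$\critical$, invoke Proposition~\ref{prop:MichaelJordan} to transfer this to~$\srgen$, and conclude that counters in~$\critical$ stay zero. Your treatment is in fact more careful than the paper's about the corner cases (factor-$0$ transitions, self-loops), which the paper glosses over.
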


\newcommand{\proofblabla}{
\begin{proof}
Since $\setconf \gst\tau \models \psi$, we know that for every
    transition $t$ from $\tau$ and for every local state $\ell\in \critical$
     it holds $t.\fromstate\neq\ell$ and $t.\tostate\neq\ell$.
Let $\srogen = \sr{\Ctx}{\gst}{\tau}$.
By Proposition~\ref{prop:MichaelJordan},  $\sr{\Ctx}{\gst}{\tau}$ contains a
     subset of the rules that appear in~$\tau$.
Hence, $\srogen$ does not change counters of states in~$\critical$.
Other properties follow from Proposition~\ref{prop:srep-ex}
\end{proof}
}
\proofintext{\proofblabla}
\proofblabla

\subsection{Proof of Theorem~\ref{thm:main}}

The theorem follows from Theorem~\ref{thrm:steady,one.disjunction}
 and~\ref{thm:blabla}.

\fi

\end{document}